\DeclareMathOperator*{\argmax}{arg\,max}
\DeclareMathOperator*{\E}{\mathbb{E}}
\DeclareMathOperator{\poly}{poly}
\renewcommand*{\verbatim@font}{\sffamily}
\title{Connected Components in Linear Work and Near-Optimal Time\thanks{A preliminary version of this paper appeared in the proceedings of SPAA 2024. This work is in part supported by NSF awards 2128519 and 2044679, an ONR grant, and a DARPA SIEVE grant under a subcontract from SRI.}}
\author{ 
Alireza Farhadi\thanks{Carnegie Mellon University, afarhad2@andrew.cmu.edu.}
\and
S. Cliff Liu\thanks{Carnegie Mellon University, cliffliu@andrew.cmu.edu.}
\and
Elaine Shi\thanks{Carnegie Mellon University, rshi@andrew.cmu.edu.}
}
\date{}
\begin{document}

\clearpage\maketitle

\newcounter{dummy} \numberwithin{dummy}{section}
\newtheorem{lemma}[dummy]{Lemma}
\newtheorem{definition}[dummy]{Definition}
\newtheorem{remark}[dummy]{Remark}
\newtheorem{corollary}[dummy]{Corollary}
\newtheorem{claim}[dummy]{Claim}
\newtheorem{observation}[dummy]{Observation}
\newtheorem{assumption}[dummy]{Assumption}
\newtheorem{conjecture}[dummy]{Conjecture}

\newcounter{dummy2}
\newtheorem{theorem}[dummy2]{Theorem}

\begin{abstract}
    Computing the connected components of a graph is a fundamental problem in algorithmic graph theory. A major question in this area is whether we can compute connected components in $o(\log n)$ parallel time. Recent works showed an affirmative answer in the Massively Parallel Computation (MPC) model for a wide class of graphs. Specifically, Behnezhad et al. (FOCS'19) showed that connected components can be computed in $O(\log d + \log \log n)$  rounds in the MPC model. More recently, Liu et al. (SPAA'20) showed that the same result can be achieved in the standard PRAM model but their result incurs $\Theta((m+n) \cdot (\log d + \log \log n))$ work which is sub-optimal.

    In this paper, we show that for graphs that contain \emph{well-connected} components, we can compute connected components on a PRAM in sub-logarithmic parallel time with \emph{optimal}, i.e., $O(m+n)$ total work. Specifically, our algorithm achieves $O(\log(1/\lambda) + \log \log n)$ parallel time with high probability, where $\lambda$ is the minimum spectral gap of any connected component in the input graph. The algorithm requires no prior knowledge on $\lambda$.
    
    Additionally, based on the \textsc{2-Cycle} Conjecture we provide a time lower bound of $\Omega(\log(1/\lambda))$ for solving connected components on a PRAM with $O(m+n)$ total memory when $\lambda \le (1/\log n)^c$, giving conditional optimality to the running time of our algorithm as a parameter of $\lambda$.
\end{abstract}

\newpage


\newpage



\newcommand{\Cliff}[1]{{\color{red}[Cliff: #1]}}
\newcommand{\AF}[1]{{\color{purple}[Alireza: #1]}}
\newcommand{\elaine}[1]{{\color{magenta}[elaine: #1]}}

\renewcommand{\Cliff}[1]{}
\renewcommand{\AF}[1]{}
\renewcommand{\elaine}[1]{}

\newcommand{\ignore}[1]{}

\newcommand{\graph}{\ensuremath{G}\xspace}
\newcommand{\mcal}[1]{\ensuremath{\mathcal{#1}\xspace}}

\newpage

\section{Introduction} \label{sec:intro}
Finding the {\it connected components} of an undirected graph (also called
{\it connectivity}) is the following problem.
Given an undirected graph with $n$ vertices and $m$ edges,
mark each vertex with the index of the connected component that it belongs to.
Connectivity is a fundamental problem 
in algorithmic graph theory and has numerous applications. 
It is long known that connected components can be computed 
in $O(m)$ time~\cite{DBLP:journals/siamcomp/Tarjan72}; 
moreover, $O(m)$ is also the best possible on a sequential machine.
Since modern computing architectures are parallel in nature, a line
of works 
explored the parallel complexity of connectivity.
Shiloach and Vishkin~\cite{DBLP:journals/jal/ShiloachV82} 
gave a deterministic algorithm that achieved $O(\log n)$ parallel running time
and  $O(m \log n)$ work on an ARBITRARY CRCW PRAM (concurrent-read, concurrent-write with arbitrary write resolution), 
and their algorithm was simplified 
by subsequent works~\cite{DBLP:journals/tc/AwerbuchS87}.
Shiloach and Vishkin~\cite{DBLP:journals/jal/ShiloachV82}'s algorithm 
does not, however, achieve optimality in total work. 
Subsequent works~\cite{DBLP:journals/siamcomp/Gazit91,DBLP:journals/jcss/HalperinZ96,halperin2001optimal}
made further improvements and achieved
$O(\log n)$ parallel running time, and $\Theta(m)$ work (see details in later sections).

For a long time, an interesting question is whether we can beat
the $O(\log n)$-time barrier 
for computing connected components in parallel. 
In this respect, the initial breakthrough was made in the Massively
Parallel Computation (MPC) model \cite{DBLP:journals/jacm/BeameKS17}, a model that is stronger than the classic PRAM model \cite{DBLP:conf/soda/KarloffSV10, DBLP:conf/isaac/GoodrichSZ11, DBLP:conf/podc/AssadiSW19}. 
The work of \cite{DBLP:conf/focs/Andoni} showed an algorithm 
that completes in $O(\log d \cdot \log\log_{m/n} n )$
rounds of the MPC model, where $d$ denotes the maximum diameter over all components of the graph. 
Their result shows 
that for graphs with small diameter, we can indeed overcome the 
$\log n$ barrier, at least in more powerful parallel computation models
such as the MPC model.
The subsequent work of \cite{DBLP:conf/focs/BehnezhadDELM19}
further improved the result, by presenting an MPC algorithm  
that completes in $O(\log d + \log\log_{m/n} n)$ rounds. 
This line of work culminated in the recent work of 
Liu, Tarjan, and Zhong~\cite{liu2020connected}. 
Inspired by the MPC algorithms, \cite{liu2020connected} showed
that even on a PRAM, one can compute 
connectivity in $O(\log d + \log\log_{m/n} n)$ parallel time. 
Note that accomplishing the result on a PRAM 
is much more non-trivial than that in the MPC model, since in the MPC
model, primitives such as sorting 
and prefix sum can be accomplished in constant rounds, whereas
on a CRCW PRAM, these primitives require $\Omega(\log n / \log\log n)$ time even using $\text{poly}(n)$ processors \cite{DBLP:journals/jacm/BeameH89}.

Unfortunately, the work of Liu, Tarjan, and Zhong~\cite{liu2020connected} has two 
drawbacks. First, 
they incur $\Theta(m \cdot (\log d + \log\log_{m/n} n))$ work 
which is non-optimal. Second, 
their result holds only with {\it good probability}, i.e.,  
with probability $1- 1/\poly(m \log n /n)$, which is weaker
than the more commonly adopted {\it high probability} notion, i.e., with probability
$1-1/n^c$ for an arbitrary constant $c > 1$. 
We therefore ask the following natural question:
\begin{itemize}[leftmargin=5mm]
\item[]
{\it Can we overcome the $\log n$-time barrier for parallel connectivity on a PRAM,
while preserving  
optimality in total work?}
\end{itemize}

\subsection{Our Results and Contributions}
In this paper, we answer the question positively.
Specifically, we prove the following theorem:

\begin{theorem} \label{thm:our_main_result}
There is an ARBITRARY CRCW PRAM algorithm that computes connectivity in $O(\log (1 / \lambda) + \log \log n)$ time and $O(m + n)$ work with high probability, where $\lambda$ is 
the minimum spectral gap of any connected component in the input graph.\footnote{Our algorithm requires no prior knowledge on $\lambda$. Throughout the paper, we assume $m \le n^c$ for some constant $c > 0$ for a cleaner presentation in many places. This assumption can be removed by running the algorithm in Section 3 of \cite{liu2020connected} after our Stage $1$ which takes $O(\log d + \log\log n) \le O(\log (1 / \lambda) + \log \log n)$ time and $O(m+n)$ work with high probability (see \S{\ref{sec:overview}}).}
\end{theorem}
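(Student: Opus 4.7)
My plan is a two-stage algorithm that combines an existing $O(\log d + \log\log n)$-time near-linear-work routine with a new \emph{spectral contraction} phase whose progress rate is tied to $\lambda$. Stage~1 is a linear-work variant of the Liu--Tarjan--Zhong procedure which, in $O(\log d + \log\log n)$ time and $O(m+n)$ work with high probability, reduces the instance to a kernel graph on $O(n/\mathrm{polylog}\, n)$ super-vertices while preserving the component structure and the spectral gap of each component up to constant factors. Since $\lambda$ is unknown, I would launch Stage~2 with a doubling guess $\widehat\lambda \in \{1/2,1/4,1/8,\ldots\}$, run each attempt with a certified termination check (e.g., verifying that the contracted graph has no surviving inter-component edges), and only the final correct guess dominates the total cost by a geometric-series argument.

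Stage~2 runs rounds of \emph{random-leader contraction}. In each round, every active super-vertex becomes a tentative leader independently with probability $p = \Theta(\widehat\lambda)$; non-leaders attach to a nearby leader via a constant-depth hook-and-compress substep built from standard ARBITRARY CRCW primitives (semisort, pointer jumping, approximate compaction), and each leader group contracts to a single super-vertex with edges deduplicated. The intended invariant is that one round shrinks the number of surviving super-vertices inside every component by a constant factor. Granting this invariant, $O(\log(1/\lambda))$ rounds drive each component down to $O(\mathrm{polylog}\, n)$ super-vertices, after which an $O(\log\log n)$-round neighborhood-doubling finish (as in the MPC algorithms) collapses each residual component to a single vertex. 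For work, the crucial accounting is that the total edge count $m_i$ halves from round to round, so the work telescopes as $\sum_i O(m_i + n_i) = O(m+n)$; every PRAM primitive I invoke has an $O(\log\log n)$-time, $O(m+n)$-work implementation and therefore folds into the stated bounds without contributing extra $\log$ factors.

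The principal obstacle is the shrinkage lemma underpinning Stage~2: proving that a round of random-leader contraction at leader density $\Theta(\lambda)$ collapses a constant fraction of super-vertices in \emph{every} intermediate supergraph produced by the contraction. The difficulty is that the spectral gap of a contracted graph need not match that of the original, so I would first show that contracting edges strictly inside a connected component can only decrease its spectral gap by a constant factor, and then invoke a Cheeger-style lower bound to conclude that a random leader set of density $\Theta(\lambda)$ reaches a constant fraction of the component's mass within one expansion hop, so the non-leaders have at least one leader neighbor with constant probability each. A secondary challenge is boosting the per-round constant-probability success into a high-probability guarantee across all $O(\log(1/\lambda) + \log\log n)$ rounds without inflating work; I would handle this via parallel repetition restricted to the currently active super-vertex set, combined with a Chernoff bound over the geometrically shrinking active sets so that the amplification cost is itself a geometric series summing to $O(m+n)$.
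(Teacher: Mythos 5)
There is a genuine gap, and it sits at the heart of your Stage~2. The shrinkage lemma you need -- that one round of leader sampling at density $p=\Theta(\widehat\lambda)$ collapses a constant fraction of the super-vertices of every component -- is false. In a bounded-degree graph a non-leader has a leader among its neighbors with probability only $O(\deg(v)\cdot\lambda)$, not constant; concretely, for a cycle of length $k$ (so $\lambda=\Theta(1/k^2)$) a leader set of density $\Theta(\lambda)$ is typically empty, and even a nonempty one covers $O(1)$ vertices within one hop. A Cheeger-type bound controls the expansion of a set's boundary relative to its \emph{own} volume; it does not say that a sparse random set reaches a constant fraction of the component's mass in a single hop, which is what your argument requires. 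The arithmetic also does not close even if the lemma held: constant-factor shrinkage for $O(\log(1/\lambda))$ rounds reduces component sizes only by $\mathrm{poly}(1/\lambda)$, so for, say, constant $\lambda$ (expanders) you would run $O(1)$ rounds and still face components of size $\Theta(n)$, which the $O(\log\log n)$ doubling finish cannot absorb on a PRAM without super-linear work -- neighborhood doubling is exactly the step whose work is $\Theta(m\log d)$ unless the graph has first been made dense and then sparsified. Finally, the work accounting ``$m_i$ halves each round'' is unsupported: contracting leader groups and deduplicating edges does not halve the number of distinct inter-group edges in general, and the deduplication itself costs work.

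The missing mechanism is the one the paper actually uses: first \emph{densify by contraction} so that every surviving vertex (outside finished tiny components) has degree $\mathrm{poly}(\log n)$ -- contraction can only increase the spectral gap, so this is safe -- and only then \emph{sparsify by independent edge sampling} at rate $1/\mathrm{poly}(\log n)$, where a matrix-concentration bound shows the sampled graph preserves the component-wise spectral gap precisely because the minimum degree is large; the sampled graph then has diameter $O(\mathrm{poly}(\log n)/\lambda)$ and few edges, so running the $O(\log d+\log\log n)$ algorithm on it is both fast and cheap. Your doubling guess on $\widehat\lambda$ with a termination check is in the right spirit (the paper increases a parameter $b$ geometrically across phases), but to keep total work linear each phase must cost $O((m+n)/\log n)$, which forces pre-sampled auxiliary subgraphs and a careful treatment of the case where the algorithm terminates under a wrong guess (bounding the surviving inter-component edges via the Karger--Klein--Tarjan sampling lemma); checking ``no surviving inter-component edges'' against the full edge set in every guess would already cost $\Omega(m)$ per guess and break the $O(m+n)$ bound.
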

Recall that 
the spectral gap $\lambda(\graph) \in [0, 2]$ of a graph $\graph$ 
is the second smallest eigenvalue 
of the normalized Laplacian matrix of $\graph$. 
It is well-known that $\lambda(\graph) > 0$ iff $\graph$ is connected.
In general, the larger the $\lambda$, the more \emph{well-connected}  
the graph is. Therefore, our algorithm  
overcomes the $\log n$ barrier on graphs 
that are unions of well-connected components.
For example, for unions of expander graphs whose 
spectral gap $\lambda$ is a constant, our algorithm
achieves $O(\log \log n)$ parallel time and optimal work.
Such graphs are important in practice since  
several works~\cite{plrg,expansion-socialgraph}
showed that real-world communication and social graphs
have good expansion properties.
Further, for any graph whose  
spectral gap $\lambda \geq 1/n^{o(1)}$, our 
algorithm achieves sub-logarithmic parallel time.




To obtain our result, we developed several new techniques.
We highlight some of them here due to possible independent interests. 

First, we devise 
a linear-work algorithm that contracts the graph to $n / \beta$ vertices in $O(\log \beta)$ time for \emph{any} $\beta \ge \log n$. 
It is well-known that \emph{random leader election} contracts the graph to $n / \beta$ vertices in $O(\log \beta)$ time for any $\beta > 0$, but it requires either an MPC (see \cite{DBLP:conf/focs/BehnezhadDELM19} and the references therein) or PRAM with super-linear work \cite{reif1984optimal, DBLP:journals/siamcomp/Gazit91}. 
Note that for existing linear-work algorithms, contracting to $n/\beta$ vertices requires $\Omega((\log\beta)^2)$ time and only works for $\beta \ge \log n$ (see \S{\ref{sec:overview}} for details).
Graph contraction is a fundamental graph primitive and we hope our new algorithm can find more applications.

Second, we devise a new technique that densifies the graph without decreasing the component-wise spectral gap, and the total work is linear in the number of edges in the original graph. 
Graph densification is used in \cite{DBLP:conf/focs/Andoni,DBLP:conf/focs/BehnezhadDELM19,liu2020connected, soda23} to achieve $o(\log n)$ running time, but they either take super-linear work (on a PRAM) or can decrease the spectral gap of the graph (by adding extra edges to the graph). 
We hope our technique of graph densification that preserves the component-wise spectral gap can find further applications in graph algorithms and spectral graph theory.

On the lower bound side, the popular \textsc{2-Cycle} Conjecture \cite{DBLP:journals/jacm/RoughgardenVW18,DBLP:conf/podc/AssadiSW19,DBLP:conf/focs/Andoni,DBLP:conf/focs/BehnezhadDELM19} states that any MPC algorithm with $n^{1 - \Omega(1)}$ memory per processor cannot distinguish between one cycle of size $n$ with two cycles of size $n/2$ in $o(\log n)$ rounds. 
Based on this, \cite{DBLP:conf/focs/BehnezhadDELM19} shows that any MPC algorithm in this setting requires $\Omega(\log d)$ rounds to compute connected components of any given graph with diameter $d \ge \poly(\log(n))$ with high probability. 
In this paper, we use a similar approach to show that any PRAM algorithm with total memory $O(m+n)$ requires $\Omega(\log(1/\lambda))$ time to compute connected components of any given graph with minimum spectral gap $\lambda \le 1/\poly(\log(n))$ with high probability, unless the \textsc{2-Cycle} Conjecture is wrong. 
This shows that our algorithm is conditional optimal in the running time as a parameter of $\lambda$.

\subsection{Additional Related Work}
We now review some additional related work. 
A couple of classical works~\cite{DBLP:journals/siamcomp/CookDR86,DBLP:journals/jcss/DietzfelbingerKR94} have shown an $\Omega(\log n)$ lower bound on parallel time for any connectivity algorithm on a PRAM with exclusive writes. 
The algorithms in our paper require a CRCW PRAM. 

It is also interesting to compare our result
with the earlier work of Assadi, Sun, and Weinsten~\cite{DBLP:conf/podc/AssadiSW19}, who
showed an MPC algorithm for computing connectivity
in $O(\log (1/\lambda) + \log \log n)$ rounds when $\lambda$ is \emph{known} to the algorithm beforehand. 
When their algorithm has no prior knowledge on $\lambda$, the running time degenerates to $O(\log (1/\lambda) + \log\log (1/\lambda) \cdot \log \log n)$ and requires slightly more processors.
Our connectivity algorithm does not require prior knowledge on $\lambda$ and achieves the same round complexity on a PRAM. 
As mentioned, the PRAM 
model is less powerful than MPC and thus our result 
requires more non-trivial techniques 
than those of \cite{DBLP:conf/podc/AssadiSW19}.
Indeed, our techniques have no overlaps with those of \cite{DBLP:conf/podc/AssadiSW19}.
In \cite{DBLP:conf/podc/AssadiSW19}, they have two novel ideas. First they show that in the MPC model, independent random walks for all vertices can be found using only small number of rounds. The other idea used in the paper is an $O(\log \log n)$ round algorithm for finding spanning forests in random graphs. They provide a method to transform each connected component of the input graph into a random graph, and later use the algorithm for random graphs to find connected components.

Coy and Czumaj~\cite{coy-czumaj} showed how to make the MPC algorithms of \cite{DBLP:conf/focs/Andoni,DBLP:conf/focs/BehnezhadDELM19}
{\it deterministic}, while preserving
the $O(\log d + \log \log_{m/n} n)$ round complexity.
One interesting future direction is how to {\it deterministically }compute connectivity in sub-logarithmic parallel time 
on a PRAM with $O(m+n)$ processors. 
Note that for this question, no answer is known even if we are willing to tolerate sub-optimality in total work.

In a very recent work, \cite{soda23} showed a deterministic MPC algorithm for computing connected components of any forest in $O(\log d)$ rounds.

\section{Preliminaries} \label{sec:pre}

\subsection{Definitions and Notations}
We assume the ARBITRARY CRCW PRAM model~\cite{DBLP:journals/jal/Vishkin83}, 
where each processor has constant number of cells (i.e., words) in its private memory. All processors
share a common memory array.
The processors run synchronously.
In one step, each processor can read a cell in memory, write to a cell in memory, and do a constant amount of local computation.
Any number of processors can read from or write to the same common memory cell concurrently.
If more than one processor write to the same memory cell at the same time, an arbitrary one succeeds, and the algorithm should be correct no matter which one succeeds.

\paragraph{Notations for undirected graph.}
The input graph $\graph$ can contain self-loops and parallel edges. 
We use $m$ and $n$ to denote the number of edges and the number of vertices 
in the original input graph throughout the paper. 
We may assume that the vertices are named from $1$ to $n$.
We use $d$ to denote the {\it diameter} of the graph, i.e., the length 
of the longest shortest-path between any two vertices.
We often use the term {\it ends} to mean the endpoints of an edge.
We use the notation $V(E)$ as the set of ends of all edges in $E$ for any edge set $E$, and use $E(V)$ as the set of edges in the graph that are adjacent to any vertex set $V$. Given a graph $G$, we also use $V(G)$ and $E(G)$ to denote the set of vertices and the set of edges in $G$, respectively.

For any vertex $v$ in graph $G$, the \emph{degree} of $v$ is the number of edges 
adjacent to $v$ in $G$, denoted by $\deg_G(v)$, or $\deg(v)$ if the graph $G$ is clear
from the context.
Each self-loop should count only once towards the node's degree.
Define the \emph{minimum degree} of $G$ as $\deg(G) = \min_{v \text{ in } G} \deg_G(v)$. 
Define the \emph{neighbors} of $v$ in $G$ as $N_G(v) \coloneqq \{u \mid \text{edge~} (u, v) \in E(G)\}$. Again, we might omit the subscript $G$ if the graph $G$ is clear from the context. 
Note that $\deg_G(v) \ge |N_G(v)|$ as we allow parallel edges in $G$.

\paragraph{Labeled digraph.}
Following the same approach 
as in prior works~\cite{liu2020connected},
we formulate the 
connectivity problem as follows: label each vertex
$v$ with a unique vertex $v.p$ in its component. 
With such a labeling, we can test 
whether two vertices $v$ and $w$ are in the same
connected component in constant time by checking whether
$v.p = w.p$.

To compute connectivity, we shall
make use of a labeled digraph 
in the algorithm (which should not to confused
with the original graph $G$ itself).
We call directed edges in the 
labeled digraph {\it arcs} to differentiate
from {\it edges} in $G$. 
The parent field $.p$ is global for all our subroutines. 
Initially, every vertex's parent is itself, i.e., $v.p = v$.
During the algorithm, each vertex may change
its parent to other vertices in the same connected component. 
For a tree containing only $1$ vertex, we definite its \emph{height} to be $0$. 
For a tree containing at least $2$ vertices, its \emph{height} is the maximum number of arcs in any directed simple path from leaf to root.  
A tree in the labeled digraph is \emph{flat} if it has height at most $1$.
A \emph{(self-) loop} is an arc of the form $(v, v)$. 
We maintain the property that the only cycles in the labeled digraph 
are loops except in one of our subroutines, which restores 
this property at the end of that subroutine.

\paragraph{Additional notations.}
We use the short-hand {\it w.p.} to mean ``with probability'',
and use {\it w.h.p}
to mean ``with high probability''.
Unless otherwise noted, in this paper, ``with high probability''
means 
with probability $1-n^{-c}$
where $c > 1$ can be an arbitrary constant. 
We use the notation $\overline{O}(f(m, n))$ to mean that 
the performance bound 
$O(f(m, n))$ holds in expectation.
For any positive integer $k$, let $[k]$ be the set of all 
positive integers from $1$ to $k$ (inclusive).

\paragraph{Contraction algorithms.}
Given graph $G = (V, E)$, a \emph{contraction} on $G$ is formed by identifying two distinct vertices $u, v \in V$ into a single vertex $v^*$, i.e., replace vertices $u, v$ with $v^*$ in $V$, and replace any edge $(u, w), (v, w)$ with $(v^*, w)$ in $E$ for all $w$. 
A \emph{contraction algorithm} is an algorithm that performs a set of contractions on the input graph. 
To get a correct connected components algorithm, each contraction is limited to vertices from the same component. 
If $f, g$ are both contraction algorithms, then $f \circ g$ is also a contraction algorithm by the union of their sets of contractions.





\subsection{Preliminaries on Spectral Graph Theory}

\begin{definition}[Normalized Laplacian matrix]
Let $w(u,v)$ be the number of edges between $u$ and $v$.
The \emph{normalized Laplacian matrix} $\mcal{L}$ for \graph is defined as follows:
\[
\mcal{L}(u, v) = \begin{cases}
1 - \frac{w(v, v)}{\deg(v)} & \text{if $u = v$, and $\deg(v)\neq 0$}\\
-\frac{w(u, v)}{\sqrt{\deg(u) \deg(v)}} & \text{if $u$ and $v$ are adjacent}\\
0 & \text{otherwise}.
\end{cases}
\]
\end{definition}

\begin{definition}[Spectral gap]
The spectral gap of the graph $\graph$, denoted $\lambda(\graph)$,  is defined
to be the second smallest eigenvalue of $\mcal{L}(G)$.
\end{definition}

\begin{definition}[Conductance] \label{def:conductance}
The conductance
of an undirected graph $G = (V, E)$ is defined as 
$$\phi(G) = \min_{\substack{S \subset V s.t.\\ {\sf vol}(S)\leq {\sf vol}(V)/2}} 
\frac{|E(S, \overline{S})|}{{\sf vol}(S)}$$
where $E(S, \overline{S})$ denotes the number
of edges in the cut $(S, \overline{S})$, and 
${\sf vol}(S) = \sum_{v \in S}\deg(v)$.
\end{definition}

\subsection{Previous Results}

\begin{theorem}[\cite{liu2020connected}] \label{thm:ltz_main}
    There is an ARBITRARY CRCW PRAM algorithm using $O(m + n)$ processors that computes the connected components of any given graph of $m$ edges and $n$ vertices.
    W.p. $1 - 1 / \poly(((m+n) \log n) / n)$, it runs in $O(\log d + \log \log_{(m + 2n)/n} n)$ time.\footnote{The running time is simplified to $O(\log d + \log\log_{m/n} n)$ in their presentation by assuming $m \ge 2n$.}
\end{theorem}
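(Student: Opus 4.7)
The plan is to combine a Boruvka-style randomized contraction phase with a graph-exponentiation phase, analogous to the MPC algorithms of Andoni et al.\ and Behnezhad et al., but carefully implemented so that every round runs in $O(1)$ or at most $O(\log\log n)$ PRAM time while the cumulative work stays $O(m+n)$.

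First I would run a \emph{low-diameter phase}. Maintain the labeled digraph of Section~\ref{sec:pre}: each vertex $v$ has a parent $v.p$, initialized to $v$. In each round, perform in parallel (i) \emph{random hooking}, in which every tree root $v$ picks an arbitrary (ideally random) neighbor and hooks by setting $v.p$ to that neighbor's current root, and (ii) \emph{pointer shortcutting}, in which every vertex executes $v.p \leftarrow v.p.p$ a constant number of times. Using CRCW arbitrary-write semantics, each round runs in $O(1)$ time on $O(m+n)$ processors. A Shiloach--Vishkin-style potential argument shows that in each round either the number of live roots per component drops by a constant factor or the trees get flattened, so after $O(\log d)$ rounds every component is either fully contracted or has been reduced to a near-star on its super-vertices.

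Next I would apply a \emph{graph-exponentiation phase} to shave off the remaining $\Theta(\log n)$ barrier. Each surviving super-vertex effectively replaces its adjacency list with the union of its neighbors' adjacency lists, thereby doubling the radius explored per round. A naive implementation blows up the work multiplicatively; the fix is to sample a bounded random neighborhood per vertex and to deduplicate through a hash-based CRCW primitive so that total work per round telescopes to $O(m+n)$. Because each exponentiation round multiplies the average super-vertex degree by roughly $(m+2n)/n$, once the average degree reaches $n$ the component is already one super-vertex, so $O(\log\log_{(m+2n)/n} n)$ rounds suffice and we obtain the second term of the stated bound.

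The main obstacle, and the place where the proof differs most from its MPC cousin, is the PRAM implementation itself. In MPC, sorting and prefix sum are free, whereas on CRCW PRAM they cost $\Omega(\log n/\log\log n)$, which alone would blow the target runtime. I would replace these with CRCW-native building blocks: arbitrary-write leader election to pick a random neighbor in $O(1)$ time, approximate compaction in $O(\log\log n)$ time to pack the surviving vertices and edges without a full sort, and hash tables with $O(1)$-expected-time inserts for edge deduplication and for reassigning edges to merged roots. These replacements also explain the slightly unusual $1-1/\poly(((m+n)\log n)/n)$ success guarantee: the Chernoff bounds used in both the hooking and exponentiation phases concentrate in the average degree $(m+n)/n$, so the exponent in the failure probability is polynomial in $(m+n)/n$ rather than in $n$, matching exactly the statement of Theorem~\ref{thm:ltz_main}.
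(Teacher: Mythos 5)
This statement is not proved in the paper at all: it is quoted verbatim from \cite{liu2020connected}, so your sketch has to stand on its own as a reconstruction of that algorithm, and as written it does not. The central gap is your first phase. Random hooking plus a constant number of pointer shortcuts per round only reduces the number of live roots by a constant factor per round, so a Shiloach--Vishkin-style potential argument gives $O(\log n)$ rounds, not $O(\log d)$: on a dense graph of diameter $d = O(1)$, hooking and shortcutting alone still need $\Theta(\log n)$ rounds to collapse a component. The $O(\log d)$ term in the cited bound does not come from a separate hooking phase; it comes from interleaving hooking with distance-doubling (hashing neighbors-of-neighbors into per-vertex tables) in \emph{every} round, governed by the level/budget mechanism, and it is proved via a potential argument along a shortest path of the form $\Phi_i(v) \ge 2^{i - \ell(v)}$ --- exactly the machinery this paper re-uses in \S\ref{subsec:truncate} (cf.\ Lemma~\ref{lem:connect_2} and Lemma~\ref{lem:active_vertex_potential_bound}). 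Splitting the algorithm into ``hook for $O(\log d)$ rounds, then exponentiate'' therefore fails at its first step.

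Two further problems are smaller but real. First, you claim each exponentiation round multiplies the average super-vertex degree by roughly $(m+2n)/n$; geometric growth of that kind yields $O(\log_{(m+2n)/n} n)$ rounds, not $O(\log\log_{(m+2n)/n} n)$. The doubly-logarithmic term comes from \emph{doubly-exponential} growth: the per-vertex budget essentially gets raised to a power each time the level increases ($\beta_{\ell} = \beta_1^{1.01^{\ell-1}}$ as in Equation~(\ref{eq:setup_beta})), starting from a base of about $((m+2n)/n)\log n$ processors per vertex, and levels are capped because the total pool is $O(m+n)$ processors. Second, you assert the cumulative work is $O(m+n)$; the cited algorithm uses $O(m+n)$ processors in each $O(1)$-time round, so its total work is $\Theta((m+n)(\log d + \log\log_{m/n} n))$ --- the non-optimality of that work bound is the entire motivation of the present paper, and the theorem claims only a processor bound, not linear work. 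This overclaim is not needed for the statement, but it indicates that your accounting for the exponentiation phase (sampling plus deduplication ``telescoping'' to $O(m+n)$ work) is not the argument that makes the cited result go through. Your explanation of the unusual success probability $1 - 1/\poly(((m+n)\log n)/n)$ in terms of per-vertex resources of order $((m+n)\log n)/n$ is, however, in the right spirit.
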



The algorithm of \cite{liu2020connected} succeeds w.p. $1 - 1/\poly(\log n)$ instead of $1 - 1/\poly(n)$. 
Moreover, it is not linear-work unless $d = O(1)$ and $m = n^{1 + \Omega(1)}$.

\section{Technical Overview} \label{sec:overview}

For simplicity in presenting the main ideas in our connectivity algorithm, we focus on the critical case when $m = n \poly(\log n)$.\footnote{The case of $m = O(n)$ requires more careful pre-processing that is \emph{not} covered in this overview (see \S{\ref{sec:stage1}}). The case of $m = n^{1 + \Omega(1)}$ is simpler.}
\cite{liu2020connected} showed how to compute connected components in $O(\log d + \log\log n)$ time and $(m (\log d + \log\log n))$ work for this case. 
Our goal is a connected components algorithm that only takes $O(m)$ work. 

A na\"ive idea is to sparsify the graph by deleting edges without affecting the connectivity, 
such that the resulting graph has only $O(m/\poly(\log n))$ edges, then running the algorithm of \cite{liu2020connected} achieves linear work. 
However, there are two challenges:
(\romannumeral1) sparsifying a graph may affect the connectivity, e.g., make 
a connected component disconnected, thus affecting correctness;
and 
(\romannumeral2) sparsifying a graph  
may significantly increase the diameter, and thus 
blow up the parallel time when running, e.g., \cite{liu2020connected}.
For example, 
consider the straightforward approach of \emph{random edge sampling}, i.e., preserving each edge with probability $1/\poly(\log n)$ independently.\footnote{Random edge sampling has been used frequently in many graph algorithms to achieve linear work, e.g., \cite{karger1995randomized, DBLP:journals/jcss/HalperinZ96, halperin2001optimal}.}  
Unfortunately, random edge sampling can cause a connected component in the graph to become disconnected (e.g., for a collection of path graphs). Furthermore,
it may significantly increase the diameter of the 
graph, e.g.,   
in \S{\ref{sec:diameter}} we construct a graph with $\poly(\log n)$ diameter, such that sampling each edge w.p. $1/\poly(\log n)$ gives a subgraph with diameter $n / \poly(\log n)$, which requires $\Omega(\log n)$ time to compute connected components by the algorithm of \cite{liu2020connected}.

One of our key ideas is to devise a new method to sparsify the graph to only $O(m/\poly(\log n))$ edges without affecting connectivity.
To achieve this, our algorithm first contracts and consequently densifies the graph to make each vertex (except those in very small connected components) sufficiently large (vertex-) degree. 
At this moment, we then use random edge sampling to sparsify the graph without breaking the connectivity of each connected component w.h.p. (provably). 
For comparison and highlight, recall that in \cite{DBLP:conf/focs/Andoni} (and also in \cite{liu2020connected}), they also densify the graph to minimum degree at least $b$ such that each vertex contracts to a leader w.h.p. after sampling each vertex as a leader w.p. $(\log n) / b$. 
However, their densification is by \emph{adding extra edges to the graph}, which might decrease the spectral gap of a component and thus increase the running time of our algorithm. (Our densification is by contractions so it preserves the spectral gaps of all components.)
More importantly, their densification requires $\Theta(m \log d)$ work since they need to use all the $m$ edges and repeat the broadcasting for $\Theta(\log d)$ times; by contrast, ours only takes $O(m)$ work.


To make the presentation simpler, we assume that 
every connected component in the input graph is either small, i.e.,
less than $(\log n)^{10} $ in size (number of vertices), 
or has spectral gap at least $1/\log n$. 
This simplifying assumption can later be removed by several new ideas.
Under this assumption, our algorithm proceeds in the following stages, where each stage takes $O(\log\log n)$ time and $O(m)$ work.

\begin{framed}
\noindent \textbf{Connectivity with known $\lambda \ge 1/\log n$ (the assumption on $\lambda$ will be removed in \S{\ref{subsec:removing_assumption}}):}
\begin{itemize}
    \item \textbf{Stage $1$:} Contract the graph to $n / (\log n)^{20}$ vertices and assign each vertex $(\log n)^{20}$ processors.
    \item \textbf{Stage $2$:} Increase the degree of each vertex to $(\log n)^{5}$ using the $(\log n)^{20}$ processors per vertex.
    \item \textbf{Stage $3$:} Sample each edge w.p. $1/(\log n)$ then compute the connected components of the sampled subgraph.
\end{itemize}
\end{framed}


We briefly explain the main ideas in each stage here. 

\subsection{Stage $1$: Contract the Graph to Reduce Vertices}
Reif's algorithm \cite{reif1984optimal} contracts the graph to $n / (\log n)^{20}$ vertices in $O(\log\log n)$ time, however it requires $O(m \log\log n)$ work. 
Gazit's algorithm~\cite{DBLP:journals/siamcomp/Gazit91} contracts the graph to $n / (\log n)^{20}$ vertices in expected $O(m)$ work but takes $\Omega((\log\log n)^2)$ time. 
The EREW PRAM algorithm of Halperin and Zwick~\cite{DBLP:journals/jcss/HalperinZ96} contracts the graph to $n / (\log n)^{20}$ vertices in $O(m)$ work w.h.p., but takes $\Omega((\log\log n)^3)$ time \cite{DBLP:journals/jcss/HalperinZ96}. 
Our goal is $O(m)$ work and $O(\log\log n)$ running time, so none of them fits our requirement here.

In Stage $1$ we first present a linear-work algorithm that reduces the number of vertices in the graph to $n/ \log \log n$ in  $O(\poly(\log\log\log n))$ time. To achieve this reduction in the number of vertices, we borrow some ideas from Gazit's paper. The main tool that we use is a constant-time and linear-work algorithm that reduces the number of vertices by a constant factor. Our constant-shrink algorithm is inspired by the matching MPC algorithm of \cite{DBLP:conf/focs/BehnezhadDELM19}. In each call, our algorithm finds a large  matching between the vertices of the graph such that at least a constant fraction of the vertices are matched. We can then contract these matching edges to reduce the number of vertices by a constant factor in each call.

We then build on top of the constant-shrink algorithm and
devise an algorithm that reduces the number of vertices to $n / \log\log n$ in $O(\poly(\log\log\log n))$ time using one of the ideas from Gazit's algorithm (but boosted to $O(m)$ work w.h.p.): separate the contracted graph into a dense part (for vertices with high degree) and a sparse part. 
The separation is done by repeatedly deleting edges with constant probability in each round,\footnote{For comparison, Gazit's algorithm identifies the sparse part by choosing a random subset of edges with geometrically decreasing cardinality.} such that the vertex with low degree is more likely to be isolated. 
Gazit's algorithm repeats the above process for $O(\log\log n)$ times to get $n / \poly(\log n)$ vertices at the end, which gives $O((\log\log n)^2)$ total time since each application costs $O(\log\log n)$ time. 

On the contrary, we only repeat the above separation process for $O(\log\log\log n)$ times and each of them takes $O(\log\log\log n)$ time. 
One can show that the number of vertices with high degree in the original graph is reduced to $n / (\log n)^{21}$ with high probability. 
We prove that the sparse part has $O(n / \log\log n)$ edges with high probability. 
This is done by showing that each high-degree vertex contributes no edge to the sparse part with high probability, and the total contribution of edges from low-degree vertices is $O(n / \log\log n)$ with high probability by the fact that the number of vertices is at most $n / \log\log n$ and with the use of concentration bounds for self-bounding functions. 
So we can contract the sparse part to at most $n / (\log n)^{21}$ vertices by $O(\log\log n)$ applications of the constant-shrink algorithm in $O(n)$ work since the number of edges is bounded from above by $O(n / \log\log n)$ with high probability.

\subsection{Stage $2$: Increase the Minimum Degree}

Recall that we have run Stage $1$ to contract the graph to $n/(\log n)^{20}$ vertices.
In Stage $2$, our goal is to 
increase the degree 
of every vertex (except for those in small connected components)
to at least $(\log n)^{5}$.

\paragraph{Creating a skeleton graph.}
To do this, we first create a skeleton graph $H$ 
which sub-samples the original graph $G$
to at most $(m + n)/(\log n)^2$ edges,
while respecting the 
following property: every connected
component in $H$ either completely matches a connected 
component in the original graph $G$,
or has size at least $(\log n)^{10}$. 
In other words, the small connected components (i.e., of
size less than $(\log n)^{10}$)
in the original graph $G$ will be preserved 
precisely in $H$. Further, although $H$ may disconnect the larger connected 
components in $G$, it still guarantees that 
each new connected component is not too small, i.e., at least $(\log n)^{10}$ in size.
Later in our algorithm, we will simply ignore all vertices in small components since 
it can be shown that they are fully contracted (in $O(\log \log n)$ time).
For all other vertices, 
we can run a truncated version of \cite{liu2020connected} 
on the skeleton graph 
to increase the degree of each vertex to $(\log n)^{5}$
by finding connected vertices within its component.

To create the skeleton graph, 
we perform the following procedure:
\begin{itemize}[leftmargin=5mm]
\item 
Imagine each vertex has a hash table of size $(\log n)^{12}$ (which can be done by indexing the at most $n/(\log n)^{20}$ vertices).
For every edge adjacent to some vertex $v$, we throw the edge
into a random bin of the hash table.
We then tally the 
occupied bins of $v$'s hash table, to estimate  
how many edges $v$ has.
In this way, we can classify each vertex
as either {\it high} or {\it low}, such that with high probability, each low
vertex is guaranted to have at most $(\log n)^{10}$ edges
and each high vertex is guaranteed to have at least $(\log n)^{9}$ edges.
\item 
We then preserve all edges adjacent to low vertices. 
For edges that are adjacent on both ends to high vertices,
we down-sample 
each edge with probability 
$1/(\log n)^3$.
\end{itemize}

We prove that the above procedure produces a skeleton graph
satisfying the desired properties stated above. Moreover, we show that the algorithm can be accomplished in $O(m)$ work and $O(\log \log n)$ time with high probability.

Note that the skeleton graph is used only for increasing the vertex degrees using a 
truncated version of \cite{liu2020connected} as mentioned below, therefore, \emph{it does not matter that the skeleton graph may break apart connected components of the original graph}. The remainder of the algorithm (Stage $3$ and onward) will be operating on the original graph with the vertex degrees increased.

\paragraph{Using \cite{liu2020connected}
in a non-blackbox way to increase vertex degrees.} 
We observe that the potential argument in \cite{DBLP:conf/focs/BehnezhadDELM19} (and thus \cite{liu2020connected}) implies that after $\Theta(\log r + \log\log n)$ rounds of the algorithm, each vertex has edges to all vertices within distance $2^r$ in the original graph. 
This is because each vertex either increases its level or connects to all vertices within distance $2$ (for simplicity, we ignore the issue of altering graph edges).
This implies an algorithm that increases the minimum degree of the graph to $(\log n)^{5}$ in $O(\log\log n)$ time. 
However, such an algorithm requires $\Omega(m \log\log n)$ work since it uses all $m$ edges in each round. 
Instead, we run $O(\log\log n)$ rounds of the algorithm in \cite{liu2020connected} on the skeleton graph described above. 
If the component has small size, then it is the same component as in the original graph and we contract it to $1$ vertex. 
Otherwise the component has size at least $(\log n)^{10}$ and there are two cases. 
If the component has diameter at most $(\log n)^{10}$ then calling \cite{liu2020connected} computes this component. 
This component can be different from the component in the input graph but the contracted vertex has at least $(\log n)^{10}$ children, giving at least $(\log n)^{10}/2$ adjacent edges in the contracted graph.  
Otherwise, there exists a shortest path of length at least $(\log n)^{10}$, and each vertex ends up with at least $(\log n)^{5}$ new neighbors by the potential argument, giving a contracted graph with minimum degree at least $(\log n)^{5}$.

\subsection{Stage $3$: Compute Connected Components of the Sampled Graph}
After Stage $2$, we have a graph where each vertex has degree at least $(\log n)^5$.
In Stage $3$, we randomly sample each edge with probability $1/\log n$. 
In \S\ref{sec:spectral}, we use matrix concentration bounds, and prove that as long as each
vertex has sufficiently large degree (at least $(\log n)^{5}$) and the original graph has component-wise spectral gap $\lambda \ge 1/\log n$, the sampled graph has component-wise spectral gap at least $1/\log n - o(1/\log n) > 1/(2\log n)$ (so each component stays connected after sampling).
Moreover, by $d \le O(\log n / \lambda)$ we have that the sampled graph has diameter $O((\log n)^3)$. 
We then apply the technique of \cite{liu2020connected} to the resulting sparsified graph to compute connectivity in $O(\log\log n)$ time and $O(m)$ work.

\subsection{Removing the Assumption on Spectral Gap} \label{subsec:removing_assumption}

Our final algorithm works for graphs with any component-wise spectral gap and it does \emph{not} require prior knowledge on the spectral gap. 

\paragraph{Double-exponential progress on spectral gap assumptions.}
To achieve this, we start with assuming the graph (with $n/(\log n)^{20}$ vertices after Stage $1$) has minimum degree at least $b$ and component-wise spectral gap $\lambda \ge 1/b$ where parameter $b = (\log n)^{20}$ initially.
Then we call the three stages described before to \emph{try} to compute the connected components but limited to $O(\log b)$ running time. 
If the assumption is correct then we finish the computation of all connected components. 
Otherwise, we show that the number of vertices decreases from $n/(\log n)^{20}$ to $n/(\log n)^{40} = n/b^{2}$ such that we have enough processors per vertex to increase the minimum degree to $b^{3/2}$. 
Next, we update parameter $b$ to $b^{3/2}$ and proceed with the new assumption $\lambda \ge 1/b$ by running the three stages again limited to $O(\log b)$ running time in a new phase with the updated $b$ value, and repeat. 

Eventually, we are under the correct assumption $\lambda \in [1/b, 1/b^{2/3})$ in some phase and the algorithm computes the connected components in that phase in $O(\log b) = O(\log(1/\lambda))$ time. 
The total running time over all phases is 
$$O\left(\log b + \log (b^{2/3}) + \log (b^{(2/3)^2}) + \log (b^{(2/3)^3}) + \dots \right) = O(\log(1/\lambda)) .$$
It is easy to see that $O(\log\log n)$ phases are sufficient to obtain a correct connected components algorithm, because any graph with $m \le n^c$ has minimum spectral gap $\lambda \ge (1/\log n)^{(3/2)^{9c\log\log n}}$.

\paragraph{Reducing to linear work.}
One main obstacle is how to implement this algorithm efficiently without blowing up the total work. Specifically, the earlier connectivity algorithm with known $\lambda$ already takes $\Theta(m)$ work, resulting in at most $\Theta(m \log\log n)$ total work since we can now have $O(\log \log n)$ guesses of the spectral gap in the new algorithmic framework. 
One key observation is that Stage $1$ takes $O(m)$ work in total no matter how many steps we run it. 
Therefore, we want to \emph{interweave} the process of a complete execution of Stage $1$ with other stages such that each phase still runs $O(\log b)$ steps of Stages $1$, $2$, and $3$  but the total work from Stage $1$ over all phases is $O(m)$. 
It remains to reduce the work of Stage $2$ and Stage $3$ in each phase to $O(m /\log n)$.

Recall that earlier in Stage $2$, we construct a skeleton graph $H$ to run a truncated version of \cite{liu2020connected} to increase the minimum degree. The construction of $H$ relies on classifying vertices according to their estimated degrees.  The estimation is achieved by hashing edges
into hash tables of their adjacent vertices, and counting
the occupancy of the hash table.
Earlier, we performed the estimation by hashing all edges
in the original graph, thus leading to $\Theta(m)$ work.
We observe that the estimation can be performed even after we randomly down-sample the edges in the original graph. 
Specifically, we sample a random subgraph $H'$ of the original graph upfront, e.g., by sampling each edge independently w.p. $1/(\log n)^3$. 
This down-sampled subgraph $H'$
is used only for the purpose of
classifying vertices as high- or low-degree.
To create the skeleton graph $H$, observe that since the edges between high vertices were already sampled with the correct probability into $H'$, we simply copy them from $H'$
into the skeleton graph $H$. It remains to copy the edges adjacent to low vertices into $H$ as well.
This can be accomplished in $O(m')$ work using standard techniques from parallel algorithms~\cite{hagerup1992waste} where $m' = (m+n)/\poly(\log n)$ is the actual number of edges being copied. \elaine{can you read this?}

To reduce the work of Stage $3$, we modify the edge sampling probability in Stage $3$ to $1/(\log n)^3$ such that the sampled subgraph preserves the component-wise spectral gap (by tuning other parameters) and it has only $O(m / (\log n)^2)$ edges so that computing its connected components takes $O(m/\log n)$ work. 
However, sampling each edge in the original graph takes $\Theta(m)$ work since we need to sample from each of its edges, accumulating to $\Theta(m \log\log n)$ work over all phases. 
To overcome this, we build another random subgraph $H''$ with edge sampling probability $1/(\log n)^3$ before all phases, then perform the same contractions on $H''$ according to the contractions performed on $H$ in Stage $2$. 
We further isolate the randomness used in generating $H''$ to the randomness used in other parts of the algorithm such that $H''$ is an unbiased random subgraph to preserve the component-wise spectral gap. 
\newpage

\begin{framed}
\noindent \textbf{Connectivity with unknown $\lambda$:}
\begin{itemize}
    \item Parameter $b = (\log n)^{20}$.
    \item Build random subgraph $H', H''$ by sampling each edge in the original graph $G$ w.p. $1/(\log n)^3$.
    \item For \emph{phase} $i$ from $0$ to $O(\log\log n)$: 
    \begin{itemize}
        \item Run $O(\log b)$ steps of Stage $1$ to decrease the number of vertices to $O(n/b)$.
        \item To build $H$, preserve all edges in $G$ adjacent to vertices with degree at most $b$ in $H'$, then add all edges of $H'$ into $H$.
        \item Increase the minimum degree of $H$ to $b$ by contraction.
        \item Perform contractions on $H''$ as performed on $H$ in the previous step.
        \item Try to compute the connected components of $H''$ in $O(\log b)$ time.
        \item If all connected components of $H''$ are computed then return.
        \item Else update $b \leftarrow b^{3/2}$.
    \end{itemize}
\end{itemize}
\end{framed}

\paragraph{Corner case.}
To give a correct algorithm, we need to deal with the case that the algorithm terminates (computes all connected components of $H''$) in a phase under a \emph{wrong} assumption on the component-wise spectral gap. 
In this case, we show that after computing the connected components of the sampled subgraph in Stage $3$, the number of inter-component edges in the original graph is $O(n / \log n)$ by the sampling lemma of \cite{karger1995randomized}. 
So running the \cite{liu2020connected} only on these inter-component edges at the end finishes the computation in $O(m)$ work.

\subsection{Boosting to Linear Work with High Probability} \label{subsec:overview_boosting}

The procedure that increases the minimum degree in Stage $2$ originated from \cite{liu2020connected}, which only succeeds with probability $1 - 1/\poly(\log n)$ as we assume $m = n \poly(\log n)$. 
To achieve high probability for this part, observe that the skeleton graph $H$ we built for increasing minimum degree has bounded number of edges: each edge between high-degree vertices is sampled w.p. $1/(\log n)^3$, and there are at most $n/(\log n)^{20} \cdot (\log n)^{10}$ edges adjacent to low-degree vertices, since there are at most $n/(\log n)^{20}$ vertices after Stage $1$ and the low-degree vertex has at most $(\log n)^{10}$ adjacent edges. 
So the number of edges in $H$ is at most $m / (\log n)^2$. 
Running $\log n$ parallel instances of Stage $2$ on graph $H$ increases the minimum degree with high probability as desired.

Stage $3$ succeeds w.p. $1 - 1/\poly(\log n)$ because the algorithm of \cite{liu2020connected} is called. 
Note that the random subgraph $H''$ to solve connectivity on is equivalent to the original graph $G$ with contractions performed as on $H$ in Stage $2$ then edge-sampling w.p. $1/(\log n)^3$. 
Therefore, the number of edges in $H''$ is $m/(\log n)^2$ with high probability, and we run $\log n$ parallel instances of the algorithm of \cite{liu2020connected} on $H''$ to achieve a high-probability result.

\newpage

{\hypersetup{linkcolor=black, linktoc=all}\tableofcontents}

\newpage

\section{Stage 1: Contracting the Graph to $n/\poly(\log n)$ Vertices} \label{sec:stage1}

In this section, we describe the Stage $1$ algorithm in details. 
Recall that the goal is to contract and shrink
the graph to $n/\poly\log(n)$ vertices in $O(\log \log n)$ parallel time and linear work. 
As mentioned earlier in \S\ref{sec:overview}, existing schemes such as Reif's algorithm~\cite{reif1984optimal},
Gazit's algorithm~\cite{DBLP:journals/siamcomp/Gazit91}, and the algorithm of Halperin and Zwick~\cite{DBLP:journals/jcss/HalperinZ96} fail to achieve our purpose
since they either are suboptimal in total work or incur $\Omega((\log\log n)^2)$ parallel time.
We therefore devise a new algorithm for this purpose which can be of independent interest.

We begin with presenting a linear-work algorithm that reduces the number of vertices in the graph to $n/ \log \log n$ in  $O((\log\log\log n)^2)$ time. 
The main tool that we use is a constant-time and linear-work algorithm that reduces the number of vertices by a constant factor. 
Our constant-shrink algorithm is inspired by the matching MPC algorithm of \cite{DBLP:conf/focs/BehnezhadDELM19}. 
In each call, our algorithm finds a large matching between the vertices of the graph such that at least a constant fraction of the vertices are matched. We can then contract these matching edges to reduce the number of vertices by a constant factor in each call.

After that, we devise an algorithm that reduces the number of vertices to $n / \log\log n$ in $O((\log\log\log n)^2)$ time using one of the ideas from Gazit's algorithm (but boosted to $O(m)$ work w.h.p.): separate the contracted graph into a dense part (for vertices with high degree) and a sparse part. 
The separation is done by repeatedly deleting edges with constant probability in each round, such that the vertex with low degree has higher probability to be isolated (filtered out as the sparse part of the graph). 
We only repeat the above filter process for $O(\log\log\log n)$ times and prove that each application of filter takes $O(\log\log\log n)$ time, giving $O((\log\log\log n)^2)$ time for pre-processing.
Then, we prove that the number of vertices with high degree in the original graph is reduced to $n / (\log n)^{21}$, and the sparse part has $O(n / \log\log n)$ edges with high probability. 
This is done by showing that each high-degree vertex contributes no edge to the sparse part with high probability, and the total contribution of edges from low-degree vertices is $O(n / \log\log n)$ with high probability by the fact that the number of vertices is at most $n / \log\log n$ and with the use of concentration bounds for self-bounding functions. 
So we can contract the sparse part to at most $n / (\log n)^{21}$ vertices by $O(\log\log n)$ applications of the constant-shrink algorithm in $O(n)$ work since the number of edges is bounded from above by $O(n / \log n)$ with high probability.

Our algorithm will make use of a classical PRAM building block called \emph{approximate compaction}~\cite{DBLP:conf/focs/Goodrich91} which we define below.

\begin{definition}\label{def:apx_compaction}
    Given a length-$n$ array $A$ which contains $k$ \emph{distinguished} elements, \emph{approximate compaction} is to map all the distinguished elements in $A$ one-to-one to an array of length $2k$.
\end{definition}

\begin{lemma}[\cite{DBLP:conf/focs/Goodrich91}] \label{lem:apx_compaction}
    There is an ARBITRARY CRCW PRAM algorithm for approximate compaction that runs in $O(\log^* n)$ time and $O(n)$ work w.p. $1 - 1/2^{n^{1/25}}$.
\end{lemma}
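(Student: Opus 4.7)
My plan is to prove the lemma by iterating a randomized hashing subroutine whose recursion depth is $O(\log^\ast n)$. The core subroutine, applied at level $i$ with $k_i$ remaining distinguished elements and a fresh destination region of $C_i k_i$ cells, is: draw a $t$-wise independent hash function $h : [n] \to [C_i k_i]$ (for $t = \Theta(\log n)$, representable in $O(1)$ words and evaluable in $O(1)$ time), have one processor per remaining element write its own index into cell $h(j)$, and under ARBITRARY concurrent-write let each cell keep a single tag; in a second pass each element reads its target cell and considers itself \emph{placed} iff that cell holds its own index. Elements that lost their cell form the input to level $i+1$. A one-time $O(n)$ scan at the start identifies and counts the distinguished elements so that processors can be assigned.

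A polynomial-moment Chernoff bound for $t$-wise independent hashing gives that the number of unplaced elements after level $i$ is $O(k_i / C_i)$ with probability $1 - 2^{-k_i^{\Omega(1)}}$. Choosing $C_i$ to grow rapidly with $k_i$, for instance $C_i = 2^{\sqrt{\log k_i}}$, yields $k_{i+1} \le 2^{\sqrt{\log k_i}}$, so the unplaced count descends super-exponentially and the recursion terminates after $O(\log^\ast n)$ levels. Per-level work is $O(k_i)$ (one hash evaluation and two memory accesses per remaining element), the geometric sum $\sum_i O(k_i)$ is $O(k_0) \le O(n)$, and the only other cost is the initial $O(n)$ scan. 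The $2k$-cell output constraint of Definition~\ref{def:apx_compaction} is met by carving the length-$2k$ output into disjoint contiguous sub-ranges of lengths $C_0 k_0, C_1 k_1, \ldots$; with $C_i = 2^{\sqrt{\log k_i}}$ and $k_i$ decreasing tower-fast, the total reserved space is bounded by $2k$.

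The main obstacle is sustaining the $1 - 2^{-n^{1/25}}$ tail bound across all levels, because once $k_i$ becomes polylogarithmic in $n$ the single-level guarantee $2^{-k_i^{\Omega(1)}}$ is far weaker than required. I would address this by running $\Theta(\log n / \log k_i)$ independent parallel copies of each deep level and declaring success as soon as one copy meets the expected shrink; the extra work stays $O(n)$ because $k_i$ shrinks so fast, and a union bound over the $O(\log^\ast n)$ levels preserves an overall failure probability of $2^{-n^{\Omega(1)}}$. Tuning the constants in $C_i$ and in the number of parallel copies so that this exponent is at least $n^{1/25}$ is the delicate calculation to be carried out in order to match the statement exactly; levels with $k_i = O(1)$ are handled deterministically in $O(1)$ time by direct scanning, and at the top level the hashing function's seed of length $\poly(t \log n)$ is cheap enough to sample and broadcast in $O(1)$ time using $O(n)$ processors.
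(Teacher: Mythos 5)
This lemma is not proved in the paper at all: it is quoted verbatim from Goodrich (FOCS'91) and used as a black box, so there is no in-paper argument to compare yours against. Judged on its own terms, your sketch does not establish the stated bounds, and the problems are structural rather than cosmetic. First, the space accounting is inconsistent: your level-$0$ region already has size $C_0k_0=k\cdot 2^{\sqrt{\log k}}$, which by itself exceeds the $2k$ output cells allowed by Definition~\ref{def:apx_compaction}, so the ``carve the length-$2k$ output into sub-ranges of lengths $C_ik_i$'' step cannot work with your choice of $C_i$. If you shrink the $C_i$ so that $\sum_i C_ik_i\le 2k$, the balls-in-bins collision recurrence $k_{i+1}\approx k_i^2/s_i$ with $s_i\le 2k$ forces $k_i\gtrsim 2k\cdot 2^{-2^i}$, i.e.\ $\Omega(\log\log k)$ levels; plain ``re-hash the leftovers'' with linear total space simply cannot reach $O(\log^* n)$. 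The known $O(\log^* n)$ algorithms get their speedup by reallocating the processors freed by already-placed elements, so that each of the few surviving elements commands many processors and performs many placement attempts in parallel within a single round --- an ingredient your scheme does not have.

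The probability claims also fall short. With $t=\Theta(\log n)$-wise independence, limited-independence tail bounds give per-level failure probabilities of the form $2^{-\Theta(t)}=n^{-\Theta(1)}$, not $1-2^{-k_i^{\Omega(1)}}$ (that would require $k_i^{\Omega(1)}$-wise independence, which is not evaluable in $O(1)$ time); using fully independent per-processor choices would repair the per-level bound but not the endgame. More importantly, your fix for small $k_i$ --- running $\Theta(\log n/\log k_i)$ independent copies --- only multiplies exponents up to $\poly(\log n)$, so the best overall guarantee obtainable this way is $1-2^{-\poly(\log n)}$, far weaker than the required $1-1/2^{n^{1/25}}$; no tuning of constants closes that gap. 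The exponential-in-$n^{\Omega(1)}$ bound in the cited work comes precisely from the fact that once only $k_i\le n^{1-\Omega(1)}$ elements survive, the $\Theta(n)$-processor budget gives each survivor $n^{\Omega(1)}$ independent trials inside one round, and a union bound over survivors then yields $2^{-n^{\Omega(1)}}$. Finally, ``declare success as soon as one copy meets the expected shrink'' presupposes an $O(1)$-time estimate of the number of unplaced elements, which is itself a nontrivial primitive on a CRCW PRAM and would need to be supplied.
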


Our blueprint is the following.
First, in \S\ref{subsec:alg_constant}, we devise an algorithm that contracts and shrinks the graph by a constant factor (in the number of vertices) in constant time and linear work with high probability.
Next, building on top of the constant-shrink algorithm, in \S\ref{subsec:alg_loglog} we adapt ideas from Gazit~\cite{DBLP:journals/siamcomp/Gazit91} and devise an algorithm that reduces
the number of vertices by a $\log \log n$-factor in $o(\log \log n)$ time and linear work.
Finally, in \S\ref{subsec:alg_log}, we show how to achieve a $\poly(\log n)$-shrink algorithm in $O(\log\log n)$ parallel time and linear work.
In this section, we focus on bounding {\it expected} total work. Some extra work is required to achieve high probability guarantees which is deferred to \S\ref{sec:boosting}.

\subsection{A Constant-Shrink Algorithm} \label{subsec:alg_constant}

\Cliff{This algorithm has a long description cause it does 'dirty' work for other algorithms. This part might go to appendix. It makes our descriptions of other algorithms much cleaner.}
 
The following algorithm \textsc{Matching}$(E)$ shrinks the number of roots adjacent to $E$ by a constant fraction w.h.p. in $O(1)$ time and $O(|E|)$ work.
The algorithm either finds a large enough matching in $E$ so that each edge in the matching reduces $1$ root (by updating the parent of an end to another end), or a constant fraction of the roots are already non-roots.

The algorithm is inspired by the constant-shrink MPC algorithm from \cite{DBLP:conf/focs/BehnezhadDELM19}, which uses the $O(1)$-time sorting and prefix sum computing on an MPC. 
We could also modify Gazit's random-mate algorithm \cite{DBLP:journals/siamcomp/Gazit91} to get a constant-shrink algorithm. 
We include our version here tuned for our applications and for completeness.

\begin{framed}
\noindent \textsc{Matching}$(E)$:
\begin{enumerate}
    \item For each edge $(u, v) \in E$: if $u$ or $v$ is not a root or $u = v$ then delete $(u, v)$ from $E$. \label{alg:matching:s1}
    \item For each edge $e \in E$: orient it from the large end to the small end. This creates a digraph $D$. \label{alg:matching:s2}
    \item For each vertex $v \in V(E)$: if $v$ has more than $1$ outgoing arcs then keep an arbitrary one and delete the others. \label{alg:matching:s3}
    \item For each singleton $v$ created in Step~\ref{alg:matching:s3}: consider $D$ before Step~\ref{alg:matching:s3} and choose an arbitrary arc $(u, v)$ in $D$ then $v.p = u$. \label{alg:matching:s4}
    \item For each root $v \in V(E)$: if $v$ has more than $1$ incoming arcs then remove all outgoing arcs of $v$. \label{alg:matching:s5}
    \item For each root $v \in V(E)$: if $v$ has more than $1$ incoming arcs then update the parents of all vertices with arcs to $v$ as $v$ and delete all these vertices from $D$. \label{alg:matching:s6}
    \item For each arc $e$ in $D$: w.p. $1/2$ delete $e$ from $D$. \label{alg:matching:s7}
    \item For each arc $(u, v)$ in $D$: if $(u, v)$ is isolated (not sharing ends with other arcs), then $v.p = u$. \label{alg:matching:s8}
    \item For each vertex $v \in V(E)$: $v.p = v.p.p$. \label{alg:matching:s9}
\end{enumerate}
\end{framed}

At the beginning of the algorithm we assume each edge in $E$ as well as each vertex in $V(E)$ has a corresponding indexed processor.
\begin{lemma} \label{lem:constant_shrink_work_time}
    \textsc{Matching}$(E)$ uses $O(|E|)$ processors and runs in $O(1)$ time on an ARBITRARY CRCW PRAM.
\end{lemma}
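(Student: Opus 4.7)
The plan is to show that each of the nine steps of $\textsc{Matching}(E)$ is implementable on an ARBITRARY CRCW PRAM in constant depth using $O(|E|)$ processors. Since $|V(E)| \le 2|E|$, the stated allocation of one processor per edge and per vertex in $V(E)$ fits within an $O(|E|)$ processor budget, so the only real content is the depth bound. Steps~\ref{alg:matching:s1}, \ref{alg:matching:s2}, \ref{alg:matching:s7}, and \ref{alg:matching:s9} are immediate: Step~\ref{alg:matching:s1} is a local check at each edge-processor on $u.p, v.p$; Step~\ref{alg:matching:s2} is a local orientation using the vertex identifiers; Step~\ref{alg:matching:s7} is an independent coin flip per surviving arc; and Step~\ref{alg:matching:s9} is a single pointer-jumping round per vertex-processor.

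The non-trivial steps~\ref{alg:matching:s3}--\ref{alg:matching:s6} and~\ref{alg:matching:s8} all reduce to a single constant-depth primitive that I would formalize as a \emph{CRCW multiplicity gadget}. For any vertex $v$ and any set $A_v$ of arcs incident to $v$ (outgoing or incoming as needed), the gadget uses two scratch cells $\mathrm{first}(v), \mathrm{second}(v)$ initialized to $\bot$: every arc-processor in $A_v$ concurrently writes its own identifier to $\mathrm{first}(v)$, arbitrary-write resolution yields a winner $w$, and every arc-processor then re-reads $\mathrm{first}(v)$ and writes to $\mathrm{second}(v)$ iff its own identifier differs from $w$. After this, the pair $(\mathrm{first}(v),\mathrm{second}(v))$ distinguishes $|A_v|=0$, $|A_v|=1$, and $|A_v|\ge 2$ in one further read, and when $|A_v|\ge 1$ it exposes a canonical arc $w$. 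Each invocation is $O(1)$ depth and $O(|A_v|)$ work, summing to $O(|E|)$ across all vertices. Allocating fresh scratch cells per step prevents interference across invocations.

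With this gadget in hand, Step~\ref{alg:matching:s3} runs it on outgoing arcs at each vertex and every non-winner self-deletes; Step~\ref{alg:matching:s4} detects the singletons created in Step~\ref{alg:matching:s3} by letting every surviving arc mark its endpoints as non-isolated, then has every \emph{pre-Step-\ref{alg:matching:s3}} incoming arc $(u,v)$ of an unmarked $v$ attempt to write $u$ into $v.p$; Steps~\ref{alg:matching:s5} and~\ref{alg:matching:s6} run the gadget on incoming arcs at each root $v$, and if $\mathrm{second}(v)\ne\bot$ then the outgoing arc-processors of $v$ self-delete in Step~\ref{alg:matching:s5} while the incoming arc-processors $(u,v)$ of $v$ set $u.p := v$ and self-delete in Step~\ref{alg:matching:s6}; Step~\ref{alg:matching:s8} checks isolation of $(u,v)$ by verifying $\mathrm{first}(u)=(u,v)$, $\mathrm{first}(v)=(u,v)$, and $\mathrm{second}(u)=\mathrm{second}(v)=\bot$, all in $O(1)$ time. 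The main subtlety to flag is the legitimacy of concurrent writes to the $.p$ fields in Steps~\ref{alg:matching:s4} and~\ref{alg:matching:s6}: in Step~\ref{alg:matching:s4} the arbitrary winner among multiple pre-Step-\ref{alg:matching:s3} incoming arcs of a singleton $v$ is still a legal parent assignment because every such arc $(u,v)$ had $u$ in the same component as $v$; in Step~\ref{alg:matching:s6} every concurrent writer sends the identical value $v$ to the same cell, so arbitrary-write resolution is harmless. Summing over the nine steps gives the claimed $O(1)$ depth on $O(|E|)$ processors.
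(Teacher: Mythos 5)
Your proposal is correct and follows essentially the same route as the paper's proof: per-step constant-time implementations in which incident arc-processors concurrently write to a designated cell, the arbitrary-write winner serves as the canonical arc, and a re-read detects multiplicity (your two-cell gadget is just a packaged form of the paper's ``write, then check whether the written arc equals yourself'' trick), giving $O(1)$ depth and $O(|E|)$ processors overall. Two cosmetic points: in Step~6 the writes $u.p := v$ for distinct $u$ go to distinct cells (after Step~3 each vertex has at most one outgoing arc), so the ``identical value to the same cell'' justification is not needed, and to match the pseudocode you should also delete the remaining arcs incident to the vertices $u$ removed from $D$ in Step~6, which costs only one more $O(1)$-time pass.
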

\begin{proof} 
    In Step~\ref{alg:matching:s1}, each processor for each edge $(u, v) \in E$ queries the vertex processors $u$ and $v$ and if either $u \ne u.p$ or $v \ne v.p$ or $u = v$ then the edge processor notifies itself to be not in $E$. 
    In Step~\ref{alg:matching:s2}, each (processor corresponding to) edge $(u, v) \in E$ writes an arc $(u, v)$ into its private memory if $u > v$ or writes $(v, w)$ if otherwise. All arcs in the private memory consist of digraph $D$. Each arc processor also stores a copy of the arc in its private memory for future use in Step~\ref{alg:matching:s4}. 
    
    In Step~\ref{alg:matching:s3}, each arc $(v, u)$ writes itself to the private memory of (the processor corresponding to) $v$, so an arbitrary arc adjacent to $v$ wins the writing. Next, each arc $(v, u)$ checks the arc written to $v$, and if the arc does not equal to $(v, u)$ (so parallel edges in $E$ at the beginning of the algorithm does not effect the execution) then notifies itself to be not in $D$ but the arc processor is still active for the next step.
    
    In Step~\ref{alg:matching:s4}, each arc $(v, u)$ in $D$ notifies vertices $v$ and $u$ to be a non-singleton. Then, each active arc processor corresponding to $(u, v)$ checks whether $v$ is a singleton, and if so, updates $v.p$ to $u$.
    
    In Step~\ref{alg:matching:s5}, each arc $(u, v)$ checks whether $v$ is a root, and if so, writes $(u, v)$ to the private memory of $v$. Next, each arc $(u, v)$ checks whether the arc written to $v$ equals to $(u, v)$, and if not, then mark the vertex $v$. Next, each arc $(v, u)$ checks if $v$ is marked, and if so, notifies the arc itself to be not in $D$. \AF{If v is not root we should do that, since we are only deleting outgoing edges} \Cliff{If v is not a root then there is no adjacent edge by Step 1 and Step 4.}
    
    In Step~\ref{alg:matching:s6}, the algorithm uses the same operations in Step~\ref{alg:matching:s5} to mark all vertices with more than $1$ incoming arcs. Next, each arc $(u, v)$ in $D$ checks whether the vertex $v$ is marked, and if so, updates $u.p$ to $v$ then marks $u$ as deleted. Next, for each arc $(u, v)$ in $D$, if $u$ or $v$ is marked as deleted then notifies $(u, v)$ to be not in $D$.
    
    In Step~\ref{alg:matching:s7}, each arc in $D$ notifies itself to be not in $D$ with probability $1/2$.
    
    
    In Step~\ref{alg:matching:s8}, each arc $(u, v)$ in $D$ writes $u$ to the private memory of $v$ and writes $v$ to the private memory of $u$. Next, each arc $(u, v)$ in $D$ checks whether the vertex written in the private memory of $v$ equals to $u$, and if not then mark vertices $v$ and $u$, then it checks whether the vertex written in the private memory of $u$ equals to $v$, and if not then mark vertices $v$ and $u$. Next, for each arc $(u, v)$ in $D$, if both $u$ and $v$ are not marked then updates $v.p$ to $u$. 
    
    In Step~\ref{alg:matching:s9}, each edge (not arc) processor corresponding to $(u ,v)$ updates $u.p$ to $u.p.p$ by querying the private memory of $u$ and $u.p$ and writing to $u$, then do the same for $v$.
    
    Since each step takes $O(|E|)$ work and $O(1)$ time, the lemma follows.
\end{proof}

\begin{lemma} \label{lem:constant_shrink_reduce}
    Given graph $G(V(E), E)$, if its subgraph $G'$ induced on roots has $n'$ vertices and each component of $G'$ has at least $2$ vertices, then \textsc{Matching}$(E)$ reduces the number of roots to $0.999 n'$ w.p. $1 - 2^{-n'/10000}$.
\end{lemma}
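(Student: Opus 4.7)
The algorithm creates non-roots in three distinct places: Step~\ref{alg:matching:s4} (one per singleton created in Step~\ref{alg:matching:s3}), Step~\ref{alg:matching:s6} (each bad root with $\ge 2$ incoming arcs absorbs all of its incoming neighbors), and Step~\ref{alg:matching:s8} (one per surviving isolated arc after the random deletions in Step~\ref{alg:matching:s7}). I will show the total is at least $0.001 n'$ with probability $1 - 2^{-n'/10000}$ by splitting deterministically on whether Step~\ref{alg:matching:s6} alone achieves this many reductions: if so, the lemma holds without using randomness; otherwise, the random coin flips in Step~\ref{alg:matching:s7} suffice.

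For the deterministic setup, since every component of $G'$ has at least two vertices (exactly one minimum per component), the digraph after Step~\ref{alg:matching:s3} has at least $n'/2$ arcs---one outgoing per non-minimum vertex---and after Steps~\ref{alg:matching:s5}--\ref{alg:matching:s6}, every surviving root has at most one incoming arc (bad roots have zero) while every vertex has at most one outgoing arc, so the resulting digraph is a union of vertex-disjoint directed paths and isolated vertices. I will then establish a charging argument that pins every arc destroyed in Steps~\ref{alg:matching:s5}--\ref{alg:matching:s6} to the set $C$ of deleted children---whose size equals Step~\ref{alg:matching:s6}'s root reductions---and uses the Step~\ref{alg:matching:s3} invariant of at most one outgoing arc per vertex to bound the number of destroyed arcs by $O(|C|)$. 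Thus, in the case Step~\ref{alg:matching:s6} reduces fewer than $0.001 n'$ roots, the path-union digraph retains $\Omega(n')$ arcs.

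For the probabilistic part, each arc $e$ in this path-union is isolated after Step~\ref{alg:matching:s7} with probability at least $1/8$ (it survives, and each of its at most two path-neighbors does not), so the expected number of isolated arcs is $\Omega(n')$, and each of them becomes a root reduction in Step~\ref{alg:matching:s8}. Concentration follows from McDiarmid's inequality: a single coin flip in Step~\ref{alg:matching:s7} affects the isolated status of at most three arcs (itself and its two path-neighbors), yielding bounded differences $c_i = 3$ across $\Omega(n')$ independent coins and hence a deviation tail of $\exp(-\Omega(n')) \le 2^{-n'/10000}$, as required.

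The hardest step is the charging argument, because deletions may cascade---a child in $C$ may itself be a bad root whose own children are simultaneously deleted. The resolution is to define $C$ as the entire cascaded deleted set and bound the arcs incident to $C$ using the one-outgoing-per-vertex invariant: outgoing arcs from $C$ contribute at most $|C|$, and any arc entering $C$ from outside $C$ is charged to its bad-root target, which in turn contributes at least two units to $|C|$.
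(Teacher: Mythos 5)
Your plan has a genuine gap in its deterministic part. The claim that after Step~\ref{alg:matching:s3} the digraph retains at least $n'/2$ arcs, ``one outgoing per non-minimum vertex,'' is false: after Step~\ref{alg:matching:s2} a vertex has an outgoing arc iff it has some neighbor with a smaller label, i.e.\ iff it is not a \emph{local} minimum, and a component can have far more than one local minimum. Take a single star component whose center carries the largest label: every leaf is a local minimum, all $n'-1$ arcs leave the center, and Step~\ref{alg:matching:s3} keeps exactly one of them. On this instance Step~\ref{alg:matching:s6} reduces no roots (the surviving head has only one incoming arc), so your case split sends you to the probabilistic case with a path-union containing a single arc, and Steps~\ref{alg:matching:s7}--\ref{alg:matching:s8} cannot possibly yield $\Omega(n')$ reductions. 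The reductions actually come from Step~\ref{alg:matching:s4}: the $n'-2$ leaves that became singletons in Step~\ref{alg:matching:s3} each had an incoming arc beforehand and acquire the center as parent. Your outline lists Step~\ref{alg:matching:s4} as a source of non-roots, but your case analysis branches only on Step~\ref{alg:matching:s6} and never uses it, so the argument fails on such inputs.

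The paper closes exactly this hole with an extra first case: if at most $n'/4$ arcs survive Step~\ref{alg:matching:s3}, then at least $n'/2$ singletons were created (one arc eliminates at most two singletons, starting from the $\ge n'/2$ arcs present after Step~\ref{alg:matching:s2}); each such singleton had no outgoing arc before Step~\ref{alg:matching:s3}, hence had an incoming arc, and becomes a non-root in Step~\ref{alg:matching:s4}. With that case added, the remainder of your outline essentially reproduces the paper's proof: the charging of arcs destroyed in Steps~\ref{alg:matching:s5}--\ref{alg:matching:s6} to the absorbed children works (and is simpler than you fear, since a child of a bad root cannot itself be a bad root --- Step~\ref{alg:matching:s5} already stripped bad roots of their outgoing arcs, so no cascade occurs), the disjoint-directed-paths structure and the $1/8$ isolation probability match, and your McDiarmid bounded-differences bound is an acceptable substitute for the paper's device of marking every third arc so that the isolation events use disjoint coins and a plain Chernoff bound applies.
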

\begin{proof}
    After Step~\ref{alg:matching:s2}, there are at least $n'/2$ arcs in $D$ since there is no singleton. 
    Suppose there are at most $n'/4$ arcs after Step~\ref{alg:matching:s3}. 
    Then Step~\ref{alg:matching:s3} creates at least $n'/2$ singletons. Because starting from an all-singleton graph, $1$ arc eliminates at most $2$ singletons. 
    Consider a singleton $v$ created in Step~\ref{alg:matching:s3}. In digraph $D$ before Step~\ref{alg:matching:s3}, $v$ must have no outgoing arcs, otherwise it cannot be a singleton in Step~\ref{alg:matching:s3}, so $v$ must have an incoming arc $(u, v)$ and will be a non-root in Step~\ref{alg:matching:s4}.
    Therefore, in Step~\ref{alg:matching:s4} we reduce at least $n'/2$ roots as desired.
    In the following, we assume the number of remaining arcs after Step~\ref{alg:matching:s3} (and thus Step~\ref{alg:matching:s4}) is at least $n'/4$. 

    After Step~\ref{alg:matching:s5}, we show that there are at least $n'/8$ arcs. After Step~\ref{alg:matching:s4}, put $1$ coin on each arc, then there are at most $2$ coins during the entire Step~\ref{alg:matching:s5}, because we can enumerate all vertices from small to large and distribute the at most $2$ coins on the only outgoing arc (guaranteed by Step~\ref{alg:matching:s3}) of that vertex to the at least $2$ incoming arcs. So at most half the arcs are removed in Step~\ref{alg:matching:s5}. 

    Vertex with no outgoing arc is called a \emph{head}. After Step~\ref{alg:matching:s5}, all the non-heads have in-degree at most $1$.
    In Step~\ref{alg:matching:s6}, if a head $v$ has $r > 1$ incoming arcs then we reduce $r$ roots and delete at most $2r$ arcs because each vertex $w$ with arc $(w, v)$ has at most $1$ outgoing arc and at most $1$ incoming arc.
    If at least $n'/16$ arcs are deleted in Step~\ref{alg:matching:s6}, then at least $n'/32$ roots are reduced as desired. 
    
    Now suppose there are at least $n'/16$ arcs after Step~\ref{alg:matching:s6}, which form a collection of disjoint directed paths since each vertex has in-degree and out-degree at most $1$ and there is no cycle dues to the orientations. 
    For each such path, starting from marking the outgoing arc on the largest vertex, we mark an arc in every $3$ arcs. There are at least $n'/48$ marked arcs. 
    After Step~\ref{alg:matching:s7}, each marked arc is isolated w.p. at least $1/8$ independently. 
    By a Chernoff bound, at least $n'/1000$ arcs are isolated w.p. at least $1 - 2^{-n'/10000}$, which reduces at least $n'/1000$ roots.
\end{proof}

\begin{lemma} \label{lem:constant_shrink_correctness}
    For any root at the beginning of \textsc{Matching}$(E)$, it must be a root or a child of a root at the end of \textsc{Matching}$(E)$.
\end{lemma}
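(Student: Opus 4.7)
The plan is to prove the statement by tracking each modification to the parent field $.p$ throughout \textsc{Matching}$(E)$. I would first observe that only Steps $4$, $6$, $8$, and $9$ ever write to $.p$; the other steps only manipulate the edge set $E$ or the arc set $D$. Moreover, the vertex sets whose parents are modified by Steps $4$, $6$, $8$ are pairwise disjoint: a vertex updated in Step $4$ is a singleton (no arcs after Step $3$), hence cannot be an endpoint of any arc in Step $6$ or Step $8$; a vertex updated in Step $6$ is deleted from $D$, hence cannot be an endpoint of any isolated arc in Step $8$. So each vertex's parent is rewritten at most once across Steps $4, 6, 8$, with at most one further pointer-jump in Step $9$.

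The core claim is that after Step $8$, the parent forest has height at most $2$: for every $v$ with $v.p \neq v$, the chain $v \to v.p \to v.p.p \to \cdots$ reaches a root within $2$ arcs. If $v.p$ was assigned in Step $6$, then $v.p = w$ for some root $w$ with $\ge 2$ incoming arcs at Step $6$; Step $5$ has already stripped $w$'s outgoing arcs, and Step $6$ itself deletes every source of an incoming arc of $w$, so $w$ has no arcs after Step $6$ and $w.p = w$ at the end of Step $8$. If $v.p$ was assigned in Step $8$ via isolated arc $(u,v)$, then $u$'s only arc in Step $8$ is $(u,v)$, so $u$ is not a singleton, and $u$ cannot have been affected by Step $6$ (otherwise $u$ would have been deleted from $D$), giving $u.p = u$ at the end of Step $8$. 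Finally, if $v.p$ was assigned in Step $4$, then $v$ is a singleton and the chosen $u$ had $\ge 2$ outgoing arcs before Step $3$ (since $(u,v)$ was deleted there), so $u$ retains an outgoing arc and is not a singleton; if $u$'s parent is subsequently modified in Step $6$ or Step $8$ to some $w$, applying the previous two cases to that assignment of $w$ yields $w.p = w$ at end of Step $8$, producing a chain $v \to u \to w$ of height $2$.

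The conclusion then follows from Step $9$, which applies $v.p := v.p.p$ simultaneously to every $v$: a height-$2$ chain $v \to u \to w$ with $w$ a root compresses into $v \to w$, $u \to w$, $w \to w$; roots are preserved, since $v = v.p$ implies $v = v.p.p$; and no chain is lengthened. Hence every vertex, in particular every initial root, is either a root or has a root parent after Step $9$, which is exactly the claim. The main obstacle is the Step $4$ case: the new parent $u$ may itself be rewritten later, so one must carefully argue that the resulting grand-parent is a root at the end of Step $8$; the clean resolution is to recurse into the first two (simpler) cases for that grand-parent, leveraging the disjointness of the affected vertex sets to bound the chain at length $2$ and letting Step $9$ flatten it to length $1$.
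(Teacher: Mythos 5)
Your proof is correct and takes essentially the same approach as the paper's: both track how the parent field is written in Steps 4, 6, 8, and 9, show that a vertex acquiring a child in Step 6 or Step 8 remains a root through Step 8, handle the Step-4 assignment by observing the new parent $u$ cannot itself be a singleton and so inherits one of the previous two cases, conclude that every original root is at worst a grandchild of a root after Step 8, and let the pointer-jump in Step 9 flatten the height-$2$ chains. Your explicit up-front observation that the sets of vertices whose parents are written in Steps 4, 6, and 8 are pairwise disjoint is a clean way to package what the paper establishes inline (e.g., ``$u$ cannot update its parent during Step 4''), but it does not change the underlying argument.
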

\begin{proof}
    If a root $u$ at the beginning of Step~\ref{alg:matching:s4} becomes a parent of another vertex $v$ then $u$ cannot update its parent during Step~\ref{alg:matching:s4}, because $u$ cannot be a singleton in Step~\ref{alg:matching:s3} since $u$ has an outgoing arc $(u, v)$ before Step~\ref{alg:matching:s3}. 
    Therefore, any original root must be a root or a child of a root after Step~\ref{alg:matching:s4}.
    
    If a root $v$ at the beginning of Step~\ref{alg:matching:s6} becomes a parent of another vertex then $v$ cannot update its parent during Step~\ref{alg:matching:s6}, because $v$ has no outgoing arc dues to Step~\ref{alg:matching:s5}. Moreover, $v$ cannot update its parent in Step~\ref{alg:matching:s8} since $v$ has no incoming arc dues to the deletion of all its children from $D$ in Step~\ref{alg:matching:s6}. Therefore, any vertex who obtains a child in Step~\ref{alg:matching:s6} must be a root after Step~\ref{alg:matching:s8}.
    Similarly, any vertex $u$ who obtains a child $v$ in Step~\ref{alg:matching:s8} must be a root after Step~\ref{alg:matching:s8} because $(u, v)$ is isolated so there is no incoming arc on $u$.
    
    As a result, an original root who becomes a non-root in Step~\ref{alg:matching:s6} or Step~\ref{alg:matching:s8} must be a child of a root after Step~\ref{alg:matching:s8}. 
    Additionally, if an original root becomes a child of $u$ in Step~\ref{alg:matching:s4}, then $u$ must be a root by the first paragraph, and by the previous sentence, $u$ has to be a root or a child of a root after Step~\ref{alg:matching:s8}. Therefore, any original root must be a root or a child of a root or a grandchild of a root after Step~\ref{alg:matching:s8}. 
    Finally, Step~\ref{alg:matching:s9} makes each of them a root or a child of a root.
\end{proof}

\subsection{A $\log\log n$-Shrink Algorithm} \label{subsec:alg_loglog}

\Cliff{I will add a simplified version for the first 10 pages.}

In this section, we give a $\log\log n$-shrink algorithm. 
The algorithm runs in $o(\log\log n)$ time and $O(m) + \overline{O}(n)$ work and contracts the input graph to at most $n/\log\log n$ vertices w.h.p.

The four subroutines \textsc{Alter}, \textsc{Filter}, \textsc{Reverse}, and \textsc{Extract} are listed below. 
We briefly explain the parameter and function of each algorithm, whose detailed proofs are presented after that in this section.

\begin{framed}
\noindent \textsc{Alter}$(E)$:
\begin{enumerate}
    \item For each edge $(u, v) \in E$: replace $(u, v)$ by $(u.p, v.p)$ in $E$. \label{alg:alter:s1}
    \item Remove loops from $E$.
\end{enumerate}
\end{framed}
\textsc{Alter}$(E)$ is a standard PRAM algorithm that takes an edge set $E$ as input and alters $E$ by moving each edge adjacent to a vertex to its parent and deleting loops on each vertex.

\begin{framed}
\noindent \textsc{Filter}$(E, k)$:
\begin{enumerate}
    \item For \emph{round} $j$ from $0$ to $k$: \textsc{Matching}$(E)$, \textsc{Alter}$(E)$, delete each edge from $E$ w.p. $10^{-4}$.\label{alg:filter:s1}
    \item For \emph{iteration} $j$ from $k$ to $0$: if a vertex $v$ updates $v.p$ in round $j$ then $v.p = v.p.p$.\label{alg:filter:s2}
    \item Return $V(E)$. \label{alg:filter:s3}
\end{enumerate}
\end{framed}
\textsc{Filter}$(E, k)$ takes an edge set $E$ and an integer $k$ as input. In each of the $k$ rounds, it deletes a constant fraction of the edges to reduce the work and tries to contract vertices using the remaining edges. After $k$ rounds, the algorithm tries to flatten the tree in the labeled digraph in $k$ iterations. All roots that still adjacent with an edge are returned. The algorithm can be seen as a \emph{filter} as the vertices with higher degree are more likely to be filtered out to be contracted or returned. 
The parameter $E$ in \textsc{Filter}$(E, k)$ is local, i.e., the edge deletion and \textsc{Alter}$(E)$ in Step~\ref{alg:filter:s1} do not change the edge set originally passed to \textsc{Filter}.

\begin{framed}
\noindent \textsc{Reverse}$(V', E)$:
\begin{enumerate}
    \item For each non-root vertex $v \in V'$: $v.p.p = v$, $v.p = v.p.p$.\label{alg:reverse:s1}
    \item For each vertex $v$ in the labeled digraph: $v.p = v.p.p$.\label{alg:reverse:s2} 
    \item \textsc{Alter}$(E)$. \label{alg:reverse:s3}
\end{enumerate}
\end{framed}
\textsc{Reverse}$(V')$ is a helper subroutine that takes the set of high-degree vertices as input and reverses the root-child relation on a pair of vertices if the root has low degree and the child has high degree.

\begin{framed}
\noindent \textsc{Extract}$(E, k)$:
\begin{enumerate}
    \item Initialize vertex set $V' \coloneqq \emptyset$ and edge set $E' \coloneqq \{e \in E \mid e \text{ is a non-loop} \}$.\label{alg:extract:s1}
    \item For \emph{round} $i$ from $0$ to $k$: $V' = V' + \textsc{Filter}(E', k)$, \textsc{Alter}$(E')$, $E' = E' - \{(u, v) \mid u, v \in V'\}$.\label{alg:extract:s2}
    \item For \emph{iteration} $i$ from $k$ to $0$: if a vertex $v$ updates $v.p$ in round $i$ then $v.p = v.p.p$. \label{alg:extract:s3}
    \item \textsc{Reverse}$(V', E)$.\label{alg:extract:s4}
\end{enumerate}
\end{framed}
\textsc{Extract}$(E, k)$ takes an edge set $E$ and an integer $k$ as input. 
By calling $\textsc{Filter}(E', k)$ in each of the $k$ rounds, it tries to \emph{extract} more vertices with high degree from the subgraph induced on edges (the edge set $E'$) adjacent to those low-degree vertices identified (did not pass the filter) in the previous round. 
After $k$ rounds, the algorithm tries to flatten the trees in $k$ iterations. 
Finally, the algorithms calls \textsc{Reverse}$(V')$ to make the root of any tree contains a high-degree vertex (from $V'$) a high-degree vertex and \textsc{Alter}$(E)$.

For implementation, we note that algorithms \textsc{Filter}$(E, k)$ and \textsc{Matching}$(E)$ use pass-by-value, i.e., the algorithms copy the edge set $E$ before all steps and only operate on the copy. 
This does not affect the asymptotic running time and total work but simplifies the frameworks and proofs. 
All other algorithms in this paper use pass-by-reference, which operate on the variables passed to them.

\subsubsection{Correctness} \label{subsubsec:alg_loglog_correctness}

\begin{lemma} \label{lem:filter_root_child}
    In the execution of \textsc{Filter}$(E, k)$, for any vertex $v$ and integer $j \in [0, k]$, if $v.p$ is updated in round $j$, then $v.p$ is a root or a child of a root at the beginning of iteration $j$, and must be a root at the end of that iteration.
\end{lemma}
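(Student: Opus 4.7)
The plan is to first pin down a ``parent updated at most once'' invariant for Step~\ref{alg:filter:s1}, and then run a downward induction on the iteration index of Step~\ref{alg:filter:s2}.

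I would first argue that once $v.p$ is set to some $u \ne v$ in round $j$ of Step~\ref{alg:filter:s1}, no later operation in Step~\ref{alg:filter:s1} can change $v.p$. Indeed: (a) \textsc{Matching}'s Step~\ref{alg:matching:s1} discards every edge touching a non-root from its local edge copy, so the path compression at Step~\ref{alg:matching:s9} iterates only over vertices that were roots when \textsc{Matching} was called, and in particular not over $v$; (b) \textsc{Alter} only rewrites edge endpoints and removes loops; and (c) the random edge deletion inside Step~\ref{alg:filter:s1} of \textsc{Filter} leaves parent fields alone. Combined with Lemma~\ref{lem:constant_shrink_correctness} applied to round $j$'s call to \textsc{Matching}, the new parent $u$ is itself a root at the end of round $j$.

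Next, I would prove by downward induction on $j' \in \{k, \dots, 0\}$ the intermediate claim that \emph{after iteration $j'$ of Step~\ref{alg:filter:s2}, every vertex $w$ whose parent was updated in some round $\ge j'$ has $w.p$ equal to a root}. The base case $j' = k$ is immediate because the parent $u$ set in round $k$ is already a root and iteration $k$'s assignment is a no-op. For the inductive step, vertices updated in rounds $> j'$ are not touched by iteration $j'$ and inherit the conclusion from after iteration $j'+1$. For a vertex $w$ updated in round $j'$, the first paragraph gives $w.p = u$ at the start of iteration $j'$ with $u$ a root at the end of round $j'$; if $u$ itself was later redirected in some round $j'' \ge j'+1$, the inductive hypothesis guarantees $u.p$ is already a root at the start of iteration $j'$ (no intervening iteration $j''-1, \dots, j'+1$ disturbs $u.p$, since $u$ was updated only once), and otherwise $u$ is still a root. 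In either subcase $w.p = w.p.p$ lands on a root.

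The lemma then follows by applying the intermediate claim to $u = v.p$ at index $j+1$. At the start of iteration $j$, the single-update invariant gives $v.p = u$, and either $u$ is still a root or, by the claim, $u.p$ is a root; in either case $v.p$ is a root or a child of a root. Iteration $j$'s $v.p = v.p.p$ then promotes $v.p$ to a root. The main obstacle throughout is the interleaved bookkeeping: one must rule out that any subsequent \textsc{Matching}, \textsc{Alter}, or earlier Step~\ref{alg:filter:s2} iteration could shift $v.p$ away from $u$ before iteration $j$ reaches it, and one must verify that $u$'s own parent has already been fully compressed to a root by that time. Both facts follow from the single-update invariant combined with the fact that Step~\ref{alg:filter:s2} counts downward, after which the induction is essentially mechanical.
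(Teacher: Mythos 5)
Your proposal is correct and follows essentially the same route as the paper's proof: a downward induction over the iteration indices, driven by the two key facts that a parent acquired in a round must itself be a root at the end of that round (since non-roots lose their edges in Step~\ref{alg:matching:s1} of \textsc{Matching}, only roots can acquire children), and that a vertex's parent is updated in at most one round, so nothing disturbs it before its own iteration performs the pointer jump onto a root. Your packaging via the explicit intermediate claim (and citing Lemma~\ref{lem:constant_shrink_correctness} for the new parent being a root) is just a mild reorganization of the paper's induction, which is carried out directly on the lemma statement with the two cases of whether $u.p$ is later updated.
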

\begin{proof}
    The proof is by an induction on $j$ from $k$ to $0$. 
    In the base case, $v.p$ is updated to $u$ in round $k$. Observe that $u$ must be a root over all rounds, otherwise Step~\ref{alg:matching:s1} within \textsc{Matching}$(E)$ removes its adjacent edges and $u$ can never become a parent of other vertices afterwards (this observation will be used frequently). Since iteration $k$ is the first iteration, we have that $v.p = u$ is a root at the beginning of iteration $k$. Next, we prove that $v.p$ must be a root at the end of iteration $k$, completing the proof for the base case. Since $u$ did not update its parent in round $k$, we have that $u.p$ in not updated in iteration $k$. Since $u = u.p$ is a root during the entire iteration $k$, $v.p$ will be $v.p.p = u.p$ at the end of iteration $k$, and the base case holds.
    
    In round $j < k$, if $v.p$ is updated to $u$, then $u$ is a root and cannot update its parent in round $j$, which means $u.p$ either 
    (\romannumeral1) is updated in round $j' > j$ or
    (\romannumeral2) remains to be $u$ after round $k$. 
    It is easy to prove for case (\romannumeral2) because $v.p = u = u.p$ holds from the end of round $j$ to the end of round $k$, so it continues to hold from the beginning of iteration $k$ to the beginning of iteration $j$. Meanwhile, at the end of iteration $j$, $v.p$ becomes $v.p.p = u.p = u$ which is a root. So we proved case (\romannumeral2).
    
    It remains to prove for case (\romannumeral1). By the induction hypothesis, we know that $u.p$ is a root or a child of a root at the beginning of iteration $j'$, and is a root at the end of iteration $j'$. Note that for $u.p$ to be updated in round $j' > j$, $u$ must be a root before round $j'$ which means $u.p$ is not updated before round $j'$, so from the end of iteration $j'$ to the end of iteration $j$, $u.p$ is not updated. Therefore, $u.p$ is a root during the entire iteration $j$. Since $v.p$ is updated in round $j$, it cannot be updated in any round $j' > j$, which means it cannot be updated before iteration $j$. We obtain that at the beginning of iteration $j$, $v.p = u$ is a root or a child of $u.p$, which is a root, and this proves the first part of the statement. At the end of iteration $j$, $v.p$ is updated to $v.p.p = u.p$ which is a root, completing the induction.
\end{proof}

\begin{lemma} \label{lem:filter_tree_height}
    For any integer $h \ge 0$, if the labeled digraph at the beginning of an execution of \textsc{Filter}$(E, k)$ is a set of trees with height at most $h$, then it is a set of trees with height at most $h+1$ at the end of that execution.
\end{lemma}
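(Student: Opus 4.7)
The plan is to track, for each vertex $v$, how its directed path to a root evolves during \textsc{Filter}$(E,k)$, and argue that this path grows by at most one arc. The key structural observation is that \textsc{Matching}$(E)$ (Step~\ref{alg:matching:s1} in particular) deletes any edge not joining two roots, and every subsequent parent update inside \textsc{Matching} is performed on a vertex that is still a root at that time. Hence during all of Step~\ref{alg:filter:s1} of \textsc{Filter}, the only vertices whose parent pointer can change are vertices that are roots at the moment of the change; in Step~\ref{alg:filter:s2} of \textsc{Filter}, only vertices $v$ that were updated in some round $j$ have $v.p$ touched (in iteration $j$). Together with the assumption that the initial labeled digraph is a forest, this lets me classify every vertex.

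Next I would fix a vertex $v$ whose initial depth in its original tree is $d \le h$, and induct on $d$. If $d=0$, then $v$ is an original root; either $v$ stays a root for the whole execution (final depth $0$), or $v$ is paired off inside \textsc{Matching} in some round $j$, in which case $v$ becomes a non-root and its pointer cannot be changed again in any later round (\textsc{Matching} in Step~\ref{alg:matching:s1} deletes all edges incident to non-roots), and by Lemma~\ref{lem:filter_root_child} iteration $j$ of Step~\ref{alg:filter:s2} sets $v.p$ to a vertex that is a root at the end of that iteration; since no later iteration touches $v.p$, the final value of $v.p$ is still that root, so $v$ has final depth $1$. If $d\ge 1$, then $v$ is never a root during \textsc{Filter} (no subroutine in \textsc{Filter} turns a non-root into a root), so $v$'s parent pointer is never rewritten, and $v.p$ equals $v$'s original parent $u$, whose initial depth is $d-1$. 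By the inductive hypothesis $u$ ends at depth at most $d \le h$ below some new root, and prepending the arc $(v,u)$ yields final depth at most $d+1 \le h+1$ for $v$.

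Finally, I would wrap up by noting that the labeled digraph is still a forest at the end: by Lemma~\ref{lem:filter_root_child}, whenever a pointer changes in Step~\ref{alg:filter:s2} it ends up pointing at a vertex that is a root at that moment, so no cycle of length $>1$ is created, and the earlier per-vertex argument shows every directed root-path has length at most $h+1$. The main subtlety I expect is the interplay between Step~\ref{alg:filter:s1} and Step~\ref{alg:filter:s2}: a vertex $v$'s parent could in principle still be an intermediate non-root at the start of iteration $j$, not a genuine root, so the compression $v.p \gets v.p.p$ is required. Lemma~\ref{lem:filter_root_child} is exactly tailored to certify that the pointer $v.p.p$ used in iteration $j$ is indeed a root, which makes the induction close cleanly and caps the height increase at $+1$.
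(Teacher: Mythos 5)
Your proposal is correct and rests on the same two pillars the paper uses: (i) vertices that are non-roots at the start of \textsc{Filter} have all incident edges deleted in Step~\ref{alg:matching:s1} of \textsc{Matching} and are never written to in Step~\ref{alg:filter:s2}, so their parent pointers are frozen; and (ii) for an original root whose pointer does change, Lemma~\ref{lem:filter_root_child} pins its final depth to $1$. The only presentational difference is that you phrase the argument as an induction on each vertex's initial depth, whereas the paper tracks the root-to-leaf distance within each original tree directly; these are equivalent framings of the same proof.
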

\begin{proof}
    It is easy to see from \textsc{Matching} that the algorithm never creates cycles except loops on roots. For any vertex $v$ that is a root before round $0$, if $v$ is the only vertex in the tree, then the contribution to tree height from the tree path between $v$ and a leaf (might be $v$ itself) in this tree is $0$.
    Otherwise, let $w \ne v$ be a leaf descendant of $v$ in the tree before round $0$, then the distance between $v$ and $w$ in the tree is at most $h$ after round $k$, because for any vertex $u$ on the tree path from $v$ to $w$ (excluding $v$), there is no edge on $u$, so $u$ cannot change its parent and $w$ remains to be a leaf.
    
    If $v.p$ is updated in round $j$, then by Lemma~\ref{lem:filter_root_child}, $v.p = u$ is a root at the end of iteration $j$. Since $v$ must be a root before round $j$, $v.p = u$ continues to hold till the end of iteration $0$. Observe that any iteration does not make a root into a non-root (the only place that possibly turns a root into a non-root is within \textsc{Matching} when updating parents), therefore $u$ remains to be a root after iteration $0$, which means $v$ has distance at most $1$ from its root.
    Summing up, the distance from the root of any tree to any leaf in this tree is at most $h+1$ at the end of that execution.
\end{proof}

\begin{lemma} \label{lem:extract_round_root_adj}
    In the execution of \textsc{Extract}$(E, k)$, for any integer $i \in [0, k]$, if at the beginning of round $i$, both ends of any edge in $E'$ are  roots, then this still holds at the end of round $i$.
\end{lemma}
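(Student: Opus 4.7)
The plan is to trace the invariant through the three operations that constitute round $i$ of \textsc{Extract}: the call \textsc{Filter}$(E', k)$, the call \textsc{Alter}$(E')$, and the final deletion $E' \leftarrow E' - \{(u,v) : u,v \in V'\}$. Let $V(E')$ denote the end-set at the beginning of round $i$; by hypothesis every vertex in $V(E')$ is a root at that moment.

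The heart of the argument is the claim that after \textsc{Filter}$(E', k)$ returns, for every $v \in V(E')$ the current value $v.p$ is a root. Granting this, the rest is immediate: \textsc{Filter} is pass-by-value, so $E'$ itself is untouched; the subsequent \textsc{Alter}$(E')$ replaces each $(u,v) \in E'$ by $(u.p, v.p)$ and removes loops, yielding only edges whose endpoints are roots; and the final deletion neither modifies parents nor introduces edges, so the invariant survives to the end of round $i$.

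To prove the claim, fix $v \in V(E')$. If $v.p$ is never altered during the $k+1$ rounds of \textsc{Filter}, then $v.p = v$ throughout and $v$ is still a root. Otherwise, observe first that $v.p$ is altered in exactly one round: once $v$ becomes a non-root, the \textsc{Alter} call in that same round drains its edges onto its new parent, and in every later round Step~\ref{alg:matching:s1} of \textsc{Matching} deletes any edge with a non-root endpoint, so $v \notin V(E)$ thereafter and no subsequent step of \textsc{Matching} or of the flattening phase touches $v.p$. Let $j$ be that unique round and set $u = v.p$ at the end of iteration $j$. Lemma~\ref{lem:filter_root_child} gives that $u$ is a root at the end of iteration $j$; and since $v$'s parent is updated only in round $j$, the flattening iterations $j-1, \dots, 0$ do not touch $v.p$, so $v.p = u$ persists to the end of iteration $0$. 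Moreover $u$ itself stays a root through those later iterations, because a flattening update can rewrite $u.p$ only if $u$ previously changed its own parent, and for such an update to preserve or restore $u.p = u$ would require $u.p.p = u$, i.e., a $2$-cycle, which is forbidden by the standing invariant on the labeled digraph. Hence $v.p$ is a root at the end of \textsc{Filter}, proving the claim.

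The main obstacle I anticipate is exactly this last step: Lemma~\ref{lem:filter_root_child} gives rootness of $v.p$ only at the end of iteration $j$, not at the end of iteration $0$. Closing the gap needs the two small structural facts above---each vertex updates its parent in at most one round of \textsc{Filter}, and a flattening update cannot turn a root into a non-root without creating a forbidden $2$-cycle. Once these are in hand, the proof reduces to checking the three operations of round $i$ in sequence, which is bookkeeping.
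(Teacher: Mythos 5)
Your proof is correct, and its core is the same as the paper's: at the unique round $j$ in which an endpoint $v$ becomes a non-root you invoke Lemma~\ref{lem:filter_root_child} to get that $v.p=u$ is a root at the end of iteration $j$, and then argue that $u$ never updated its own parent (a non-root cannot be restored to a root without a $2$-cycle), so no flattening iteration touches $u.p$ and $v.p=u$ survives as a root to the end of \textsc{Filter}. Where you differ is the decomposition: you argue edge-by-edge --- every edge of $E'$ has both old ends in $V(E')$, each such end's final parent is a root, and since \textsc{Filter} is pass-by-value the round's \textsc{Alter}$(E')$ plants the edge exactly on those two roots, which no later step of the round can un-root --- whereas the paper argues vertex-by-vertex, and therefore also has to prove two side facts you never need: that a vertex with no adjacent edge at the start of the round cannot acquire one (its first paragraph), and that no new edge is moved onto a vertex that became a non-root (its third paragraph). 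Your framing buys a shorter proof of the literal statement; the paper's framing additionally yields those vertex-level facts, which are in the spirit of its later root-counting arguments. One small wording slip: your sentence claiming that ``no subsequent step \dots of the flattening phase touches $v.p$'' is not literally true, since iteration $j$ itself executes $v.p = v.p.p$ for $v$; you implicitly fix this two sentences later by defining $u$ as the value of $v.p$ \emph{at the end of iteration $j$} and only claiming iterations $j-1,\dots,0$ leave it alone, so this is a presentation issue rather than a gap.
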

\begin{proof}
    Firstly, for any vertex $v$ with no adjacent edges in $E'$ at the beginning of round $i$, the only way to obtain an adjacent edge in this round is to have a vertex $w$ with an adjacent edge such that the \textsc{Filter}$(E', k)$ updates $w.p$ to $v$ and the \textsc{Alter} moves this edge to $v$. 
    So $w$ is a descendant of $v$ before Step~\ref{alg:filter:s2} of \textsc{Filter}$(E, k)$ called in round $i$.
    At this point, consider the tree path from $w$ to $v$ right before iteration $0$. Let $u$ be the child of $v$ on this path. We claim that $u$ must be a child of $v$ at the beginning of this call to \textsc{Filter}$(E, k)$. 
    Otherwise, $u$ becomes a child of $v$ during \textsc{Matching}$(E)$ where $E$ is the $E'$ passed to \textsc{Alter} in round $i$ within \textsc{Extract}$(E, k)$ then passed to the \textsc{Matching} within \textsc{Filter}, which leads to a contradiction since $v$ does not have an adjacent edge in $E'$ and \textsc{Matching} only updates the parent to a vertex with adjacent edges. 
    As a result, $u$ must be a non-root before any application of \textsc{Matching} and thus is a non-root before calling \textsc{Filter}$(E', k)$, which means $u$ cannot have an adjacent edge in $E'$ by the precondition of this lemma. 
    Continue this process by considering the child of $u$ on this path and so on, we get that $w$ is a non-root at the beginning of round $i$ of \textsc{Extract}$(E, k)$ and thus cannot have adjacent edges in $E'$, contradiction. 
    Therefore, a vertex with no adjacent edges in $E'$ at the beginning of round $i$ cannot have adjacent edges at the end of round $i$.
    
    Secondly, for any vertex $v$ with an adjacent edges in $E'$, $v$ must be a root before calling \textsc{Filter}$(E', k)$ in round $i$ by the precondition. If $v$ is still a root after calling \textsc{Filter}$(E', k)$, then the lemma holds. 
    Otherwise, $v$ must become a non-root in the call to \textsc{Matching}$(E)$ within \textsc{Filter}$(E', k)$ in round $i$ of \textsc{Extract}$(E, k)$. 
    Suppose that $v$ becomes a non-root in round $j$ of \textsc{Filter}$(E', k)$, then it remains to be a non-root afterwards. 
    By Lemma~\ref{lem:filter_root_child}, $v.p = u$ must be a root at the end of iteration $j$. 
    Since $u$ cannot be a non-root before iteration $j$, $u.p$ has not been updated over all rounds in \textsc{Filter}$(E', k)$. Therefore $u.p$ (which is $u$) is not updated during all iterations within \textsc{Filter}$(E', k)$. Recall that a root remains to be a root during all iterations, so $v.p = u$ is a root at the end of \textsc{Filter}$(E', k)$. Because $v.p$ is a root and $v$ is a non-root, it must be that $v.p \ne v$, then the \textsc{Alter}$(E')$ in round $i$ of \textsc{Extract}$(E, k)$ moves all adjacent edges on $v$ to $v.p$.
    
    It remains to show that no new edge is moved to $v$ in that same \textsc{Alter}$(E')$. For this situation to happen, there are two possibilities: 
    (\romannumeral1) a descendant $w$ of $v$ has an adjacent edge in $E'$ before calling \textsc{Filter}$(E', k)$ and $w.p = v$ after that in round $i$, which contradicts with the precondition since $w$ is a non-root;
    (\romannumeral2) a root $w$ with an adjacent edge before calling \textsc{Filter}$(E', k)$ updates $w.p$ to $v$ within \textsc{Filter}$(E', k)$. 
    Let $j$ be the round that $w.p$ is updated such that $w.p = v$ at the end of iteration $j$. 
    By Lemma~\ref{lem:filter_root_child}, $v$ is a root at the end of iteration $j$.
    Since a root never becomes a non-root without executing \textsc{Matching}, we have that $v$ is still a root at the end of that call to \textsc{Filter}$(E', k)$, contradicting with the assumption we made in the third sentence in the previous paragraph. 
    As a result, all adjacent edges on $v$ is moved to another vertex and there is no new edge moved from other vertices to $v$, so $v$ has no adjacent edges at the end of round $i$.
\end{proof}

\begin{lemma} \label{lem:extract_contraction_alg}
    If at the beginning of an execution of \textsc{Extract}$(E, k)$, both ends of any edge in $E$ are roots and all trees in the labeled digraph have height $0$, then at the end of that execution both ends of any edge in $E$ are still roots and all trees are flat.
\end{lemma}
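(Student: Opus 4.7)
My plan is to prove the claim in three phases matching the structure of \textsc{Extract}$(E, k)$. In Phase 1 (the rounds of Step~\ref{alg:extract:s2}), I would show by $(k+1)$-fold induction with Lemma~\ref{lem:extract_round_root_adj} that both ends of every edge in $E'$ remain roots throughout, using the hypothesis on $E \supseteq E'$ as the base case. I would separately track tree heights: each of the $k+1$ calls to \textsc{Filter}$(E', k)$ grows tree heights by at most $1$ (Lemma~\ref{lem:filter_tree_height}), and the intervening \textsc{Alter}$(E')$ and edge-removal operations do not touch parent pointers, so at the end of Phase 1 all trees have height at most $k+1$.

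In Phase 2 (Step~\ref{alg:extract:s3}), I would show by downward induction on $i$ from $k$ to $0$ that, immediately after iteration $i$, every vertex $v$ whose parent was updated in some round $\ge i$ has $v.p$ equal to a root. This lifts the argument of Lemma~\ref{lem:filter_root_child} to the outer framework: since a vertex's parent is updated in at most one round (once it becomes a non-root, Step~\ref{alg:matching:s1} of \textsc{Matching} blocks further updates), a vertex $v$ updated in round $i$ has $v.p = u$ with $u$ a root at that moment. If $u$'s own parent was updated in some later round $j' > i$, the inductive hypothesis applied to $u$ at iteration $j'$ places $u.p$ at a root, and this value persists until iteration $i$'s pointer-jump sets $v.p = v.p.p = u.p$; otherwise $u.p = u$ throughout. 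Either way $v.p$ ends up at a root, so at $i = 0$ every vertex ever updated points to a root, and vertices never updated are themselves roots, making all trees flat.

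Phase 3 handles \textsc{Reverse}$(V', E)$: Step~\ref{alg:reverse:s1} swaps parent-child roles for each non-root $v \in V'$, which (given flatness before) produces a labeled digraph of height at most $2$; Step~\ref{alg:reverse:s2}'s global pointer-jump restores flatness throughout; and Step~\ref{alg:reverse:s3}'s \textsc{Alter}$(E)$ replaces every $(u,v) \in E$ by $(u.p, v.p)$ --- both roots by flatness --- and discards new loops, without touching parent pointers, so flatness persists and both ends of every surviving edge in $E$ are roots.

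The main obstacle I anticipate is the inductive argument in Phase 2, specifically verifying that the reverse-order pointer-jumps compose correctly when a vertex's parent lies on a chain of updates coming from different rounds; a secondary technicality is handling the arbitrary-CRCW concurrent writes in \textsc{Reverse}'s swap step when several vertices in $V'$ share the same parent, though only the weaker conclusion that a single pointer-jump suffices to flatten is needed there.
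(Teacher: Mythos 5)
Your proposal is correct and follows essentially the same route as the paper's proof: an induction with Lemma~\ref{lem:extract_round_root_adj} for edge--root adjacency, a lifting of Lemma~\ref{lem:filter_root_child}'s downward-iteration argument from \textsc{Matching} to \textsc{Filter} (which is exactly how the paper obtains flatness after Step~\ref{alg:extract:s3}), and the same height-$2$-then-pointer-jump analysis of \textsc{Reverse}. The only cosmetic difference is your explicit height-at-most-$(k{+}1)$ bound after Step~\ref{alg:extract:s2}, which the paper skips since the lifted Lemma~\ref{lem:filter_tree_height} analog already yields flatness directly.
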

\begin{proof}
    \Cliff{You can expand this part if you want.}
    Using the same proof for Lemma~\ref{lem:filter_root_child} (replacing \textsc{Matching}$(E)$ in that proof with \textsc{Filter}$(E, k)$), we can show that for any vertex $v$ and integer $i \in [0, k]$, if $v.p$ is updated in round $i$ of \textsc{Extract}$(E, k)$, then $v.p$ is a root or a child of a root at the beginning of iteration $i$, and must be a root at the end of that iteration. 
    This observation implies an analog of Lemma~\ref{lem:filter_tree_height} for \textsc{Extract}$(E, k)$ using the same technique in the proof of Lemma~\ref{lem:filter_tree_height} (by replacing \textsc{Matching}$(E)$ in that proof with \textsc{Filter}$(E, k)$)). That is, we can show that for any integer $h \ge 0$, if the labeled digraph at the beginning of \textsc{Extract}$(E, k)$ is a set of trees with height at most $h$, then it is a set of trees with height at most $h+1$ at the end of iteration $k$ in Step~\ref{alg:extract:s3} of \textsc{Extract}$(E, k)$, which means all tree are flat at this point.

    By an induction on round $i$ from $0$ to $k$ and Lemma~\ref{lem:extract_round_root_adj}, all edges are adjacent to roots after round $k$, which continues to hold after Step~\ref{alg:extract:s3}. 
    By the previous paragraph, all trees are flat. 
    In the \textsc{Reverse}$(V', E)$ in Step~\ref{alg:extract:s4}, consider a tree rooted at $u$ who has a child $v \in V'$ and $v$ wins the concurrent writing $v.p.p = v$ (arbitrarily) in Step~\ref{alg:reverse:s1}. 
    After this, we have $u.p = v$; after the following $v.p = v.p.p$ in Step~\ref{alg:reverse:s1}, we have $v.p = v$, giving a tree rooted at $v$ with height at most $2$.
    Then, Step~\ref{alg:reverse:s2} makes this tree flat. 
    Therefore, Step~\ref{alg:reverse:s3} (\textsc{Alter}$(E)$) makes all edges adjacent to roots.
\end{proof}

\subsubsection{Running Time and Work} \label{subsubsec:alg_loglog_time_work}

The goal of our main algorithm \textsc{Extract} is to reduce the number of roots (in the labeled digraph) to $n /\log\log n$, so we assume the number of roots is at least $n/\log n$ during the execution, since otherwise the algorithm succeeds earlier. 
Additionally, we assume the number of edges is at least $n/(\log n)^2$ during the execution, otherwise we achieve $O(m + n)$ work even more easily (see details below).

\begin{lemma} \label{lem:filter_time_work}
    \textsc{Filter}$(E, k)$ can be implemented on an ARBITRARY CRCW PRAM in $O(k \log^* n)$ time and $O(|E|)$ work w.p. $1 - k n^{-7}$.
\end{lemma}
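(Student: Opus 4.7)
The plan is to implement each of the $k$ rounds of Step~\ref{alg:filter:s1} in $O(\log^* n)$ time and $O(|E_j|)$ work, where $E_j$ is the edge set at the start of round $j$, and then bound $\sum_{j=0}^{k-1} |E_j|$ by $O(|E|)$ with high probability by exploiting independent per-edge deletions.

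Within a single round, \textsc{Matching}$(E_j)$ runs in $O(1)$ time and $O(|E_j|)$ work by Lemma~\ref{lem:constant_shrink_work_time}. \textsc{Alter}$(E_j)$ is a direct $O(1)$-time, $O(|E_j|)$-work routine in which each edge processor reads the current parents of its two endpoints from the corresponding vertex processors, rewrites itself, and marks itself as deleted if it becomes a loop. The random-deletion step is likewise $O(1)$ time and $O(|E_j|)$ work. To keep subsequent rounds cheap, I would then apply approximate compaction (Lemma~\ref{lem:apx_compaction}) to the length-$O(|E_j|)$ array of surviving edges, producing a compact array of length $O(|E_{j+1}|)$ for round $j+1$. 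This is where the $O(\log^* n)$ time per round arises, and it succeeds w.p.\ at least $1 - n^{-7}$ whenever $|E_j|$ is at least a fixed polylogarithm of $n$.

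For the total-work bound, ignore the deletions induced by \textsc{Alter} (which can only help) and consider the independent random-deletion step alone: each edge survives a round with probability $1 - 10^{-4}$, so its lifetime $T_e$ in $E$ is stochastically dominated by a $\mathrm{Geometric}(10^{-4})$ variable with $\E[T_e] = O(1)$, and the $T_e$ are mutually independent. Since $\sum_{j} |E_j| = \sum_{e \in E} T_e$, a standard Chernoff bound for sums of independent bounded variables gives $\sum_e T_e = O(|E|)$ w.p.\ at least $1 - \exp(-\Omega(|E|)) \geq 1 - n^{-7}$ in the relevant regime $|E| = \Omega(\log n)$. Step~\ref{alg:filter:s2} can then be implemented on the same edge processors, with each vertex processor recording the latest round in which its parent was updated; iteration $j$ uses the compact array of $E_j$ to perform the pointer jumps $v.p \leftarrow v.p.p$ via concurrent (agreeing) writes in $O(1)$ time and $O(|E_j|)$ work, contributing $O(k)$ time and $O(|E|)$ work in aggregate. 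A union bound over the $k$ approximate-compaction failures and the Chernoff event yields the claimed failure probability $k n^{-7}$.

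The main obstacle is reconciling the probabilistic guarantees with the corner case of small edge sets. Lemma~\ref{lem:apx_compaction} only gives $1 - n^{-7}$ success when the input array has polylogarithmic length, and the Chernoff bound similarly requires $|E|$ to be large. If $|E_j|$ falls below this threshold partway through Step~\ref{alg:filter:s1}, the remaining rounds contribute only lower-order work and time, so one can analyze those rounds deterministically using $|E_j| \leq |E|$ and still match the claimed $O(|E|)$ work and $O(k\log^* n)$ time. A milder subtlety is that \textsc{Alter}'s loop removals introduce dependencies between edge lifetimes, but since these only shorten lifetimes, the Chernoff argument driven purely by the independent random-deletion step remains a valid upper bound.
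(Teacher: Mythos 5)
Your implementation of Step~\ref{alg:filter:s1} (per-round \textsc{Matching}/\textsc{Alter}/random deletion followed by approximate compaction, giving $O(\log^* n)$ time and $O(|E_j|)$ work per round) is the same as the paper's. You diverge in two places, both legitimately. First, for the aggregate work you bound $\sum_j |E_j|$ by the independent-coin edge lifetimes, stochastically dominating the true lifetime by a $\mathrm{Geometric}(10^{-4})$ variable and using negative-binomial concentration; the paper instead proves per-round geometric decay $|E_{j+1}|\le 0.99999\,|E_j|$ by a Chernoff bound in every round and sums the geometric series. Your argument needs only $|E|=\Omega(\log n)$ rather than every $|E_j|$ being large, which is a mild gain (be careful, though, that geometric variables are not bounded; either cap $T_e$ at $k+1$ or invoke the standard negative-binomial tail). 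Second, for Step~\ref{alg:filter:s2} you retain the per-round compacted arrays and have each parent-updating write also stamp the round number on the vertex, so iteration $j$ just rescans the array of $E_j$; the paper instead has the updating edges write the affected vertices into a fresh region of public RAM whose size $2|E_0|\cdot 0.99999^j$ is pre-allocated using the decay estimate. Your scheme avoids needing that decay estimate (consistent with your lifetime-based accounting) at the cost of $O(\sum_j|E_j|)=O(|E|)$ retained space; to make it airtight you should note that every vertex whose parent is updated in round $j$ is an endpoint of some edge of $E_j$ (true, since \textsc{Matching} only touches endpoints of its copied edge set), so the rescan covers all required vertices, and that concurrent executions of $v.p=v.p.p$ by several edge processors write the same value.

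The one place where your argument as written does not hold is the small-$|E_j|$ corner case. If the compacted array length drops below the polylogarithmic threshold needed for Lemma~\ref{lem:apx_compaction} to succeed with probability $1-n^{-7}$, bounding the remaining rounds ``deterministically using $|E_j|\le|E|$'' gives only $O(k\,|E_{j_0}|)$ (or, if $|E|$ itself is tiny, $O(k\,|E|)$) additional work, which is not $O(|E|)$ in general; the remaining rounds are lower-order only when $|E|$ is much larger than $k\cdot\mathrm{poly}(\log n)$. The paper sidesteps this by the standing assumption stated at the beginning of \S\ref{subsubsec:alg_loglog_time_work} that the number of edges stays at least $n/(\log n)^2$ throughout (under which each compaction succeeds with probability $1-n^{-9}$), and the lemma is only ever invoked in that regime. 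So you should either import that assumption explicitly, or handle the small regime by a genuinely different argument (e.g., switching to the vertex-indexed implementation of Lemma~\ref{lem:filter_time_work_naive} once the edge set is small), rather than the bound you sketched.
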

\begin{proof}
    For all $j \in [0, k]$, let $E_j$ be the edge set $E$ at the beginning of round $j$. 
    At the beginning of \textsc{Filter}$(E, k)$, we assume all edges of $E$ are stored in an index array of length at most $2|E|$, which can be guaranteed by approximate compaction (Lemma~\ref{lem:apx_compaction}) before the call to \textsc{Filter}$(E, k)$ whenever we need. 
    Inductively, assuming at the beginning of each round $j$, all edges of $E_j$ are stored in an indexed array of length at most $2|E_j|$. 
    At the end of each round $j$, the algorithm uses approximate compaction with $2|E_j|$ processors to compact the edges in $E_j$ that are not deleted into an array of length at most $2|E_{j+1}|$ because the remaining edges are exactly $E_{j+1}$. The approximate compaction takes at most $O(\log^* m) \le O(\log^* n)$ time and $O(|E_j|)$ work w.p. $1 - 1/2^{(n/(\log n)^2)^{1/25}} \ge 1 - n^{-9}$ by Lemma~\ref{lem:apx_compaction} and the assumption at the beginning of \S{\ref{subsubsec:alg_loglog_time_work}}. 
    By a Chernoff bound, we have $|E_{j+1}| \le 0.99999 |E_j|$ w.p. $1 - n^{-9}$. 
    By a union bound over $k$ rounds, w.p. $1 - k n^{-8}$, it must be that $|E_j| \le 0.99999^j |E_0|$ and round $j$ takes $O(\log^* n)$ time and $O(|E_j|)$ work for all $j \in [0, k]$, giving $O(|E_0|)$ total work and $O(k \log^* n)$ running time for Step~\ref{alg:filter:s1}.
    
    To implement Step~\ref{alg:filter:s2}, we modify Step~\ref{alg:filter:s1} as follows. 
    In each round $j$, in the call to \textsc{Matching}$(E)$, whenever an edge $(u, v)$ updates the parent of $u$ to $v$, the edge marks itself with vertex $u$ (there is a unique edge processor for each edge dues to indexing); after the call to \textsc{Matching}$(E)$, each marked edge $e$ writes the vertex marked on $e$ to a cell in the public RAM with index $\textsf{id}(e) + \textsf{offset}_j$, where $\textsf{id}(e)$ is the index of $e$ in the compacted array, and $\textsf{offset}_{j'}$ is defined as $0$ if $j'=0$ and if $j' \in [k]$,
    \begin{equation} \label{eq:offset}
        \textsf{offset}_{j'+1} = \textsf{offset}_{j'} + 2 |E_0| \cdot 0.99999^{j'} . 
    \end{equation}
    We claim that w.p. $1 - k n^{-8}$, each vertex written by the marked edges over all $k$ rounds is in a unique cell. 
    Within a round $j$, each edge id is unique dues to approximate compaction. 
    It remains to prove that for all $j$, for each $e \in E_j$, $\textsf{id}(e) < \textsf{offset}_{j+1} - \textsf{offset}_j$, which holds w.p. $1 - k n^{-8}$ by the Chernoff bound in the previous paragraph, Equation~(\ref{eq:offset}), and the fact that all edges of $E_j$ are stored in an index array of length at most $2|E_j|$. 
    The work added to Step~\ref{alg:filter:s1} from this part is $O(|E_0|)$ w.p. $1 - k n^{-8}$ and Step~\ref{alg:filter:s1} still runs in $O(k \log^* n)$ time. 
    
    Now we are ready to prove Step~\ref{alg:filter:s2}. 
    For each $j \in [0, k]$, in iteration $j$, the algorithm uses $(\textsf{offset}_{j+1} - \textsf{offset}_j)$ indexed processors for each of the cell in the public RAM indexed from $\textsf{offset}_j$ to $\textsf{offset}_{j+1}$; if the cell contains a vertex $v$, then the processor corresponding to $v$ executes $v.p = v.p.p$. Since only the vertex with an adjacent edge can possibly update its parent, our implementation is correct. The running time for this iteration is $O(1)$. 
    The work in iteration $j$ is $O(\textsf{offset}_{j+1} - \textsf{offset}_j) = O(|E_0| \cdot 0.99999^j)$ by Equation~(\ref{eq:offset}). 
    So Step~\ref{alg:filter:s2} runs in $O(k)$ time and $O(|E_0|)$ work over all $k$ iterations w.p. $1 - k n^{-8}$.
    
    In Step~\ref{alg:filter:s3}, each edge $(u, v) \in E$ notifies the vertices $u$ and $v$ to be in $V'$. This takes $O(1)$ time and $O(|E_0|)$ work. 
    By a union bound, w.p. $1 - k n^{-7}$, \textsc{Filter}$(E, k)$ runs in $O(k \log^* n)$ time and $O(|E_0|)$ work.
\end{proof}

Before proving bounds on the running time and work on \textsc{Extract}, we need some helper lemmas below.

\begin{definition} \label{def:extract_vi_ei}
    In the execution of \textsc{Extract}$(E, k)$, for all $i \in [0, k]$, let $V_i$ be the vertex set returned by \textsc{Filter}$(E', k)$ in round $i$, let $E_i$ be the edge set $E'$ at the beginning of round $i$, and let $V'_i$ be $V(E_i)$.
\end{definition}
Assume $|V'_i| \ge n / \log n$ for all $i \in [0, k]$, since otherwise we will get the desired shrink and work bound even more easily (see details in the proofs for Lemma~\ref{lem:extract_time_work_helper_V} and Lemma~\ref{lem:extract_time_work}).
    
\begin{lemma} \label{lem:extract_time_work_helper_Vprime}
    $|V'_i| \le 0.999^i n$ for all $i \in [0, k]$ w.p. $1 - k n^{-9}$.
\end{lemma}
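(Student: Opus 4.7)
The plan is to prove the lemma by induction on $i$, combining the constant-shrink bound of Lemma~\ref{lem:constant_shrink_reduce} applied to the first \textsc{Matching} call inside each \textsc{Filter}$(E', k)$, together with the root-adjacency invariant from Lemma~\ref{lem:extract_round_root_adj}.

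The base case $|V'_0| \le n$ is trivial. For the inductive step, suppose $|V'_i| \le 0.999^i n$. The key setup is to verify that at the start of round $i$, the graph $(V(E_i), E_i)$ satisfies the hypothesis of Lemma~\ref{lem:constant_shrink_reduce}. By Lemma~\ref{lem:extract_round_root_adj} (applied inductively starting from round $0$, where all vertices begin as roots with $v.p = v$), both ends of every edge in $E_i$ are roots. Moreover, $E_i$ contains no loops: the initial $E'$ in Step~\ref{alg:extract:s1} excludes loops, and \textsc{Alter}$(E')$ in Step~\ref{alg:extract:s2} removes any loops created during round $i-1$. Hence the subgraph induced on roots has $n' = |V'_i|$ vertices, each incident to a non-loop edge joining two roots, so every component of this subgraph has at least $2$ vertices.

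Now apply Lemma~\ref{lem:constant_shrink_reduce} to the very first \textsc{Matching}$(E)$ call inside \textsc{Filter}$(E', k)$ in round $i$. Under the running assumption $|V'_i| \ge n/\log n$, with probability at least $1 - 2^{-|V'_i|/10000} \ge 1 - n^{-10}$, this call reduces the number of roots in $V'_i$ to at most $0.999|V'_i|$. Subsequent operations within \textsc{Filter} can only further decrease or preserve the root count (\textsc{Matching} and \textsc{Alter} never turn a non-root back into a root), so after \textsc{Filter} returns, at most $0.999|V'_i|$ of the vertices in $V'_i$ are still roots. Then the \textsc{Alter}$(E')$ call in Step~\ref{alg:extract:s2} of \textsc{Extract} replaces every edge $(u, v) \in E'$ with $(u.p, v.p)$. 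By Lemma~\ref{lem:filter_root_child} together with the flattening pass in Step~\ref{alg:filter:s2} of \textsc{Filter}, for each vertex in $V'_i$ either $v$ itself is still a root with $v.p = v$, or $v.p$ is a root at the end of \textsc{Filter}. Therefore the altered edge set $E'$ has all its endpoints among the surviving roots, giving $|V(E')| \le 0.999|V'_i|$; since $E_{i+1} \subseteq E'$, we conclude $|V'_{i+1}| \le 0.999|V'_i| \le 0.999^{i+1} n$. A union bound over the $k+1$ rounds yields the failure probability $k n^{-9}$ (with generous slack since $n^{-10} \cdot (k+1) \le k n^{-9}$ for the parameter ranges in play).

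The main obstacle is the bookkeeping in the last paragraph: verifying that immediately after \textsc{Filter} returns, every $v \in V'_i$ has $v.p$ equal to a root in the labeled digraph, so that \textsc{Alter}$(E')$ indeed moves edges onto the reduced set of roots rather than onto intermediate non-root ancestors. This relies on the iteration phase in Step~\ref{alg:filter:s2} of \textsc{Filter}, whose guarantee is precisely Lemma~\ref{lem:filter_root_child}; had that phase not been included, tree heights could grow across successive \textsc{Filter} calls (cf.\ Lemma~\ref{lem:filter_tree_height}) and the clean ``edges land on roots'' conclusion would fail, blocking the inductive count.
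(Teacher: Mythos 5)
Your proof is correct and follows essentially the same route as the paper's: induction on $i$, applying Lemma~\ref{lem:constant_shrink_reduce} to the first \textsc{Matching} call inside \textsc{Filter}$(E',k)$ in round $i$ (using the standing assumption $|V'_i| \ge n/\log n$ and the no-loop/root-adjacency setup from Lemma~\ref{lem:extract_round_root_adj}), then observing that the remaining operations and the \textsc{Alter}$(E')$/deletion in Step~\ref{alg:extract:s2} cannot increase the number of root endpoints, followed by a union bound. Your extra bookkeeping via Lemma~\ref{lem:filter_root_child} and the flattening pass is just a more explicit version of what the paper delegates to Lemma~\ref{lem:extract_round_root_adj}.
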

\begin{proof}
    We prove that $|V'_i| \le 0.999^i n$ for all $i \in [0, k]$ w.p. $1 - k n^{-9}$ by an induction on $i$.
    The base case of $i = 0$ is straightforward as the total number of vertices is at most $n$. 
    Consider the call to \textsc{Filter}$(E', k)$ in round $i$ of \textsc{Extract}$(E, k)$. 
    In round $0$ of this \textsc{Filter}$(E', k)$, the algorithms executes \textsc{Matching}$(E)$ where $E = E'$. 
    By Step~\ref{alg:extract:s1} of \textsc{Extract}$(E, k)$ or the \textsc{Alter}$(E')$ in each round of \textsc{Extract}$(E, k)$, there is no loop in $E'$. 
    So by Lemma~\ref{lem:extract_round_root_adj}, each vertex in $V'_i$ is a root and has a non-loop edge in $E'$ to another root. 
    Applying Lemma~\ref{lem:constant_shrink_reduce}, the number of roots adjacent to $E'$ at the end of this call to \textsc{Matching}$(E)$ is at most $0.999 |V'_i|$ w.p. $1 - 2^{-|V'_i| / 1000} \ge 1 - n^{-9}$. 
    After this call to \textsc{Filter}$(E', k)$ in round $i$ of \textsc{Extract}$(E, k)$, the \textsc{Alter}$(E')$ and the following edge deletion can only decrease the number of vertices adjacent to $E'$. 
    By Lemma~\ref{lem:extract_round_root_adj}, all vertices adjacent to $E'$ must be roots, which means the number of vertices adjacent to $E'$ at the end of round $i$ is at most $0.999 |V'_i|$ w.p. $1 - n^{-9}$, giving $|V'_{i+1}| \le 0.999 |V'_i|$ w.p. $1 - n^{-9}$. 
    The lemma follows by a union bound.
\end{proof}

\begin{lemma} \label{lem:filter_root_adj}
    In the execution of \textsc{Filter}$(E, k)$, if both ends of any edge in $E$ are roots at the beginning of the algorithm, then both ends of any edge in $E$ are roots at the end of each round $j \in [0, k]$.
\end{lemma}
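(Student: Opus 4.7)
The plan is to proceed by induction on $j \in [0, k]$ with the invariant that at the end of round $j$ both endpoints of every edge in $E$ are roots; the inductive step starting from the premise handles the case $j=0$, and each subsequent step uses the outcome of the previous round as its hypothesis.

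For the inductive step, assume that at the start of round $j$ both ends of every edge in $E$ are roots. Round $j$ executes three operations in sequence: \textsc{Matching}$(E)$, \textsc{Alter}$(E)$, and a round of independent edge deletions. As noted in the paper, \textsc{Matching} is invoked pass-by-value, so it does not alter the edge set $E$ held by \textsc{Filter}; it only modifies the labeled digraph. For each $(u,v) \in E$, since $u$ and $v$ are roots at the start of \textsc{Matching}, Lemma~\ref{lem:constant_shrink_correctness} guarantees that at the end of \textsc{Matching} each of them is either a root or a child of a root. In either case $u.p$ and $v.p$ are roots right after \textsc{Matching}.

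Next, \textsc{Alter}$(E)$ replaces every edge $(u,v) \in E$ with $(u.p, v.p)$ and removes the resulting loops. By the previous paragraph, the new endpoints are all roots, so after \textsc{Alter}$(E)$ both endpoints of every edge in $E$ are roots. Finally, independent edge deletions only shrink $E$ and cannot change the labeled digraph, so the property is preserved to the end of round $j$. This closes the induction.

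The main thing to track carefully is that \textsc{Matching} itself may leave an edge $(u,v) \in E$ with $u$ or $v$ no longer a root; however, by Lemma~\ref{lem:constant_shrink_correctness} the drift is at most one hop, and the immediately following \textsc{Alter}$(E)$ is precisely what re-aligns both endpoints with roots. No probabilistic argument is needed; the statement is a deterministic structural invariant.
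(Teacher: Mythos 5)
Your proof is correct and follows essentially the same route as the paper: induct over rounds, invoke Lemma~\ref{lem:constant_shrink_correctness} to conclude each endpoint is a root or a child of a root after \textsc{Matching}, and observe that the subsequent \textsc{Alter}$(E)$ moves edges to roots while the random deletions cannot break the invariant. Your additional remarks on pass-by-value and the one-hop drift are accurate but just make explicit what the paper's shorter argument leaves implicit.
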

\begin{proof}
    Assume all edges in $E$ are adjacent to roots at the beginning of round $j$. 
    By Lemma~\ref{lem:constant_shrink_correctness}, each root is a root or a child of a root after the \textsc{Matching}$(E)$ in round $j$, so the following \textsc{Alter}$(E)$ moves edges in $E$ to the roots.
\end{proof}
Using Lemma~\ref{lem:extract_contraction_alg} and the fact that all edges are adjacent on roots initially, all edges are adjacent to roots at the end of each round of \textsc{Filter}$(E, k)$ for all its executions. 
Note that Lemma~\ref{lem:filter_root_adj} implies that Step~\ref{alg:matching:s1} of \textsc{Matching}$(E)$ does not need to delete edges adjacent to non-roots, but we keep it for simpler proofs in many other places.

\begin{lemma} \label{lem:extract_time_work_helper_V}
    $|V_i| \le 0.999^{k + i} n$ for all $i \in [0, k]$ w.p. $1 - k^2 n^{-8}$.
\end{lemma}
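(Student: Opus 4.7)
The plan is to combine Lemma~\ref{lem:extract_time_work_helper_Vprime} with a repeated application of Lemma~\ref{lem:constant_shrink_reduce} inside a single call to \textsc{Filter}. Fix $i \in [0, k]$ and consider the execution of \textsc{Filter}$(E', k)$ in round $i$ of \textsc{Extract}$(E, k)$, where $E' = E_i$ and $V(E_i) = V'_i$. Lemma~\ref{lem:extract_time_work_helper_Vprime} already gives us $|V'_i| \le 0.999^i n$ with probability $1 - k n^{-9}$, so it remains to show that the inner \textsc{Filter} call shrinks the set of roots adjacent to $E'$ by an additional factor of $0.999^k$ with probability $1 - k n^{-9}$, and then to union bound over $i$.

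For the inner analysis, let $W_j$ denote the set of roots adjacent to the current edge set $E'$ at the beginning of round $j$ of this \textsc{Filter} call, so $W_0 = V'_i$, and recall that $V_i = W_k$ by Step~\ref{alg:filter:s3}. By Lemma~\ref{lem:filter_root_adj} (applied with the invariant that every edge of $E'$ is adjacent to two roots on entry, which holds by Lemma~\ref{lem:extract_contraction_alg} and the fact that \textsc{Alter}$(E')$ is run after each round in \textsc{Extract}) together with the loop-removal in \textsc{Alter} and in Step~\ref{alg:matching:s1} of \textsc{Matching}, the subgraph of \textsc{Filter}'s current $E'$ induced on the roots has no loops and every component has at least $2$ vertices. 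So Lemma~\ref{lem:constant_shrink_reduce} applies and the \textsc{Matching} inside round $j$ of \textsc{Filter} reduces the number of roots adjacent to $E'$ to at most $0.999 |W_j|$ with probability $1 - 2^{-|W_j|/10000}$. Provided $|W_j| \ge n / \log n$ (which we may assume, since otherwise the algorithm has already achieved the stronger bound claimed), this failure probability is at most $n^{-9}$. The subsequent \textsc{Alter}$(E)$ only moves edges from one root to another (Lemma~\ref{lem:filter_root_adj}), and the random edge deletion can only drop additional vertices from $V(E')$. Therefore $|W_{j+1}| \le 0.999 |W_j|$ with probability at least $1 - n^{-9}$.

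Chaining this bound across $j = 0, \dots, k-1$ by a union bound over the $k$ inner rounds yields $|V_i| = |W_k| \le 0.999^k |V'_i|$ with probability $1 - k n^{-9}$. Combining with Lemma~\ref{lem:extract_time_work_helper_Vprime} gives
\[
|V_i| \le 0.999^k \cdot 0.999^i n = 0.999^{k+i} n
\]
with probability at least $1 - 2 k n^{-9}$ for a fixed $i$. A final union bound over the $k + 1$ values of $i \in [0,k]$ gives overall failure probability at most $2 k (k+1) n^{-9} \le k^2 n^{-8}$, as claimed.

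The only subtle point I expect is verifying that the precondition of Lemma~\ref{lem:constant_shrink_reduce} (roots adjacent to $E'$ with no singleton components and no loops) is maintained across the interleaved \textsc{Alter} and edge-deletion steps inside \textsc{Filter}; this is what forces us to invoke Lemma~\ref{lem:extract_contraction_alg} and Lemma~\ref{lem:filter_root_adj} together at the start of each inner round rather than just once.
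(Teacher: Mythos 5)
Your overall strategy is the same as the paper's: condition on $|V'_i| \le 0.999^i n$ from Lemma~\ref{lem:extract_time_work_helper_Vprime}, then induct over the inner rounds of the \textsc{Filter} call, applying Lemma~\ref{lem:constant_shrink_reduce} to each \textsc{Matching} and union bounding. However, there is a genuine gap at the heart of your inductive step, namely in the sentence ``The subsequent \textsc{Alter}$(E)$ only moves edges from one root to another (Lemma~\ref{lem:filter_root_adj}), and the random edge deletion can only drop additional vertices from $V(E')$.'' Lemma~\ref{lem:filter_root_adj} only guarantees that both ends of every edge are roots at the end of each round; it says nothing about \emph{which} roots they are. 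What you actually need is that the set of roots adjacent to the edge set after \textsc{Alter} is contained in the set of at most $0.999\,|W_j|$ roots that survive \textsc{Matching} among $V(E'')$. A priori, \textsc{Alter} could move an edge $(u,v)$ to $(u.p, v.p)$ where $u.p$ is a root that had no adjacent edge before (and hence was never counted), which would let $|W_{j+1}|$ exceed $0.999\,|W_j|$. Lemma~\ref{lem:constant_shrink_reduce} bounds the number of surviving roots, not the number of roots adjacent to the altered edge set, so the deduction $|W_{j+1}| \le 0.999\,|W_j|$ does not follow from the two lemmas you cite.

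This is exactly the point where the paper's proof does its real work: it proves a separate claim that, after the \textsc{Alter}$(E)$ following \textsc{Matching}$(E)$, the number of adjacent roots does not increase. The argument has two parts: (\romannumeral1) edges in $E \setminus E''$ (those discarded in Step~\ref{alg:matching:s1} of \textsc{Matching}) can only be loops, which remain loops and are removed by \textsc{Alter}; and (\romannumeral2) for any edge in $E''$, \textsc{Alter} cannot move it onto a root that had no adjacent edge before \textsc{Matching} --- this is shown by tracing how a vertex $v$ with an adjacent edge could have become a child of a root $u$ during \textsc{Matching} (directly, or via the shortcut in Step~\ref{alg:matching:s9}), and concluding that $u$ must itself already have an adjacent edge. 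Equivalently, every parent assigned during \textsc{Matching} lies in $V(E'')$ and is a root at the end of \textsc{Matching}, so each endpoint $u.p$ of an altered edge is one of the already-counted surviving roots. You need to supply this argument (or an equivalent one) to make your chaining $|W_{j+1}| \le 0.999\,|W_j|$ valid; the ``subtle point'' you flag at the end (maintaining the precondition of Lemma~\ref{lem:constant_shrink_reduce}) is handled the same way as in the paper and is not the real difficulty. (Minor and harmless: \textsc{Filter} runs rounds $j = 0,\dots,k$, so there are $k+1$ inner rounds and your $V_i$ is $W_{k+1}$ rather than $W_k$; the paper correspondingly obtains the slightly stronger exponent $k+i+1$.)
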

\begin{proof}
    For any $i \in [0, k]$, we have that $|V'_i|$, the number of vertices adjacent to $E'$ at the beginning of round $i$ of \textsc{Extract}$(E, k)$, is at most $0.999^i n$ w.p. $1 - k n^{-9}$ by Lemma~\ref{lem:extract_time_work_helper_Vprime}. In the following, We shall condition on this event happening, and apply a union bound at the end.
    
    Consider the call to \textsc{Filter}$(E', k)$ in round $i$ of \textsc{Extract}$(E, k)$. 
    In \textsc{Filter}$(E, k)$ (where $E = E'$ and we will use $E$ for simplicity in analyzing \textsc{Filter} locally), by an induction on $j$, we assume there are at most $0.999^{i+j} n$ roots adjacent to $E$ at the beginning of round $j$ w.p. at least $1 - j n^{-9}$. 
    Consider the call to \textsc{Matching}$(E)$ in round $j$.
    After Step~\ref{alg:matching:s1} of \textsc{Matching}$(E)$, let $E''$ be the edge set $E$ locally in \textsc{Matching}.
    There are at most $0.999^i n$ vertices adjacent to $E''$ by the first paragraph. Moreover, each vertex adjacent to $E''$ must be a root and has an edge in $E''$ to another root. 
    So by Lemma~\ref{lem:constant_shrink_reduce}, the number of roots adjacent to $E''$ at the end of \textsc{Matching}$(E)$ is at most $0.999^{i + j + 1} n$ w.p. $1 - (j+1) n^{-9}$.
    
    We claim that after the \textsc{Alter}$(E)$ following \textsc{Matching}$(E)$, the number of roots adjacent to $E$ is at most $0.999^{i + j + 1} n$ w.p. $1 - (j+1) n^{-9}$. 
    We prove the claim by showing that adding back the edges in $E \backslash E''$ then running \textsc{Alter}$(E)$ does not increase the number of adjacent roots. 
    
    Firstly, for any edge $e \in E \backslash E''$, it cannot be adjacent to a non-root before \textsc{Matching}$(E)$ (Lemma~\ref{lem:filter_root_adj}), so we only consider the case that $e$ is a loop. If $e$ is a loop before \textsc{Matching}$(E)$, then it is still a loop after \textsc{Matching}$(E)$ and Step~\ref{alg:alter:s1} of \textsc{Alter}$(E)$, which gets removed and thus does not increase the number of adjacent roots.
    Secondly, for any edge in $E''$, \textsc{Alter}$(E)$ cannot move it to a root that did not have an adjacent edge before \textsc{Matching}$(E)$, which proves the claim since the number of adjacent roots does not increase. 
    We show this by proving for any root $u$ with an adjacent edge on its child $v$ before \textsc{Alter}$(E)$, there is already an adjacent edge on $u$. 
    By Lemma~\ref{lem:filter_root_adj}, $v$ must be a root before \textsc{Matching}$(E)$, so $v$ becomes a child of $u$ during \textsc{Matching}$(E)$, which requires $u$ to have an adjacent edge or the child $w$ of $u$ to have an adjacent edge (if $v$ becomes a child of $u$ in Step~\ref{alg:matching:s9} of \textsc{Matching}$(E)$). For the latter case, $w$ must be a root before \textsc{Matching}$(E)$ by Lemma~\ref{lem:filter_root_adj} and $w$ becomes a child of $u$ during \textsc{Matching}$(E)$ before Step~\ref{alg:matching:s9}, which requires $u$ to have an adjacent edge. 
    So the claim in the previous paragraph holds. 
    
    The edge deletion following \textsc{Alter}$(E)$ in round $j$ can only decrease the number of roots adjacent to $E$. 
    So we complete the induction and the number of roots adjacent to $E$ at the end of round $j$ is at most $0.999^{i + j + 1} n$ w.p. $1 - (j+1) n^{-9}$ for all $j \in [0, k]$. 
    
    Finally, note from Lemma~\ref{lem:filter_root_adj} that all vertices adjacent to $E$ are roots at this point.
    As a result, the number of vertices adjacent to $E$ at the end of this call to \textsc{Filter}$(E, k)$ (which is returned as $V(E) = V_i$ by Definition~\ref{def:extract_vi_ei}) is at most $0.999^{k + i + 1} n$ w.p. $1 - (k+1) n^{-9}$. 
    By a union bound over all $i \in [0, k]$ and the first paragraph of this proof, the lemma follows.
\end{proof}

\begin{lemma} \label{lem:extract_time_work_helper_E}
     $|E_0| \le m$ and $\E\left[|E_i|\right] \le 10^4 n \cdot 0.9995^i + k^2 n^{-6}$ for all $i \in [k]$.
\end{lemma}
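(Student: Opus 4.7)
The bound $|E_0| \le m$ is immediate from Step~\ref{alg:extract:s1}, which sets $E' \coloneqq \{e \in E : e \text{ is non-loop}\}$, so $|E_0| \le |E| \le m$. For the expected bound on $|E_i|$ with $i \in [k]$, we argue by induction, exploiting that the first round of \textsc{Extract} drastically shrinks the edge set while subsequent rounds achieve only a mild geometric decay. The crucial structural observation is: at the end of round $i-1$, any edge $(u, v) \in E_i$ satisfies $u.p \ne v.p$ (not a loop removed by \textsc{Alter}) and $\{u.p, v.p\} \not\subseteq V'$ (not removed by $E' = E' - \{(u,v) : u, v \in V'\}$). Hence at least one of $u.p, v.p$ is a root at the end of the corresponding \textsc{Filter} that was excluded from every $V_j$ so far; such a root must have had all its local-copy edges erased either by \textsc{Matching} contractions or by the $k$ applications of the $10^{-4}$ edge deletion inside \textsc{Filter}.

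For the base case $i = 1$, I would partition vertices into high-degree ($d_w \ge d^*$) and low-degree ($d_w < d^*$) with threshold $d^* = \Theta(1/p)$ where $p = 1 - (1 - 10^{-4})^k$ is the per-edge-over-$k$-rounds deletion probability inside \textsc{Filter}. Edges with at least one low-degree endpoint number at most $n d^* = O(n/p) = O(10^4 n)$ deterministically. For edges between two high-degree vertices, each endpoint fails to reach $V_0$ only if its entire adjacency in the local copy is wiped out; ignoring contractions and using only the $10^{-4}$ deletions as a lower bound, this has probability at most $p^{d^*} = O(1/e)$. Combining a union bound over the two endpoints with the fact that matching-induced contractions (Lemma~\ref{lem:constant_shrink_reduce}) turn a constant fraction of such edges into loops removed by \textsc{Alter}$(E_0)$, the expected surviving $H \times H$ contribution is also $O(n)$. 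Tuning constants yields $\E[|E_1|] \le 10^4 n \cdot 0.9995 + k^2 n^{-6}$.

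For the inductive step, assuming the claim for $i$, I would argue $\E[|E_{i+1}|] \le 0.9995 \cdot \E[|E_i|] + O(k n^{-6})$. Inside \textsc{Filter}$(E_i, k)$, the $k$ calls to \textsc{Matching} each match a constant fraction of the root set by Lemma~\ref{lem:constant_shrink_reduce}; each matched pair contributes its connecting edge (and any parallel edges between the matched component) as a loop in the subsequent \textsc{Alter}$(E_i)$, removing at least a $5 \cdot 10^{-4}$ fraction of $|E_i|$ in expectation once the $V' \times V'$ deletion (which grows $V'$ in each round by $V_i$ of size bounded via Lemma~\ref{lem:extract_time_work_helper_V}) is also accounted for. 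The additive $O(k n^{-6})$ error absorbs the exponentially-small failure probabilities of Lemma~\ref{lem:constant_shrink_reduce} (each of order $2^{-\Omega(n/\log n)} = n^{-\omega(1)}$) summed over the $k$ sub-rounds of \textsc{Filter}; telescoping across the $k$ outer rounds gives the claimed $k^2 n^{-6}$ aggregate.

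The main obstacle is the base case, where we must show that a potentially huge starting edge count $m$ collapses to $O(n)$ after a single round. The argument hinges on the dichotomy between low-degree vertices (few enough incident edges that a counting bound suffices) and high-degree vertices (captured in $V_0$ with overwhelming probability, so their induced subgraph is essentially annihilated by the $V_0 \times V_0$ deletion step). The inductive step is easier conceptually but requires carefully apportioning the per-round shrinkage among matching-induced loops and $V'$-growth-induced deletions, and checking that the error terms from the high-probability guarantees do not dominate across the $O(k)$ phases.
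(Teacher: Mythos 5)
Your opening structural observation --- that an edge can survive into $E_{i+1}$ only if one of its (moved) endpoints had its entire local edge set wiped out inside \textsc{Filter} --- is exactly the paper's key claim, but the way you then count is where the argument breaks. In your base case you set the high/low threshold at $d^* = \Theta(1/p)$ and observe that a high-degree endpoint loses all its edges with probability roughly $p^{d^*} = O(1/e)$, i.e.\ a \emph{constant}. Multiplying a constant per-edge survival probability over the high--high edges gives an expected contribution of $O(m)$, not $O(n)$; there can be $\omega(n)$ such edges at round $0$, and the patch you invoke (``matchings turn a constant fraction of them into loops'') only removes a constant fraction, which cannot bridge $\Theta(m)$ down to $O(n)$ in one round. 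The step that actually works --- and is the heart of the paper's proof --- is to charge edges to the vertex \emph{through} which they contribute: if $q = 1-(1-10^{-4})^{k+1}$ is the per-edge deletion probability over the filter, the expected number of edges contributed through a fixed vertex $v$ is at most $|E_{i,v}|\, q^{|E_{i,v}|} \le \max_x x q^x = O\!\left(1/\ln(1/q)\right)$, a quantity independent of the degree, so the total expected contribution is $O(1)$ per vertex of $V'_i$ regardless of how many edges it carries.

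Your inductive step has a second gap of the same flavor: you claim a per-round multiplicative decay $\E[|E_{i+1}|] \le 0.9995\,\E[|E_i|]$ on the grounds that each call to \textsc{Matching} matches a constant fraction of the roots and hence loops away a constant fraction of the \emph{edges}. Lemma~\ref{lem:constant_shrink_reduce} shrinks the number of roots, not the number of edges; the edges of $E_i$ may be concentrated on vertices that remain unmatched or whose parents stay distinct, so no constant-fraction edge removal follows. The paper never proves such an edge-level contraction: it shows $\E[|E_{i+1}|] \le \E[|V'_i|]$ (per-vertex contribution at most a constant) and imports the geometric factor from the already-established vertex decay $|V'_i| \le 0.999^i n$ of Lemma~\ref{lem:extract_time_work_helper_Vprime}, conditioning on that event and absorbing its $k n^{-9}$ failure probability (times the trivial bound $|E_{i+1}| \le m$) into the additive $k^2 n^{-6}$ slack --- which is also where that error term really comes from, rather than from the failure probabilities of Lemma~\ref{lem:constant_shrink_reduce} as you suggest. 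So the overall plan needs to be reorganized around the per-vertex expected-contribution bound before either the base case or the induction goes through.
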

\begin{proof}
    \Cliff{I wrote many words in this proof to convince us.}
    By Definition~\ref{def:extract_vi_ei}, $E_i$ is the set of edges adjacent to $V'_i$ at the beginning of round $i$. 
    Since $m$ is the total number of edges in the input graph and we never add edges (but only delete or alter edges), it must be $|E_i| \le m$ for all $i \in [0, k]$. This simple observation will be used later in the proof.

    Consider round $i$ of \textsc{Extract}$(E, k)$. During the execution of the call to \textsc{Filter}$(E', k)$, if a vertex $v \in V'_i$ at the beginning of this call still has an adjacent edge $(v, u)$ in $E$ at the end of the call, then $v$ is returned by \textsc{Filter}$(E', k)$ and added into vertex set $V'$. 
    By Lemma~\ref{lem:filter_root_adj}, both $v$ and $u$ are roots, so edge $(v, u)$ is still in $E'$ after the \textsc{Alter}$(E')$ in round $i$, and edge $(v, u)$ is deleted from $E'$ at the end of round $i$ within \textsc{Extract}$(E, k)$. 
    Therefore, for an edge $(v, u)$ to be in $E_{i+1}$, there must be a round $j \in [0, k]$ such that in round $j$ of \textsc{Filter}$(E', k)$ (called in round $i$ of \textsc{Extract}$(E, k)$), all edge in $E$ adjacent to $v$ or $u$ are deleted. 

Now consider a vertex $v \in V'_i$, and let $E_{i,v}$ be the set of edges in $E_{i}$ that are adjacent to  $v$. Specifically, $E_{i,v} = \{ e \in E_i : v \in e \}$. We claim that for an edge $e=(v,u)$ to contribute to $E_{i+1}$ either all edges in $E_{i,v}$ or all edges in $E_{i,u}$ must be deleted by the end of execution of \textsc{Filter}$(E', k)$. 

Consider an edge $e=(v,u)$ and suppose that by the end of execution \textsc{Filter}$(E', k)$, at least one edge $(v,u') \in E_{i,v}$ is not deleted, and also at least one edge $(v',u) \in E_{i,u}$ is not deleted. In this case we show that $e$ cannot contribute to $E_{i+1}$. Considering the edge $(v,u')$, then in the round 0 of \textsc{Filter}$(E', k)$, the algorithm replaces $(v,u')$ with $(v.p,u'.p)$ in \textsc{Alter}$(E)$. It then replaces this new edge in the next rounds in a similar way. Suppose that $(v,u')$ is replaced by $(v_r, u'_r)$ by the end of Step \ref{alg:filter:s1} of \textsc{Filter}$(E', k)$. Then there should be a tree path $(v_0, v_1, \cdots, v_k)$ by the end of Step \ref{alg:filter:s1} of \textsc{Filter} such that $v_0=v$, $v_k=v_r$, and $v_i.p=v_{i+1}$ for each $0 \le i<k$. This implies that by the end of \ref{alg:filter:s2} of \textsc{Filter}, the algorithm sets $v.p=v_r$. Thus, when we call \textsc{Alter}$(E')$ in \textsc{Extract}$(E,k)$, the edge $(v,u)$ will be moved to $v_r$.  Similarly, if the edge $(v',u)$ gets replaced by $(v'_r,u_r)$ by the end of the \textsc{Filter}, then \textsc{Alter}$(E')$ also moves $(v,u)$ to $u_r$. This means that $(v,u)$ will be replaced by $(v_r,u_r)$.

Remember that we have assumed both $(v,u')$ and $(v',u)$ does not get deleted by \textsc{Filter}. This means that both $v_r$ and $u_r$ have an adjacent edge by the end of the \textsc{Filter}$(E', k)$, and they will be returned by \textsc{Filter}$(E', k)$. Thus, \textsc{Extract}$(E,k)$ adds both $v_r$ and $u_r$ to the vertex set $V'$. Since the edge $(v,u)$ will be replaced by $(v_r,u_r)$, and both $v_r$ and $u_r$ are in $V'$, the \textsc{Extract}$(E,k)$ removes this edge from $E'$. Thus,  $(v,u)$ does not contribute to $E_{i+1}$. This shows that for an edge $e=(v,u)$ to contribute to $E_{i+1}$ either all edges originally in $E_{i,v}$ or all edges originally in $E_{i,u}$ must be deleted by the end of execution of \textsc{Filter}$(E', k)$.

We say that $(v,u)$ contributes to $E_{i+1}$ through $v$ if all edges in $E_{i,v}$ are deleted by the end of execution of \textsc{Filter}$(E', k)$. It then follows that, for every edge in $E_{i+1}$, its contribution should be through at least one of its endpoints. Now for a vertex $v$ we give an upper bound on the expected number of edges that contribute to $E_{i+1}$ through $v$. For every edge $e \in E_{i,v}$, the probability that it gets deleted by \textsc{Filter} is  $1 - (1 - 10^{-4})^{k+1}$. Note that the set of edges that contribute through $v$ is a subset of $E_{i,v}$, and this contribution happens only if all the the edges in $E_{i,v}$ are deleted. Thus, the expected number of edges that contribute to $E_{i+1}$ through $v$ is at most
 
     \begin{equation*}
        |E_{i,v}| \left(1 - (1 - 10^{-4})^{k+1}\right)^{|E_{i,v}|} \le \frac{-1/e}{\ln\left(1 - (1 - 10^{-4})^{k+1}\right)} \le \frac{1/e}{(1 - 10^{-4})^{k+1}} \le (1 - 10^{-4})^{-k} \le 1 ,
    \end{equation*}
    where we used $xa^x \le -1/(e \ln a)$ for all $x$ when $a \in (0, 1)$ in the first inequality, and $\ln(1-x) \le -x$ for all $x \in (0,1)$ in the second inequality.

    Thus, the expected contribution to $E_{i+1}$ through  every vertex $v \in V'_i$ in round $j$ is at most $1$. Thus, we have $\E[|E_{i+1}|]=\E[|V'_i|]$.
    By Lemma~\ref{lem:extract_time_work_helper_Vprime}, $|V'_i| \le 0.999^{i} n$ w.p. $1 - k n^{-9}$. If we condition on $|V'_i| \le 0.999^{i}$, then we have $|E_{i+1}| \le 0.999^{i} n$.

    If the number of vertices in $V'_i$ exceeds $0.999^{i} n$, which happens w.p. at most $k n^{-9}$, then $\E[|E_{i+1}|] \le m$ by the first paragraph of this proof. The expected contribution from this part is at most $k n^{-9} m \le k n^{-7}$, so we obtain $\E\left[|E_{i+1}|\right] \le n \cdot 0.999^i + k n^{-7}$ which is better than the desired bound of lemma.
\end{proof}

\begin{lemma} \label{lem:extract_time_work}
    \textsc{Extract}$(E, k)$ can be implemented on an ARBITRARY CRCW PRAM in $O(k^2 \log^* n)$ time and $O(m) + \overline{O}(n + k^3 n^{-6})$ work w.p. $1 - k^3 n^{-6}$.
\end{lemma}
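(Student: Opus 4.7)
My plan is to execute \textsc{Extract}$(E,k)$ in four parts---initialization, the $k+1$ rounds of Step~\ref{alg:extract:s2}, the reverse iteration loop in Step~\ref{alg:extract:s3}, and the final \textsc{Reverse}$(V',E)$ call in Step~\ref{alg:extract:s4}---while maintaining $E'$ in an indexed array of length $O(|E_i|)$ at the start of each round $i$ so that processor allocation stays tight. Initialization strips self-loops from $E$ and applies Lemma~\ref{lem:apx_compaction} to compact $E'$ in $O(\log^* n)$ time and $O(m)$ work. For each round $i\in[0,k]$, I invoke \textsc{Filter}$(E',k)$, then \textsc{Alter}$(E')$, then prune edges both of whose endpoints lie in $V'$ by marking $V'$ vertices in a flag array and scanning $E'$ once, and finally re-compact $E'$ via Lemma~\ref{lem:apx_compaction} down to length $O(|E_{i+1}|)$.

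For the time bound, Lemma~\ref{lem:filter_time_work} gives each \textsc{Filter}$(E',k)$ in $O(k\log^* n)$ time, and the surrounding operations add only $O(\log^* n)$ from compaction and $O(1)$ from \textsc{Alter} and pruning, so the $k+1$ rounds total $O(k^2\log^* n)$. For the work, round $0$ contributes $O(m)$ in the worst case, and for $i \ge 1$ Lemma~\ref{lem:extract_time_work_helper_E} yields $\mathbb{E}[|E_i|] \le 10^4 n\cdot 0.9995^i + k^2 n^{-6}$, so the expected contribution from these rounds telescopes to
\[
O(n)\sum_{i=1}^{k} 0.9995^i + O(k\cdot k^2 n^{-6}) \;=\; O(n + k^3 n^{-6}),
\]
yielding the claimed $O(m)+\overline{O}(n+k^3 n^{-6})$. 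To implement Step~\ref{alg:extract:s3} I would recycle the offset-logging trick used in the proof of Lemma~\ref{lem:filter_time_work}: whenever a parent update occurs during round $i$ (which can only happen inside the embedded \textsc{Filter} call, since \textsc{Alter} does not itself rewrite parents), the responsible edge writes the updated vertex into a slot of a global array inside a per-round block whose width is fixed by a Chernoff bound on $|E_i|$ so that writes do not collide. Replaying this array from round $k$ down to $0$ takes $O(k)$ time and work equal to the sum of block sizes, which is again $\overline{O}(n+k^3 n^{-6})$. Step~\ref{alg:extract:s4} touches each vertex and each original edge once in $O(m+n)$ work and $O(1)$ time.

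The main obstacle is assembling a clean union bound. The $k+1$ \textsc{Filter} calls (each failing with probability at most $k n^{-7}$ by Lemma~\ref{lem:filter_time_work}), the $O(k)$ applications of approximate compaction (each bounded by $1/2^{n^{1/25}}\le n^{-9}$), and the Chernoff events that size the offset blocks together fail with probability $O(k^2 n^{-7}) \le k^3 n^{-6}$. A mild subtlety is that Lemma~\ref{lem:extract_time_work_helper_E} only bounds $|E_i|$ in expectation, which is why the work must be stated with the $\overline{O}$ notation as in the lemma; the rare large-$|E_i|$ contributions are absorbed by the additive $k^3 n^{-6}$ term, itself inherited from the $k^2 n^{-6}$ summand already present in Lemma~\ref{lem:extract_time_work_helper_E}.
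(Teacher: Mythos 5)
Your proof follows the paper's proof essentially step for step: initialization in $O(m+n)$; per-round application of Lemma~\ref{lem:filter_time_work}, \textsc{Alter}, pruning, and approximate compaction; the telescoping sum of $\mathbb{E}[|E_i|]$ from Lemma~\ref{lem:extract_time_work_helper_E} to get the $O(m) + \overline{O}(n + k^3 n^{-6})$ work bound; and the offset-logging technique for the reverse iteration loop, closed out by a union bound. This is the same decomposition and the same key lemmas.

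One small wrinkle worth fixing: you say the per-round offset block widths are ``fixed by a Chernoff bound on $|E_i|$,'' but there is no Chernoff-type concentration available for $|E_i|$ across rounds of \textsc{Extract}---Lemma~\ref{lem:extract_time_work_helper_E} only controls $\mathbb{E}[|E_i|]$ (the transition from $E_i$ to $E_{i+1}$ passes through a full \textsc{Filter} call, an \textsc{Alter}, and a non-independent pruning step, so the usual round-by-round Chernoff argument from inside \textsc{Filter} does not lift to the outer loop). The paper instead sets $\textsf{offset}_{i+1} = \textsf{offset}_i + |A_i|$, where $|A_i| \le 2|E_i|$ is the actual length of the compacted array holding $E_i$, a quantity that is deterministically known to the algorithm at the start of round $i+1$; the sum $\sum_i |A_i|$ is then bounded in expectation by Lemma~\ref{lem:extract_time_work_helper_E}, which is exactly what the $\overline{O}(\cdot)$ bound allows. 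If you replace your Chernoff-sized blocks with these deterministic post-compaction lengths, the argument goes through cleanly.
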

\begin{proof}
    \Cliff{You can expand this part if you want.}
    In \textsc{Extract}$(E, k)$, Step~\ref{alg:extract:s1} and Step~\ref{alg:extract:s4} (\textsc{Reverse}$(V')$) take $O(1)$ time and $O(m+n)$ work. 
    We shall use the same technique for implementing \textsc{Filter}$(E, k)$ in the proof of Lemma~\ref{lem:filter_time_work} to implement Step~\ref{alg:extract:s2} and Step~\ref{alg:extract:s3}.
    
    Round $0$ of Step~\ref{alg:extract:s2} takes $O(k \log^* n)$ time and $O(m)$ work w.p. $1 - k n^{-7}$ time by $|E_0| \le m$ (Lemma~\ref{lem:extract_time_work_helper_E}) and Lemma~\ref{lem:filter_time_work}. 
    Similar to the proof of Lemma~\ref{lem:filter_time_work}, at the end of each round $i \in [0, k]$, the algorithm runs approximate compaction to compact all edges that are still in $E'$ (which becomes the set $E_{i+1}$) to an indexed array of length at most $2|E_{i+1}|$ in $O(\log^* n)$ time w.p. $1 - n^{-9}$. Since we assume all edges of $E'$ are stored in an indexed array of length at most $2|E_i|$, the work for compaction is $O(m)$ for round $0$ and $\overline{O}(0.9995^i n + k^2 n^{-6})$ for round $i \in [k]$ w.p. $1 - n^{-9}$ by Lemma~\ref{lem:extract_time_work_helper_E}. 
    In each round $i \in [k]$, since all edges of $E_i$ are stored in an array of size $2 |E_i| = \overline{O}(0.9995^i n + k^2 n^{-6})$, the algorithm uses the same number of indexed edge processors (a processor corresponding to an empty cell in the compacted array corresponds to a dummy empty edge) to run $V_i = \textsc{Filter}(E_i, k)$, which takes $O(k \log^* n)$ time and $\overline{O}(0.9995^i n + k^2 n^{-6})$ work w.p. $1 - k n^{-7}$ by Lemma~\ref{lem:filter_time_work}. 
    After that, the $V' = V' + V_i$ takes $O(|V_i|)$ work and $O(1)$ time cause each vertex in $V_i$ notifies itself to be in $V'$. By Lemma~\ref{lem:extract_time_work_helper_V}, the work here is $O(0.999^{k+i} n)$ w.p. $1 - k^2 n^{-8}$. 
    Next, \textsc{Alter}$(E')$ uses the same $\overline{O}(0.9995^i n + k^2 n^{-6})$ processors and takes $O(1)$ time. 
    The following edge deletion on $E'$ also uses the same processors in $O(1)$ time since the algorithm let each edge in $E_i$ check whether both of its ends are in $V'$. 
    Summing up and by a union bound, w.p. at least $1 - k n^{-7} - k n^{-9} - k^2 n^{-8} \ge 1 - k^2 n^{-6}$, Step~\ref{alg:extract:s2} of round $i$ takes $O(k \log^* n)$ time, $O(m)$ work for round $0$, and $\overline{O}(0.9995^i n + k^2 n^{-6}) + O(0.999^{k+i} n) = \overline{O}(0.9995^i n + k^2 n^{-6})$ 
    work for round $i \in [k]$. 
    Summing over all $k+1$ rounds, we have that w.p. $1 - k^3 n^{-6}$, Step~\ref{alg:extract:s2} takes $O(k^2 \log^* n)$ time and the total work is
    \begin{equation*}
        O(m) + \sum_{i=1}^k \overline{O}(0.9995^i n + k^2 n^{-6}) = O(m) + \overline{O}(n + k^3 n^{-6}) .
    \end{equation*}
    
    To implement Step~\ref{alg:extract:s3}, we follow the idea used in the proof of Lemma~\ref{lem:filter_time_work}. Define $\textsf{offset}_0$ as $0$ and $\textsf{offset}_{i+1} = \textsf{offset}_i + |A_i|$ for $i \in [k]$, where $|A_i|$ is the length of the compacted array that stores $E_i$ which is at most $2|E_i|$ and is known to the algorithm at the beginning of round $i+1$. Then each vertex that updated its parent in round $i$ is written in a unique cell in the public RAM because the number of vertices that possibly updated parents is at most the number of edges in $E_i$. 
    As discussed before, the extra running time added to each round $i$ in Step~\ref{alg:extract:s2} and the running time of each iteration $i$ in Step~\ref{alg:extract:s3} is $O(1)$; 
    the total extra work added to Step~\ref{alg:extract:s2} and the total work of Step~\ref{alg:extract:s3} is
    \begin{align*}
        & \sum_{i = 0}^k O(\textsf{offset}_{i+1} - \textsf{offset}_i) = \sum_{i=0}^k O(|A_i|) = \sum_{i=0}^k O(|E_i|) \\
        =& O(m) + \sum_{i=1}^k \overline{O}(0.9995^i n + k^2 n^{-6}) = O(m) + \overline{O}(n + k^3 n^{-6}) ,
    \end{align*}
    where the second line is by Lemma~\ref{lem:extract_time_work_helper_E}. The lemma follows immediately.
\end{proof}

\subsubsection{Proof for $\log\log n$-Shrink} \label{subsubsec:alg_loglog_shrink}

In this section we prove that \textsc{Extract}$(E, k)$ contracts the input graph $G = (V, E)$ to $n / \log\log n$ vertices in expected linear work w.h.p. when $k = \Theta(\log\log\log n)$ (see parameter setup $k$ in \S{\ref{subsec:alg_log}}).

By Lemma~\ref{lem:extract_contraction_alg}, \textsc{Extract}$(E, k)$ is a contraction algorithm. 
If all vertices from a component in the original graph contract to the same vertices, i.e., all these vertices are in the same flat tree in the labeled digraph, then the computation for this component is done and the algorithm can ignore all these vertices in the later execution. 
If a root $v$ still has an adjacent edge to another root, then the computation for the component of $v$ is not done yet, and we call $v$ \emph{active}. 
The \emph{current graph} is defined to be the graph induced on active roots (using the current edges altered by the algorithm).

At the end of the algorithm, the vertices in the current graph (active roots in the labeled digraph) consist of three parts: 
(\romannumeral1) the active roots in $V'$, 
(\romannumeral2) the active roots in $V(E')$, and
(\romannumeral3) the active roots not in $V' \cup V(E')$. 
In the following, we bound each of them.

\begin{lemma} \label{lem:extract_bound_Vprime}
    At the end of \textsc{Extract}$(E, k)$, the number of active roots in $V'$ is $0.999^k 1000n$ w.p. $1 - k^2 n^{-8}$.
\end{lemma}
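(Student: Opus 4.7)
The plan is to prove this as an almost-immediate corollary of Lemma~\ref{lem:extract_time_work_helper_V}. Recall that $V'$ is built up in Step~\ref{alg:extract:s2} as $V' = \bigcup_{i=0}^{k} V_i$, where $V_i$ is the vertex set returned by \textsc{Filter}$(E', k)$ in round $i$. Since the number of active roots in $V'$ is at most $|V'|$, it suffices to prove the upper bound $|V'| \le 0.999^k \cdot 1000 n$.

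First, I would invoke Lemma~\ref{lem:extract_time_work_helper_V}, which states that $|V_i| \le 0.999^{k+i} n$ for all $i \in [0, k]$ with probability $1 - k^2 n^{-8}$. Condition on this event for the rest of the argument. Next, bound the union by the sum of sizes and compute the resulting geometric series:
\begin{equation*}
|V'| \;\le\; \sum_{i=0}^{k} |V_i| \;\le\; \sum_{i=0}^{k} 0.999^{k+i}\, n \;=\; 0.999^k \, n \sum_{i=0}^{k} 0.999^i \;\le\; 0.999^k \, n \cdot \frac{1}{1 - 0.999} \;=\; 0.999^k \cdot 1000\, n .
\end{equation*}
Since the number of active roots in $V'$ is at most $|V'|$, this gives the claimed bound, and the failure probability $k^2 n^{-8}$ is inherited directly from Lemma~\ref{lem:extract_time_work_helper_V}.

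There is essentially no obstacle here: the heavy lifting (the round-by-round shrinkage of the candidate roots, and its union bound over the $k+1$ applications of \textsc{Filter}) was already performed in the proof of Lemma~\ref{lem:extract_time_work_helper_V}. The only thing to be careful about is that we are bounding the size of a \emph{union} by a sum, which is obviously valid but uses the fact that each $V_i$ is itself a set of vertices returned by a single invocation of \textsc{Filter}. Thus the entire argument reduces to a one-line geometric-series calculation on top of the previously-established per-round bounds.
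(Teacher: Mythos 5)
Your proposal matches the paper's own argument: the paper also writes $V' = \bigcup_{i=0}^{k} V_i$ (Definition~\ref{def:extract_vi_ei}), applies Lemma~\ref{lem:extract_time_work_helper_V} to get $|V_i| \le 0.999^{k+i} n$ for all $i$ with probability $1 - k^2 n^{-8}$, and concludes via the same geometric-series summation you spell out. Your version is correct and simply makes the final calculation explicit.
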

\begin{proof}
    By Definition~\ref{def:extract_vi_ei}, $V' = \bigcup_{i = 0}^k V_i$. 
    By Lemma~\ref{lem:extract_time_work_helper_V}, $|V_i| \le 0.999^{k+i} n$ for all $i \in [0, k]$ w.p. $1 - k^2 n^{-8}$. The lemma follows immediately.
\end{proof}

\begin{lemma} \label{lem:extract_bound_VEprime}
    At the end of \textsc{Extract}$(E, k)$, the number of active roots in $V(E')$ is $0.999^k n$ w.p. $1 - k n^{-8}$.
\end{lemma}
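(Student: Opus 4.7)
The plan is to extend the inductive argument already established in Lemma~\ref{lem:extract_time_work_helper_Vprime} by one additional step. Recall that Lemma~\ref{lem:extract_time_work_helper_Vprime} shows, via induction on $i$, that $|V'_i| = |V(E_i)| \le 0.999^i n$ for all $i \in [0,k]$ with probability $1 - kn^{-9}$, where $E_i$ is the edge set $E'$ at the beginning of round $i$ of \textsc{Extract}$(E,k)$. The inductive step rests on showing $|V'_{i+1}| \le 0.999\, |V'_i|$ w.p.\ $1 - n^{-9}$ by invoking Lemma~\ref{lem:constant_shrink_reduce} on the \textsc{Matching}$(E')$ call inside \textsc{Filter}$(E',k)$ in round $i$, together with Lemma~\ref{lem:extract_round_root_adj} to guarantee that $V'_i$ consists of roots and each edge of $E'$ joins two distinct roots.

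First I would observe that the edge set $E'$ is not modified after Step~\ref{alg:extract:s2} completes: Step~\ref{alg:extract:s3} only flattens parent pointers, and \textsc{Reverse}$(V',E)$ in Step~\ref{alg:extract:s4} operates on $E$ rather than on $E'$. Hence the quantity $V(E')$ at the end of \textsc{Extract}$(E,k)$ equals $V(E')$ at the end of round $k$ of Step~\ref{alg:extract:s2}; denote this set $V'_{k+1}$ using the natural extension of Definition~\ref{def:extract_vi_ei}.

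Next I would run the same inductive step once more, with $i = k$. Conditioning on the event $|V'_k| \le 0.999^k n$ (which holds w.p.\ $1 - kn^{-9}$ by Lemma~\ref{lem:extract_time_work_helper_Vprime}), Lemma~\ref{lem:extract_round_root_adj} guarantees that every vertex of $V'_k$ is a root and every edge of $E_k$ is a non-loop joining two distinct roots (after Step~\ref{alg:extract:s1} of \textsc{Extract} or the preceding \textsc{Alter}$(E')$). Thus the precondition of Lemma~\ref{lem:constant_shrink_reduce} applies to the \textsc{Matching}$(E')$ call in round $0$ of \textsc{Filter}$(E',k)$ in round $k$ of \textsc{Extract}, and the argument used for rounds $0,\dots,k-1$ yields $|V'_{k+1}| \le 0.999\,|V'_k|$ w.p.\ $1 - n^{-9}$. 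A union bound then gives $|V'_{k+1}| \le 0.999^{k+1}n \le 0.999^k n$ w.p.\ at least $1 - (k+1)n^{-9} \ge 1 - kn^{-8}$ for sufficiently large $n$.

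Finally, since by Lemma~\ref{lem:extract_round_root_adj} every vertex in $V(E') = V'_{k+1}$ is a root at the end of round $k$, the number of active roots in $V(E')$ is at most $|V(E')| \le 0.999^k n$ with the stated probability. The only place that really needs care is making sure $E'$ is genuinely frozen after Step~\ref{alg:extract:s2}; but this is immediate from inspecting Steps~\ref{alg:extract:s3}--\ref{alg:extract:s4}, so there is no serious obstacle and the bound follows.
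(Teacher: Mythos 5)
Your proof is correct and takes essentially the same approach as the paper: you identify $V(E')$ at the end of \textsc{Extract} with $V'_{k+1}$, extend the inductive argument of Lemma~\ref{lem:extract_time_work_helper_Vprime} by one step to bound $|V'_{k+1}| \le 0.999^{k+1}n$, and then bound the number of active roots trivially by $|V(E')|$. The paper additionally remarks that \textsc{Reverse}$(V', E)$ cannot increase the count of active roots in $V(E')$, but since (as you note) all of $V(E')$ are roots before Step~\ref{alg:extract:s4} and you bound by cardinality anyway, that observation is not needed for the stated conclusion.
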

\begin{proof}
    Consider running \textsc{Extract}$(E, k)$ before its Step~\ref{alg:extract:s4}.
    By Definition~\ref{def:extract_vi_ei}, $V(E') = V(E_{k+1}) = V'_{k+1}$ where $E_{k+1}$ is the edge set $E'$ at the end of round $k$. 
    The proof of Lemma~\ref{lem:extract_time_work_helper_Vprime} implies that $|V'_{k+1}| \le 0.999^{k+1} n$ w.p. $1 - (k+1) n^{-9}$.
    It is easy to verify that if a flat tree does not contain a vertex from $V'$ then it is unchanged in the \textsc{Reverse}$(V', E)$ in Step~\ref{alg:extract:s4};
    otherwise the flat tree becomes a flat tree rooted at a vertex $u \in V'$ and all adjacent edges of this tree are adjacent to $u$ after \textsc{Reverse}$(V', E)$. 
    Therefore the number of active roots in $V(E')$ cannot increase, giving the lemma.
\end{proof}
Note that the above two lemmas bound the number of active roots by the number of vertices, which might not be tight but suffices for our proof.

\begin{lemma} \label{lem:extract_bound_no}
    At the end of \textsc{Extract}$(E, k)$, there is no active root in $V \backslash (V' \cup V(E'))$, where $V$ is the vertex set in the input graph.
\end{lemma}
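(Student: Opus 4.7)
The plan is to prove the statement by tracking the very edge that witnesses $v$ being an active root all the way through the evolution of $E'$, and arguing it is never deleted.

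\textbf{Setup.} Assume, for contradiction target, that $v$ is an active root at the end of \textsc{Extract}$(E,k)$ with $v \notin V'$. Being active means there is an edge $(v,w)$ in the altered $E$ with $w$ also a root and $v \neq w$, originating from some original non-loop edge $(v_0,w_0)\in E$ whose final endpoints, after the \textsc{Reverse}$(V',E)$ call, are $v$ and $w$. Since $v \notin V'$ and $v$ is still a root at the very end, no vertex of $v$'s tree $U$ (at the end of Step~\ref{alg:extract:s3}) can lie in $V'$: otherwise that vertex would win the race in Step~\ref{alg:reverse:s1} of \textsc{Reverse} and replace $v$ as root. In particular $v_0 \in U$ and $v_0 \notin V'$.

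\textbf{Tracking the stored endpoints.} Initially $(v_0,w_0)\in E'$ because it is a non-loop. Let $(x_i,y_i)$ be its stored form in $E'$ just after \textsc{Alter}$(E')$ in round $i$, so $x_0 = v_0.p$ evaluated then and $x_{i+1} = x_i.p$ evaluated after the round-$(i{+}1)$ \textsc{Filter}. By Lemma~\ref{lem:extract_round_root_adj}, both endpoints of every edge in $E'$ are roots at the end of each round of \textsc{Extract}, so $x_i$ is a root. Because $v_0.p = x_0$ and $x_{j-1}.p = x_j$ hold simultaneously at the end of round $i$ for all $j \le i$, the chain $v_0 \to x_0 \to x_1 \to \cdots \to x_i$ is exactly $v_0$'s ancestor chain at that moment, and hence $x_i$ is the root of $v_0$'s tree at the end of round $i$.

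\textbf{The edge is never deleted.} The key structural fact is that trees never split during \textsc{Extract}: \textsc{Matching} only merges subtrees and flattens them, and the rotation in \textsc{Reverse} keeps each tree's vertex set intact while changing only its root. Therefore $v_0$'s tree at the end of round $i$ is a subset of $v_0$'s final tree $U$, so $x_i \in U$ and hence $x_i \notin V'$. Because $V'$ only grows across rounds, $x_i \notin V'_{\le i}$, and the deletion step $E' = E' - \{(u',v') : u',v'\in V'\}$ in round $i$ cannot remove $(v_0,w_0)$. Iterating, $(v_0,w_0)$ survives all $k+1$ rounds and remains in $E'$ at the end of the algorithm.

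\textbf{Conclusion.} Since flattening in Step~\ref{alg:extract:s3} and rotation in \textsc{Reverse} do not change the root of a tree free of $V'$ vertices, the root of $v_0$'s tree at the end of round $k$, namely $x_k$, coincides with the final root $v$. Hence $v = x_k \in V(E')$, contradicting the assumption $v \notin V' \cup V(E')$ that would be needed to have an active root outside $V' \cup V(E')$. I expect the main technical obstacle to be the identification $x_i =$ root of $v_0$'s tree at end of round $i$: this requires carefully formalizing how \textsc{Alter}$(E')$ propagates the stored endpoint along the parent chain, together with a clean extension of Lemma~\ref{lem:filter_root_adj} to the \textsc{Extract}-level rounds and the no-split property of the tree partition across \textsc{Matching}, flattening, and \textsc{Reverse}.
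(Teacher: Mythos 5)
Your plan is sound and, once unpacked, is essentially the paper's argument run in contrapositive: the paper likewise observes that an edge leaves $E'$ only when both of its stored ends lie in $V'$, handles the case where the surviving root's tree meets $V'$ via the root rotation in \textsc{Reverse}, and in the remaining case concludes that the tree retains a witness in $E'$ (the paper phrases this as the tree swallowing its whole component). The genuine problem is the justification you give for the central identification ``$x_i$ is the root of $v_0$'s tree at the end of round $i$.'' You assert that $v_0.p=x_0$ and $x_{j-1}.p=x_j$ ``hold simultaneously at the end of round $i$,'' but parent pointers of vertices that have already become non-roots keep being shortcut in later rounds: Step~\ref{alg:matching:s9} of \textsc{Matching}, Step~\ref{alg:filter:s2} of \textsc{Filter} and Step~\ref{alg:extract:s3} of \textsc{Extract} all execute $u.p=u.p.p$, so by the end of round $i$ the pointer $x_{j-1}.p$ may already have jumped past $x_j$, and the chain $v_0\to x_0\to\cdots\to x_i$ need not be $v_0$'s ancestor chain at that moment. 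The conclusion you need is true, but it should be derived from the facts you already cite rather than from the chain: $x_i$ is a stored end of an edge of $E'$ after the round-$i$ \textsc{Alter}, hence a root at the end of round $i$ by the inductive use of Lemma~\ref{lem:extract_round_root_adj}; each assignment $x_j:=x_{j-1}.p$ stays inside $v_0$'s tree and trees only merge, so $x_i$ lies in $v_0$'s tree; since a tree has a unique root, $x_i$ is that root. This repaired statement also supports your non-deletion argument (for which only $x_i\in U$ and the monotonicity of $V'$ are needed), and it gives $x_k=v$ at the end exactly as you intend.

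Two smaller omissions should also be fixed. First, \textsc{Alter}$(E')$ does not just rename ends, it deletes loops, so you must rule out that the tracked edge ever becomes a loop in $E'$; this follows from the same merge-only property, since $v_0$ and $w_0$ have distinct final roots $v\neq w$ and hence lie in distinct trees throughout, so $x_i\neq y_i$ in every round. Second, your claim that any $V'$-vertex in $v$'s tree would dethrone $v$ during \textsc{Reverse} implicitly uses that the tree is flat at the end of Step~\ref{alg:extract:s3}: a non-root $u\in V'$ at depth $2$ writes to $(u.p).p$, not to $v.p$. Flatness at that point is exactly what is established in the proof of Lemma~\ref{lem:extract_contraction_alg}, so you should invoke it explicitly. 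With these repairs your proof is correct and matches the paper's in substance, trading the paper's shorter ``no adjacent edges plus two cases'' phrasing for a more explicit single-edge tracking argument.
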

\begin{proof}
    Consider running \textsc{Extract}$(E, k)$ before its Step~\ref{alg:extract:s4}.
    Observe that any edge in the graph at this point is either in $E'$ or has both ends in $V'$. 
    So, for any root $u \in V \backslash (V' \cup V(E'))$, $u$ must have no adjacent edges. 
    If the flat tree rooted at $u$ does not contain a vertex $v \in V'$, then this tree must contain all vertices in the component of $u$ in the input graph. This is because we never delete an edge (in Step~\ref{alg:extract:s2}) unless it is between vertices in $V'$. 
    Therefore, the computation for this component is done and $u$ is not an active root.
    
    Otherwise, if the flat tree rooted at $u$ does contain a vertex from $V'$, then 
    in the \textsc{Reverse}$(V', E)$ (Step~\ref{alg:extract:s4}), suppose $v \in V'$ and $v$ ((arbitrarily)) wins the concurrent writing $v.p.p = v$ in Step~\ref{alg:reverse:s1}. 
    After this, we have $u.p = v$; after the following $v.p = v.p.p$ in Step~\ref{alg:reverse:s1}, we have $v.p = v$, giving a tree rooted at $v$ with height at most $2$.
    Then, Step~\ref{alg:reverse:s2} makes this tree flat and $u$ is no longer a root.
\end{proof}

\begin{lemma} \label{lem:loglog_shrink_reduce}
    Let $k = 1000 \log\log\log n$. 
    \textsc{Extract}$(E, k)$ runs on an ARBITRARY CRCW PRAM in $O(m) + \overline{O}(n)$ work and $O((\log\log\log n)^2 \log^* n)$ time, and reduces the number of vertices in the current graph to $n / \log\log n$ w.p. $1 - n^{-4}$.
\end{lemma}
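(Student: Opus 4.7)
The plan is to invoke Lemma~\ref{lem:extract_time_work} for the time and work bounds, and then combine Lemmas~\ref{lem:extract_bound_Vprime}, \ref{lem:extract_bound_VEprime}, and \ref{lem:extract_bound_no} to bound the number of active roots remaining at the end of the execution. This final statement is essentially a corollary of the preceding lemmas once the parameter $k$ is specialized.

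For the resource bounds, I would substitute $k = 1000 \log\log\log n$ into Lemma~\ref{lem:extract_time_work}. The time bound $O(k^2 \log^* n)$ immediately becomes $O((\log\log\log n)^2 \log^* n)$, and the work bound $O(m) + \overline{O}(n + k^3 n^{-6})$ collapses to $O(m) + \overline{O}(n)$ since $k^3 n^{-6} = o(1)$. This step succeeds with failure probability at most $k^3 n^{-6} \le n^{-5}$ for $n$ large enough.

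For the shrink bound, every active root at the end of the algorithm belongs to one of the three classes analyzed in the preceding three lemmas: roots in $V'$, roots in $V(E')$, and roots in $V \setminus (V' \cup V(E'))$. The third class is empty by Lemma~\ref{lem:extract_bound_no}, and the first two contribute at most $1000 \cdot 0.999^k n$ and $0.999^k n$ active roots respectively, with combined failure probability at most $k^2 n^{-8} + k n^{-8}$. Now with $k = 1000 \log\log\log n$ one has $0.999^k \le e^{-0.001 k} = e^{-\log\log\log n}$, which for $n$ sufficiently large is smaller than $1/(1001 \log\log n)$; therefore the total number of active roots is at most $n/\log\log n$, as required.

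Finally, a union bound over all failure events yields overall success probability at least $1 - n^{-5} - O(k^2 n^{-8}) \ge 1 - n^{-4}$ for $n$ large enough, completing the proof. There is no real obstacle here: the substantive arguments (probabilistic shrinkage of active-root sets across the three buckets, correctness of the \textsc{Reverse} step, and the per-round time/work accounting of the \textsc{Filter} compositions) have all been discharged in the lemmas of this subsection, and what remains is simply to track the constants through the choice $k = 1000 \log\log\log n$ so that $0.999^k$ beats $1/\log\log n$ by the necessary constant factor.
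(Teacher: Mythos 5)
Your proof is correct and follows the same route as the paper: plug $k = 1000\log\log\log n$ into Lemma~\ref{lem:extract_time_work} for the resource bounds, then combine Lemmas~\ref{lem:extract_bound_Vprime}, \ref{lem:extract_bound_VEprime}, and \ref{lem:extract_bound_no} to bound the active roots by $1001n \cdot 0.999^k \le n/\log\log n$ and finish with a union bound. Your chain $0.999^k \le e^{-0.001k} = e^{-\log\log\log n} < 1/(1001\log\log n)$ is a bit loose (it only beats $1/(1001\log\log n)$ because $\log$ denotes $\log_2$, giving an exponent $1/\ln 2 > 1$), but that is the same slack the paper's one-line inequality implicitly relies on, so there is no gap.
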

\begin{proof}
    By Lemma~\ref{lem:extract_time_work}, \textsc{Extract}$(E, k)$ runs in $O(m) + \overline{O}(n)$ work and $O((\log\log\log n)^2 \log^* n)$ time w.p. $1 - n^{-5}$. 
    By Lemma~\ref{lem:extract_bound_Vprime}, Lemma~\ref{lem:extract_bound_VEprime}, Lemma~\ref{lem:extract_bound_no}, and the discussion before them, the number of vertices in the current graph is at most 
    \begin{equation*}
        1001n \cdot 0.999^{1000 \log\log\log n} \le n / \log\log n
    \end{equation*}
    w.p. $1 - n^{-7}$. The lemma immediately follows from a union bound.
\end{proof}

\subsection{A $\poly(\log n)$-Shrink Algorithm} \label{subsec:alg_log} 

In this section, we give an algorithm that runs in $O(\log\log n)$ time and $O(m) + \overline{O}(n)$ work and contracts the graph to $n / \poly(\log n)$ vertices w.h.p.

\begin{framed}
\noindent \textsc{Reduce}$(V, E, k)$:
\begin{enumerate}
    \item \textsc{Extract}$(E, 1000 \log\log\log n)$.\label{alg:reduce:s1}
    \item $V' = \textsc{Filter}(E, k)$.\label{alg:reduce:s2}
    \item For each $v \in V$ do $v.p = v.p.p$. \textsc{Alter}$(E)$.\label{alg:reduce:s3}
    \item Edge set $E' = \{(u, v) \mid u \notin V' \text{ or } v \notin V'\}$.\label{alg:reduce:s4}
    \item For \emph{round} $i$ from $0$ to $k$: \textsc{Matching}$(E')$, for each $v \in V$ do $v.p = v.p.p$, \textsc{Alter}$(E')$.\label{alg:reduce:s5}
    \item \textsc{Reverse}$(V', E)$.\label{alg:reduce:s6}
\end{enumerate}
\end{framed}

\begin{lemma} \label{lem:alg:reduce_correctness}
    If at the beginning of an execution of \textsc{Reduce}$(V, E, k)$, both ends of any edge in $E$ are roots and all trees in the labeled digraph have height $0$, then at the end of that execution both ends of any edge in $E$ are still roots and all trees are flat.
\end{lemma}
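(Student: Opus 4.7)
The plan is to track the two-part invariant ``every tree in the labeled digraph is flat, and both endpoints of every edge in the current edge set are roots'' as it is perturbed and then restored by each step of \textsc{Reduce}. The hypothesis gives us the stronger state (all trees are height-$0$ singletons) at the outset, and we need only the invariant itself at the end.

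Step~\ref{alg:reduce:s1} is immediate from Lemma~\ref{lem:extract_contraction_alg}: \textsc{Extract} turns height-$0$ trees plus a root-adjacent $E$ into flat trees plus a root-adjacent $E$. For Step~\ref{alg:reduce:s2}, \textsc{Filter}$(E, k)$ uses pass-by-value so $E$ is not mutated externally, but by Lemma~\ref{lem:filter_tree_height} the labeled digraph can now have trees of height up to $2$, and because parents have moved some edges of $E$ may be adjacent to non-roots. Step~\ref{alg:reduce:s3} repairs both violations: a single global pointer-jump $v.p = v.p.p$ flattens every tree of height at most $2$ (each depth-$2$ vertex lands on the root; depth-$\le 1$ vertices are unchanged), and the subsequent \textsc{Alter}$(E)$ replaces every $(u, v)$ by the root pair $(u.p, v.p)$ and strips self-loops, restoring the invariant for all of $E$.

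For Step~\ref{alg:reduce:s5} I would induct on the round index. The base case holds because $E' \subseteq E$ inherits root-adjacency and the trees are still flat after Step~\ref{alg:reduce:s4}. In one round: \textsc{Matching}$(E')$ may re-parent some former roots (by Lemma~\ref{lem:constant_shrink_correctness} each becomes a root or the child of a root), which can push former children of those roots to depth $2$; the subsequent $v.p = v.p.p$ re-flattens every tree of height $\le 2$; and \textsc{Alter}$(E')$ then carries each edge of $E'$ to a pair of roots. Hence after all $k+1$ rounds the trees are flat and every edge of $E'$ is root-adjacent.

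The main obstacle is Step~\ref{alg:reduce:s6}. Edges of $E \setminus E'$ (both ends in $V'$) are never altered during Step~\ref{alg:reduce:s5}, yet parent updates inside \textsc{Matching}$(E')$ can re-parent vertices of $V'$ that happen to be adjacent to $E'$, so such edges need not be root-adjacent going into Step~\ref{alg:reduce:s6}. To handle this I would reuse the argument from the last paragraph of the proof of Lemma~\ref{lem:extract_contraction_alg}. In Step~\ref{alg:reverse:s1} of \textsc{Reverse}$(V', E)$, for each flat tree containing some non-root vertex of $V'$, the concurrent write $v.p.p = v$ is won arbitrarily by one such $v^*$; the follow-up $v.p = v.p.p$ then yields $v^*.p = v^*$ and $w.p = v^*$ (where $w$ is the old root), and sends the other non-root $V'$-children of $w$ directly to $v^*$. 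The resulting trees have height at most $2$, since descendants of $w$ outside $V'$ now sit two hops below $v^*$. Step~\ref{alg:reverse:s2} of \textsc{Reverse}, a full pointer jump, flattens all such trees; trees containing no non-root vertex of $V'$ are untouched and were already flat. Finally, \textsc{Alter}$(E)$ in Step~\ref{alg:reverse:s3} replaces each $(u, v) \in E$ by $(u.p, v.p)$, a pair of roots since trees are flat, and removes self-loops, giving root-adjacency simultaneously for all of $E$ (including the previously unhandled edges of $E \setminus E'$) and closing the invariant.
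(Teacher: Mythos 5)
Your proposal is correct and follows essentially the same route as the paper's proof: handle Step~\ref{alg:reduce:s1} via Lemma~\ref{lem:extract_contraction_alg}, bound the tree height after Step~\ref{alg:reduce:s2} by Lemma~\ref{lem:filter_tree_height}, re-flatten and re-root edges in Step~\ref{alg:reduce:s3}, induct over the rounds of Step~\ref{alg:reduce:s5} using Lemma~\ref{lem:constant_shrink_correctness}, and finish Step~\ref{alg:reduce:s6} by repeating the \textsc{Reverse} argument from the proof of Lemma~\ref{lem:extract_contraction_alg}. Your extra remarks (e.g., that edges of $E \setminus E'$ may be non-root-adjacent before Step~\ref{alg:reduce:s6} and are repaired only by the final \textsc{Alter}$(E)$ inside \textsc{Reverse}) are details the paper leaves implicit, not a deviation in approach.
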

\begin{proof}
    By Lemma~\ref{lem:extract_contraction_alg}, the statement holds after Step~\ref{alg:reduce:s1}. 
    By Lemma~\ref{lem:filter_tree_height}, after Step~\ref{alg:reduce:s2}, all trees have height at most $2$. 
    In Step~\ref{alg:reduce:s3}, all trees become flat and the \textsc{Alter}$(E)$ makes all edges adjacent to roots. 
    By an induction on round $i$ from $0$ to $k$ in Step~\ref{alg:reduce:s5} and  Lemma~\ref{lem:constant_shrink_correctness}, all trees have height at most $2$ after \textsc{Matching}$(E')$, and become flat at the end of that round and all edges are adjacent to roots. 
    Using the same proof for \textsc{Reverse} as in Lemma~\ref{lem:extract_contraction_alg}, the statements holds after Step~\ref{alg:reduce:s6}.
\end{proof}

Next, we prove that the algorithm runs in $O(m) + \overline{O}(n)$ work. 
We could not use the implementation for \textsc{Filter}$(E, k)$ in Lemma~\ref{lem:filter_time_work}, because here we want $O(k) = O(\log\log n)$ running time instead of $O(k \log^* n)$ time. 
Observe that the number of vertices in current graph is reduced, so we can afford more work proportional to the number of vertices in pursuit for an improved running time.

\begin{lemma} \label{lem:filter_time_work_naive}
    \textsc{Filter}$(E, k)$ can be implemented on an ARBITRARY CRCW PRAM in $O(k)$ time and $O(|E| + k \cdot |V(E)|)$ work.
\end{lemma}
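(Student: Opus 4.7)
The overall plan is to drop the $O(\log^* n)$-time approximate-compaction step used in the proof of Lemma~\ref{lem:filter_time_work} and instead pay $O(|V(E)|)$ work per round, charging the one-time pass over all edges to the $O(|E|)$ term and every subsequent round to the $O(k \cdot |V(E)|)$ term. The key structural observation is that Step~\ref{alg:matching:s3} of \textsc{Matching} already keeps at most one outgoing arc per vertex, so after one honest pass over $E$ the algorithm never genuinely needs more than $O(|V(E)|)$ arcs in any later round.

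Concretely, I would allocate $|E|$ edge processors and $|V(E)|$ vertex processors up front. In round $0$ of Step~\ref{alg:filter:s1}, run \textsc{Matching}$(E)$, \textsc{Alter}$(E)$, and the independent edge deletion directly on the uncompacted edge array, which by Lemma~\ref{lem:constant_shrink_work_time} takes $O(1)$ time and $O(|E|)$ work and accounts for the $O(|E|)$ term. At the end of round $0$, have every vertex processor elect a single surviving adjacent edge as its representative via one concurrent write into its private memory, producing a set $R \subseteq E$ with $|R| \le |V(E)|$ such that the restriction of $E$ to roots that still carry any edge is the same as the restriction of $R$. For rounds $1,\ldots,k$, run \textsc{Matching}, \textsc{Alter}, and edge deletion on $R$ only and refresh the representatives after each \textsc{Alter} by another $O(|V(E)|)$-work vertex pass; each such round is $O(1)$ time and $O(|V(E)|)$ work by Lemma~\ref{lem:constant_shrink_work_time}.

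Step~\ref{alg:filter:s2} is implemented with the vertex processors: during Step~\ref{alg:filter:s1} each vertex maintains a $(k{+}1)$-bit array recording in which rounds its parent was updated, and in iteration $j$ of Step~\ref{alg:filter:s2} every vertex whose bit for round $j$ is set performs $v.p = v.p.p$ in $O(1)$ time. Each iteration costs $O(|V(E)|)$ work and $O(1)$ time, so Step~\ref{alg:filter:s2} contributes $O(k)$ time and $O(k \cdot |V(E)|)$ work. Step~\ref{alg:filter:s3} is a single $O(|V(E)|)$-work, $O(1)$-time pass that returns the vertices currently adjacent to a representative. Summing the three steps gives $O(k)$ parallel time and $O(|E|) + O(k \cdot |V(E)|) = O(|E| + k\cdot|V(E)|)$ work.

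The main obstacle is justifying that replacing $E$ with the vertex-representative set $R$ after round $0$ does not change the output semantics of \textsc{Filter}$(E,k)$ as consumed by \textsc{Reduce}. This reduces to checking two points: (i) $V(R) = V(E)$ throughout, since $R$ holds exactly one representative per vertex that still has an adjacent edge, so the set returned in Step~\ref{alg:filter:s3} is unchanged; and (ii) the parent-pointer updates are a legal arbitrary-CRCW resolution of \textsc{Matching}$(E)$, because Step~\ref{alg:matching:s3} is defined to pick an \emph{arbitrary} outgoing arc per vertex, and pre-electing that arc via the representative is one such arbitrary choice. Once these two points are in hand, Lemma~\ref{lem:constant_shrink_reduce} and the earlier correctness lemmas for \textsc{Filter} apply verbatim, and the stated bounds on time and work follow from the per-round accounting above.
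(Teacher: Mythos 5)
Your plan changes the algorithm, not just its implementation, and the change breaks the probabilistic analysis that \textsc{Reduce} depends on. Claim (i) --- that $V(R) = V(E)$ ``throughout'' --- fails from round $1$ onward: once a vertex's single elected representative is deleted (probability $10^{-4}$ per round, compounding to roughly $1-(1-10^{-4})^k$ over $k$ rounds), the vertex drops out of $V(R)$, whereas a vertex of degree $D$ remains in $V(E)$ unless \emph{all} $D$ of its adjacent edges are deleted, which happens with probability only $\bigl(1-(1-10^{-4})^{k+1}\bigr)^{D}$. This difference is exactly what the downstream bound exploits: the expected-contribution argument used in the proof of Lemma~\ref{lem:reduce_time_work} (mirroring Lemma~\ref{lem:extract_time_work_helper_E}) bounds the expected number of edges of $v$ entering $E'$ by $D\cdot\bigl(1-(1-10^{-4})^{k+1}\bigr)^{D} = O(1)$, and the decay in $D$ is essential. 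With your truncated $R$, the probability that $v$ is filtered out is essentially degree-independent, so a degree-$D$ vertex contributes $\Theta(D)$ edges in expectation to $E'$, the bound $\E[|E'|]=O(|V(E)|)$ collapses, and \textsc{Reduce} no longer runs in $\overline{O}(n)$ work. Claim (ii) also does not save you: Step~\ref{alg:matching:s3} of \textsc{Matching} does grant an arbitrary choice of \emph{outgoing} arc per vertex, but \textsc{Matching} also examines all \emph{incoming} arcs (Steps~\ref{alg:matching:s4}--\ref{alg:matching:s6}) and, more to the point, the independent per-edge deletion is part of \textsc{Filter}, not \textsc{Matching}, and is not an arbitrary resolvable choice.

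The paper avoids the issue by never truncating: it keeps the full surviving edge set $E_j$ every round, with processors of already-deleted edges permanently idle. A Chernoff plus union-bound argument gives $|E_j| \le 0.99999^{j}|E_0|$ w.h.p., so the Step~\ref{alg:filter:s1} work telescopes to $O(|E_0|)$. Your bit-array device for Step~\ref{alg:filter:s2} is essentially the same as the paper's per-round cells and is fine; the gap is entirely in replacing $E$ by $R$ for rounds $1,\dots,k$.
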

\begin{proof}
    In Step~\ref{alg:filter:s1}, initially each edge knows itself to be in $E$. 
    In each round, the edge notifies itself to be not in $E$ w.p. $10^{-4}$, and only the edge corresponding to a non-deleted edge in $E$ performs work in \textsc{Matching}$(E)$, \textsc{Alter}$(E)$, and edge deletion. 
    Combining with Lemma~\ref{lem:constant_shrink_work_time}, round $j$ takes $O(1)$ time and $O(|E_j|)$ work where $E_j$ is the edge set $E$ at the beginning of round $j$.
    By a Chernoff bound, $\Pr[|E_j| \le 0.99999 |E_{j-1}|]$ w.p. $1 - n^{-9}$ for each $j \in [k]$. 
    By a union bound, $|E_j| \le 0.99999^j |E_0|$ w.p. $1 - n^{-8}$ for all $j \in [k]$. 
    As a result, Step~\ref{alg:filter:s1} takes $O(k)$ time and $O(|E|)$ work w.p. $1 - n^{-8}$. 
    
    To implement Step~\ref{alg:filter:s2}, each edge processor corresponding to edge $(u, v)$ in each round $j$ of Step~\ref{alg:filter:s1} also stores the ends $u$ (resp. $v$) into a cell indexed by vertex $u$ (resp. $v$) and $j$ in the public RAM\footnote{For example, the cell with index $j \cdot |A| + \textsf{id}(u)$ where $A$ is the array storing all vertices in $V(E)$ and $|A| \le 2 |V(E)|$ by approximate compaction (see the proof of Lemma~\ref{lem:reduce_time_work}) and $\textsf{id}(u)$ is the id of the processor corresponding to vertex $u$.}  
    if vertex $u$ (resp. $v$) updated its parent in round $j$, and this adds $O(|E|)$ total work to Step~\ref{alg:filter:s1}; in each iteration $j$ of Step~\ref{alg:filter:s2}, the algorithm uses all vertex processors to check all cells corresponding to round $j$ (by index range), and if the cell contains a vertex $v$ then executes $v.p = v.p.p$, so the iteration takes $O(|V(E)|)$ work and $O(1)$ time. 
    Since only vertex with an adjacent edge can update its parent, this implementation is correct, and Step~\ref{alg:filter:s2} takes $O(k)$ time and $O(k \cdot |V(E)|)$ work.
    
    In Step~\ref{alg:filter:s3}, each edge $(u, v) \in E$ notifies the vertices $u$ and $v$ to be in $V'$. This takes $O(1)$ time and $O(|E|)$ work, giving the lemma. 
\end{proof}

\begin{lemma} \label{lem:reduce_time_work}
    Let $k = 10^6 \log\log n$. 
    \textsc{Reduce}$(V, E, k)$ can be implemented on an ARBITRARY CRCW PRAM in $O(\log\log n)$ time and $O(m) + \overline{O}(n)$ work w.p. $1 - n^{-3}$.
\end{lemma}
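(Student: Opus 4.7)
My plan is to verify the time and work bounds of each of the six steps of \textsc{Reduce}$(V, E, k)$ with $k = 10^6 \log\log n$ and then union-bound the failure probabilities. For Step~1, I would apply Lemma~\ref{lem:loglog_shrink_reduce} directly to the call \textsc{Extract}$(E, 1000 \log\log\log n)$, which gives $O((\log\log\log n)^2 \log^* n) \subseteq O(\log\log n)$ time, $O(m) + \overline{O}(n)$ work, and the crucial guarantee that the current graph has $|V(E)| \le n/\log\log n$ roots with probability at least $1 - n^{-4}$. For Step~2, with this vertex bound in hand, Lemma~\ref{lem:filter_time_work_naive} yields $O(k) = O(\log\log n)$ time and $O(|E| + k \cdot |V(E)|) = O(m) + O(n)$ work. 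Steps~3, 4, and 6 are routine: one $v.p = v.p.p$ pass plus \textsc{Alter}$(E)$ in Step~3, a scan to form $E'$ in Step~4, and \textsc{Reverse}$(V', E)$ in Step~6 each cost $O(m+n)$ work and $O(1)$ time.

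The main obstacle is Step~5, which performs $k+1 = \Theta(\log\log n)$ rounds of \textsc{Matching}$(E')$, each of cost $O(|E'|)$ by Lemma~\ref{lem:constant_shrink_work_time}. To stay within $\overline{O}(n)$ total work, the plan is to show that $\E[|E'|] = O(n/\log\log n)$. A root $v$ of effective subtree degree $d_v$ after the \textsc{Matching} calls inside \textsc{Filter} fails to join $V'$ only when all $d_v$ incident edges are eliminated by at least one of the $k+1$ independent deletions with probability $10^{-4}$, which happens with probability at most $(1 - (1 - 10^{-4})^{k+1})^{d_v}$. Extending the per-vertex analysis of Lemma~\ref{lem:extract_time_work_helper_E} (which uses the inequality $x a^x \le -1/(e \ln a)$) and combining it with $|V(E)| \le n/\log\log n$ from Step~1, one obtains the desired bound on $\E[|E'|]$.

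Given this bound, Step~5 can be implemented in $O(k \cdot \E[|E'|]) = O(n)$ expected work and $O(k) = O(\log\log n)$ time by keeping $E'$ in a compacted indexed array (applying Lemma~\ref{lem:apx_compaction} between rounds) and restricting the $v.p = v.p.p$ updates in each round to vertices that actually changed parents, via the indexed-offset trick from the proof of Lemma~\ref{lem:filter_time_work}. Summing the per-step costs and union-bounding the failure events from Steps~1, 2, and 5 yields $O(\log\log n)$ time, $O(m) + \overline{O}(n)$ work, and success probability $1 - n^{-3}$. The delicate part will be the $\E[|E'|]$ bound, since a naive per-vertex estimate at $k = \Theta(\log\log n)$ is too loose; the argument must leverage both the shrinkage from Step~1 and the subtree-level interaction between \textsc{Filter}'s random deletions and its \textsc{Matching} contractions, rather than treating the deletions in isolation.
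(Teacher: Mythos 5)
Your step-by-step accounting matches the paper's proof: Lemma~\ref{lem:loglog_shrink_reduce} for Step~1, Lemma~\ref{lem:filter_time_work_naive} for Step~2, constant-time/linear-work bookkeeping for Steps~3, 4, 6, and a per-round cost of $O(|E'|)$ (plus $O(|V|)$ for the shortcut, with $|V| \le n/\log\log n$) in Step~5. But the entire lemma hinges on the quantitative claim $\E[|E'|] = O(n/\log\log n)$, and you do not prove it. You first assert that it follows by ``extending the per-vertex analysis of Lemma~\ref{lem:extract_time_work_helper_E}'', and then concede in your closing sentence that the naive per-vertex estimate via $xa^x \le -1/(e\ln a)$ is too loose when the deletion probability is accumulated over $k+1 = \Theta(\log\log n)$ rounds (it only gives a $\poly(\log n)$ per-vertex bound), leaving the actual argument as an appeal to unspecified ``subtree-level interaction.'' That is precisely the crux: without a constant (or even $O(\mathrm{polyloglog})$) bound on the expected contribution per vertex of $V(E)$, Step~5's work is only bounded by $\overline{O}(k\cdot\poly(\log n)\cdot|V(E)|)$, which is far from the claimed $\overline{O}(n)$, so the lemma does not follow from what you wrote.

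The paper closes this step differently: it observes that the set $E'$ of Step~\ref{alg:reduce:s4} is exactly the edge set $E'$ defined in round $0$ of \textsc{Extract}$(E,k)$ (with $V' = \textsc{Filter}(E,k)$), and then invokes the contribution argument of Lemma~\ref{lem:extract_time_work_helper_E} to assert that the expected contribution to $|E'|$ from each vertex of $V(E)$ is at most a constant (the paper states the bound $2$, with the round index $j=0$), whence $\E[|E'|] = O(|V(E)|) = O(n/\log\log n)$ and Step~5 costs $O(k)$ time and $\overline{O}(k\cdot n/\log\log n) = \overline{O}(n)$ work. Whether or not one finds the paper's one-line justification fully satisfying, your proposal must supply some argument for this per-vertex constant bound rather than flag it as open; as written there is a genuine gap. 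Two minor remarks: the compaction-between-rounds and indexed-offset machinery you import into Step~5 is unnecessary (Step~5 has no random edge deletions, and the paper simply pays $O(|V|)$ per round for the shortcut), and you omit the vertex renaming via approximate compaction after Step~1, which the paper uses so that the implementation of Lemma~\ref{lem:filter_time_work_naive} has indexed vertex ids in $[2n/\log\log n]$.
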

\begin{proof}
    By Lemma~\ref{lem:loglog_shrink_reduce}, Step~\ref{alg:reduce:s1} takes $o(\log\log n)$ time and $O(m) + \overline{O}(n)$ work, and reduces the number of vertices in the current graph to $n / \log\log n$ w.p. $1 - n^{-4}$. In the following, we shall assume that this event holds and apply a union bound at the end. 
    
    After Step~\ref{alg:reduce:s1}, the algorithm runs approximate compaction in $O(\log^* n)$ time to rename each vertex to a unique id in $[2n / \log\log n]$ in $O(n)$ work w.p. $1 - 1/2^{n^{1/25}} \ge 1 - n^{-10}$ (Lemma~\ref{lem:apx_compaction}).
    
    By Lemma~\ref{lem:filter_time_work_naive}, Step~\ref{alg:reduce:s2} takes $O(\log\log n)$ time and $O(m + k \cdot n / \log\log n) = O(m+ n)$ work.
    
    Step~\ref{alg:reduce:s3} takes $O(1)$ time and $O(m + n)$ work.
    
    The edge set $E'$ defined in Step~\ref{alg:reduce:s4} is the same as the edge set $E'$ defined in round $0$ of \textsc{Extract}$(E, k)$, i.e., let $V' = \textsc{Filter}(E, k)$, then $E'$ is the set of edges such that at least one end of the edge is not in $V'$. 
    So, we can use the same technique used in the proof of Lemma~\ref{lem:extract_time_work_helper_E} to argue that the expected contribution to $|E'|$ from each vertex in $V(E)$ is at most 
    \begin{equation*}
        (1 - 10^{-4})^{-j} + 1 = 2,
    \end{equation*}
    where $j = 0$. 
    Therefore, $\E[|E'|] = O(n / \log\log n)$. 
    
    By Lemma~\ref{lem:constant_shrink_work_time}, the \textsc{Matching}$(E')$ and \textsc{Alter}$(E')$ in each round $i$ in Step~\ref{alg:reduce:s5} takes $O(1)$ time and $O(|E'|)$ work. 
    By the first paragraph, $|V| \le n / \log\log n$, and the $v.p = v.p.p$ takes $O(1)$ time. 
    Combining with the previous paragraph, Step~\ref{alg:reduce:s5} takes $O(k)$ time and $\overline{O}(k \cdot n / \log\log n) = \overline{O}(n)$ total work.
    
    Step~\ref{alg:reduce:s6} takes $O(1)$ time and $O(m + n)$ work. The lemma follows from a union bound.
\end{proof}

\begin{lemma} \label{lem:polylog_shrink_reduce}
     Let $k = 10^6 \log\log n$. 
     \textsc{Reduce}$(V, E, k)$ reduces the number of vertices in the current graph to $n / (\log n)^{1000}$ w.p. $1 - n^{-7}$.
\end{lemma}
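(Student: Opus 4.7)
The plan is to mirror the three-category analysis used to bound the output of \textsc{Extract} (Lemmas~\ref{lem:extract_bound_Vprime}--\ref{lem:extract_bound_no}), but now with $k = 10^6\log\log n$ so that the accumulated $0.999^k$ factor from repeated \textsc{Matching} is polynomially small in $\log n$ rather than only $\log\log n$-small. Step~1 calls \textsc{Extract}$(E, 1000\log\log\log n)$, which by Lemma~\ref{lem:loglog_shrink_reduce} reduces the number of active roots to some $N \le n/\log\log n$ w.p.\ $1 - n^{-4}$; I would condition on this event and collect the rare failures in a union bound at the end.

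For the set $V'$ returned by \textsc{Filter}$(E, k)$ in Step~2, I would repeat the induction of Lemma~\ref{lem:extract_time_work_helper_Vprime} over the $k+1$ rounds: in each round every endpoint of the filtered copy of $E$ is a root (Lemma~\ref{lem:filter_root_adj}) with at least one adjacent edge, so Lemma~\ref{lem:constant_shrink_reduce} applies and shrinks the count by a factor of $0.999$ w.p.\ $1 - n^{-c}$ as long as that count exceeds $\Theta(\log n)$ (if it ever drops below, the target bound is already trivially met). An identical induction applies to the $k+1$ rounds of \textsc{Matching}$(E')$ in Step~5, where every vertex in $V(E')$ has an adjacent edge in $E'$ so the induced subgraph has no singleton component and Lemma~\ref{lem:constant_shrink_reduce} again yields factor-$0.999$ shrinkage per round. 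Both $|V'|$ and the final $|V(E')|$ are therefore bounded by $N \cdot 0.999^{k+1}$; with $k = 10^6\log\log n$ this is at most $n/((\log n)^{1000}\log\log n)$.

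For active roots outside $V' \cup V(E')$, observe that by Step~4's definition of $E'$, every current edge either lies in $E'$ (so one endpoint is not in $V'$) or has both ends in $V'$; hence a root $u \notin V' \cup V(E')$ has no adjacent edge at all. Its tree (flat before Step~6 by Lemma~\ref{lem:alg:reduce_correctness}'s intermediate invariants) therefore contains the whole original connected component of $u$, so \textsc{Reverse}$(V', E)$ in Step~6 either reroots the tree at a $V'$-vertex (in which case $u$ is demoted and already accounted for within the $V'$ count) or leaves $u$ as an isolated root that is not active. This category contributes $0$.

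Summing the three contributions gives at most $2n/((\log n)^{1000}\log\log n) < n/(\log n)^{1000}$ for $n$ sufficiently large. A union bound over the $O(\log\log n)$ matching rounds across Steps~2 and~5 (each succeeding w.p.\ $1-n^{-c}$ whenever the exponential tail in Lemma~\ref{lem:constant_shrink_reduce} is effective) together with Step~1's $1-n^{-4}$ guarantee yields the claimed $1 - n^{-7}$ probability. The main delicacy is tracking the concentration from Lemma~\ref{lem:constant_shrink_reduce} across both contraction stages---ensuring that whenever the relevant root count drops into a regime where the $2^{-n'/10000}$ tail is no longer polynomially small, the target bound is already trivially satisfied---which is routine but must be checked explicitly.
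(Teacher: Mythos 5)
Your proposal follows essentially the same route as the paper's proof: the same three-way accounting (roots in $V'$ from the single \textsc{Filter} call in Step~\ref{alg:reduce:s2}, roots still adjacent to $E'$ after the matching rounds of Step~\ref{alg:reduce:s5}, and roots outside $V'\cup V(E')$), the same per-round factor-$0.999$ shrinkage via Lemma~\ref{lem:constant_shrink_reduce} with the no-singleton/no-loop precondition supplied by \textsc{Alter} and Lemma~\ref{lem:filter_root_adj}, and the same use of \textsc{Reverse}$(V',E)$ plus the ``finished component'' observation to show the third category contributes nothing (with rerooted trees charged to the $V'$ count). This is the paper's argument in all essentials.

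The one genuine problem is your probability bookkeeping. You condition on the Step-\ref{alg:reduce:s1} guarantee of Lemma~\ref{lem:loglog_shrink_reduce}, which only holds with probability $1-n^{-4}$, and then claim a union bound ``yields the claimed $1-n^{-7}$''; it cannot, since the failure probability is then at least of order $n^{-4}$. The conditioning is also unnecessary: you gain only a factor $\log\log n$ from $N\le n/\log\log n$, whereas the shrinkage alone already gives $0.999^{10^6\log\log n}\,n\le n/(\log n)^{1400}$, which is far below the target $n/(\log n)^{1000}$. The paper simply starts both inductions from the trivial bound of $n$ roots and never invokes Lemma~\ref{lem:loglog_shrink_reduce} here, so every event in the union bound has failure probability $n^{-8}$ or smaller and the stated $1-n^{-7}$ follows. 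Adopting that change (and noting that the per-round tail $2^{-n'/10^4}$ is $n^{-9}$-small only once the relevant root count exceeds roughly $10^5\log n$, i.e.\ $\Theta(\log n)$ with a sufficiently large explicit constant, below which the target bound is indeed trivial) repairs your proof without altering its structure.
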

\begin{proof}
    By Lemma~\ref{lem:alg:reduce_correctness}, the algorithm is a correct contraction algorithm. 
    Apply Definition~\ref{def:extract_vi_ei} and use the same technique in the proof of Lemma~\ref{lem:extract_time_work_helper_V}, we have that $V'$ defined in Step~\ref{alg:reduce:s2} has cardinality at most 
    \begin{equation*}
        0.999^{k + i} n = 0.999^{10^6 \log\log n} \le \frac{n}{(\log n)^{1400}}
    \end{equation*}
    w.p. $1 - n^{-8}$, where $i = 0$ as we only call \textsc{Filter}$(E, k)$ once.
    
    By the proof of Lemma~\ref{lem:alg:reduce_correctness} and Step~\ref{alg:reduce:s4}, all active roots not in $V'$ must have an adjacent edge in $E'$, so now we bound the number of active roots in $V(E')$. 
    (Recall that an active root is a vertex that is a root with adjacent edges to another root in the current graph.)
    A vertex $v$ is called \emph{alive} if $v$ is a root in the labeled digraph and has an adjacent edge in $E'$. 
    (Note that an active root might not be alive since $E'$ does not contain all the edges in the current graph.)
    By an induction on $i \in [0, k]$ in Step~\ref{alg:reduce:s5} and applying Lemma~\ref{lem:constant_shrink_reduce}, the number of alive vertices at the beginning of round $i$ is at most $0.999^i n$ w.p. $1 - i \cdot n^{-9}$. So, the number of alive vertices before Step~\ref{alg:reduce:s6} is at most $0.999^{10^6 \log\log n} \le n / (\log n)^{1400}$ w.p. $1 - n^{-8}$. 
    
    Consider a non-alive vertex $v$ in the subgraph $G'$ induced on $E'$ before Step~\ref{alg:reduce:s6}. 
    If the component of $v$ in $G'$ does not contain a vertex in $V'$, then the computation for this component is done as $G'$ contains all the edges between vertices in this component, so $v$ is not active since there is no edge on $v$ in the current graph by the fact that $v$ is the root of a flat tree and the execution of \textsc{Alter}$(E')$. 
    Otherwise, if the component of $v$ in $G'$ does contain a vertex in $V'$, then the \textsc{Reverse}$(V', E)$ in Step~\ref{alg:reduce:s6} makes $v$ a non-root and thus not an active root at the end of \textsc{Reduce}$(V, E, k)$ (and this root is counted through the first paragraph of this proof).
    
    As a result, the number of vertices in the current graph at the end of \textsc{Reduce}$(V, E, k)$ is at most $2n / (\log n)^{1400} \le n / (\log n)^{1000}$ w.p. $1 - n^{-7}$.
\end{proof}

\begin{lemma} \label{lem:polylog_shrink_reduce_all}
    Let $k = 10^6 \log\log n$. 
    \textsc{Reduce}$(V, E, k)$ runs on an ARBITRARY CRCW PRAM in $O(m) + \overline{O}(n)$ work and $O(\log\log n)$ time, and reduces the number of vertices in the current graph to $n / (\log n)^{1000}$ w.p. $1 - n^{-2}$.
\end{lemma}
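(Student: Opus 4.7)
The statement is essentially the conjunction of the two immediately preceding lemmas: Lemma~\ref{lem:reduce_time_work} gives the time/work guarantee of $O(\log\log n)$ time and $O(m) + \overline{O}(n)$ work with probability $1 - n^{-3}$ when $k = 10^6 \log\log n$, while Lemma~\ref{lem:polylog_shrink_reduce} gives the shrinkage guarantee of $n/(\log n)^{1000}$ remaining active roots with probability $1 - n^{-7}$ under the same choice of $k$. So my plan is simply to invoke these two lemmas and combine them with a union bound.

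Concretely, I would write: instantiate $k = 10^6 \log\log n$ and run \textsc{Reduce}$(V, E, k)$. By Lemma~\ref{lem:reduce_time_work}, with probability at least $1 - n^{-3}$ the algorithm terminates within $O(\log\log n)$ parallel time and uses $O(m) + \overline{O}(n)$ total work on an ARBITRARY CRCW PRAM. By Lemma~\ref{lem:polylog_shrink_reduce}, with probability at least $1 - n^{-7}$ the number of vertices in the current graph at the end of the algorithm is at most $n / (\log n)^{1000}$. Applying a union bound, both events hold simultaneously with probability at least
\[
1 - n^{-3} - n^{-7} \;\ge\; 1 - 2 n^{-3} \;\ge\; 1 - n^{-2}
\]
for all sufficiently large $n$, which yields the stated guarantee.

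There is no real obstacle here: the two component lemmas already do all of the substantive work, with Lemma~\ref{lem:reduce_time_work} handling the nontrivial expected-work analysis (relying on Lemma~\ref{lem:loglog_shrink_reduce} for Step~\ref{alg:reduce:s1}, Lemma~\ref{lem:filter_time_work_naive} for Step~\ref{alg:reduce:s2}, and the bound $\E[|E'|] = O(n/\log\log n)$ for Step~\ref{alg:reduce:s5}), and Lemma~\ref{lem:polylog_shrink_reduce} handling the shrinkage via Lemma~\ref{lem:constant_shrink_reduce}. The only thing to double-check is that the failure probabilities $n^{-3}$ and $n^{-7}$ stated in those lemmas are for the same execution of \textsc{Reduce}$(V, E, k)$ with the same $k$, which they are, so the union bound goes through cleanly.
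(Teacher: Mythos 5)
Your proposal is correct and matches the paper's proof, which simply cites Lemma~\ref{lem:reduce_time_work} and Lemma~\ref{lem:polylog_shrink_reduce}; you have just spelled out the union bound over the two failure probabilities, which the paper leaves implicit.
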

\begin{proof}
    By Lemma~\ref{lem:reduce_time_work} and Lemma~\ref{lem:polylog_shrink_reduce}.
\end{proof}

We did not optimize the constants in this section. 
The setup of $k$ is a consequence of \S{\ref{sec:assumption}}.

\section{Stage 2: Increasing the Minimum Degree to $\poly(\log n)$} \label{sec:stage2}

After Stage $1$, by Lemma~\ref{lem:polylog_shrink_reduce_all} we assume the number of vertices is $n / b^{10}$ where $b = (\log n)^{100}$ is a parameter (we will gradually increase the value of $b$ in \S{\ref{sec:assumption}}). Let the current graph be $G = (V, E)$ and we have $|V| \le n/b^{10}$ and $|E| \le m$. 
In this section, we give a linear-work algorithm that increases $\deg(G)$, the minimum degree over all vertices in $V$, to at least $\text{poly}(b)$.

Our algorithm, called \textsc{Increase}$(V, E, b)$, first builds a subgraph $H$ (called \emph{skeleton graph}) on the same vertex set $V$ as the input graph but with fewer edges ($\textsc{Build}(V, E, b)$), then it calls \textsc{Densify}$(H, b)$ to contract the skeleton graph such that some vertices obtains $\text{poly}(b)$ children, then in Step~\ref{alg:short_increase:s3} each vertex either 
(\romannumeral1) becomes a non-root, or
(\romannumeral2) becomes a root with $\text{poly}(b)$ children.  
At the end of the algorithm, any vertex in the current graph has degree at least $\text{poly}(b)$. 

\begin{framed}
\noindent \textsc{Increase}$(V, E, b)$:
\begin{enumerate}
    \item Skeleton graph $H = (V, E') = \textsc{Build}(V, E, b)$. \label{alg:short_increase:s1}
    \item $(V, E_{\text{close}}) = \textsc{Densify}(H, b)$. \label{alg:short_increase:s2}
    \item Use $E_{\text{close}}$ and the labeled digraph updated in Step~\ref{alg:short_increase:s2} to increase the degree of each root (vertex in the current graph) to be at least $\text{poly}(b)$ -- see \S{\ref{subsec:increase}} for a detailed description. \label{alg:short_increase:s3}
\end{enumerate}
\end{framed}

In this section, we state the algorithm's guarantees in terms of {\it good} rather than {\it high} probability, where good probability means with probability $1-1/\poly\log n$. Later in \S\ref{sec:boosting}, we will be able to boost the probability guarantees to high probability.

\subsection{Building the Skeleton Graph} \label{subsec:build}

\begin{framed}
\noindent \textsc{Build}$(V, E, b)$:
\begin{enumerate}
    \item For each vertex $v \in V$: assign a block of $b^9$ processors, which will be used as an indexed hash table $\mathcal{H}(v)$ of size $b^9$. \label{alg:build:s1}
    \item Choose a random hash function $h: [n] \to [b^9]$. For each edge $(u, v) \in E$: write $u$ into the $h(u)$-th cell of $\mathcal{H}(v)$. \label{alg:build:s2}
    \item For each vertex $v \in V$: if the number of vertices in $\mathcal{H}(v)$ is more than $b^8$ then mark $v$ as \emph{high}, else mark $v$ as \emph{low}. \label{alg:build:s3}
    \item Let $E'$ be an empty edge set. For each edge $(u, v) \in E$: if $u$ or $v$ is low then add $(u, v)$ to $E'$, else add $(u, v)$ to $E'$ w.p. $b^{-1}$.  \label{alg:build:s4}
    \item Remove parallel edges and loops from $E'$. \label{alg:build:s5}
    \item Return graph $H = (V, E')$.  \label{alg:build:s6}
\end{enumerate}
\end{framed}

\begin{lemma} \label{lem:build_time_work}
    \textsc{Build}$(V, E, b)$ can be implemented on an ARBITRARY CRCW PRAM in $O(\log b)$ time and $O(m + n)$ work w.p. $1 - n^{-8}$.
\end{lemma}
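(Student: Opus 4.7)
The plan is to walk through the six steps of \textsc{Build}$(V,E,b)$ in order, bound the time, work, and failure probability of each, and combine via a union bound. Steps 1--2 are routine: since $|V|\le n/b^{10}$ entering Stage $2$, the total number of hash-table cells allocated by Step 1 is $|V|\cdot b^9 \le n/b$, so the allocation can be indexed in $O(1)$ time and $O(n)$ work; in Step 2 a pairwise-independent hash $h$ is drawn with a short shared seed, and the $m$ edge processors concurrently write each endpoint into the appropriate hash cell in $O(1)$ time and $O(m)$ work (the ARBITRARY resolution is harmless since we only care about cell occupancy for Step 3).

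The $O(\log b)$ time bound is set by Step 3: for each vertex $v$ we count the non-empty cells of $\mathcal{H}(v)$ using its $b^9$ assigned processors via a balanced-tree parallel reduction, taking $O(\log b^9) = O(\log b)$ time and $O(b^9)$ work per vertex, hence $O(|V|\cdot b^9) = O(n/b)$ work overall; comparing the count to $b^8$ to classify $v$ as high or low is an $O(1)$ postlude. This step is deterministic once $h$ is fixed. Then Step 4 uses one processor per edge to read the two endpoint labels, draw a Bernoulli$(b^{-1})$ bit in the high--high case, and flag the edge for inclusion in $E'$, followed by approximate compaction (Lemma~\ref{lem:apx_compaction}) to pack the flagged edges in $O(\log^* n)\le O(\log b)$ time and $O(m)$ work, failing with probability at most $1/2^{m^{1/25}}\le n^{-9}$.

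Step 5 removes loops in $O(1)$ time (test $u=v$) and removes parallel edges by drawing a fresh pairwise-independent hash $h'$ and, for each $(u,v)\in E'$ with $u<v$, writing the edge's unique id into $\mathcal{H}(v)[h'(u)]$ and keeping only those edges whose id is found in that cell, finished by a second approximate compaction; the time is again $O(\log b)$ and the work is $O(m)$. The main obstacle is exactly this deduplication step: one must argue that parallel copies of $(u,v)$ always share the cell $\mathcal{H}(v)[h'(u)]$ so exactly one survives, while spurious collisions $h'(u)=h'(u')$ with $u\ne u'$ are rare---here we rely on each vertex having at most $O(b^9)$ distinct neighbors (and at most $O(b^8)$ for a low vertex) against a hash range of size $b^9$, so that losses are negligible relative to $|E'|$. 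Summing gives $O(\log b)$ total time and $O(m+n/b)=O(m+n)$ total work, and a union bound over the two compaction calls bounds the overall failure probability by $2n^{-9}\le n^{-8}$, as claimed.
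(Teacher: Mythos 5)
Your accounting of Steps 1--4 is essentially fine (modulo a small glitch: the block assignment in Step~\ref{alg:build:s1} needs the vertices of $V$ compacted into an indexed array first, which the paper does with approximate compaction in $O(\log^* n)$ time rather than ``$O(1)$''; this is harmless since $O(\log^* n)\le O(\log b)$). The genuine gap is in Step~\ref{alg:build:s5}. The paper removes parallel edges \emph{exactly}, using PRAM perfect hashing on the edge set $E'$ \cite{DBLP:conf/focs/GilMV91}, in $O(m)$ work and $O(\log^* n)$ time w.h.p. Your replacement --- write each edge $(u,v)$, $u<v$, into $\mathcal{H}(v)[h'(u)]$ and keep only the edge whose id survives in that cell --- is lossy: whenever two distinct neighbors $u\ne u'$ of $v$ collide under $h'$, one of the two \emph{non-parallel} edges $(u,v)$, $(u',v)$ is silently discarded. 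You wave this off as ``losses are negligible relative to $|E'|$,'' but that is both quantitatively and qualitatively wrong. Quantitatively, a low vertex can have up to $b^9$ distinct neighbors hashed into a table of size $b^9$, so the load is constant and a constant fraction of cells collide; with only pairwise independence the situation is no better. Qualitatively, the downstream lemmas do not tolerate \emph{any} loss of non-parallel edges: the proof of Lemma~\ref{lem:H_component_size} needs every edge adjacent to a low vertex to be present in $H$ so that small components of $G$ are reproduced exactly (it explicitly uses that Step~\ref{alg:build:s5} does not change $V(\mathcal{C}_H(v))$), and the proof of Lemma~\ref{lem:H_is_sparse} uses that after Step~\ref{alg:build:s5} the number of edges on a low vertex equals its number of neighbors. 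Dropping even a few such edges can disconnect a small component and break the ``either $V(\mathcal{C}_G(v))=V(\mathcal{C}_H(v))$ or $|V(\mathcal{C}_H(v))|\ge b^6$'' dichotomy on which Stage~2's correctness rests.

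Relatedly, your failure-probability bookkeeping only covers the two approximate-compaction calls; the deduplication step is assigned no error bound at all, yet it is exactly the step where your scheme fails with constant probability per vertex. To repair the proof you should either invoke perfect hashing on the (indices of the) edges of $E'$ as the paper does, or design a collision-free dedup (e.g., hash the pair $(u,v)$ into a table of size polynomially larger than the number of distinct neighbor pairs, with a rehash-on-collision argument), and then fold its failure probability into the union bound. As written, the claimed $O(\log b)$ time and $O(m+n)$ work may hold, but the construction does not implement Step~\ref{alg:build:s5} of \textsc{Build}, so the lemma as stated is not established by your argument.
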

\begin{proof}
    Recall that $|V| \le n / b^{10}$ and $b = (\log n)^{100}$.
    In Step~\ref{alg:build:s1}, using approximate compaction (Lemma~\ref{lem:apx_compaction}) we compact the vertices in $V$ to an index array then use their indices to assign a block of indexed $b^4$ processors, which takes $O(n)$ work and $O(\log^* n)$ time w.p. $1 - n^{-9}$.
    In Step~\ref{alg:build:s2}, each vertex $u$ computes $h(u)$ by choosing an index from $[b^9]$ uniformly at random and independently then stores $h(u)$ in the private memory of $u$. Next, each edge $(u, v)$ reads the value of $h(u)$ from $u$ then writes into $H(v)$. The total work is $O(m+n)$ and the running time is $O(1)$.
    In Step~\ref{alg:build:s3}, for each vertex $v$, use all the $b^9$ processors in its block to count the number of cells occupied by any vertex in $\mathcal{H}(v)$ (by a simple binary tree structure) in $O(\log(b^9)) = O(\log b)$ time, whose work is at most $|V| \cdot b^9 \cdot O(\log b) \le O(n)$. 
    In Step~\ref{alg:build:s4}, each edge decides whether itself is in $E'$ in $O(1)$ time, which takes $O(m)$ total work. 
    
    In Step~\ref{alg:build:s5}, loops can be removed in $O(m)$ work and $O(1)$ time. 
    To remove parallel edges, we use PRAM perfect hashing on (the indices of) all edges of $E'$ in $O(m)$ work and $O(\log^* n)$ time w.p. $1 - n^{-9}$ \cite{DBLP:conf/focs/GilMV91}.\footnote{Given a set $S \subseteq [m]$, the \emph{perfect hashing} problem is to find an one-to-one function $h: S \to [O(|S|)]$ such that $h$ can be represented in $O(|S|)$ space and $h(x)$ can be evaluated in $O(1)$ time for any $x \in S$. We use perfect hashing to distinguish from other hashings used in this paper which can cause collisions.} 
    The lemma follows from a union bound.
\end{proof}

\begin{lemma} \label{lem:build:high_vertex}
    Let $G = (V, E)$. 
    W.p. $1 - n^{-9}$, for any vertex $v \in V$ that is marked as high in \textsc{Build}$(V, E, b)$, $|N_G(v)| \ge b^8$;
    and for any vertex $v \in V$ that is marked as low in \textsc{Build}$(V, E, b)$, $|N_G(v)| \le b^9$.
\end{lemma}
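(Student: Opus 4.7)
The plan is to separate the lemma into its two directions. The first implication (marked high $\Rightarrow |N_G(v)| \ge b^8$) is deterministic: every $u \in N_G(v)$ hashes to a single cell $h(u)$ of $\mathcal{H}(v)$, so the set of occupied cells of $\mathcal{H}(v)$ is contained in $\{h(u) : u \in N_G(v)\}$ and has cardinality at most $|N_G(v)|$. Hence if strictly more than $b^8$ cells are occupied then $|N_G(v)| > b^8$, which gives the bound with no use of randomness (and without needing to think about parallel edges, which all write $u$ into the same cell $h(u)$).

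The second implication (marked low $\Rightarrow |N_G(v)| \le b^9$) is where all the probabilistic work lies, and I would prove its contrapositive: if $|N_G(v)| \ge b^9 + 1$ then more than $b^8$ cells of $\mathcal{H}(v)$ are occupied w.p.\ at least $1 - n^{-10}$. Fix such a $v$, let $k = |N_G(v)|$, and for each $j \in [b^9]$ let $Y_j = \mathbf{1}[\text{cell $j$ of $\mathcal{H}(v)$ is empty}]$. Under the random hash $h$ the neighbors of $v$ land in independent uniform cells, so $\E[Y_j] = (1 - 1/b^9)^k \le 1/e$ and $\mu := \E\bigl[\sum_j Y_j\bigr] \le b^9/e$. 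The target event ``at most $b^8$ occupied cells'' is exactly $\sum_j Y_j \ge b^9 - b^8$.

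The main step is a Chernoff upper tail on $\sum_j Y_j$. The $\{Y_j\}$ are negatively associated (a standard fact for the balls-in-bins model, since conditioning on one cell being empty only makes the remaining cells less likely to be empty), so the usual Chernoff tail applies. Since the threshold $b^9 - b^8 \ge (1 - 1/b) b^9$ exceeds $\mu \le b^9/e$ by a constant multiplicative factor (roughly $e$) bounded away from $1$, the KL-form of Chernoff gives
\[
\Pr\Bigl[\sum_j Y_j \ge b^9 - b^8\Bigr] \le \exp\!\bigl(-b^9 \cdot D(1 - 1/b \,\|\, 1/e)\bigr) \le \exp(-b^9/2)
\]
for $b$ above a small constant, where $D(\cdot\|\cdot)$ is Bernoulli KL divergence. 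With $b = (\log n)^{100}$ this is $\exp(-(\log n)^{900}/2)$, far below $n^{-10}$. Finally, a union bound over the $|V| \le n/b^{10} \le n$ vertices pushes the total failure probability below $n^{-9}$, as required.

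The only step I would take care to verify cleanly is the invocation of negative association and the Chernoff tail it enables; this is standard but worth citing. If one prefers to bypass it, an alternative route is McDiarmid's inequality applied to $h(u_1), \ldots, h(u_k)$ in the regime where $k$ is not too large, combined with a crude union bound over empty cells (via $\E[Y_j] \le e^{-k/b^9}$ being tiny) for the regime $k \gg b^9 \log n$; splitting into these two cases avoids any appeal to negative association at the cost of a slightly longer write-up.
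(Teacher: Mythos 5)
Your proof is correct, and its skeleton matches the paper's: the high direction is handled by the same deterministic observation (occupied cells of $\mathcal{H}(v)$ are at most $|N_G(v)|$ in number), and the low direction is the contrapositive "many neighbors $\Rightarrow$ many occupied cells w.h.p." followed by a union bound over vertices. Where you diverge is in how that tail is bounded. You count empty cells, invoke negative association of empty-bin indicators to run a Chernoff/KL bound on $\sum_j Y_j$, getting roughly $\exp(-b^9/2)$; this works, and your fallback (McDiarmid plus a case split on $k$) would also work, but both need an auxiliary concentration fact to be cited or re-proved. The paper instead bounds the bad event directly by an elementary counting argument: "at most $b^8$ occupied cells" forces all $\ge b^9$ independent hash values $h(u)$, $u \in N_G(v)$, to land in some fixed set of $b^8$ cells, so the probability is at most $\binom{b^9}{b^8}\left(b^8/b^9\right)^{b^9} \le b^{9b^8}\cdot b^{-b^9} \le b^{-b^8} \le n^{-10}$, with no appeal to negative association or any concentration inequality, and with an even stronger quantitative bound than you need. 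So your route buys nothing extra here beyond familiarity of the tool, while the paper's is shorter and self-contained; either is acceptable, but if you keep your version you should explicitly cite the negative-association fact for balls-in-bins (e.g., Dubhashi--Ranjan) and note that within a fixed $v$ the relevant hash values are independent and uniform, so the union bound over vertices needs no independence across tables.
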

\begin{proof}
    For any high vertex $v$, it must have at least $b^8$ neighbors by Step~\ref{alg:build:s3}, giving the first part of the lemma. 
    
    A vertex $v$ is low if at most $b^8$ cells in $\mathcal{H}(v)$ are occupied (Step~\ref{alg:build:s3}). 
    If $|N_{G}(v)| \ge b^9$, then $v$ is low w.p. at most
    \begin{equation*}
        \binom{b^9}{b^8} \cdot \left( \frac{b^8}{b^9} \right)^{b^9} \le b^{9 b^8} \cdot b^{-b^9} \le b^{-b^8} \le n^{-10} ,
    \end{equation*}
    where we used $b = (\log n)^{100}$. So, w.p. $1 - n^{-9}$ all such vertices are marked as high by a union bound and we shall condition on this happening. 
    Since any vertex $v$ with $|N_{G}(v)| \ge b^9$ is marked as high, we have that any low vertex $v$ must have $|N_{G}(v)| \le b^9$. This gives the second part of the lemma.
\end{proof}
We are ready to give two key properties of the subgraph $H$ returned by \textsc{Build}$(V, E, b)$.

\begin{definition} \label{def:component_symbol}
    Given any graph $G = (V, E)$ and a vertex $v \in V$, let $\mathcal{C}_{G}(v)$ be the component in $G$ that contains $v$, i.e., $\mathcal{C}_{G}(v)$ is a vertex-induced subgraph of $G$ on all vertices with paths to $v$ in $G$.
\end{definition}

\begin{lemma} \label{lem:H_component_size}
    Given any graph $G = (V, E)$, let $H = \textsc{Build}(V, E, b)$. W.p. $1 - n^{-8}$, for any vertex $v \in V$, either $V(\mathcal{C}_{G}(v)) = V(\mathcal{C}_{H}(v))$ or $|V(\mathcal{C}_{H}(v))| \ge b^6$.
\end{lemma}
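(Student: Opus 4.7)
The plan is to condition on the event of Lemma~\ref{lem:build:high_vertex} (which costs probability $n^{-9}$), so that every high vertex has at least $b^8$ distinct $G$-neighbors and every low vertex has at most $b^9$. I then prove the contrapositive: if $V(\mathcal{C}_G(v)) \neq V(\mathcal{C}_H(v))$, some vertex $u \in V(\mathcal{C}_H(v))$ has a $G$-neighbor $w$ outside $V(\mathcal{C}_H(v))$. Every parallel edge $(u,w)$ of $G$ must be dropped in Step~\ref{alg:build:s4}; since edges incident to a low vertex are always preserved, both $u$ and $w$ must be high. Because $\mathcal{C}_H(u) = \mathcal{C}_H(v)$, it suffices to show $|V(\mathcal{C}_H(u))| \geq b^6$ for this $u$.

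Next I would split the (at least $b^8$) distinct $G$-neighbors of $u$ into low and high. All low neighbors are connected to $u$ in $H$ by Step~\ref{alg:build:s4}, so they all lie in $V(\mathcal{C}_H(u))$; if there are at least $b^6$ of them, we are done. Otherwise $u$ has at least $b^8/2$ distinct high neighbors, and for each such neighbor $x$ the probability that at least one parallel $G$-edge $(u,x)$ survives the independent $1/b$ sampling in Step~\ref{alg:build:s4} is at least $1-(1-1/b)^{m_{ux}} \geq 1/b$. These events over distinct $x$ use disjoint subsets of $E$, so they are mutually independent, with expected total at least $b^7/2$. A standard multiplicative Chernoff bound then gives at least $b^7/4 \geq b^6$ of these high neighbors directly adjacent to $u$ in $H$, except with probability $\exp(-\Omega(b^7))$.

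Finally I union-bound this failure over all vertices $u \in V$: since $b = (\log n)^{100}$, the bound $n \cdot \exp(-\Omega(b^7)) = n^{-\omega(1)}$ is absorbed together with the $n^{-9}$ from Lemma~\ref{lem:build:high_vertex} into the stated $n^{-8}$ guarantee. The main obstacle is simply the case split between ``mostly low neighbors'' (handled deterministically by preserved edges) and ``mostly high neighbors'' (handled by independence of the Bernoulli sampling across distinct neighbors and a Chernoff bound); a minor subtlety is that parallel edges require taking the union over multiplicities, but the lower bound $1/b$ per distinct neighbor suffices and independence across distinct neighbors is immediate from the per-edge independent sampling.
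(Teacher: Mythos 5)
Your proof is correct and follows essentially the same approach as the paper's: identify a high vertex in $\mathcal{C}_H(v)$ whenever the two components differ (you via the dropped boundary edge forcing both ends to be high, the paper via a simultaneous BFS argument showing the searches agree until a high vertex is explored), then apply a Chernoff bound to its $\geq b^8$ distinct $G$-neighbors, each retained adjacent to $u$ in $H$ with probability at least $1/b$. Your case split between low neighbors (kept deterministically) and high neighbors (kept with probability $\geq 1/b$) is a harmless refinement the paper avoids by observing that every neighbor, low or high, is retained with probability at least $1/b$, then invoking Chernoff once.
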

\begin{proof}
    For any vertex $v \in V$, consider $\mathcal{C}_{H}(v)$ (and thus $E'$ and $H$) before Step~\ref{alg:build:s5}. 
    
    For the purpose of analysis, perform breath-first searches on $\mathcal{C}_{H}(v)$ and $\mathcal{C}_{G}(v)$ simultaneously, starting at $v$. 
    At any time of the breath-first searches, let $u$ be the vertex that is currently exploring. If $u$ is low, then all of the adjacent edges of $u$ are added into $E'$ by Step~\ref{alg:build:s4}. 
    So, by an induction (on vertices in any breath-first search order), these two breath-first searches perform the same computations (the order of examined edges and explored vertices are identical) before exploring a high vertex. 
    Therefore, if there is no high vertex in $\mathcal{C}_{H}(v)$, all vertices in $\mathcal{C}_{G}(v)$ and their adjacent edges in $G$ are in $H$, giving $\mathcal{C}_{H}(v) = \mathcal{C}_{G}(v)$. 
    
    On the other hand, if there is a high vertex $u$ in $\mathcal{C}_{H}(v)$, then in Step~\ref{alg:build:s5}, each neighbor $w$ of $u$ in $G$ is added into $H$ w.p. at least $b^{-1}$ (if $w$ is low then it is added w.p. $1$). 
    By Lemma~\ref{lem:build:high_vertex}, w.p. $1 - n^{-9}$ there are at least $b^8$ such $w$. Conditioned on this and apply a Chernoff bound, w.p. $1 - n^{-10}$, at least $b^6$ neighbors $w'$ of $u$ are added into $\mathcal{C}_{H}(v)$ (by adding edge $(u, w')$ into $E'$). 
    By a union bound, w.p. $1 - n^{-8}$, we have $|V(\mathcal{C}_{H}(v))| \ge b^6$.
    
    Observe that Step~\ref{alg:build:s5} does not change $V(\mathcal{C}_{H}(v))$ (but only changes $E(\mathcal{C}_{H}(v))$). As a result, w.p. $1 - n^{-8}$, it must be either $V(\mathcal{C}_{G}(v)) = V(\mathcal{C}_{H}(v))$ (from the second paragraph of this proof) or $|V(\mathcal{C}_{H}(v))| \ge b^6$.
\end{proof}

\begin{lemma} \label{lem:H_is_sparse}
    Given any graph $G = (V, E)$, let $H = \textsc{Build}(V, E, b)$, then $|E(H)| \le (m+n)/(\log n)^5$ w.p. $1 - n^{-8}$.
\end{lemma}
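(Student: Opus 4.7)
The plan is to split $E(H)$ into two disjoint parts and bound each separately: the set $E_L$ of edges in $E'$ that are adjacent to at least one low vertex, and the set $E_H$ of edges in $E'$ both of whose ends are high. Since Step~\ref{alg:build:s5} only removes edges, it suffices to bound $|E_L| + |E_H|$ before parallel-edge removal (the low-adjacent edges will, however, use a bound that already accounts for distinctness, so this is consistent).

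First I would handle $E_L$ deterministically, conditioning on the good event of Lemma~\ref{lem:build:high_vertex} (which fails with probability at most $n^{-9}$). By that lemma every low vertex $v$ satisfies $|N_G(v)| \le b^9$, so $v$ contributes at most $b^9$ distinct edges to $E_L$. Summing over $|V| \le n/b^{10}$ vertices gives
\[
|E_L| \;\le\; |V| \cdot b^9 \;\le\; \frac{n}{b^{10}} \cdot b^9 \;=\; \frac{n}{b} \;\le\; \frac{n}{(\log n)^{100}} .
\]

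Next I would bound $E_H$ via a Chernoff bound on the independent $1/b$-sampling of the at most $m$ high–high edges performed in Step~\ref{alg:build:s4}. Let $X = |E_H|$; then $\E[X] \le m/b$. A standard multiplicative Chernoff bound with deviation $t = \max(2m/b,\, c\log n)$ for a sufficiently large constant $c$ gives $\Pr[X \ge t] \le n^{-9}$. Hence with probability $1 - n^{-9}$,
\[
|E_H| \;\le\; \frac{2m}{b} + c \log n .
\]
Combining the two bounds and plugging in $b = (\log n)^{100}$,
\[
|E(H)| \;\le\; |E_L| + |E_H| \;\le\; \frac{n}{b} + \frac{2m}{b} + c\log n \;\le\; \frac{m+n}{(\log n)^5}
\]
(absorbing the additive $O(\log n)$ term using the standing assumption $m \le n^c$, and hence $m+n$ is at least $\poly(\log n)$ large enough, or otherwise handling the trivial case directly). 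A union bound over the two $n^{-9}$ failure events yields the claimed probability $1 - n^{-8}$.

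The two steps are essentially routine once the split is made; the only mild subtlety is ensuring the Chernoff bound is strong enough to hold with $1-n^{-8}$ probability uniformly in $m$, including the regime $m/b = o(\log n)$. This is exactly why the bound is stated with a max of $2m/b$ and $c\log n$: for small $m$ the deviation is driven by the additive $\log n$ term, which is still absorbed comfortably into $(m+n)/(\log n)^5$ under the standing assumptions on $n$ and $m$.
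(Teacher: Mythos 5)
Your proposal is correct and follows essentially the same route as the paper: split $E(H)$ into edges adjacent to low vertices (bounded deterministically by $|V|\cdot b^9 \le n/b$ via Lemma~\ref{lem:build:high_vertex}) and high--high edges (bounded by a Chernoff bound on the $1/b$-sampling), then union bound. The only cosmetic difference is how the small-expectation regime of the Chernoff bound is handled — you take a deviation of $\max(2m/b, c\log n)$, while the paper simply assumes at least $n/(\log n)^6$ high--high edges (otherwise done) and pays an extra $\log n$ factor in the bound; both are fine.
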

\begin{proof}
    Each edge in $E'$ is either adjacent to a low vertex or between two high vertices. 
    
    By Lemma~\ref{lem:build:high_vertex}, w.p. $1 - n^{-9}$, any low vertex $v$ has at most $b^9$ neighbors in $G$, which means $v$ has at most $b^9$ neighbors in $H$ as the algorithm only deletes edges from $E$ to obtain $E'$ (Step~\ref{alg:build:s4}). By Step~\ref{alg:build:s5}, the number of edges adjacent on $v$ is exactly the number of neighbors in $H$. Combining, the number of edges in $H$ adjacent to low vertices is at most $|V| \cdot b^9 \le n/b$ w.p. $1 - n^{-9}$. 
    
    Assume the number of edges between high vertices is at least $n / (\log n)^6$ in $G$ (otherwise the proof is done). 
    After Step~\ref{alg:build:s4}, by a Chernoff bound, the number of edges between high vertices in $E'$ is at most $(|E'| \log n) / b \le (m \log n) / b$ w.p. $1 - n^{-9}$. Step~\ref{alg:build:s5} only removes edges from $E'$.
    
    By a union bound, w.p. $1 - n^{-8}$, the number of edges in $E'$ is at most $n/b + (m \log n) / b$, which is at most $(m+n)/(\log n)^5$ by $b = (\log n)^{100}$.
\end{proof}

\subsection{Running the Densify Algorithm on the Skeleton Graph} \label{subsec:truncate}

In this section, we introduce the algorithm \textsc{Densify}, which calls the subroutine \textsc{Expand-Maxlink} for certain number of rounds. 

\subsubsection{Algorithmic Framework of \textsc{Densify}} \label{subsubsec:truncate_AF}

The algorithm \textsc{Expand-Maxlink} (presented later) is from \cite{liu2020connected} with minor changes, which uses the following two subroutines.

\begin{framed}
\noindent \textsc{Shortcut}$(V)$: for each vertex $v \in V$: $v.p = v.p.p$.
\end{framed}

The \textsc{Shortcut} is a standard operation in PRAM algorithms for connected components (e.g., see \cite{liu_tarjan, liu2022simple}). 

\begin{framed}
\noindent \textsc{Maxlink}$(V)$: repeat \{for each vertex $v \in V$: let $u \coloneqq \argmax_{w \in N^*(v).p} \ell(w)$, if $\ell(u) > \ell(v)$ then $v.p = u$\} for $2$ iterations.

\end{framed}

For any vertex $v$, let $N^*(v) \coloneqq N(v) \cup \{v\}$, i.e., $N^*(v)$ is the set of neighbors of $v$ plus itself. 
For any vertex set $S$, define $N^*(S) \coloneqq \bigcup_{w \in S} N^*(w)$,
and define $S.p \coloneqq \{w.p \mid w \in S \}$.

The \emph{level} $\ell(v)$ of a vertex $v$ is a non-negative integer that monotonically increases during the algorithm, starting from $1$. 
Given $\beta \ge 0$, we say a vertex $v$ has \emph{budget} $\beta(v) \coloneqq \beta$ if the maximum-size block owned by $v$ has size $\beta$ (a vertex can own more than $1$ blocks, but only the block that is most recently assigned to $v$ has the maximum size; the non-maximum-size blocks contain the added edges and thus cannot be ignored, but when the algorithm does hashing it only uses the maximum-size block). 
A root with level $\ell$ is assigned to a block of size 
\begin{equation} \label{eq:setup_beta}
    \beta_{\ell} \coloneqq \beta_1^{1.01^{\ell - 1}} \text{~with~} \beta_1 = (\log n)^{80}
\end{equation}
in Step~\ref{alg:expandm:s8} of \textsc{Expand-Maxlink}. 
Each block of size $\beta$ is partitioned into $\sqrt{\beta}$ (indexed) tables, each with size $\sqrt{\beta}$. 
We assume each vertex in the current graph owns a block of size $\beta_1$ before any application of \textsc{Expand-Maxlink}, which can be guaranteed by approximate compaction in $O(\log^* n)$ time w.h.p. 
Therefore, the hash table of a vertex $v$ (denoted by $\mathcal{H}(v)$ and, without loss of generality, let it be the first indexed table in the block of $v$) has size $(\log n)^{40}$ initially.

The hashing in Step~\ref{alg:expandm:s3} is computed by letting the processor corresponding to edge $(v, w)$ choose a random index from $[0, |\mathcal{H}(v)| - 1]$. 
(Note that the edge $(v, w) \in E$ in the current graph can be an (altered) edge from the input graph, or be an edge added by \textsc{Expand-Maxlink} in the form of $(v, w)$ where $w$ is an item in $\mathcal{H}(v)$, so \textsc{Alter}$(E)$ also applies to those added edges.) 
Since each such edge owns a processor and $|\mathcal{H}(v)|$ can be computed in $O(1)$ time by the budget of $v$, the hashing can be done in $O(1)$ time. 
Step~\ref{alg:expandm:s5} follows the same manner. 

\begin{framed}
\noindent \textsc{Expand-Maxlink}$(H)$:
\begin{enumerate}
    \item Let $V = V(H)$ and $E = E(H)$.  \label{alg:expandm:s0}
    \item \textsc{Maxlink}$(V)$; \textsc{Alter}$(E)$. \label{alg:expandm:s1}
    \item For each root $v \in V$: increase $\ell(v)$ by $1$ w.p. $\beta(v)^{-0.06}$. \label{alg:expandm:s2}
    \item For each root $v \in V$: for each root $w \in N^*(v)$: if $\beta(w) = \beta(v)$ then hash $w$ into $\mathcal{H}(v)$. \label{alg:expandm:s3}
    \item For each root $v \in V$: if there is a collision in $\mathcal{H}(v)$ then mark $v$ as dormant.
        For each vertex $v \in V$: if there is a dormant vertex in $\mathcal{H}(v)$ then mark $v$ as dormant. \label{alg:expandm:s4}
    \item For each root $v \in V$: for each $w \in \mathcal{H}(v)$: for each $u \in \mathcal{H}(w)$: hash $u$ into $\mathcal{H}(v)$. For each root $v \in V$: if there is a collision in $\mathcal{H}(v)$ then mark $v$ as dormant. \label{alg:expandm:s5}
    \item \textsc{Maxlink}$(V)$; \textsc{Shortcut}$(V)$; \textsc{Alter}$(E(V))$. \label{alg:expandm:s6}
    \item For each root $v \in V$: if $v$ is dormant and did not increase its level in Step~\ref{alg:expandm:s2} then increase $\ell(v)$ by $1$. \label{alg:expandm:s7}
    \item For each root $v \in V$: assign a block of size $\beta_{\ell(v)}$ to $v$. \label{alg:expandm:s8}
    \item Return $H$ as the current graph with all added edges (in the form of items in all hash tables). \label{alg:expandm:s9}
\end{enumerate}

\end{framed}

\begin{framed}
\noindent \textsc{Densify}$(H, b)$:
\begin{enumerate}
    \item For \emph{round} $i$ from $0$ to $20 \log b$ \Cliff{Parameter}: $H = \textsc{Expand-Maxlink}(H)$. \label{alg:truncate:s1}
    \item Let $V$ be the set of roots in $H$. \label{alg:truncate:s2}
    \item Repeat for $2\log\log\log n$ times: \textsc{Shortcut}$(V(H))$. \textsc{Alter}$(E(H))$. \label{alg:truncate:s3}
    \item Let $E_{\text{close}}$ be the set of edges in the current graph $H$, including all (altered) edges from the input graph and all (altered) edges in all hash tables of all vertices. \label{alg:truncate:s4}
    \item Call the algorithm in Theorem~\ref{thm:ltz_main} on graph $(V(E_{\text{close}}), E_{\text{close}})$ to update the labeled digraph. \label{alg:truncate:s5}
    \item \textsc{Alter}$(E_{\text{close}})$. \label{alg:truncate:s6}
    \item Return $(V, E_{\text{close}})$.  \label{alg:truncate:s7}
\end{enumerate}

\end{framed}
The algorithm \textsc{Densify}$(H, b)$ calls $\textsc{Expand-Maxlink}(H)$ for $O(\log b)$ times and gradually changes the skeleton graph $H$ by adding edges (Step~\ref{alg:expandm:s9} in $\textsc{Expand-Maxlink}(H)$), where $b = (\log n)^{100}$ is fixed in this section. 
We will show that the total work of \textsc{Densify}$(H, b)$ is $O((m + n) / \poly(\log n))$ w.h.p., and the final skeleton graph $H$ (at the end of round $100 \log b$) is still sparse ($O((m + n) / \poly(\log n))$ edges) w.h.p.\footnote{The algorithm in Theorem~\ref{thm:ltz_main} proceeds in rounds, where each round updates the labeled digraph corresponding to $G$ and takes $O(1)$ time.}

\subsubsection{Properties of \textsc{Densify}} \label{subsubsec:truncate_properties}

In this section, we prove the key properties of the algorithm \textsc{Densify} that calls \textsc{Expand-Maxlink} repeatedly. 
Some properties of \textsc{Expand-Maxlink} are already presented in \cite{liu2020connected}.

\begin{lemma} \label{lem:expandm_processors}
    W.p. $1 - 1/(\log n)^9$, Step~\ref{alg:truncate:s1} of \textsc{Densify}$(H, b)$ uses $O((m + n) / (\log n)^4)$ processors.
\end{lemma}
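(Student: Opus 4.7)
The plan is to bound the number of processors used by Step~\ref{alg:truncate:s1} at any moment by two quantities: the current number of edges stored in the skeleton graph (altered input edges plus hashed items living in blocks), and the total block memory $T := \sum_{v \text{ root}} \beta(v)$ allocated to the roots. At the start of Step~\ref{alg:truncate:s1}, Lemma~\ref{lem:H_is_sparse} gives $|E(H)| \le (m+n)/(\log n)^5$ with probability $1 - n^{-8}$, and since $|V| \le n/b^{10}$ and each vertex is initialized with a block of size $\beta_1 = (\log n)^{80}$, the initial value of $T$ is at most $(n/b^{10})\cdot (\log n)^{80} = n/(\log n)^{920}$ using $b = (\log n)^{100}$. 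Both quantities thus start far below the target $(m+n)/(\log n)^4$.

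First I would analyze how $T$ grows across one call to \textsc{Expand-Maxlink}. A root $v$ with current budget $\beta(v)$ has its level incremented in Step~\ref{alg:expandm:s2} with probability $\beta(v)^{-0.06}$, producing an expected additive block growth of
\begin{equation*}
\beta(v)^{-0.06}\bigl(\beta(v)^{1.01}-\beta(v)\bigr) \;\le\; \beta(v)^{0.95} \;\le\; \beta_1^{-0.05}\,\beta(v) \;=\; (\log n)^{-4}\,\beta(v).
\end{equation*}
Summed over all roots, the expected per-round increase of $T$ from Step~\ref{alg:expandm:s2} is a $(\log n)^{-4}$ multiplicative factor. For the dormant level-up of Step~\ref{alg:expandm:s7}, I would reuse the collision-probability analysis from \cite{liu2020connected}: the probability a root becomes dormant at budget $\beta(v)$ is $O(\beta(v)^{-\Omega(1)})$, so this step contributes another $\beta_1^{-\Omega(1)}$-fraction of $T$ in expectation per round. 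Together, the one-round expected growth factor of $T$ is at most $1 + O((\log n)^{-4})$.

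Iterating over the $20\log b = O(\log \log n)$ rounds, the expected final total block memory satisfies
\begin{equation*}
\E[T_{\text{final}}] \;\le\; T_0\bigl(1+O((\log n)^{-4})\bigr)^{O(\log\log n)} \;=\; O(T_0) \;=\; O\!\bigl(n/(\log n)^{920}\bigr).
\end{equation*}
Since every item added to $H$ in Steps~\ref{alg:expandm:s3} and \ref{alg:expandm:s5} lives in some block, the number of hashed edges added over all rounds is bounded by $T_{\text{final}}$. Hence the expected number of processors needed is $O\!\bigl((m+n)/(\log n)^5 + n/(\log n)^{920}\bigr) = O\!\bigl((m+n)/(\log n)^5\bigr)$. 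Because this expectation is smaller than the target $(m+n)/(\log n)^4$ by a large polylogarithmic factor, a direct application of Markov's inequality produces the $1 - 1/(\log n)^9$ tail bound with plenty of slack. The main obstacle I anticipate is transferring the $\beta(v)^{-\Omega(1)}$ dormant-probability bound from \cite{liu2020connected} to our truncated $O(\log b)$-round execution: the hashing mechanics are identical, but the invariants controlling $|N^*(v).p|$ relative to $\sqrt{\beta(v)}$ must be re-verified at the start of every round so that collisions remain rare.
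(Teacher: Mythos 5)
Your proposal takes a genuinely different route from the paper. The paper simply invokes Lemma~5.3 of \cite{liu2020connected} as a black box: that lemma already states that the first $O(\log n)$ rounds of \textsc{Expand-Maxlink} use $O(|E(H)| + |V(H)|)$ processors with probability $1 - 1/(\log n)^{10}$, so the paper just substitutes $|V(H)| \le n/b^{10}$ and (by Lemma~\ref{lem:H_is_sparse}) $|E(H)| \le (m+n)/(\log n)^5$ and is done. You instead attempt to re-derive the processor bound from first principles by tracking the total block memory $T$, bounding the expected per-round multiplicative growth by $1 + O((\log n)^{-4})$ (your Step~\ref{alg:expandm:s2} computation $\beta(v)^{-0.06}\beta(v)^{1.01} = \beta(v)^{0.95} \le \beta_1^{-0.05}\beta(v)$ is correct, and $\beta_1^{-0.05} = (\log n)^{-4}$ with $\beta_1 = (\log n)^{80}$), iterating over $O(\log\log n)$ rounds, and then applying Markov. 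Your route is more self-contained, but it is considerably more work and exposes you to exactly the subtleties the black-box citation hides.

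The concrete gap is the one you yourself flag at the end: the dormant-level-up probability $O(\beta(v)^{-\Omega(1)})$ from Step~\ref{alg:expandm:s7} is claimed but not derived, and to prove it you must re-establish, round by round, that $|\mathcal{H}(v)|$ stays well below the hash-table size $\sqrt{\beta(v)}$ so that the birthday-style collision probability is polynomially small in $\beta(v)$. That invariant is precisely the hard content of Lemma~5.3 of \cite{liu2020connected}; without it your per-round growth bound from dormant level-ups is unsupported. A second, subtler issue your write-up glosses over is conditioning: your iterated-expectation argument needs $\E[T_{i+1} \mid \text{history through round } i] \le T_i(1 + O((\log n)^{-4}))$, but the dormant probability depends on the full configuration and hash randomness, not just the independent level-up coins of Step~\ref{alg:expandm:s2}, so this conditional bound must actually be argued rather than asserted. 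If you are willing to cite the collision-probability invariants from \cite{liu2020connected}, the cleanest move is to cite their whole Lemma~5.3 directly, as the paper does, rather than partially unpack it and then re-close it with Markov.
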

\begin{proof}
    In \cite{liu2020connected}, it is proved that w.p. $1 - 1/(\log n)^{10}$, the first $O(\log n)$ rounds of $\textsc{Expand-Maxlink}(G')$ uses $O(|E(G')| + |V(G')|)$ processors\footnote{\cite{liu2020connected} assumes $|E(G')| \ge 2 |V(G')|$ for simplicity in presenting bounds.} (see their Lemma 5.3). 
    Since $b = (\log n)^{100}$ in this section, \textsc{Densify}$(H, b)$ calls $O(\log\log n)$ rounds of $H = \textsc{Expand-Maxlink}(H)$ and the number of processors used is $O(|E(H)| + |V(H)|)$. 
    By the assumption at the beginning of \S{\ref{sec:stage2}}, $|V(H)| \le n/b^{10}$. 
    By Lemma~\ref{lem:H_is_sparse}, $|E(H)| \le (m+n)/(\log n)^5$ w.p. $1 - n^{-8}$. 
    The lemma follows by $b = (\log n)^{100}$. 
\end{proof}

Later in \S{\ref{sec:assumption}}, we will gradually increase the value of $b$, but it will be at most $n$, so Lemma~\ref{lem:expandm_processors} can still be applied.

\begin{lemma} \label{lem:expandm_max_level}
    W.p. $1 - 1/(\log n)^9$, during the execution of \textsc{Densify}$(H, b)$, the level $\ell(v)$ of any vertex $v \in V(H)$ is at most $100 \log\log n$.
\end{lemma}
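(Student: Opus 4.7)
The plan is to show that for every vertex $v$, the probability of $\ell(v) > 100\log\log n$ at any point during $\textsc{Densify}(H,b)$ is at most $1/(n(\log n)^{10})$, and then union-bound over the at most $n$ vertices in $V(H)$. Recall that $\textsc{Densify}$ invokes $\textsc{Expand-Maxlink}$ for $T = 20\log b = 2000\log\log n$ rounds (since $b=(\log n)^{100}$), and in each round the level of $v$ either (i) increments via Step~\ref{alg:expandm:s2} with probability $\beta(v)^{-0.06}$, or (ii) increments via Step~\ref{alg:expandm:s7} if $v$ was marked dormant in Step~\ref{alg:expandm:s4} or Step~\ref{alg:expandm:s5} and Step~\ref{alg:expandm:s2} did not already fire. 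In particular $\ell(v)$ rises by at most $1$ per round, so to reach level $100\log\log n$ we would need at least $\sim 100\log\log n$ incremental rounds.

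The key quantitative observation is that $\beta_\ell = \beta_1^{1.01^{\ell-1}}$ with $\beta_1 = (\log n)^{80}$ grows doubly exponentially. A direct computation shows that once $\ell \ge 70\log\log n$, we already have $\beta_\ell \gg n^{10}$, so $\beta_\ell^{-0.06} \le n^{-0.5}$ and the dormancy probability (a birthday-style collision in a hash table of size $\sqrt{\beta_\ell} \gg n^5$ receiving at most $|V|\le n$ items, hence bounded by $|V|^2/\sqrt{\beta_\ell}$) is at most $n^{-3}$. Consequently, once $v$ has crossed this ``safe'' threshold, the probability of any further level increment over the remaining $\le T$ rounds is at most $T \cdot 2n^{-0.5} = o(n^{-1/3})$, which is more than enough slack.

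It remains to control what happens below the safe threshold. I would split increments into Step~\ref{alg:expandm:s2} events and dormancy events and bound each separately. The expected number of Step~\ref{alg:expandm:s2} events for $v$ over $T$ rounds is at most $T\cdot\beta_1^{-0.06} = O(\log\log n \cdot (\log n)^{-4.8})$; a multiplicative Chernoff bound then yields that more than $10\log\log n$ such events occur with probability at most $(\log n)^{-\Omega((\log\log n)^2)}$, which is much less than $1/(n(\log n)^{10})$ for large $n$. For dormancy events, the crucial self-correcting property is that each dormancy event advances $v$'s level (and hence its budget $\beta(v)$), so once $v$ has been dormant $70\log\log n$ times it must have entered the safe regime. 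Combined with the Step~\ref{alg:expandm:s2} bound, this gives $\ell(v) \le 70\log\log n + 10\log\log n + O(1) < 100\log\log n$ with the desired probability; the final union bound over vertices gives failure probability $\le 1/(\log n)^{9}$.

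\textbf{Main obstacle.} The hardest part is rigorously bounding the dormancy probability at low levels, where the hash table has only size $\sqrt{\beta_1} = (\log n)^{40}$ and the same-budget neighborhood of $v$ could, a priori, be much larger than this, making a per-round dormancy bound trivially $1$. The self-correcting argument above sidesteps the need for a fine-grained per-round bound by capping the total dormancy count, but justifying that dormancy really does advance $v$'s budget fast enough to outpace the neighborhood structure requires tracking the global evolution of the algorithm, just as in the analysis of Lemma~5.3 of \cite{liu2020connected} that is already invoked in the proof of Lemma~\ref{lem:expandm_processors}; in fact, the processor-count bound $O(|E(H)|+|V(H)|) = O(\poly(n))$ used there already implies $\sum_v \beta(v) = \poly(n)$, and combined with the doubly-exponential growth of $\beta_\ell$ this directly translates to $\max_v \ell(v) \le 100\log\log n$ for suitably chosen constants.
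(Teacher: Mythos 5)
Your final ``Main obstacle'' paragraph actually lands on the paper's argument, but the bulk of your proposal pursues a different and considerably more laborious route. The paper's proof is a short deterministic contradiction conditioned on Lemma~\ref{lem:expandm_processors}: assume the processor bound $O((m+n)/(\log n)^4)$ holds (probability $\ge 1 - 1/(\log n)^9$), observe that $\ell(v) \le \ell(v.p)$ always holds (since only roots increment levels), so if any $v$ had $\ell(v) > 100\log\log n$ then the root $u$ of its tree would also have $\ell(u) > 100\log\log n$ and, in Step~\ref{alg:expandm:s8}, would be assigned a block of size $\beta_{\ell(u)} \ge \beta_1^{1.01^{100\log\log n}} \ge ((\log n)^{80})^{\log n} \ge n^{80} \ge m+n$ processors --- contradicting the bound. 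No per-round Chernoff analysis, no dormancy accounting, no safe-threshold split.

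By contrast, the bulk of your proposal --- splitting increments into Step~\ref{alg:expandm:s2} events and dormancy events, bounding the former by Chernoff, and arguing the latter are ``self-correcting'' --- is not wrong in spirit, but you yourself flag the real weakness: at low levels the hash table has size only $\sqrt{\beta_1} = (\log n)^{40}$, and the set $N^*(v)$ (and especially the two-hop set in Step~\ref{alg:expandm:s5}) can a priori be large enough that collisions are essentially certain, so there is no useful per-round dormancy bound. The ``self-correcting'' cap on total dormancies is the right intuition, but turning it into a proof requires controlling how fast $v$'s budget outgrows its same-budget neighborhood across the global execution, which is the content of the potential/processor argument you end up citing anyway. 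Once you are invoking the processor bound $\sum_v \beta(v) = \poly(n)$, the doubly-exponential growth of $\beta_\ell$ immediately caps $\max_v \ell(v)$, and the Chernoff and dormancy-counting machinery becomes redundant. In short: cut everything before your last paragraph, make explicit that you pass to the root $u$ of $v$'s tree (since Step~\ref{alg:expandm:s8} only assigns blocks to roots and $\ell(v) \le \ell(u)$), and you have the paper's proof.
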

\begin{proof}
    Assume \textsc{Densify}$(H, b)$ uses $O((m + n) / (\log n)^4)$ processors. 
    By examining \textsc{Maxlink}$(V)$ -- the only place that updates parents in $\textsc{Expand-Maxlink}(H)$ in \textsc{Densify}$(H, b)$ -- we have that $\ell(v) \le \ell(v.p)$ for any vertex $v$, which holds at any time since the level increase only happens on roots (Step~\ref{alg:expandm:s2} and Step~\ref{alg:expandm:s7}) whose parents are themselves.
    Assume for contradiction that there exists a vertex $v \in V(H)$ such that $\ell(v) > 100 \log\log n$. 
    The root $u$ of the tree (in the labeled digraph) containing $v$ must have $\ell(u) > 100 \log\log n$. 
    In Step~\ref{alg:expandm:s8} of $\textsc{Expand-Maxlink}(H)$, root $u$ is assigned a block of size $\beta_{\ell(u)}$, which, by Equation~(\ref{eq:setup_beta}), implies that the number of processors in the block owned by $u$ is
    \begin{equation*}
        \beta(u) = \beta_{\ell(u)} \ge \beta_1^{1.01^{100 \log\log n}} \ge ((\log n)^{80})^{\log n} \ge n^{80} \ge m+n ,\footnote{In the last inequality, the degree on $n$ can be made arbitrarily large such that it is greater than $m + n$ for $m \le n^c$.} 
    \end{equation*} 
    contradicting with the assumption we made at the beginning of this proof. 
    The lemma then follows from Lemma~\ref{lem:expandm_processors}.
\end{proof}

\begin{lemma} \label{lem:expandm_constant_time}
    W.p. $1 - 1/(\log n)^6$, any round of \textsc{Densify}$(H, b)$ can be implemented on an ARBITRARY CRCW PRAM in $O(1)$ time.
\end{lemma}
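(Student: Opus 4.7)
The plan is to condition on the high-probability events from Lemma~\ref{lem:expandm_processors} and Lemma~\ref{lem:expandm_max_level}, which together guarantee with probability $1-2/(\log n)^9$ that only $O((m+n)/(\log n)^4)$ processors are ever accessed and that every level $\ell(v)$ stays bounded by $100\log\log n$, so every block size $\beta_{\ell(v)}$ is bounded by $\beta_{100\log\log n}$. Under these events I will walk through the nine numbered steps of $\textsc{Expand-Maxlink}(H)$ and show that each runs in constant time on an ARBITRARY CRCW PRAM; a final union bound absorbs the conditioning and leaves ample room for the target failure probability $1/(\log n)^6$.

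The core structural invariant to maintain, exactly as in the analysis of \cite{liu2020connected}, is that at the start of every round each root $v$ owns a block of size $\beta(v)$ satisfying $|N^*(v)|^2\le \beta(v)$, with $|\mathcal{H}(v)|=\sqrt{\beta(v)}$. Under this invariant the \textsc{Maxlink} calls in Steps~\ref{alg:expandm:s1} and~\ref{alg:expandm:s6} reduce to an argmax over at most $\sqrt{\beta(v)}$ integer levels per root: dedicating the $|N^*(v)|^2\le\beta(v)$ processors in $v$'s block to an all-pairs tournament and exploiting the arbitrary-CRCW write rule finds the argmax in $O(1)$ time, while the accompanying \textsc{Alter}$(E)$ and \textsc{Shortcut}$(V)$ assign one processor per edge and per vertex respectively. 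Steps~\ref{alg:expandm:s2}--\ref{alg:expandm:s5} consist of independent coin flips, a single hashing of at most $\sqrt{\beta(v)}$ items into $\mathcal{H}(v)$, and a nested hashing of at most $(\sqrt{\beta(v)})^2=\beta(v)$ items through the siblings' hash tables; each uses at most $\beta(v)$ processors per root, and collision detection is a concurrent write followed by a concurrent read. Steps~\ref{alg:expandm:s0}, \ref{alg:expandm:s7} and \ref{alg:expandm:s9} are trivial bookkeeping.

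The main obstacle is Step~\ref{alg:expandm:s8}, which must hand a fresh block of size $\beta_{\ell(v)}$ to every newly-promoted root in $O(1)$ time; a naive compaction of the promoted-root list would cost $\Theta(\log^* n)$. I plan to resolve this with the block-allocation device from \cite{liu2020connected}: before the first call to \textsc{Expand-Maxlink}, pre-partition the $O((m+n)/(\log n)^4)$ processors into at most $100\log\log n$ ``tiers'', one per attainable level $\ell$, each further carved into consecutive blocks of size $\beta_\ell$; a root advancing to level $\ell$ claims a free block of its tier by racing at an ARBITRARY-resolved ticket cell and reading back the winner's block index. Summing the tier sizes yields a geometric series within the processor budget of Lemma~\ref{lem:expandm_processors}. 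Finally, the invariant $|N^*(v)|\le\sqrt{\beta(v)}$ is preserved round-to-round because any root whose neighborhood would exceed $\sqrt{\beta(v)}$ triggers a collision in Step~\ref{alg:expandm:s4} or~\ref{alg:expandm:s5}, gets marked dormant, and hence promotes its level (multiplicatively enlarging its budget via $\beta_{\ell+1}=\beta_\ell^{1.01}$) in Steps~\ref{alg:expandm:s7}--\ref{alg:expandm:s8}. Putting these per-step constants together and union-bounding over the conditioned events proves the lemma.
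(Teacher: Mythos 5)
Your proposal has two genuine gaps, both at the places where the constant-time claim is actually delicate.

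First, your implementation of \textsc{Maxlink} rests on the invariant $|N^*(v)|^2 \le \beta(v)$, which is false. In the current graph a root's degree is not controlled by its budget: \textsc{Alter} moves all edges of contracted vertices onto their roots, and a level-$1$ root with block size $\beta_1=(\log n)^{80}$ can have degree far exceeding $\sqrt{\beta_1}$. Your repair argument --- that such a root collides, goes dormant, and enlarges its budget --- does not help: the collision only raises the level by one, multiplying the budget by a $1.01$ power, so the budget never catches up with a large degree, and more importantly \textsc{Maxlink} must still run correctly \emph{in the current round} on exactly these high-degree roots, where your all-pairs tournament has nowhere near $|N^*(v)|^2$ processors. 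The paper explicitly notes that the max over $N^*(v).p$ cannot be computed directly in constant time, and instead exploits that there are only $O(\log\log n)$ possible \emph{level values} (Lemma~\ref{lem:expandm_max_level}): each edge processor writes its parent into the cell of $v$'s second table indexed by that parent's level, and then an all-pairs comparison over the $[100\log\log n]\times[100\log\log n]$ index pairs (only $O(\log n)$ processors, available in $v$'s block) extracts the argmax. Bucketing by level, not a tournament over neighbors, is the missing idea.

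Second, your block-allocation device for Step~\ref{alg:expandm:s8} does not give $O(1)$ time. Racing at a single ARBITRARY-resolved ticket cell per tier admits one winner per write step, whereas up to $\Theta(n/b^{10})$ roots at the same level may all need fresh, \emph{distinct} blocks in the same round; serial ticketing cannot hand them out in constant time, and you do not explain how many roots obtain distinct block indices simultaneously. This is exactly a renaming/compaction problem, and the paper resolves it by partitioning the processor pool into $(\log n)^2$ zones indexed by (round, level) --- the per-round separation also matters, since old blocks still store added edges and cannot be recycled --- and by running a constant-time (tightened) approximate compaction with $\Theta(\log n)$ processors per root to index the roots of each level, after which distinct blocks in the corresponding zone are assigned by index. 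The remaining steps of your outline (hashing with $\beta(w)=\beta(v)$, coin flips, \textsc{Shortcut}, \textsc{Alter}) match the paper's treatment, but without fixes to these two steps the lemma is not established.
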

\begin{proof}
    By an induction on round $i$ from $0$ to $20 \log b$ \Cliff{Parameter}, we assume each edge and each vertex in the current graph $H$ has a corresponding processor, which holds at the beginning of round $0$. 
    Consider the \textsc{Expand-Maxlink}$(H)$ in round $i$ of \textsc{Densify}$(H, b)$. 
    
    Firstly, we show that the \textsc{Maxlink}$(V)$ in Step~\ref{alg:expandm:s1} and Step~\ref{alg:expandm:s6} takes $O(1)$ time. 
    It suffices to compute $u \coloneqq \argmax_{w \in N^*(v).p} \ell(w)$ for each $v \in V$ in $O(1)$ time. 
    We could not compute the maximum level over the neighbors parents set directly in constant time.
    Observe from Lemma~\ref{lem:expandm_max_level} that w.p. $1 - 1/(\log n)^9$, all levels are in $[100 \log\log n]$. 
    Using the second indexed table in the block of $v$ (whose size is at least $\sqrt{\beta_1} = (\log n)^{40}$), the processor corresponding to edge $(w', v)$ reads $w = w'.p$ from $w'$ then reads $\ell(w)$ from $w$, for each $w' \in N^*(v)$. Next, this processor writes $w$ into the $\ell(w)$-th cell of the second indexed table in the block of $v$. Since such $w$'s are within the first $100 \log\log n$ cells, we can use $\log n$ processors from the third indexed table of $v$ to determine the $w$ with maximum level $\ell(w)$ by assigning each pair from $[100 \log\log n] \times [100 \log\log n]$ a corresponding processor, giving $u \coloneqq \argmax_{w \in N^*(v).p} \ell(w)$. 
    
    Secondly, we show that the hashings in Step~\ref{alg:expandm:s3} and Step~\ref{alg:expandm:s5} take $O(1)$ time. 
    Step~\ref{alg:expandm:s3} is straightforward since each $w \in N^*(v)$ has a corresponding processor, and the hashing is computed in $O(1)$ time as explained before the pseudocode. 
    To implement Step~\ref{alg:expandm:s5}, each vertex $u \in \mathcal{H}(w)$ such that $w \in \mathcal{H}(v)$ needs to own a corresponding processor. Note that by Step~\ref{alg:expandm:s3}, it must be $\beta(w) = \beta(v)$, so we have that for each $w \in \mathcal{H}(v)$, the table size of $w$ is exactly $\sqrt{\beta(w)} = \sqrt{\beta(v)}$. Since $v$ owns $\beta(v)$ processors in its block, we can assign a processor to each cell in the table of such $w$ for each $w \in \mathcal{H}(v)$ (the table size of $v$ is $\sqrt{\beta(v)}$) allocated by their indices. 
    Therefore, in $O(1)$ time we use the processor corresponding to $u$ to write $u$ into $\mathcal{H}(v)$. The added edge $(u, v)$ owns a corresponding processor -- the same processor in $\mathcal{H}(v)$. 
    
    Thirdly, we show that Step~\ref{alg:expandm:s8} can be implemented in $O(1)$ time. 
    By Lemma~\ref{lem:expandm_processors}, w.p. $1 - 1/(\log n)^9$, the total number of processors used is at most $(m + n) / (\log n)^3$. 
    The pool of $(m + n) / (\log n)^3$ processors is partitioned into $(\log n)^2$ zones such that the processor allocation in round $i \in [20 \log b]$ \Cliff{parameter} for vertices with level $\ell$ uses the zone indexed by $(i, \ell)$, where $\ell \in [100 \log\log n]$ w.p. $1 - 1/(\log n)^9$ by Lemma~\ref{lem:expandm_max_level}. 
    Since there are $(m + n) / (\log n)^3$ processors in total and all the vertex ids are in $[n/b^{10}]$ as assumed at the beginning of \S{\ref{sec:stage2}}, we can use $O(n/b^{10} \cdot \log n)$ processors (such that there are $\Theta(\log n)$ processors per vertex) for each different level and apply a tighten version of Lemma~\ref{lem:apx_compaction} (see \cite{liu2020connected} for details) to index each root in $O(1)$ time with probability $1 - n^{-9}$ such that the indices of vertices with the same level are distinct, then assign each of them a distinct block in the corresponding zone.\footnote{The number of processor used for this part is actually already counted in Lemma~\ref{lem:expandm_processors}.} 
    Therefore, Step~\ref{alg:expandm:s8} takes $O(1)$ time w.p. $1 - 1/(\log n)^8$ by a union bound over all $100\log\log n$ levels.
    
    All other computations in the \textsc{Expand-Maxlink}$(H)$ in round $i$ of \textsc{Densify}$(H, b)$ can be done in $O(1)$ time, and each edge and each vertex has a corresponding processor at the end of round $i$. This completes the induction on $i$ and gives the lemma by a union bound over all rounds.
\end{proof}

\begin{lemma} \label{lem:truncate:flat}
     W.p. $1 - 1/(\log n)^8$, at the end of \textsc{Densify}$(H, b)$, for any vertex $v \in V(H)$, $v$ is either a root or a child of a root,\footnote{We did not state that all trees in the labeled digraph are flat because the vertex set $V(H)$ does not contain all the vertices in the original input graph: $H$ is the current graph after Stage $1$ that induced on roots. The non-roots created in Stage $1$ can be as many as $\Theta(n)$ so that we cannot run \textsc{Shortcut} on all of them. Instead, we will prove that at the end of Stage $3$, all such non-roots are children or grandchildren of roots so that one application of \textsc{Shortcut} flattens all trees and thus gives a correct connected components algorithm, which adds $O(n)$ work in total.\label{fn:leaves_not_in_trees}}
     and both ends of any edge in $E_{\text{close}}$ are roots.
\end{lemma}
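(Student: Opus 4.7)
The plan is to prove the two claims of Lemma \ref{lem:truncate:flat} by tracking how tree heights in the labeled digraph restricted to $V(H)$ evolve through the phases of $\textsc{Densify}$. First I would bound the tree height at the end of Step~\ref{alg:truncate:s1}, i.e., after the $20\log b$ rounds of $\textsc{Expand-Maxlink}$. The key observation is that $\textsc{Maxlink}$ only sets $v.p = u$ when $\ell(u) > \ell(v)$, so levels strictly increase along newly created parent chains; combined with the cap $\ell(v) \le 100\log\log n$ from Lemma \ref{lem:expandm_max_level}, this controls the heights produced by any single $\textsc{Maxlink}$ invocation. I would then show that the $\textsc{Shortcut}$ performed at the end of Step~\ref{alg:expandm:s6} of each $\textsc{Expand-Maxlink}$ round roughly halves the height, so the interleaving of the two $\textsc{Maxlink}$ calls (in Steps~\ref{alg:expandm:s1} and~\ref{alg:expandm:s6}) with one $\textsc{Shortcut}$ per round keeps heights bounded by $O(\log\log n)$ throughout.

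Next, the $2\log\log\log n$ iterations of $\textsc{Shortcut}(V(H))$ in Step~\ref{alg:truncate:s3} halve tree heights at each iteration, so after $k$ iterations the heights are divided by $2^k$. Since $2^{2\log\log\log n} = (\log\log n)^2$ comfortably dominates the $O(\log\log n)$ bound from Step~\ref{alg:truncate:s1}, every vertex in $V(H)$ is a root or a child of a root at the end of Step~\ref{alg:truncate:s3}. The $\textsc{Alter}(E(H))$ performed after the final $\textsc{Shortcut}$ then rewrites each edge $(u,v) \in E(H)$ to $(u.p, v.p)$; since $u.p, v.p$ are roots, every edge in $E_{\text{close}} = E(H)$ at that moment has both endpoints in the root set, so $V(E_{\text{close}})$ consists entirely of roots entering Step~\ref{alg:truncate:s5}.

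For Steps~\ref{alg:truncate:s5} and~\ref{alg:truncate:s6}, I would argue that the two properties are preserved. Because all of $V(E_{\text{close}})$ consists of roots entering Step~\ref{alg:truncate:s5}, the algorithm of Theorem \ref{thm:ltz_main} applied to $(V(E_{\text{close}}), E_{\text{close}})$ produces a flat labeling on $V(E_{\text{close}})$: each vertex is either a root (its component's representative) or is a child of one. Vertices in $V(H) \setminus V(E_{\text{close}})$ are untouched by Step~\ref{alg:truncate:s5} and retain their depth-$1$ structure from Step~\ref{alg:truncate:s3}. Finally, $\textsc{Alter}(E_{\text{close}})$ in Step~\ref{alg:truncate:s6} replaces each $(u,v) \in E_{\text{close}}$ by $(u.p, v.p)$, both of which are roots after the previous step, yielding the second conclusion. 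A union bound over the failure events of Lemma \ref{lem:expandm_max_level}, Lemma \ref{lem:expandm_constant_time}, Lemma \ref{lem:H_is_sparse}, and the correctness of Theorem \ref{thm:ltz_main} on the sparsified instance absorbs into the final $1/(\log n)^8$ failure probability.

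The main obstacle is the height bound after Step~\ref{alg:truncate:s1}: I need to carefully argue that despite $20\log b$ rounds of $\textsc{Expand-Maxlink}$ each applying $\textsc{Maxlink}$ up to twice (which can raise heights by constants) and only a single $\textsc{Shortcut}$, the net tree height in $V(H)$ remains $O(\log\log n)$ rather than polynomially large in $\log n$. A secondary subtlety is ensuring that non-roots $w \in V(H) \setminus V(E_{\text{close}})$ whose parents happen to lie in $V(E_{\text{close}})$ are not pushed to depth $2$ by Step~\ref{alg:truncate:s5}; I expect to handle this by appealing to the structure imposed by $\textsc{Alter}$ which guarantees that at the end of Step~\ref{alg:truncate:s3} the parent of any such $w$ is unaffected by the representative chosen in Theorem \ref{thm:ltz_main}.
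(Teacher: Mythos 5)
Your overall outline is the same as the paper's proof (level-based height bound after Step~\ref{alg:truncate:s1}, flattening via the \textsc{Shortcut} iterations in Step~\ref{alg:truncate:s3}, then arguing Steps~\ref{alg:truncate:s5}--\ref{alg:truncate:s6} preserve the two properties), but two of your steps need repair. First, the point you call the ``main obstacle'' is not an obstacle at all, and the route you sketch for it (tracking how the per-round \textsc{Shortcut} inside \textsc{Expand-Maxlink} interleaves with the two \textsc{Maxlink} calls) is both unnecessary and harder to push through. The paper's argument is simply: \textsc{Maxlink} is the only place parents are updated, and it only sets $v.p=u$ when $\ell(u)>\ell(v)$; since levels are increased only at roots (Steps~\ref{alg:expandm:s2} and~\ref{alg:expandm:s7}), a non-root's level is frozen while its ancestors' levels can only grow, so the invariant $\ell(v)<\ell(v.p)$ holds for \emph{every} non-root at \emph{every} time, not just along ``newly created parent chains.'' Hence the depth of any $v\in V(H)$ is at most the maximum level, which is at most $100\log\log n$ by Lemma~\ref{lem:expandm_max_level}, uniformly over all $20\log b$ rounds — no bookkeeping of intermediate \textsc{Shortcut}s is needed. (Also note that one \textsc{Shortcut} only guarantees a reduction of $\delta$ to at most $2\delta/3$ for $\delta\ge 2$, not exact halving; the count of iterations in Step~\ref{alg:truncate:s3} still suffices, but your $2^{-k}$ accounting should be stated with this weaker factor, as the paper does.)

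Second, the ``secondary subtlety'' you flag is real, but your proposed resolution does not address it. If $w\in V(H)$ is a child of a root $r$ at the end of Step~\ref{alg:truncate:s3} and $r\in V(E_{\text{close}})$, then the algorithm of Theorem~\ref{thm:ltz_main} may well update $r.p$ to another root, and nothing about the \textsc{Alter} in Step~\ref{alg:truncate:s3} prevents this — $r$ is exactly one of the vertices whose pointer Step~\ref{alg:truncate:s5} manipulates, so ``the parent of $w$ is unaffected by the representative chosen'' is false as a justification. The paper closes this by invoking a property of the contraction algorithm of Theorem~\ref{thm:ltz_main} itself: when run on the graph $(V(E_{\text{close}}),E_{\text{close}})$, all of whose vertices are roots, it maintains throughout (and in particular at the end of Step~\ref{alg:truncate:s5}) the invariant that every vertex of $V(H)$ is a root or a child of a root; the concluding \textsc{Alter}$(E_{\text{close}})$ in Step~\ref{alg:truncate:s6} then moves all edges onto roots. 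If you want to keep your plan, you need to replace your \textsc{Alter}-based argument with such an invariant-maintenance argument for the Theorem~\ref{thm:ltz_main} subroutine (or otherwise rule out that a root with children in $V(H)\setminus V(E_{\text{close}})$ becomes a non-root there); as written, this step of your proposal would not go through.
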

\begin{proof}
    The only place that updates parents is in \textsc{Maxlink}$(V)$, which gives $\ell(v) < \ell(v.p)$ for any non-root $v$.  
    Note that this relation holds at any time of the algorithm, because the level increase only happens on roots (Step~\ref{alg:expandm:s2} and Step~\ref{alg:expandm:s7} within \textsc{Expand-Maxlink}$(H)$). 
    As a result, the distance from any vertex $v \in V(H)$ to the root of the tree containing $v$ is at most the maximum level, which is at most $100 \log\log n$ w.p. $1 - 1/(\log n)^9$ by Lemma~\ref{lem:expandm_max_level}. 
    (The vertices in not in $V(H)$ must be leaves whose parents are in $V(H)$ since Stage $1$ is a contraction algorithm, also see footnote~\ref{fn:leaves_not_in_trees}.) 
    One iteration of \textsc{Shortcut}$(V(H))$ in Step~\ref{alg:truncate:s3} of \textsc{Densify}$(H, b)$ reduces the distance $\delta$ from a vertex $v$ to the root of the tree containing $v$ to at most $2\delta / 3$ if $\delta \ge 2$ (the worst case is reducing the distance from $3$ to $2$, see \cite{liu_tarjan} for details). 
    Therefore, after $2 \log\log n \ge \log_{3/2} (100 \log\log n)$ iterations of \textsc{Shortcut}$(V(H))$, the distance from any vertex $v \in V(H)$ to the root of the tree containing $v$ is at most $1$. 
    The following \textsc{Alter}$(E(H))$ in Step~\ref{alg:truncate:s3} makes all edges adjacent to roots. 
    Therefore, the contraction algorithm in Theorem~\ref{thm:ltz_main} operates on all roots with adjacent edges (in $E_{\text{close}}$) and maintains the invariant that each vertex in $V(H)$ is either a root or a child of a root at the end of Step~\ref{alg:truncate:s5} w.p. $1 - 1/(\log n)^9$, giving the first part of the lemma. 
    The \textsc{Alter}$(E_{\text{close}})$ in Step~\ref{alg:truncate:s6} moves all edges in $E_{\text{close}}$ adjacent to roots.
\end{proof}

\begin{lemma} \label{lem:truncate_time_work}
    W.p. $1 - 1/(\log n)^3$, \textsc{Densify}$(H, b)$ can be implemented on an ARBITRARY CRCW PRAM in $O(\log b)$ time and $(m + n) / (\log n)^3$ work.
\end{lemma}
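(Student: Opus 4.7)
The plan is to walk through \textsc{Densify}$(H,b)$ step by step, invoking the previously established probabilistic guarantees on each subroutine and showing that the charges sum to $O(\log b)$ time and $O((m+n)/(\log n)^3)$ work. Recall $b=(\log n)^{100}$, so $\log b=\Theta(\log\log n)$, and the assumption $|V(H)|\le n/b^{10}$ from the start of \S\ref{sec:stage2} is in force throughout. All high-probability events we condition on will be collected by a union bound at the end.

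First I would handle Step~\ref{alg:truncate:s1}. By Lemma~\ref{lem:expandm_processors} the entire $20\log b$-round loop uses at most $P \coloneqq O((m+n)/(\log n)^4)$ processors, with probability $1-1/(\log n)^9$, and by Lemma~\ref{lem:expandm_constant_time} each round can be carried out in $O(1)$ time on an ARBITRARY CRCW PRAM with probability $1-1/(\log n)^6$ (simultaneously for all rounds). Charging $O(1)$ time and $P$ work per round and summing over the $O(\log b)$ rounds gives $O(\log b)$ time and $O(P\cdot\log b)=O((m+n)\log\log n/(\log n)^4)=O((m+n)/(\log n)^3)$ work for Step~\ref{alg:truncate:s1}.

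Next I would handle Steps~\ref{alg:truncate:s2}--\ref{alg:truncate:s6}. Step~\ref{alg:truncate:s2} just reads roots from $V(H)$ in $O(1)$ time and $O(|V(H)|)$ work. Step~\ref{alg:truncate:s3} performs $2\log\log\log n=O(\log b)$ rounds of \textsc{Shortcut}$(V(H))$ (cost $O(|V(H)|)=O(n/b^{10})$ each) plus an \textsc{Alter}$(E(H))$ per iteration; using the fact that every entry of every hash table lives in a processor cell already counted in $P$, we get $|E_{\text{close}}|\le P$ for Step~\ref{alg:truncate:s4} and thus $O(1)$ time, $O(P)$ work. Crucially for Step~\ref{alg:truncate:s5}, we invoke the algorithm of Theorem~\ref{thm:ltz_main} on $(V(E_{\text{close}}),E_{\text{close}})$ but truncate it after $O(\log b)$ of its $O(1)$-time rounds; each round needs only $O(|E_{\text{close}}|+|V(E_{\text{close}})|)=O(P)$ processors, yielding $O(\log b)$ time and $O(P\log b)=O((m+n)/(\log n)^3)$ work. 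Step~\ref{alg:truncate:s6} is a single \textsc{Alter}$(E_{\text{close}})$ at $O(1)$ time and $O(P)$ work.

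Summing the per-step bounds gives $O(\log b)$ time and $O((m+n)/(\log n)^3)$ work. For the success probability I would union-bound the failure events of Lemma~\ref{lem:expandm_processors} ($\le 1/(\log n)^9$), Lemma~\ref{lem:expandm_constant_time} ($\le 1/(\log n)^6$), Lemma~\ref{lem:truncate:flat} ($\le 1/(\log n)^8$), and the failure event of the (truncated) algorithm of Theorem~\ref{thm:ltz_main} on a graph with $|V(E_{\text{close}})|\le n/b^{10}$ and $|E_{\text{close}}|\le P$, whose $1-1/\poly(\log n)$ guarantee is comfortably better than $1/(\log n)^3$. The sum is bounded by $1/(\log n)^3$ as required. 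The main obstacle I anticipate is justifying $|E_{\text{close}}|\le P$ cleanly: this relies on the structural fact that every edge added during \textsc{Expand-Maxlink} is materialized as a hash-table entry inside a block already counted by Lemma~\ref{lem:expandm_processors}, and that \textsc{Alter} and \textsc{Shortcut} only relabel or delete such entries rather than create new ones. Everything else is routine per-round accounting.
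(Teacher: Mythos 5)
Your proof is correct and follows essentially the same route as the paper: invoke Lemma~\ref{lem:expandm_processors} and Lemma~\ref{lem:expandm_constant_time} for the per-round cost of Step~\ref{alg:truncate:s1}, bound $|E_{\text{close}}|$ by the total processor allocation (which is exactly how the paper bounds the edges added, since each hash-table entry is backed by an allocated cell), charge $O(\log b)$ rounds of the truncated Theorem~\ref{thm:ltz_main} call in Step~\ref{alg:truncate:s5}, and union-bound. The invocation of Lemma~\ref{lem:truncate:flat} in your probability accounting is unnecessary for a pure time/work bound (the paper omits it), and your appeal to the failure probability of Theorem~\ref{thm:ltz_main} being ``comfortably better than $1/(\log n)^3$'' should ideally be pinned down to a concrete $1/(\log n)^9$ as the paper does, but these are cosmetic rather than substantive gaps.
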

\begin{proof}
    By Lemma~\ref{lem:expandm_processors} and Lemma~\ref{lem:expandm_constant_time}, w.p. $1 - 1/(\log n)^5$, Step~\ref{alg:truncate:s1} of \textsc{Densify}$(H, b)$ takes $O(\log b)$ time and $O(m + n) / (\log n)^4$ work. 
    Consider Step~\ref{alg:truncate:s3}. By the assumption at the beginning of \S{\ref{sec:stage2}}, $|V(H)| \le n/b^{10} \le n / (\log n)^5$ by $b = (\log n)^{100}$. 
    So the $2\log\log n$ iterations of \textsc{Shortcut}$(V(H))$ takes at most $n / (\log n)^4$ work and $\log b$ time. 
    By Lemma~\ref{lem:H_is_sparse}, $|E(H_0)| \le (m+n)/(\log n)^5$ w.p. $1 - n^{-8}$, where $H_0$ is the graph $H$ passed to \textsc{Expand-Maxlink} in round $0$. 
    The number of edges added into $H$ over all rounds in Step~\ref{alg:truncate:s1} is upper bounded by the total work performed over all rounds of \textsc{Expand-Maxlink}$(H)$, which is $O(m + n) / (\log n)^4$ w.p. $1 - 1/(\log n)^5$ by the first sentence of this proof. 
    By a union bound, the work in the first $4$ steps of \textsc{Densify}$(H, b)$ is at most $O(m + n) / (\log n)^4$ w.p. $1 - 1/(\log n)^4$. 
    Note that $E_{\text{close}}$ is exactly the $E(H)$ at the end of Step~\ref{alg:truncate:s4}, so w.p. $1 - 1/(\log n)^9$, Step~\ref{alg:truncate:s5} takes $O(m + n) / (\log n)^4 \cdot 100 \log\log n \le (m + n) / (\log n)^{3.5}$ work \Cliff{parameter} and  $100 \log\log n \le \log b$ time. 
    Step~\ref{alg:truncate:s6} takes $O(1)$ time and $O(E_{\text{close}}) \le O(m + n) / (\log n)^4$ work. 
    Summing up and by a union bound, w.p. $1 - 1/(\log n)^3$, all steps take $O(\log b)$ time and $(m + n) / (\log n)^3$ work in total.
\end{proof}

\Cliff{The above lemma can be boosted to high probability, maybe write in \S{\ref{sec:boosting}}.}

We use one crucial property of \textsc{Expand-Maxlink} from \cite{liu2020connected} (stated as Lemma 5.24 in their full version), which will be used later.

\begin{lemma}[\cite{liu2020connected}] \label{lem:connect_2}
    For any root $v \in V(H)$ and any vertex $u \in N^*(N^*(v))$ at the beginning of any round $i$ within \textsc{Densify}$(H, b)$, 
    let $u'$ be the parent of $u$ after Step~\ref{alg:expandm:s1} of the \textsc{Expand-Maxlink}$(H)$ in round $i$.
    If $v$ does not increase level and is a root during round $i$, then $u' \in \mathcal{H}(v)$ after Step~\ref{alg:expandm:s5} of the \textsc{Expand-Maxlink}$(H)$ within round $i$.
\end{lemma}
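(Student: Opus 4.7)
The plan is to exhibit an explicit two-hop witness path in the hash-table structure. Since $u \in N^*(N^*(v))$ at the start of round $i$, fix a witness $w$ with $w \in N^*(v)$ and $u \in N^*(w)$ at that moment, and write $w' = w.p$ and $u' = u.p$ after Step~\ref{alg:expandm:s1}. The goal is to show $w' \in \mathcal{H}(v)$ after Step~\ref{alg:expandm:s3} and $u' \in \mathcal{H}(w')$ after Step~\ref{alg:expandm:s3}, so that Step~\ref{alg:expandm:s5} copies $u'$ from $\mathcal{H}(w')$ into $\mathcal{H}(v)$.

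First, I would analyze the two \textsc{Maxlink} iterations inside Step~\ref{alg:expandm:s1} using the hypothesis that $v$ remains a root throughout round $i$ and does not raise its level. The hypothesis says $v$ never updates its parent in either iteration, so the $\argmax_{y \in N^*(v).p} \ell(y)$ centered at $v$ is at most $\ell(v)$ in both iterations; unfolding this constraint back through the $\argmax$ definition gives an upper bound $\ell(p_2(w)) \le \ell(v)$ for every $w \in N^*(v)$, and propagating one hop further via the second iteration yields $\ell(p_2(u)) \le \ell(v)$ for every $u \in N^*(N^*(v))$. The matching lower bounds come from the availability of $v$ (at level $\ell(v)$) as a candidate in the first-iteration $\argmax$ of every $y \in N^*(v)$, forcing $\ell(p_1(y)) \ge \ell(v)$, and then the availability of $p_1(w)$ (now at level exactly $\ell(v)$) as a candidate in the second-iteration $\argmax$ of every $u \in N^*(w)$, forcing $\ell(p_2(u)) \ge \ell(v)$. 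Combining both directions gives $\ell(w') = \ell(u') = \ell(v)$, hence $\beta(w') = \beta(u') = \beta(v)$. A separate verification shows both $w'$ and $u'$ must be roots at the end of Step~\ref{alg:expandm:s1}: each is the argmax at the top-reachable level in its own neighborhood, so it cannot update its own parent in either iteration.

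Second, the \textsc{Alter} inside Step~\ref{alg:expandm:s1} rewrites $(v, w) \mapsto (v.p, w.p) = (v, w')$ and $(w, u) \mapsto (w', u')$ in the current edge set (with the degenerate subcases $w' = v$ or $u' = w'$ reducing directly to the trivial forms of the claim). Consequently $w' \in N^*(v)$ and $u' \in N^*(w')$ at the start of Step~\ref{alg:expandm:s3}. Step~\ref{alg:expandm:s3} applied at root $v$ then hashes the root $w'$ into $\mathcal{H}(v)$ (using $\beta(w') = \beta(v)$); applied at root $w'$, it hashes the root $u'$ into $\mathcal{H}(w')$ (using $\beta(u') = \beta(w')$). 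Finally Step~\ref{alg:expandm:s5} copies every element of $\mathcal{H}(w')$ into $\mathcal{H}(v)$ whenever $w' \in \mathcal{H}(v)$, delivering $u' \in \mathcal{H}(v)$.

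The main obstacle is the level-propagation argument of the first step. One must carefully combine the upper bound ``$v$ does not update in either iteration'' with the lower bound ``$v$ (and then $p_1(w)$) is an available candidate at level $\ell(v)$'' to pin $\ell(w') = \ell(u') = \ell(v)$, and additionally verify that $w'$ and $u'$ are themselves roots so that they actually qualify for hashing in Step~\ref{alg:expandm:s3}. The choice of exactly two \textsc{Maxlink} iterations matches the reach of the hypothesis $u \in N^*(N^*(v))$, so the argument lives at the precise boundary where a high-level root further than two hops away would fail to constrain $v$ while one at most two hops away would; striking this balance while simultaneously proving the rootness of $w'$ and $u'$ is the delicate bookkeeping.
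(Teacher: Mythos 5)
The paper itself does not prove this lemma (it is imported from \cite{liu2020connected}), so there is no in-paper proof to compare against; judged on its own, your argument has a genuine gap in its first step. The upper bounds you claim, namely $\ell(p_2(w))\le\ell(v)$ for all $w\in N^*(v)$ and $\ell(p_2(u))\le\ell(v)$ for all $u\in N^*(N^*(v))$ (writing $p_1,p_2$ for parents after the first and second \textsc{Maxlink} iteration of Step~\ref{alg:expandm:s1}), do not follow from the only fact you use, that $v$ fails to update its parent in those two iterations. That fact only bounds $\ell(p_0(y))$ and $\ell(p_1(y))$ for $y\in N^*(v)$; in the second iteration a neighbor $w$ of $v$ picks its new parent from $N^*(w).p_1$, which contains parents of vertices at distance two from $v$ that are completely unconstrained. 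Concretely, consider the path $v-w-u-z$ with all four vertices roots and levels $5,1,1,10$: $v$ updates in neither iteration, yet after the second iteration $w.p=u.p=z$ with $\ell(z)=10>\ell(v)$, so your asserted equalities $\ell(w')=\ell(u')=\ell(v)$ (and hence the budget equalities $\beta(w')=\beta(v)$, $\beta(u')=\beta(w')$ needed for the hashing in Step~\ref{alg:expandm:s3}) fail, as does the rootness claim in general. The lemma is not contradicted by this configuration only because the full hypothesis fails later in the round: after the \textsc{Alter}, the edge $(v,z)$ is present and $v$ links to $z$ in the \textsc{Maxlink} of Step~\ref{alg:expandm:s6}, so $v$ is not a root ``during round $i$''. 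This later-round part of the hypothesis is exactly what your proof never uses.

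A correct argument must exploit the whole-round hypothesis. Your lower bounds (via the availability of $v$, respectively $p_1(w)$, as candidates in the $\argmax$) are sound, but the matching upper bounds on the round-start levels of $w'$ and $u'$, the rootness of $w'$ and $u'$ after Step~\ref{alg:expandm:s1}, and the fact that they do not raise their levels in Step~\ref{alg:expandm:s2}, all have to be forced by contradiction with $v$ surviving the two iterations of the \textsc{Maxlink} in Step~\ref{alg:expandm:s6}: after the \textsc{Alter}, the edges $(v,w')$ and $(w',u')$ are present, so any strictly higher-level parent at $w'$ or $u'$ would propagate to $v$ within those two iterations and strip $v$ of rootness. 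In addition, you silently assume every insertion in Steps~\ref{alg:expandm:s3} and~\ref{alg:expandm:s5} survives; hash collisions can overwrite entries, and ruling them out requires the other half of the hypothesis, that $v$ does not increase its level, which via Steps~\ref{alg:expandm:s4}, \ref{alg:expandm:s5} and~\ref{alg:expandm:s7} implies $v$ is never marked dormant, hence there was no collision in $\mathcal{H}(v)$ nor in $\mathcal{H}(w')$ for any $w'\in\mathcal{H}(v)$.
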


\subsubsection{Path Construction for \textsc{Densify}} \label{subsubsec:truncate_path_construction}

Fix any shortest path $P$ in graph $H$. 
The length of $P$ is not necessarily $d_G$ -- the diameter of the original input graph $G$, as shown in \S{\ref{sec:diameter}}.

Consider the changes on $P$ over rounds of \textsc{Densify}$(H, b)$. 
The \textsc{Alter}$(E)$ (Step~\ref{alg:expandm:s1} and Step~\ref{alg:expandm:s6}) in each round replaces an edge $(v, w)$ on $P$ by $(v.p, w.p)$. 
The path $P$ also changes when adding edges to the current graph:
for any vertices $v$ and $w$ on path $P$, if the current graph contains edge $(v, w)$, then all vertices between $v$ and $w$ (exclusively) can be removed from $P$, which still results in a valid path in the current graph from the first to the last vertex of $P$, reducing the length of $P$. 
We use $|P|$ as the number of vertices on $P$, which is $1$ plus the length of $P$. 
For any integer $j \in [0, |P| - 1]$, let $P(j)$ be the $j$-th vertex on $P$.

Formally, we have the following inductive construction for path changes:

\begin{framed}
\noindent \textsc{PathConstruction}$(H, P)$:

    Let all vertices on $P$ be \emph{active} at the beginning of round $0$ of \textsc{Densify}$(H, b)$.
    
    For any round $i \in [0, 9 \log b)$, given path $P_i$ with at least $4$ active vertices at the beginning of round $i$, \textsc{Expand-Maxlink}$(H)$ constructs $P_{i+1}$ by the following $6$ phases:
    \begin{enumerate}
        \item The \textsc{Alter}$(E)$ in Step~\ref{alg:expandm:s1} replaces each vertex $v$ on $P_i$ by $v' \coloneqq v.p$ to get path $P_{i,1}$. Set $v'$ as active or passive according to $v$. For any $v'$ on $P_{i,1}$, let $\underline{v'}$ be on $P_{i}$ such that $\underline{v'}.p = v'$.\label{phase_1}
        \item After Step~\ref{alg:expandm:s5}, set $j$ as $1$, and repeat the following until $j \ge |P_{i,1}| - 1$ or $P_{i,1}(j + 1)$ is \emph{passive} (not active): 
            \begin{itemize}
                \item Let $v' \coloneqq P_{i,1}(j)$, if $\underline{v'}$ is a root and does not increase level during round $i$ then: 
                \begin{enumerate}
                    \item If the current graph contains edge $(v', P_{i,1}(j+2))$ then mark $P_{i,1}(j+1)$ as \emph{skipped} and set $j \coloneqq j+2$. \label{phase_2a}
                    \item Else set $j \coloneqq j+1$.
                \end{enumerate}
            \end{itemize} \label{phase_2}
        \item Let $k$ be the maximum integer such that $P_{i,1}(k)$ is active and $k \le j$. Mark $P_{i,1}(k)$ as passive. \label{phase_3}
        \item Remove all skipped and passive vertices from $P_{i,1}$ to get path $P_{i,4}$. \label{phase_4}
        \item Concatenate $P_{i,4}$ with all passive vertices after $P_{i,1}(|P_{i,4}|)$ (exclusively) on $P_{i,1}$ to get path $P_{i,5}$. \label{phase_5}
        \item The \textsc{Alter}$(E)$ in Step~\ref{alg:expandm:s6} replaces each vertex $v$ on $P_{i,5}$ by $v.p$ to get path $P_{i+1}$. Set $v.p$ as active or passive according to $v$. \label{phase_6}
    \end{enumerate}
    For any vertex $v$ on $P_i$ that is replaced by $v'$ in Phase~\ref{phase_1}, if $v'$ is not removed in Phase~\ref{phase_4}, then let $\overline{v}$ be the vertex replacing $v'$ in Phase~\ref{phase_6}, and call $\overline{v}$ the \emph{corresponding vertex} of $v$ in round $i+1$.
\end{framed}

\footnote{The definitions of $\underline{v'}$ and $\overline{v}$ are for our later potential argument. In Phase~\ref{phase_1}, if there are two vertices $P_i(x)$ and $P_i(y)$ with the same parent $v'$ where $x \ne y$, then $\underline{P_{i, 1}(x)} = P_i(x)$ and $\underline{P_{i, 1}(y)} = P_i(y)$, which resolves ambiguity by their indices. Phase~\ref{phase_6} follows the same manner. We could have defined potentials on indices on the path rather than on the vertices, but this involves too many parameters to parse.\label{fn:potential_index}}To obtain bounds on the length of $P$ over rounds, we need the following potential function.

\begin{definition}\label{def:path_potential}
    For any (active) vertex $v$ on path $P_0 \coloneqq P$, define its \emph{potential} $\Phi_0(v) \coloneqq 1$.
    For any integer $i \in [0, 9 \log b)$, given path $P_i$ with at least $4$ active vertices at the beginning of round $i$ of \textsc{Densify}$(H, b)$, define the \emph{potential} of each vertex on $P_{i+1}$ based on \textsc{PathConstruction}$(H, P)$ by the following steps: 
    \begin{enumerate}
        \item For each vertex $v$ on $P_i$ that is replaced by $v.p$ in Phase~\ref{phase_1}, $\Phi_{i, 1}(v.p) \coloneqq \Phi_i(v)$. \label{potential:step1}
        \item After Phase~\ref{phase_3}, for each active vertex $v$ on $P_{i,1}$, if the successor $w$ of $v$ is skipped or passive then $\Phi_{i,3}(v) \coloneqq \Phi_{i,1}(v) + \Phi_{i,1}(w)$ and $\Phi_{i, 3}(w) \coloneqq 0$. \label{potential:step2}
        \item For each vertex $v$ on $P_{i,1}$ where $\Phi_{i,3}(v)$ is not updated in Step~\ref{potential:step2}, $\Phi_{i,3}(v) \coloneqq \Phi_{i,1}(v)$. \label{potential:step3}
        \item For each vertex $v$ on $P_{i,5}$ that is replaced by $v.p$ in Phase~\ref{phase_6}, $\Phi_{i+1}(v.p) \coloneqq \Phi_{i,3}(v)$. \label{potential:step4}
    \end{enumerate}
\end{definition}

The following lemma is immediate:
\begin{lemma} \label{lem:potential_sum_decrease}
For any integer $i \in [0, 9 \log b)$, let path $P_i$ with at least $4$ active vertices be defined in \textsc{PathConstruction}$(H, P)$, then
    $\sum_{v \text{~is~on~} P_i} \Phi_i(v) \ge \sum_{v \text{~is~on~} P_{i+1}} \Phi_{i+1}(v) $.
\end{lemma}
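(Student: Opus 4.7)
The plan is to track the total potential through each of the four potential-update steps in Definition~\ref{def:path_potential}, corresponding to the six phases of \textsc{PathConstruction}, and show that the total potential is non-increasing at every step. Since each vertex on $P_i$ has a well-defined position (and the footnote~\ref{fn:potential_index} resolves ambiguity whenever two positions carry the same vertex label), it is cleanest to think of the paths as sequences of positions and sum the potentials over positions.

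First I would handle the trivial Phases~\ref{phase_1} and~\ref{phase_6}, which correspond to Steps~\ref{potential:step1} and~\ref{potential:step4} of Definition~\ref{def:path_potential}. In both phases each vertex on the current path is replaced by its parent at the same position, and the potential is copied unchanged to the new occupant of that position. Thus $\sum_{v \text{ on } P_i} \Phi_i(v) = \sum_{v' \text{ on } P_{i,1}} \Phi_{i,1}(v')$ and $\sum_{v \text{ on } P_{i,5}} \Phi_{i,3}(v) = \sum_{v' \text{ on } P_{i+1}} \Phi_{i+1}(v')$.

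Next I would handle Steps~\ref{potential:step2} and~\ref{potential:step3} of Definition~\ref{def:path_potential}, which redistribute potentials within $P_{i,1}$. Step~\ref{potential:step2} takes each active vertex $v$ whose successor $w$ on $P_{i,1}$ is skipped or passive, and transfers $\Phi_{i,1}(w)$ onto $v$, zeroing out $\Phi_{i,3}(w)$. Each position has at most one predecessor, so no vertex is charged twice, and each skipped/passive vertex loses its potential to at most one active neighbor. Step~\ref{potential:step3} leaves all other potentials unchanged. Therefore $\sum_{v \text{ on } P_{i,1}} \Phi_{i,3}(v) = \sum_{v \text{ on } P_{i,1}} \Phi_{i,1}(v)$.

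Finally, I would argue that the transition from $P_{i,1}$ to $P_{i,5}$ (Phases~\ref{phase_4} and~\ref{phase_5}) can only drop vertices: $P_{i,4}$ is obtained from $P_{i,1}$ by deleting all skipped and passive vertices, and $P_{i,5}$ appends only vertices that already appear on $P_{i,1}$ (the passive tail after position $|P_{i,4}|$). Since potentials are non-negative (an easy induction from $\Phi_0 \equiv 1$ together with the fact that Steps~\ref{potential:step2}--\ref{potential:step3} only add non-negative quantities), dropping positions can only decrease the sum, giving $\sum_{v \text{ on } P_{i,5}} \Phi_{i,3}(v) \le \sum_{v \text{ on } P_{i,1}} \Phi_{i,3}(v)$. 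Chaining the three (in)equalities proves the lemma. The one place that would need care is verifying non-negativity and confirming that the position-based identification in Steps~\ref{potential:step1} and~\ref{potential:step4} is consistent with the footnote's convention for vertices of equal label, but both are immediate from the inductive definition.
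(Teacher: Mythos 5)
Your proof is correct and follows the same approach as the paper's (extremely terse) proof sketch — tracking the total potential through the four steps of Definition~\ref{def:path_potential}, using the one-to-one mappings in Steps~\ref{potential:step1} and \ref{potential:step4}, the conservative transfer in Steps~\ref{potential:step2}--\ref{potential:step3}, and non-negativity to justify dropping the vertices removed in Phases~\ref{phase_4}--\ref{phase_5}. Your version supplies the details the paper leaves implicit (in particular, that a skipped/passive $w$ can never itself play the role of an active $v$, so the transfers in Step~\ref{potential:step2} are disjoint, and that non-negativity follows from $\Phi_0 \equiv 1$ by an easy induction), but the argument is the same.
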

\begin{proof}[Proof Sketch.]
    In Definition~\ref{def:path_potential}, 
(\romannumeral1) the mapping in Step~\ref{potential:step1} is one-to-one (see footnote~\ref{fn:potential_index});
(\romannumeral2) in Step~\ref{potential:step2}, the skipped or passive vertex $w$ passes its potential to $v$;
(\romannumeral3) in Step~\ref{potential:step3}, only a vertex not updates its potential in Step~\ref{potential:step2} inherits the potential;
(\romannumeral4) the mapping in Step~\ref{potential:step4} is one-to-one.
\end{proof}

\begin{lemma} \label{lem:active_vertex_potential_bound}
    For any integer $i \in [0, 9 \log b)$, let path $P_i$ with at least $4$ active vertices be defined in \textsc{PathConstruction}$(H, P)$. 
    For any active vertex $v$ on $P_i$, $\Phi_i(v) \ge 2^{i - \ell(v)}$.
\end{lemma}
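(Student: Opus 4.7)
I plan to prove the lemma by induction on $i$, tracing an active vertex $\overline{v}$ on $P_{i+1}$ back to a predecessor $v$ on $P_i$ and bookkeeping the potential through the six phases of \textsc{PathConstruction} and the four update steps of Definition~\ref{def:path_potential}. The base case $i=0$ is immediate: $\Phi_0(v) = 1$ and $\ell(v) \ge 1$ give $2^{-\ell(v)} \le 1$.

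For the inductive step, fix $\overline{v}$ active on $P_{i+1}$. Phase~\ref{phase_6}'s \textsc{Alter} produces $\overline{v}$ as the parent of some active vertex on $P_{i,5}$, which lies on $P_{i,4}$, so it arose as $v' = P_{i,1}(j)$ neither skipped in Phase~\ref{phase_2a} nor marked passive in Phase~\ref{phase_3}. Let $v$ be its predecessor on $P_i$ with $v.p = v'$ after Phase~\ref{phase_1}'s \textsc{Alter}. Since \textsc{Maxlink} relinks only to strictly higher level and Steps~\ref{alg:expandm:s2} and~\ref{alg:expandm:s7} only increase levels, the chain $\ell(v) \le \ell(v.p) \le \ell(v'.p) = \ell(\overline{v})$ is monotone, splitting the analysis into Case~1 ($\ell(\overline{v}) \ge \ell(v)+1$) and Case~2 ($\ell(\overline{v}) = \ell(v)$).

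In Case~1, Definition~\ref{def:path_potential} ensures the potential of a surviving active vertex is monotone across a round: Steps~\ref{potential:step1} and~\ref{potential:step4} are one-to-one by the index-based convention of footnote~\ref{fn:potential_index}, Step~\ref{potential:step2} only adds, and Step~\ref{potential:step3} is an inheritance. Thus $\Phi_{i+1}(\overline{v}) \ge \Phi_i(v) \ge 2^{i-\ell(v)} \ge 2^{(i+1)-\ell(\overline{v})}$ by induction. In Case~2, the equality forces $v.p = v$ after Phase~\ref{phase_1}'s \textsc{Maxlink} (otherwise $\ell$ would strictly rise) and prevents any level increase in Steps~\ref{alg:expandm:s2} and~\ref{alg:expandm:s7}, so $v$ is a root throughout round $i$ and $\overline{v} = v$. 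Applying Lemma~\ref{lem:connect_2} to $v$ and to the vertex two hops away from $v$ on $P_i$ yields $P_{i,1}(j+2) \in \mathcal{H}(v)$ after Step~\ref{alg:expandm:s5}, so the edge $(v, P_{i,1}(j+2))$ is in the current graph and Phase~\ref{phase_2a}'s skip fires at index $j$ (the loop has not terminated there because the successor $w$ of $v$ on $P_i$ is active, hence $P_{i,1}(j+1) = w.p$ is active). By Step~\ref{potential:step2} of Definition~\ref{def:path_potential}, $\Phi_{i,3}(v) = \Phi_{i,1}(v) + \Phi_{i,1}(P_{i,1}(j+1)) = \Phi_i(v) + \Phi_i(w)$.

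The crucial step in Case~2 is establishing $\ell(w) \le \ell(v)$. Because $v$ remained a root under both iterations of the \textsc{Maxlink} in Step~\ref{alg:expandm:s1}, no $u \in N^*(v)$ had $\ell(u.p) > \ell(v)$ at the start of round~$i$; and since every vertex on $P_i$ is a root at that moment (inherited from Phase~\ref{phase_6} of round $i-1$), $w$ itself is a root with $\ell(w) = \ell(w.p) \le \ell(v)$. Induction then gives both summands $\ge 2^{i-\ell(v)}$, so $\Phi_{i+1}(\overline{v}) = \Phi_{i,3}(v) \ge 2^{i+1-\ell(v)} = 2^{(i+1)-\ell(\overline{v})}$. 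The main obstacle I anticipate is the fine-grained bookkeeping of active/passive labels through Phase~\ref{phase_1}'s \textsc{Alter} when multiple preimages collapse onto one vertex of $P_{i,1}$; this is handled via the index-based disambiguation of footnote~\ref{fn:potential_index}, which simultaneously keeps the potential maps in Steps~\ref{potential:step1} and~\ref{potential:step4} one-to-one.
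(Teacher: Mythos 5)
Your overall strategy is the same as the paper's: induction on $i$, with the level-increase case handled by monotonicity of the potential and the root/no-level-increase case handled by transferring the successor's potential, using Lemma~\ref{lem:connect_2} to trigger the skip and the inequality $\ell(w)\le\ell(v)$ to apply the induction hypothesis to the successor. Your route to $\ell(w)\le\ell(v)$ is in fact a legitimate simplification: the paper argues through the hash tables and the \textsc{Maxlink} of Step~\ref{alg:expandm:s6}, whereas you observe that the path edge puts the successor in $N^*(v)$ already at the first iteration of the \textsc{Maxlink} in Step~\ref{alg:expandm:s1}, so a root $v$ that never relinks forces $\ell(w)\le\ell(w.p)\le\ell(v)$ at the start of round $i$. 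Note, though, that your auxiliary claim that every vertex on $P_i$ is a root at that moment is neither justified (trees need not be flat at that point) nor needed; the standing invariant $\ell(w)\le\ell(w.p)$ already suffices.

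There are, however, two genuine gaps in your Case~2. First, you assert in passing that the successor $w$ of $v$ on $P_i$ is active. This is not free: it is precisely where the paper spends its case-(1) argument, showing via passivity propagation (once a position is marked passive, all later positions are and remain passive) that if $P_{i,1}(j+1)$ were passive, then $v$ would be the maximum active index and would itself be marked passive in Phase~\ref{phase_3}, contradicting that $\overline{v}$ is active on $P_{i+1}$; the same propagation argument is what guarantees the Phase~\ref{phase_2} loop actually reaches index $j$ at all. Activity of $w$ is essential to your bound, because the induction hypothesis only lower-bounds potentials of \emph{active} vertices --- a passive successor may already have had its potential zeroed by Step~\ref{potential:step2} in an earlier round, so $\Phi_{i,1}(w)\ge 2^{\,i-\ell(v)}$ would not follow. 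Second, you only treat the sub-case in which the skip of Phase~\ref{phase_2a} fires. If $j+2\ge|P_{i,1}|$ (or $P_{i,1}(j+2)$ is passive), there is nothing for Lemma~\ref{lem:connect_2} to skip to; in that boundary situation (the paper's case 2(b)) the transfer comes instead from $w$ being the last active vertex, hence marked passive in Phase~\ref{phase_3}, after which Step~\ref{potential:step2} of Definition~\ref{def:path_potential} still credits $\Phi_{i,1}(w)$ to $v$. Both pieces must be supplied for the induction step to be complete; with them added, your argument goes through.
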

\begin{proof}
    The proof is by an induction on $i$.
    The base case holds because for any vertex $v$ on $P_0 = P$, $\Phi_0(r) = 1 \ge 2^{0 - 1}$.
    Now we prove the inductive step from $i$ to $i+1$ given that the corresponding vertex $\overline{v}$ of $v \in P_i$ is on $P_{i+1}$ and active.

    Suppose $v$ is a non-root at the end of round $i$.
    If $v$ is a non-root before the \textsc{Alter}$(E(V))$ in Step~\ref{alg:expandm:s6} within \textsc{Expand-Maxlink}$(H)$ within round $i$ of \textsc{Densify}$(H,b)$, then $\ell(v.p) > \ell(v)$ by examining \textsc{Maxlink}$(V)$, and $\ell(\overline{v}) \ge \ell(v.p) > \ell(v)$ (Phase~\ref{phase_6} in \textsc{PathConstruction}$(H,P)$). 
    So, by the induction hypothesis, $\Phi_{i+1}(\overline{v}) \ge \Phi_{i}(v) \ge 2^{i - \ell(v)} \ge 2^{i+1 - \ell(\overline{v})}$.

    Suppose $v$ increases its level in round $i$.
    Let $\ell$ be the level of $v$ at the beginning of round $i$.
    If the increase happens in Step~\ref{alg:expandm:s2}, then $v$ is a root after Step~\ref{alg:expandm:s1}.
    Whether $v$ changes its parent in Step~\ref{alg:expandm:s6} or not, the level of $\overline{v} = v.p$ is at least $\ell+1$.
    Else, if the increase happens in Step~\ref{alg:expandm:s7}, then $v$ is a root after Step~\ref{alg:expandm:s6}.
    So $\overline{v} = v$ and its level is at least $\ell + 1$ at the end of round $i$.
    By the induction hypothesis, $\Phi_{i+1}(\overline{v}) \ge \Phi_{i}(v) \ge 2^{i - \ell} \ge 2^{i+1 - \ell(\overline{v})}$.

    Since a non-root never becomes a root, 
    it remains to assume that $v$ is a root and does not increase level during round $i$.
    Since we assumed that the corresponding vertex $\overline{v}$ of $v \in P_i$ is on $P_{i+1}$ and active, the vertex $v'$ that replaces $v$ on $P_i$ in Phase~\ref{phase_1} of \textsc{PathConstruction}$(H,P)$ (i.e., $\underline{v'}.p = v'$) is exactly $v$, and is not removed in Phase~\ref{phase_4} and remains to be active before Phase~\ref{phase_6}. Moreover, by Step~\ref{potential:step1} of Definition~\ref{def:path_potential}, $\Phi_{i,1}(v') = \Phi_{i,1}(v) = \Phi_i(v)$. 
    \begin{enumerate}
        \item Suppose the index $j$ of $v$ on $P_{i,1}$ satisfies that (\romannumeral1) $j \ge |P_{i,1}| - 1$, or (\romannumeral2) $j < |P_{i,1}| - 1$ and $P_{i,1}(j + 1)$ is passive. 
    We prove that $j$ is the maximum integer such that $P_{i,1}(j)$ is active. 
    Case (\romannumeral1) is easy since $j$ is the last vertex on $P_{i,1}$.
    It remains to prove case (\romannumeral2). 
    Assume for contradiction that there exists $j' > j$ such that $P_{i,1}(j')$ is active. Then we have $j' > j+1$ since $P_{i,1}(j + 1)$ is passive. Since $P_{i,1}(j + 1)$ is passive before Phase~\ref{phase_3} of round $i$ in \textsc{PathConstruction}$(H,P)$, in some round $i' < i$ the \textsc{PathConstruction}$(H,P)$ marks $j+1$ from active to passive, which means $j+1$ is maximum integer such that $P_{i',1}(j+1)$ is active, and $P_{i', 1}(j')$ must be passive. This implies that the $j'$-th vertex on this path remains to be passive starting from round $i'$, contradicting with the fact that $P_{i,1}(j')$ is active. 
    As a result, $j$ must be the maximum integer such that $P_{i,1}(j)$ is active, and $v = P_{i,1}(j)$ is marked as passive in Phase~\ref{phase_3} of round $i$ in \textsc{PathConstruction}$(H,P)$. 
    This contradicts with the fact that the corresponding vertex $\overline{v} $ of $v$ is active on $P_{i+1}$ (the first paragraph of this proof). \label{potential:case1}
    
        \item Suppose the index $j$ of $v$ on $P_{i,1}$ satisfies that $j < |P_{i,1}| - 1$ and $P_{i,1}(j + 1)$ is active. 
        \begin{enumerate}
            \item Suppose $P_{i,1}(j+2)$ is an active vertex on $P_{i,1}$. 
    Observe that $P_{i,1}(j+2)$ is the parent $u'$ of a vertex in $N^*(N^*(v))$ (where $N^*$ is defined at the beginning of round $i$) after Step~\ref{alg:expandm:s1} within \textsc{Expand-Maxlink}$(H)$ of round $i$, which must be in $\mathcal{H}(v)$ after Step~\ref{alg:expandm:s5} by Lemma~\ref{lem:connect_2}. 
    Therefore, the current graph contains edge $(v, u')$ and $w \coloneqq P_{i,1}(j+1)$ is skipped by Phase~\ref{phase_2a} in \textsc{PathConstruction}$(H,P)$, 
    thus 
    \begin{equation} \label{eq:potential_pass}
        \Phi_{i,3}(v) = \Phi_{i,1}(v) + \Phi_{i,1}(w)
    \end{equation}
    by Step~\ref{potential:step2} in Definition~\ref{def:path_potential}. 
    Next, we prove that $\ell(w) \le \ell(v)$ at the beginning of round $i$. 
    Assume for contradiction that $\ell(w) > \ell(v)$. 
    Then, $w \in N^*(v)$ by Step~\ref{alg:expandm:s3} and Step~\ref{alg:expandm:s5} of \textsc{Expand-Maxlink}$(H)$, where $N^*(v)$ is defined in Step~\ref{alg:expandm:s3}. 
    Since $v$ does not increase its level during round $i$, $v$ is not dormant in Step~\ref{alg:expandm:s7}. 
    Therefore, there is no collision in the hash tables of $v$ in Step~\ref{alg:expandm:s5} and Step~\ref{alg:expandm:s4}, 
    and there is no collision in the hash table of $w$ in Step~\ref{alg:expandm:s4}. 
    This means $w \in \mathcal{H}(w) = N^*(w)$ in Step~\ref{alg:expandm:s4}, and $w \in \mathcal{H}(v)$ in Step~\ref{alg:expandm:s5}. 
    Now we have $w \in N^*(v)$ before Step~\ref{alg:expandm:s6}. 
    Consider the \textsc{Maxlink}$(V)$ in Step~\ref{alg:expandm:s6}. 
    Since $v$ does not increase its level and the level of $w$ cannot decrease, $\ell(w) > \ell(v)$ still holds at this point. 
    In the first iteration of \textsc{Maxlink}$(V)$, there exists a $u' \in N^*(v).p$ such that $\ell(u') > \ell(v)$ due to the existence of $w$ and the fact that $\ell(w.p) \ge \ell(w)$. 
    As a result, $v$ will be a non-root at the end of that iteration. Since a non-root never becomes a root, we have that $v$ is a non-root after this \textsc{Maxlink}$(V)$, contradicting with the fact that $v$ is a root in round $i$. 
    Therefore, we have that $\ell(w) \le \ell(v)$ at the beginning of round $i$. 
    Note that vertex $\underline{w}$ (the vertex that is replaced by $w$ in the \textsc{Alter}$(E)$ in Step~\ref{alg:expandm:s1} of \textsc{Expand-Maxlink}$(H)$, see Phase~\ref{phase_1} in \textsc{PathConstruction}$(H,P)$) is on $P_i$ and $\ell(\underline{w}) \le \ell(w) \le \ell(v)$ since $w = \underline{w}.p$.
    Therefore, by the induction hypothesis, we obtain 
    \begin{equation} \label{eq:potential_w}
        \Phi_i(\underline{w}) \ge 2^{i - \ell(\underline{w})} \ge 2^{i - \ell(v)}.
    \end{equation}
    Combining, we obtain
    \begin{align*}
        \Phi_{i+1}(v) &= \Phi_{i,3}(v) \\
        &= \Phi_{i,1}(v) + \Phi_{i,1}(w) \\
        &= \Phi_i(v) + \Phi_i(\underline{w}) \\
        &\ge 2^{i - \ell(v)} + 2^{i - \ell(v)} = 2^{i + 1 - \ell(v)} ,
    \end{align*}
    where the first line is by the fact that $v$ is a root in round $i$ and Step~\ref{potential:step4} in Definition~\ref{def:path_potential}, 
    the second line is by Equation~(\ref{eq:potential_pass}),
    the third line is by Step~\ref{potential:step1} in Definition~\ref{def:path_potential}, 
    the fourth line is by the induction hypothesis on $v$ and Equation~(\ref{eq:potential_w}). 
    This finishes the inductive step for this case. 
    \label{potential:case2a}
    
            \item Suppose $j+2 \ge |P_{i,1}|$ or $P_{i,1}(j+2)$ is passive. Then, using the same argument in case~\ref{potential:case1} of this proof, we get that $w \coloneqq P_{i,1}(j+1)$ is the last active vertex on $P_{i,1}$, and $w$ is marked as passive in Phase~\ref{phase_3} of round $i$ in \textsc{PathConstruction}$(H,P)$. 
            Since $v$ is active, we have that $\Phi_{i,3}(v) = \Phi_{i,1}(v) + \Phi_{i,1}(w)$ by Step~\ref{potential:step2} in Definition~\ref{def:path_potential}. 
            Since $w$ is active before Phase~\ref{phase_3} of round $i$ in \textsc{PathConstruction}$(H,P)$, we can use the same argument in case~\ref{potential:case2a} of this proof to obtain that $\Phi_{i+1}(v) \ge 2^{i + 1 - \ell(v)}$. 
        \end{enumerate}
    \end{enumerate}
    As a result, the lemma holds for any active vertex $\overline{v}$ on $P_{i+1}$, finishing the induction and giving the lemma.
\end{proof}

\begin{lemma} \label{lem:active_passive_length}
    Given graph $H$ and integer $b \ge (\log n)^{100}$. 
    Let $r = 9 \log b - 1$. 
    W.p. $1 - 1/(\log n)^9$, for any shortest path $P$ in $H$ such that there are at most $b^7$ vertices on $P$, there are at most $10 \log b$  vertices on $P_r$, where $P_r$ is defined in \textsc{PathConstruction}$(H, P)$.
\end{lemma}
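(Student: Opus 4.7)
The plan is to control $|P_r|$ by separately bounding the number of active and passive vertices on $P_r$. Active vertices will be controlled via the potential function of Definition~\ref{def:path_potential}, and passive vertices via a direct round-by-round combinatorial count.

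First, I would condition on the event of Lemma~\ref{lem:expandm_max_level}, which holds with probability $1 - 1/(\log n)^9$ and asserts that every vertex of $V(H)$ has level at most $100 \log\log n$ throughout \textsc{Densify}$(H, b)$. Since $b \ge (\log n)^{100}$, this gives $\ell(v) \le \log b$ uniformly. This event is independent of the choice of path $P$, so conditioning on it once covers all paths $P$ simultaneously, matching the ``for any shortest path'' quantifier.

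Next, for any fixed shortest path $P$ with $|P| \le b^7$, the initial total potential is $\sum_{v \in P_0} \Phi_0(v) = |P_0| \le b^7$. By $r$ applications of Lemma~\ref{lem:potential_sum_decrease}, the total potential on $P_r$ is still at most $b^7$. Combining with Lemma~\ref{lem:active_vertex_potential_bound} and the level bound, every active vertex $v$ on $P_r$ satisfies $\Phi_r(v) \ge 2^{r - \ell(v)} \ge 2^{(9\log b - 1) - \log b} = b^8/2$. Hence the number of active vertices on $P_r$ is at most $b^7 / (b^8/2) = 2/b$, which is strictly less than one for $b \ge (\log n)^{100}$, and therefore zero.

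For the passive vertices, I would argue by induction on $i$ that the number of passive vertices on $P_i$ is at most $i$. The base case $i = 0$ follows from the initialization marking every vertex on $P_0 = P$ as active. The inductive step inspects the six phases of \textsc{PathConstruction}: Phases~\ref{phase_1} and~\ref{phase_6} only relabel vertices via \textsc{Alter} and preserve active/passive status; Phase~\ref{phase_2} marks vertices as \emph{skipped} but never passive; Phase~\ref{phase_3} converts at most one currently active vertex into a passive one; Phase~\ref{phase_4} only deletes passive vertices; and Phase~\ref{phase_5} only re-appends passive vertices that were already present on $P_{i,1}$. Hence the passive count grows by at most one per round, yielding at most $r = 9 \log b - 1$ passive vertices on $P_r$. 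Combined with the zero-active bound, $|P_r| \le 9 \log b - 1 < 10 \log b$. For the corner case in which \textsc{PathConstruction} halts at some round $i < r$ because $P_i$ has fewer than $4$ active vertices, setting $P_r := P_i$ gives $|P_r| \le 3 + i < 10 \log b$ by the same passive-count argument.

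The main obstacle I expect is verifying carefully that Phases~\ref{phase_4}--\ref{phase_5} truly add no new passive vertices. Phase~\ref{phase_4} removes all passive and skipped vertices from $P_{i,1}$, while Phase~\ref{phase_5} then re-appends the passive vertices sitting after position $|P_{i,4}|$ on $P_{i,1}$; a careful reading confirms that these re-appended vertices were already passive on $P_{i,1}$, so Phase~\ref{phase_3} remains the unique source of new passive markings. Once this accounting is pinned down, the potential calculation is essentially mechanical, and the numerical slack ($9 \log b$ versus $10 \log b$) leaves ample room for the remaining edge cases.
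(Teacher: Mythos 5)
Your proof follows the paper's argument essentially step for step: condition once on the level bound of Lemma~\ref{lem:expandm_max_level} (the only probabilistic ingredient, so it covers all paths simultaneously), cap the total potential of $P_r$ by $b^7$ via Lemma~\ref{lem:potential_sum_decrease}, use Lemma~\ref{lem:active_vertex_potential_bound} together with $\ell(v)\le 100\log\log n\le \log b$ to make each active vertex's potential exceed that cap, and charge at most one new passive vertex per round to Phase~\ref{phase_3}. One step is stated too strongly: Lemma~\ref{lem:active_vertex_potential_bound} (and Lemma~\ref{lem:potential_sum_decrease}) carry the hypothesis that the path has at least $4$ active vertices, so the potential argument only rules out $P_r$ having $4$ or more active vertices; you cannot conclude ``zero,'' since a full run can legitimately end with up to $3$ active vertices on $P_r$ (e.g., if $P_{r-1}$ had exactly $4$), and the paper accordingly phrases this as a contradiction giving ``at most $3$ active.'' With that correction your tally becomes $|P_r|\le 3+(9\log b-1)\le 10\log b$, so the slip is harmless given the slack, and your early-halting corner case is consistent with the paper's (implicit) treatment of when the construction stops.
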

\begin{proof}
    W.p. $1 - 1/(\log n)^9$, suppose Lemma~\ref{lem:expandm_max_level} holds. 
    
    Firstly, we prove that there are at most $3$ active vertices on $P_r$.
    Assume for contradiction that there are at least $4$ active vertices on $P_r$. 
    By an induction on round $i \in [0, r]$, Lemma~\ref{lem:potential_sum_decrease}, and the fact that each vertex on $P_0 = P$ has potential $1$ and the potentials are non-negative (Definition~\ref{def:path_potential}), we get 
    \begin{equation} \label{eq:P_r_sum}
        \sum_{v \text{~is~active~and~on~} P_r} \Phi_r(v) \le \sum_{v \text{~is~on~} P_r} \Phi_r(v) \le b^7 .
    \end{equation}
    Since $P_r$ has at least $4$ active vertices, by Lemma~\ref{lem:active_vertex_potential_bound} we get for any active vertex $v$ on $P_r$,
    \begin{equation*}
        \Phi_r(v) \ge 2^{r - \ell(v)} \ge 2^{9 \log b - 1 - 100 \log\log n} = 2^{8 \log b - 1} > b^7 ,
    \end{equation*}
    where we have used $b \ge (\log n)^{100}$ and $\ell(v) \le 100 \log\log n$ for all $v \in V(H)$. 
    This contradicts with Equation~(\ref{eq:P_r_sum}). 
    
    Secondly, we show that there are at most $9 \log b$ passive vertices on $P_r$. This is because in each of the $9\log b$ rounds of \textsc{PathConstruction}$(H, P)$, we mark at most $1$ vertex from active to passive (Phase~\ref{phase_3}). Since all skipped vertices are removed from the path (Phase~\ref{phase_4}), we obtain that the number of vertices on $P_r$ is at most $9 \log b + 3 \le 10 \log b$.
\end{proof}

Now we reduce the length of any shortest path with length at most $b^7$ to $10 \log b$. 
To reduce its length even further, we need an addition path construction.

\begin{framed}
\noindent \textsc{PassivePathConstruction}$(H, P)$:
    
    
    Let $r= 9 \log b - 1$.
    For any integer $i \in [0, r]$, let path $P_i$ be defined as in \textsc{PathConstruction}$(H, P)$.
    
    For any integer $i \in [r, 20\log b]$, 
    given path $P_i$ with at least $3$ vertices at the beginning of round $i$, \textsc{Expand-Maxlink}$(H)$ constructs $P_{i+1}$ by the following $3$ phases:
    \begin{enumerate}
        \item The \textsc{Alter}$(E)$ in Step~\ref{alg:expandm:s1} replaces each vertex $v$ on $P_i$ by $v' \coloneqq v.p$ to get path $P_{i,1}$. For any $v'$ on $P_{i,1}$, let $\underline{v'}$ be on $P_{i}$ such that $\underline{v'}.p = v'$. \label{p_phase_1}
        \item After Step~\ref{alg:expandm:s5}, let $v' \coloneqq P_{i,1}(0)$, i.e., the first vertex on $P_{i, 1}$.
            If $\underline{v'}$ is a root at the end of round $i$ and does not increase level during round $i$ then: if the current graph contains edge $(v', P_{i,1}(2))$ then remove $P_{i,1}(1)$. After removing all such vertices from $P_{i,1}$ we obtain $P_{i,2}$. \label{p_phase_2}
        \item The \textsc{Alter}$(E)$ in Step~\ref{alg:expandm:s6} replaces each vertex $v$ on $P_{i,2}$ by $v.p$ to get path $P_{i+1}$. \label{p_phase_3}
    \end{enumerate}
    For any vertex $v$ on $P_i$ that is replaced by $v'$ in Phase~\ref{p_phase_1}, if $v'$ is not removed in Phase~\ref{p_phase_2}, then let $\overline{v}$ be the vertex replacing $v'$ in Phase~\ref{p_phase_3}, and call $\overline{v}$ the \emph{corresponding vertex} of $v$ in round $i+1$.
\end{framed}

\begin{lemma} \label{lem:expandm_path_length}
    Given graph $H$ and integer $b \ge (\log n)^{100}$. 
    Let $R = 20 \log b$. 
    W.p. $1 - 1/(\log n)^8$, for any shortest path $P$ in $H$ such that there are at most $b^7$ vertices on $P$, there are at most $2$ vertices on $P_R$, where $P_R$ is defined in \textsc{PassivePathConstruction}$(H, P)$.
\end{lemma}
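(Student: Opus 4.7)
The plan is to combine Lemma~\ref{lem:active_passive_length} with a geometric shrinking argument for the additional rounds of \textsc{PassivePathConstruction}. Conditioning on the event from Lemma~\ref{lem:active_passive_length}, after round $r = 9\log b - 1$ we have $|P_r| \le 10\log b$. It then remains to show that over the $R - r = 11\log b + 1$ subsequent rounds of \textsc{PassivePathConstruction}, the path shrinks to at most $2$ vertices with high probability, which is more than enough budget since a simple halving would already do the job in $O(\log\log b)$ rounds.

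The core shrinking step in a round $i \in [r, R)$ is Phase~\ref{p_phase_2}: I would show that for each even-indexed vertex $v' = P_{i,1}(2j)$ satisfying the condition that $\underline{v'}$ is a root at the end of round $i$ and does not increase level during round $i$, the odd-indexed middle vertex $P_{i,1}(2j+1)$ is necessarily removed. The reason is Lemma~\ref{lem:connect_2}: $P_{i,1}(2j+2)$ is the post-\textsc{Alter} image of a vertex in $N^*(N^*(\underline{v'}))$, so it lies in $\mathcal{H}(\underline{v'}) = \mathcal{H}(v')$ after Step~\ref{alg:expandm:s5}, which adds the edge $(v', P_{i,1}(2j+2))$ to the current graph. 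Thus whenever the condition on $\underline{v'}$ holds, the middle vertex is guaranteed to be removed, and every other vertex of $P_{i,1}$ disappears. Under the (very likely) event that the condition holds at every relevant index, we get $|P_{i+1}| \le \lceil |P_{i,1}|/2 \rceil + 1$, so starting from $|P_r| \le 10\log b$, after at most $\lceil \log_2(10\log b)\rceil = O(\log\log b)$ rounds the path shrinks to $\le 2$ vertices, and it cannot grow thereafter (no internal vertex to remove).

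It remains to bound the probability that the condition on $\underline{v'}$ fails. A root increases its level in a round with probability $\beta(v)^{-0.06} \le (\log n)^{-4.8}$ since $\beta(v) \ge \beta_1 = (\log n)^{80}$. The ``is still a root at the end of round $i$'' half of the condition can be controlled via the same case analysis used in the proof of Lemma~\ref{lem:active_vertex_potential_bound}: a vertex becomes a non-root through \textsc{Maxlink} only when a strictly higher-level vertex appears in $N^*(\cdot).p$, which along a shortest path likewise requires a level mismatch that is itself rare given the uniform initial levels and the small level-increase probability per round. Taking a union bound over the at most $10\log b$ path vertices per round and the $11\log b + 1$ rounds yields a total failure probability of $O((\log b)^2/(\log n)^{4}) \le 1/(\log n)^8$, combining with the $1/(\log n)^9$ bound from Lemma~\ref{lem:active_passive_length}.

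The main obstacle will be formalizing the ``still a root'' half of the condition, since unlike the level-increase event (which is a clean independent coin), this is a global property of \textsc{Maxlink} acting on $N^*(v')$ and on what was hashed into $\mathcal{H}(v')$ and $\mathcal{H}(w)$. My expectation is that a careful mirror of the case split in Lemma~\ref{lem:active_vertex_potential_bound} (no collisions in $\mathcal{H}(v')$ because $\underline{v'}$ did not go dormant, so $\underline{v'}$ could only link upward via a strictly higher-level neighbor, which does not appear unless that neighbor also survived with higher level) will give the bound, after which the arithmetic above finishes the proof.
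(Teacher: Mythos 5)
Your proposal has a fundamental misreading of \textsc{PassivePathConstruction}. You claim that in Phase~\ref{p_phase_2}, \emph{every} even-indexed vertex $P_{i,1}(2j)$ (satisfying the root/no-level-increase condition) removes its successor $P_{i,1}(2j+1)$, yielding a geometric halving. But the algorithm as written only considers $v' = P_{i,1}(0)$, the \emph{first} vertex of the path, so at most one vertex is removed per round. Your ``$|P_{i+1}| \le \lceil |P_{i,1}|/2\rceil + 1$'' shrinking step is not the mechanism of the algorithm, and the $O(\log\log b)$-round timeline built on it does not apply.

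Once the per-round progress is corrected to ``remove one vertex per successful round,'' you still need a way to account for the rounds in which the condition on $\underline{v'}$ fails. Your plan is to bound the failure probability and union-bound it away, but this route has two problems. First, the arithmetic does not close: with $\beta(v)^{-0.06} \le (\log n)^{-4.8}$ per level-increase coin, $b$ can be as large as $\poly(n)$ so $\log b = \Theta(\log n)$, and a union bound over $\Theta(\log b)$ rounds (let alone your $\Theta((\log b)^2)$ events) yields $\Omega((\log n)^{-3.8})$, not $\le (\log n)^{-8}$. Second, you explicitly flag that the ``still a root at the end of round $i$'' half of the condition is not an independent coin and you do not know how to bound it --- this is not a peripheral detail but the heart of what needs to be shown. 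The paper sidesteps both issues entirely with a \emph{deterministic} potential argument (conditioned only on the level-cap event of Lemma~\ref{lem:expandm_max_level}): in any round, either the successor of $P_{i,1}(0)$ is removed (via Lemma~\ref{lem:connect_2}), or the level of the corresponding frontier vertex strictly increases --- precisely the dichotomy established in the first four paragraphs of Lemma~\ref{lem:active_vertex_potential_bound}. Since levels never exceed $100\log\log n$ and $P_r$ has at most $10\log b$ vertices, in $R = 20\log b \ge r + 10\log b + 100\log\log n$ rounds the path is forced down to $\le 2$ vertices, with no probability calculation needed beyond the two events already inherited from Lemmas~\ref{lem:active_passive_length} and~\ref{lem:expandm_max_level}. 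I would replace the probabilistic accounting with this budget argument, which is both simpler and what the definition of \textsc{PassivePathConstruction} actually supports.
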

\begin{proof}
    By Lemma~\ref{lem:active_passive_length}, at the beginning of round $r = 9 \log b - 1$, w.p. $1 - 1/(\log n)^9$, $P_r$ has length $10 \log b$. We shall condition on this happening and apply a union bound at the end of the proof. 
    
    In any round $i \ge r$, for any path $P_i$ with $|P_i| \ge 3$, consider the first vertex $v' = P_{i,1}(0)$ on $P_{i,1}$ (defined in \textsc{PassivePathConstruction}$(H, P)$). 
    If $\underline{v'}$ is a non-root or increases its level during round $i$, then by the first $4$ paragraphs in the proof of Lemma~\ref{lem:active_vertex_potential_bound}, it must be $\ell(\overline{v'}) \ge \ell(\underline{v'}) + 1$.
    If this is not the case, by Lemma~\ref{lem:connect_2}, there is an edge between $v'$ and $P_{i,1}(2)$ in the graph after Step~\ref{alg:expandm:s5}, which means the successor $P_{i,1}(1)$ of $v'$ on $P_{i,1}$ is removed in Phase~\ref{p_phase_2} of \textsc{PassivePathConstruction}$(H, P)$.
    Therefore, the number of vertices on $P_{i+1}$ is $1$ less than $P_i$ if $\ell(\overline{v'}) = \ell(\underline{v'})$ as the level of a corresponding vertex cannot be lower (by examining \textsc{Maxlink}$(V)$). 
    By Lemma~\ref{lem:expandm_max_level}, w.p. $1 - 1/(\log n)^9$, the level of any vertex cannot be higher than $100 \log\log n$.
    As $P_r$ has at most $10 \log b$ vertices, in round $R = 20 \log b \ge r + 10 \log b + 100 \log\log n$, the number of vertices on any $P_R$ is at most $2$.
    The lemma follows from a union bound. 
\end{proof}

\begin{lemma} \label{lem:truncate_path_length}
    Given graph $H$ and integer $b \ge (\log n)^{100}$. 
    For any shortest path $P$ in $H$ such that there are at most $b^7$ vertices on $P$, let $P_R$ be defined in \textsc{PassivePathConstruction}$(H, P)$. 
    Let path $P_{R+1}$ be defined by replacing each vertex $v$ on $P_R$ with $v.p$ in Step~\ref{alg:truncate:s3} of \textsc{Densify}$(H, b)$, and let path $P_{\text{final}}$ be defined by replacing each vertex $v$ on $P_{R+1}$ with $v.p$ in Step~\ref{alg:truncate:s6} of \textsc{Densify}$(H, b)$. 
    W.p. $1 - 1/(\log n)^8$, there are at most $2$ vertices on $P_{\text{final}}$.
\end{lemma}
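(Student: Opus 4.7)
The plan is to build directly on Lemma~\ref{lem:expandm_path_length}, which already gives $|P_R| \le 2$ with probability $1 - 1/(\log n)^8$ for any shortest path $P$ in $H$ with at most $b^7$ vertices. The only thing left to argue is that the additional transformations in Step~\ref{alg:truncate:s3} and Step~\ref{alg:truncate:s6} of \textsc{Densify}$(H, b)$ cannot grow the path beyond $2$ vertices. I would condition on the event from Lemma~\ref{lem:expandm_path_length} throughout, and the same probability bound will carry over.

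The key observation is that both the path update in Step~\ref{alg:truncate:s3} (repeated \textsc{Shortcut}$(V(H))$ followed by \textsc{Alter}$(E(H))$) and the path update in Step~\ref{alg:truncate:s6} (\textsc{Alter}$(E_{\text{close}})$) are defined by replacing each vertex $v$ on the current path with $v.p$. This is a monotone operation on path length: the new sequence is obtained by applying a (possibly many-to-one) map vertex-by-vertex, so the number of \emph{distinct} vertices on the resulting path can only decrease or stay the same (two consecutive vertices may collapse to the same parent, but a single vertex cannot split). Therefore $|P_{R+1}| \le |P_R|$ and $|P_{\text{final}}| \le |P_{R+1}|$.

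Combining these observations, conditioned on the event of Lemma~\ref{lem:expandm_path_length}, we have $|P_{\text{final}}| \le |P_{R+1}| \le |P_R| \le 2$. Since that event holds with probability $1 - 1/(\log n)^8$, the claim follows immediately.

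I do not expect a genuine obstacle here: the statement is essentially a wrap-up of Lemma~\ref{lem:expandm_path_length}, using only the trivial monotonicity of the parent-replacement operation on paths. The only subtlety to note (and it does not affect the bound) is that Step~\ref{alg:truncate:s5} runs the algorithm of Theorem~\ref{thm:ltz_main}, which may further modify the labeled digraph before the final \textsc{Alter} is applied; but since $P_{\text{final}}$ is defined by taking $v.p$ for each $v$ on $P_{R+1}$ in that final \textsc{Alter}, the monotonicity argument applies to whatever labeled digraph is in effect at that moment.
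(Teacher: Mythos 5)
Your proposal is correct and follows essentially the same route as the paper: invoke Lemma~\ref{lem:expandm_path_length} to get $|P_R| \le 2$, then observe that each subsequent step only replaces path vertices by their parents (with Step~\ref{alg:truncate:s5} leaving the path untouched), which cannot increase the number of vertices on the path. The paper's proof makes the identical sequence of observations, so no further comment is needed.
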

\begin{proof}
    By Lemma~\ref{lem:expandm_path_length}, there are at most $2$ vertices on path $P_R$ w.p. $1 - 1/(\log n)^8$. 
    The \textsc{Alter}$(E(H))$ in Step~\ref{alg:truncate:s3} of \textsc{Densify}$(H, b)$ replaces vertices by their parents in each path, so $P_{R+1}$ has at most $2$ vertices. 
    Since Step~\ref{alg:truncate:s5} of \textsc{Densify}$(H, b)$ only alters edges locally without changing $E_{\text{close}}$, the path $P_{R+1}$ is unchanged. The  following \textsc{Alter}$(E_{\text{close}})$ in Step~\ref{alg:truncate:s6} replaces each vertex on $P_{R+1}$ by its parent. 
    As a result, $P_{\text{final}}$ has at most $2$ vertices.
\end{proof}

\subsection{Increasing Vertex Degrees} \label{subsec:increase}

In this section, we present the detailed algorithm \textsc{Increase}$(V, E, b)$ and prove that it increase the minimum degree of the current graph to at least $b$. 

\subsubsection{Algorithmic Framework of \textsc{Increase}} \label{subsubsec:increase_AF}

During the execution of \textsc{Increase}$(V, E, b)$, the subroutine \textsc{Densify}$(H,b)$ calls \textsc{Expand-Maxlink}$(H)$ (Step~\ref{alg:truncate:s1} of \textsc{Densify}$(H,b)$) for $R = 20 \log b$ times, where each \textsc{Expand-Maxlink}$(H)$ calls \textsc{Alter}$(E)$ for $2$ times (Step~\ref{alg:expandm:s1} and Step~\ref{alg:expandm:s6} of \textsc{Expand-Maxlink}$(H)$). 
\textsc{Increase}$(V, E, b)$ needs to use addition information of the execution, so we modify it as follows: 
\begin{itemize}
    \item In Step~\ref{alg:truncate:s1} of \textsc{Densify}$(H,b)$, pass the round id $i \in [0, R)$ to \textsc{Expand-Maxlink} as an additional parameter.
    \item Within \textsc{Expand-Maxlink}$(H, i)$, in Step~\ref{alg:expandm:s1}, pass the integer $2i$ to \textsc{Alter}$(E)$ as an additional parameter.
    \item Within \textsc{Expand-Maxlink}$(H, i)$, in Step~\ref{alg:expandm:s6}, pass the integer $2i+1$ to \textsc{Alter}$(E(V))$ as an additional parameter.
    \item For each \textsc{Alter}$(E, j)$ called within \textsc{Expand-Maxlink}$(H, i)$, execute $v.p_j = v.p$ for each vertex $v \in V(E)$.
    \item In the \textsc{Alter}$(E(H))$ within Step~\ref{alg:truncate:s3} of \textsc{Densify}$(H,b)$, execute $v.p_{2R} = v.p$ for each vertex $v \in V(H)$.
    \item In the \textsc{Alter}$(E_{\text{close}})$ within Step~\ref{alg:truncate:s6} of \textsc{Densify}$(H,b)$, execute $v.p_{2R+1} = v.p$ for each vertex $v \in V(E_{\text{close}})$. 
    The field $p_j$ for all $j \in [0, 2R+1]$ is global.
\end{itemize}

All $v.p_j$ for all $v \in V$ and all $j \in [0, 2R+1]$ can be written in a cell indexed by $(v, j)$ in the public RAM in $O(1)$ using the processor corresponding to $v$, and the total space (and work) used for this part is at most $|V| \cdot O(R) \le n/b^{10} \cdot O(\log n)$. 
The modified \textsc{Alter}$(E)$ still takes $O(|E|)$ work and $O(1)$ time, so Lemma~\ref{lem:truncate_time_work} still holds. 

\begin{definition} \label{def:iterative_parent}
    For any vertex $v \in V$, define $v.p^{(0)} \coloneqq v.p_0 = v$, 
    and for any integer $j \in [2R+1]$, define $v.p^{(j)} \coloneqq (v.p^{(j-1)}).p_j$.
\end{definition}

\begin{framed}
\noindent \textsc{Increase}$(V, E, b)$:
\begin{enumerate}
    \item Subgraph $H = (V, E') = \textsc{Build}(V, E, b)$. \label{alg:increase:s1}
    \item $(V, E_{\text{close}}) = \textsc{Densify}(H, b)$. \label{alg:increase:s2}
    \item For each vertex $v \in V$: assign a block of $b^9$ processors, which will be used as an indexed hash table $\mathcal{H}'(v)$ of size $b^9$. \label{alg:increase:s3}
    \item Choose a random hash function $h: [n] \to [b^9]$. For each vertex $v \in V$: compute $u \coloneqq v.p^{(2R+1)}$, then write $v$ into the $h(v)$-th cell of $\mathcal{H}'(u)$ and update $v.p = u$. \label{alg:increase:s4}
    \item For each vertex $v \in V$: if the number of vertices in $\mathcal{H}'(v)$ is at least $2b$ then mark $v$ as a \emph{head}. \label{alg:increase:s5}
    \item For each non-loop edge $(v, w) \in E_{\text{close}}$: if $v$ is a head and $w$ is a non-head then $w.p = v$. \label{alg:increase:s6}
    \item \textsc{Shortcut}$(V)$.  \label{alg:increase:s7}
    \item For each vertex $v \in V$: sample $v$ as a \emph{leader} w.p. $1/2$. For each non-loop edge $(v, w) \in E_{\text{close}}$: if $v$ is a leader and $w$ is a non-leader then $w.p.p = v.p$.\label{alg:increase:s8}
    \item \textsc{Shortcut}$(V)$. \label{alg:increase:s9}
    \item \textsc{Alter}$(E)$. \Cliff{Insert edge sampling before Alter in Section 7 to get sublinear work such that we can run $\log n$ instances in parallel to get high probability.}  \label{alg:increase:s10}
\end{enumerate}
\end{framed}


\begin{lemma} \label{lem:increase_work_time}
    W.p. $1 - 1/(\log n)^2$, \textsc{Increase}$(V, E, b)$ runs in $O(\log b)$ time and $O(m + n)$ work on an ARBITRARY CRCW PRAM.
\end{lemma}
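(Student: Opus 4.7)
The plan is to bound the running time and work of each of the ten steps in \textsc{Increase}$(V,E,b)$ separately and then combine them with a union bound. Recall the standing assumption $|V|\le n/b^{10}$ with $b=(\log n)^{100}$, and the bound $R=20\log b$ on the number of rounds used inside \textsc{Densify}. Step~\ref{alg:increase:s1} is directly handled by Lemma~\ref{lem:build_time_work}, giving $O(\log b)$ time and $O(m+n)$ work with probability $1-n^{-8}$, and Step~\ref{alg:increase:s2} is handled by Lemma~\ref{lem:truncate_time_work}, giving $O(\log b)$ time and $(m+n)/(\log n)^3$ work with probability $1-1/(\log n)^3$. An important side consequence of the latter is $|E_{\text{close}}|\le (m+n)/(\log n)^3$ on that event, which will be used to bound Steps~\ref{alg:increase:s6} and~\ref{alg:increase:s8}.

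For Step~\ref{alg:increase:s3} I would use approximate compaction (Lemma~\ref{lem:apx_compaction}) to index the $|V|\le n/b^{10}$ roots, and then allocate each of them a block of $b^9$ processors; the total processor count is $|V|\cdot b^9\le n/b=O(n)$ and the running time is $O(\log^* n)$, with failure probability at most $n^{-9}$. Step~\ref{alg:increase:s5} mirrors Step~\ref{alg:build:s3} of \textsc{Build}: counting occupied cells in each $\mathcal{H}'(v)$ via a binary-tree fan-in inside the block of size $b^9$ costs $O(\log b)$ time and, summed over vertices, $O(|V|\cdot b^9)\le O(n)$ work. Steps~\ref{alg:increase:s7} and~\ref{alg:increase:s9} (\textsc{Shortcut}$(V)$) take $O(1)$ time and $O(|V|)\le O(n)$ work. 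Steps~\ref{alg:increase:s6} and~\ref{alg:increase:s8} iterate once over the edges of $E_{\text{close}}$, costing $O(1)$ time and $O(|E_{\text{close}}|)=O((m+n)/(\log n)^3)$ work. Finally, Step~\ref{alg:increase:s10} (\textsc{Alter}$(E)$) takes $O(1)$ time and $O(m+n)$ work.

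The step I expect to need the most care is Step~\ref{alg:increase:s4}, since it has to evaluate $v.p^{(2R+1)}$ for every $v\in V$, which by Definition~\ref{def:iterative_parent} is a chain of $2R+1=O(\log b)$ parent-field lookups through the global fields $p_0,p_1,\ldots,p_{2R+1}$ that were recorded throughout \textsc{Densify}. My plan is: for each $v$, assign one processor from its block and iteratively compute $v.p^{(j)}=(v.p^{(j-1)}).p_j$ in $j=1,2,\ldots,2R+1$ by reading from the public-RAM cell indexed by $(v.p^{(j-1)},j)$ that was written during the modified \textsc{Alter}. This is $O(R)=O(\log b)$ time and $O(|V|\cdot R)=O(n/b^{10}\cdot \log b)=O(n)$ work. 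The random hash function $h$ can be implemented by letting each vertex independently pick its image in $[b^9]$ in $O(1)$ time, and then each $v$ writes itself into the $h(v)$-th cell of $\mathcal{H}'(u)$ with $u=v.p^{(2R+1)}$; concurrent writes are resolved arbitrarily under the ARBITRARY CRCW model so this takes $O(1)$ time and $O(|V|)$ work.

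Combining all the bounds, the running time is dominated by Steps~\ref{alg:increase:s1}, \ref{alg:increase:s2}, \ref{alg:increase:s4}, \ref{alg:increase:s5}, each of which is $O(\log b)$; the work is $O(m+n)$ from Steps~\ref{alg:increase:s1} and~\ref{alg:increase:s10}, with every other step contributing at most $O((m+n)/(\log n)^3)+O(n)=O(m+n)$. A union bound over the failure events of the invoked lemmas ($1-n^{-8}$ for \textsc{Build}, $1-1/(\log n)^3$ for \textsc{Densify}, $1-n^{-9}$ for the compaction) yields a total failure probability at most $1/(\log n)^2$, which gives the claimed bound. The only real subtlety is arranging the processor allocation so that the edge-set sizes touched in Steps~\ref{alg:increase:s6} and~\ref{alg:increase:s8} are known in advance (via compaction after \textsc{Densify}) and the global parent fields $p_j$ are accessed via constant-time indexed reads, rather than by chasing pointers in the current labeled digraph.
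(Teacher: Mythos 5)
Your proof is correct and follows essentially the same decomposition as the paper's: bound each of the ten steps individually, invoking Lemma~\ref{lem:build_time_work} for Step~\ref{alg:increase:s1}, Lemma~\ref{lem:truncate_time_work} for Step~\ref{alg:increase:s2} (and for the side bound on $|E_{\text{close}}|$), approximate compaction for block assignment, and the iterative $O(R)$-time chain of indexed reads for $v.p^{(2R+1)}$ in Step~\ref{alg:increase:s4}, then union bound. The only cosmetic differences are in the precise $\poly(\log n)$ exponent you cite for $|E_{\text{close}}|$ and the per-vertex work constant in Step~\ref{alg:increase:s5}, neither of which affects the $O(m+n)$ total.
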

\begin{proof}
    By Lemma~\ref{lem:build_time_work}, Step~\ref{alg:increase:s1} takes $O(\log b)$ time and $O(m + n)$ work w.p. $1 - n^{-8}$. 
    By Lemma~\ref{lem:truncate_time_work}, w.p. $1 - 1/(\log n)^3$, Step~\ref{alg:increase:s2} takes $O(\log b)$ time and $(m + n) / (\log n)^3$ work. 
    Using approximate compaction (Lemma~\ref{lem:apx_compaction}), Step~\ref{alg:increase:s3} takes $O(\log^* n)$ time and $O(|V| \cdot b^9)$ work (to notify each indexed processor the block it belongs to) w.p. $1 - n^{-9}$. 
    
    In Step~\ref{alg:increase:s4}, each vertex in $V$ chooses and stores a random number from $[b^9]$, giving function $h$, which takes $O(|V|)$ works and $O(1)$ time. 
    Next, the algorithm computes $u \coloneqq v.p^{(2R+1)}$ for each $v \in V$. Since each $v.p_j$ for each $v \in V$ and each $j \in [0, 2R+1]$ is written in a cell indexed by $(v, j)$ in the public RAM, we can compute $v.p^{(j)}$ ($j \ge 1$) (by Definition~\ref{def:iterative_parent}) for each $v$ using the processor of $v$ in $O(1)$ time, given that $v.p^{(j-1)}$ is stored in the private memory of $v$. Therefore, $u = v.p^{(2R+1)}$ is computed iteratively in $O(R) = O(\log b)$ time. The following hashing and parent update take $O(1)$ time and $O(|V|)$ work. 
    Therefore, Step~\ref{alg:increase:s4} takes $O(|V| \cdot \log b)$ works and $O(\log b)$ time.

    In Step~\ref{alg:increase:s5}, we use the binary tree technique in the proof of Lemma~\ref{lem:build_time_work} to count the number of occupied items in $\mathcal{H}(v)$ for each $v$ in $O(\log b)$ time, which takes $O(\log b)$ time and $O(|V| \cdot \log b)$ work.
    By the proof of Lemma~\ref{lem:truncate_time_work}, $O(|E_{\text{close}}|) \le O(m + n) / (\log n)^4$ w.p. $1 - 1/(\log n)^4$, so Step~\ref{alg:increase:s6} takes $O(m + n) / (\log n)^4$ work and $O(1)$ time w.p. $1 - 1/(\log n)^4$. 
    Step~\ref{alg:increase:s8} takes $O(|V|) + O(m + n) / (\log n)^4$ work and $O(1)$ time w.p. $1 - 1/(\log n)^4$. 
    Step~\ref{alg:increase:s7} and Step~\ref{alg:increase:s9} take $O(|V|)$ work and $O(1)$ time.\footnote{Note that the total work performed from Step~\ref{alg:increase:s2} to Step~\ref{alg:increase:s9} is at most $O(m + n) / (\log n)^3 + O(|V| \cdot b^9) \le (m + n) / (\log n)^{2.5}$ w.p. $1 - 2/(\log n)^{3}$. In \S{\ref{sec:boosting} we will boost the success probability to $1 - 1/\text{poly}(n)$, which is obtained by running $\Theta(\log n)$ instances of Step~\ref{alg:increase:s2} to Step~\ref{alg:increase:s9} in parallel (along with other techniques), so the overall work will be $O(m+n)$.} \label{fn:increase_partial_work_sublinear}} 
    Step~\ref{alg:increase:s10} takes $O(m)$ work and $O(1)$ time.
    Summing up and by a union bound, the lemma follows.
\end{proof}

\subsubsection{Properties of \textsc{Increase}} \label{subsubsec:increase_property}

\begin{lemma} \label{lem:corresponded_non_roots}
    For any vertex $v \in V$, if $v.p^{(2R+1)} \ne v$ then $v$ is a non-root from Step~\ref{alg:increase:s3} to Step~\ref{alg:increase:s7} (inclusive) of \textsc{Increase}$(V, E, b)$.
\end{lemma}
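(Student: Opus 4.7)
The plan is to locate the first index in the recorded parent chain $v.p^{(0)}, v.p^{(1)}, \ldots, v.p^{(2R+1)}$ at which the sequence departs from $v$, and to use that index as a timestamp at which $v$ had already been absorbed into a tree and could never revert.

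First I would let $j^\star \in [1, 2R+1]$ be the smallest index with $v.p^{(j^\star)} \ne v$, which exists because $v.p^{(0)} = v$ and $v.p^{(2R+1)} \ne v$ by hypothesis. Minimality forces $v.p^{(j^\star-1)} = v$, so unrolling Definition~\ref{def:iterative_parent} gives
\begin{equation*}
    v.p^{(j^\star)} \;=\; \bigl(v.p^{(j^\star - 1)}\bigr).p_{j^\star} \;=\; v.p_{j^\star} \;\ne\; v .
\end{equation*}
Since $v.p_{j^\star}$ was written by the instruction $v.p_{j^\star} \coloneqq v.p$ executed during the $j^\star$-th \textsc{Alter} call inside \textsc{Densify}$(H, b)$, the field $v.p$ was already different from $v$ at that exact moment; that is, $v$ was a non-root at that point of the execution.

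The remaining task is to show that $v$ cannot revert to being a root by the end of Step~\ref{alg:increase:s7}. Within \textsc{Densify} the only primitives that touch any $.p$ field are \textsc{Maxlink} and \textsc{Shortcut}: \textsc{Maxlink} writes $v.p = u$ only when $\ell(u) > \ell(v)$, forcing $u \ne v$; and \textsc{Shortcut}'s assignment $v.p = v.p.p$ produces $v.p = v$ only if $v$ lies on a non-loop cycle, which is excluded by the algorithm's invariant that the only cycles in the labelled digraph are self-loops. Hence $v$ is still a non-root at the start of Step~\ref{alg:increase:s3}. Inside Steps~\ref{alg:increase:s3}--\ref{alg:increase:s7} the only assignments to a $v.p$ field come from Step~\ref{alg:increase:s4}, which sets $v.p = v.p^{(2R+1)} \ne v$ by hypothesis; Step~\ref{alg:increase:s6}, which assigns $w.p = v'$ along a non-loop edge $(v', w) \in E_{\text{close}}$ (so $v' \ne w$, and no self-loop assignment is ever made); and Step~\ref{alg:increase:s7}, another \textsc{Shortcut} that preserves non-rootness by the same cycle-free argument.

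The main obstacle is the verification that the cycle-free invariant truly holds going into every \textsc{Shortcut}, so that \textsc{Shortcut} cannot inadvertently make a non-root into a root. This reduces to a routine inspection of \textsc{Maxlink} and the various \textsc{Alter} calls, none of which create non-loop cycles. The rest of the argument is bookkeeping on the $p_j$ fields recorded across all \textsc{Alter} calls.
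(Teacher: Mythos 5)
You take the same high-level route as the paper: single out the first index $j^\star$ at which $v.p^{(\cdot)}$ departs from $v$, read off from the recorded field $v.p_{j^\star}$ that $v$ was already a non-root at the moment of the $j^\star$-th \textsc{Alter} call, and then argue that non-rootness can never be undone through the end of Step~\ref{alg:increase:s7}. The identity $v.p^{(j^\star)} = v.p_{j^\star}$ by minimality is clean, and the case analysis of Steps~\ref{alg:increase:s3}--\ref{alg:increase:s7} (Step~\ref{alg:increase:s4} uses the hypothesis $v.p^{(2R+1)} \ne v$, Step~\ref{alg:increase:s6} assigns only along non-loop edges, Step~\ref{alg:increase:s7} is \textsc{Shortcut}) matches the paper.

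The gap is in the invariance argument inside \textsc{Densify}. You assert that ``within \textsc{Densify} the only primitives that touch any $.p$ field are \textsc{Maxlink} and \textsc{Shortcut},'' but Step~\ref{alg:truncate:s5} of \textsc{Densify} calls the algorithm of Theorem~\ref{thm:ltz_main} to update the labeled digraph, and that subroutine is a black box which also rewrites $.p$ fields. Your level-based argument for \textsc{Maxlink} and your cycle-freeness argument for \textsc{Shortcut} do not automatically transfer to it. This matters: whenever $j^\star \le 2R-1$, the window from the $j^\star$-th \textsc{Alter} to the end of Step~\ref{alg:increase:s2} of \textsc{Increase} passes through Step~\ref{alg:truncate:s5}, and you have not ruled out that $v$ regains root status there. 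The paper closes exactly this hole by appealing to the guarantee from \cite{liu2020connected} that a non-root is never made a root during \textsc{Expand-Maxlink} or during Step~\ref{alg:truncate:s5} of \textsc{Densify}; you should either cite that guarantee or include that call in your case analysis. With that patch, the rest of your argument goes through.
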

\begin{proof}
    Let $j \in [0, 2R+1]$ be minimum integer such that $v.p^{(j)} = (v.p^{(j-1)}).p_j \ne v$ (Definition~\ref{def:iterative_parent}), 
    then $v.p$ is updated to a vertex $w \ne v$ within the call to \textsc{Expand-Maxlink}$(H)$ in round $\lfloor j/2 \rfloor$ of Step~\ref{alg:truncate:s1} of \textsc{Densify}$(H, b)$ (if $j \le 2R-1$) or within Step~\ref{alg:truncate:s5} of \textsc{Densify}$(H, b)$ (if $j = 2R+1$), all within Step~\ref{alg:increase:s2} of \textsc{Increase}$(V, E, b)$. 
    Recall that a non-root can never become a root during \textsc{Expand-Maxlink}$(H)$ or during Step~\ref{alg:truncate:s5} of \textsc{Densify}$(H, b)$ (\cite{liu2020connected}), so $v$ is a non-root right after Step~\ref{alg:increase:s2} of \textsc{Increase}$(V, E, b)$. 
    The parent updates in Step~\ref{alg:increase:s4} of \textsc{Increase}$(V, E, b)$ cannot make $v$ a root since $v.p^{(2R+1)} \ne v$. 
    The \textsc{Shortcut}$(V)$ in Step~\ref{alg:increase:s7} cannot make a non-root into a root. 
    Moreover, the parent updates in Step~\ref{alg:increase:s6} cannot make a non-root into a root either cause it only uses non-loop edges. 
    As a result, $v$ must be a non-root from Step~\ref{alg:increase:s3} to Step~\ref{alg:increase:s7} (inclusive) of \textsc{Increase}$(V, E, b)$.
\end{proof}

\begin{lemma} \label{lem:corresponding_roots}
    W.p. $1 - 1/(\log n)^8$, 
    for any vertex $u \in V$, if there exists vertex $v \in V$ such that $v.p^{(2R+1)} = u$ then $u$ is a root before Step~\ref{alg:increase:s5} of \textsc{Increase}$(V, E, b)$.
\end{lemma}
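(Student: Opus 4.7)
The plan is to condition on the event of Lemma~\ref{lem:truncate:flat}, which holds w.p.\ $1 - 1/(\log n)^8$, and prove the statement deterministically from there. Throughout I will use the fact that, during \textsc{Expand-Maxlink} and during Step~\ref{alg:truncate:s5} of \textsc{Densify}, a non-root never becomes a root -- exactly the fact already invoked in the proof of Lemma~\ref{lem:corresponded_non_roots}. This is the only nonobvious ingredient, and once it is in hand the argument is mechanical.

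First, I would identify $u$ as a root at the end of \textsc{Densify}. By Definition~\ref{def:iterative_parent}, $u = v.p^{(2R+1)} = w.p_{2R+1}$ where $w \coloneqq v.p^{(2R)}$, and $p_{2R+1}$ is the snapshot of the $.p$ field taken right after the \textsc{Alter}$(E_{\text{close}})$ in Step~\ref{alg:truncate:s6} of \textsc{Densify}$(H, b)$. By Lemma~\ref{lem:truncate:flat}, at that moment every vertex in $V(H)$ is either a root or a child of a root; hence $w.p_{2R+1}$ is a root, i.e.\ $u$ is a root at the end of \textsc{Densify}.

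Second, I would push this backwards in time. Since a non-root never becomes a root during \textsc{Densify}, any vertex that is a root at the end of \textsc{Densify} must have been a root during the whole execution of \textsc{Densify}. In particular, $u.p_j = u$ for every index $j \in [0, 2R+1]$ at which the $p_j$ snapshot is taken. A trivial induction on $j$ using Definition~\ref{def:iterative_parent} then gives $u.p^{(j)} = u$ for all $j$, and in particular $u.p^{(2R+1)} = u$.

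Finally, I would track what happens between the end of Step~\ref{alg:increase:s2} and the start of Step~\ref{alg:increase:s5}. Step~\ref{alg:increase:s3} only assigns processors and does not touch the parent field; Step~\ref{alg:increase:s5} only inspects the hash tables. The only potentially dangerous step is Step~\ref{alg:increase:s4}, in which each vertex $x \in V$ performs $x.p \gets x.p^{(2R+1)}$. Applied to $x = u$, this assignment reads $u.p \gets u.p^{(2R+1)} = u$ by the previous paragraph, leaving $u$ a root. Hence $u$ is a root immediately before Step~\ref{alg:increase:s5}, which, after a union bound with the failure event of Lemma~\ref{lem:truncate:flat}, yields the claimed probability $1 - 1/(\log n)^8$. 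The only step that requires any care is the backwards monotonicity (a non-root staying a non-root); everything else is bookkeeping through the $p_j$ snapshots.
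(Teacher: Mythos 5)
Your proof is correct and takes essentially the same route as the paper's: condition on Lemma~\ref{lem:truncate:flat} to conclude that $u$, being the parent recorded at the $p_{2R+1}$ snapshot, is a root at the end of \textsc{Densify}, then use the fact that a non-root never becomes a root to deduce $u.p^{(2R+1)}=u$, so the assignment in Step~\ref{alg:increase:s4} leaves $u$ a root before Step~\ref{alg:increase:s5}. The only cosmetic difference is that the paper obtains $u.p^{(2R+1)}=u$ by contradiction through Lemma~\ref{lem:corresponded_non_roots}, whereas you re-derive that monotonicity argument inline (root at the end implies root at every snapshot).
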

\begin{proof}
    By Definition~\ref{def:iterative_parent}, $u = (v.p^{2R}).p_{2R+1}$, where $p_{2R+1}$ is the parent filed in Step~\ref{alg:truncate:s6} of \textsc{Densify}$(H,b)$. 
    By Lemma~\ref{lem:truncate:flat}, w.p. $1 - 1/(\log n)^8$, at the end of \textsc{Densify}$(H, b)$, for any vertex $v \in V$, $v$ is either a root or a child of a root, which also holds in Step~\ref{alg:truncate:s6} of \textsc{Densify}$(H,b)$. 
    Since $u$ is a parent of another vertex at that time, $u$ must be a root. 
    So $u$ is a root right after \textsc{Densify}$(H, b)$ (Step~\ref{alg:increase:s2} of \textsc{Increase}$(V, E, b)$).
    Note that $u$ cannot update its parent to a vertex $w \ne u$ in Step~\ref{alg:increase:s4} of \textsc{Increase}$(V, E, b)$. 
    Because otherwise, $u.p^{(2R+1)} = w \ne u$, making $u$ a non-root after Step~\ref{alg:increase:s2} by Lemma~\ref{lem:corresponded_non_roots}, a contradiction. 
    Therefore, $u$ remains to be a root after Step~\ref{alg:increase:s4} of \textsc{Increase}$(V, E, b)$.
\end{proof}

\begin{lemma} \label{lem:increase_is_contraction}
    Given input graph $G = (V, E)$ and parameter $b \ge (\log n)^{100}$, at the end of \textsc{Increase}$(V, E, b)$, w.p. $1 - 1/(\log n)^7$, all vertices in $V$ must be roots or children of roots and both ends of any edge in $E$ must be roots.
\end{lemma}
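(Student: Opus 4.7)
The plan is to trace \textsc{Increase}$(V, E, b)$ step by step, maintaining the invariant that the labeled digraph is a forest of flat trees whose roots include the current vertices of the graph, and that every edge of $E_{\mathrm{close}}$ is adjacent to two roots. By Lemma~\ref{lem:truncate:flat}, this invariant holds after Step~\ref{alg:increase:s2} (the call to \textsc{Densify}$(H, b)$) with probability at least $1 - 1/(\log n)^8$, and I condition on this good event.

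For Steps~\ref{alg:increase:s3}--\ref{alg:increase:s4}, I would argue that since the trees are already flat, the iterated pointer $v.p^{(2R+1)}$ from Definition~\ref{def:iterative_parent} coincides with the current parent of $v$ at the end of \textsc{Densify}; hence Step~\ref{alg:increase:s4}'s assignment $v.p = u$ does not disturb the tree structure, and by Lemma~\ref{lem:corresponding_roots} every head marked in Step~\ref{alg:increase:s5} is a root. In Step~\ref{alg:increase:s6}, heads never appear on the left-hand side of a parent update, so they remain roots; each non-head root $w$ that becomes a child of a head $v$ brings its depth-$1$ subtree with it, producing a depth-$2$ tree rooted at $v$. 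The single \textsc{Shortcut}$(V)$ of Step~\ref{alg:increase:s7} then promotes every depth-$2$ vertex to depth~$1$, so the trees are again flat, with heads as roots (along with any non-head roots that received no head neighbor).

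The main obstacle is Steps~\ref{alg:increase:s8}--\ref{alg:increase:s9}. In Step~\ref{alg:increase:s8}, the write $w.p.p = v.p$ places a value $v.p$ (a root, by flatness) into the cell $(w.p).p$ (also a root, by flatness), thus producing a functional graph on the old root set in which each old root has out-degree at most one. A single \textsc{Shortcut}$(V)$ in Step~\ref{alg:increase:s9} collapses two-hop pointers to one-hop, so to conclude that the trees are flat I must show that this functional graph has depth at most two. The plan is to exploit the independent Bernoulli$(1/2)$ leader sampling: an update $r \to r'$ on two old roots requires a specific cross-tree leader/non-leader edge pair, and any would-be depth-$3$ chain forces three such independent events compatible with a single leader assignment; a careful probabilistic argument, together with a union bound over all old roots, shows that the depth is at most two with probability at least $1 - 1/(\log n)^8$. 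This is the delicate technical step; once it is in place, the flat-tree invariant is re-established after Step~\ref{alg:increase:s9}.

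Finally, Step~\ref{alg:increase:s10} calls \textsc{Alter}$(E)$, replacing every edge $(u,v) \in E$ by $(u.p, v.p)$; by the flat-tree invariant just maintained, $u.p$ and $v.p$ are roots, which is exactly the second conclusion of the lemma. A union bound over the failure probabilities used in each of the steps above (the $1/(\log n)^8$ from Lemma~\ref{lem:truncate:flat}, from Lemma~\ref{lem:corresponding_roots}, and from the chain-depth argument) yields the stated $1 - 1/(\log n)^7$ success probability.
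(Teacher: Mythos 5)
Your proposal matches the paper's structure for Steps 1--7 (condition on Lemma~\ref{lem:truncate:flat}, use Lemmas~\ref{lem:corresponded_non_roots} and~\ref{lem:corresponding_roots} to argue Step~\ref{alg:increase:s4} preserves flatness, then follow Steps~\ref{alg:increase:s6}--\ref{alg:increase:s7}). One small discrepancy: the paper never asserts $v.p^{(2R+1)} = v.p$ at the end of \textsc{Densify}; it only uses the weaker facts from Lemmas~\ref{lem:corresponded_non_roots} and~\ref{lem:corresponding_roots} (that $v.p^{(2R+1)}$ is a root, and that $v$ is a non-root whenever $v.p^{(2R+1)} \ne v$), which suffice to conclude depth at most one after Step~\ref{alg:increase:s4}. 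Your stronger coincidence claim is not established anywhere in the paper and would require a separate argument.

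The substantive departure is Steps~\ref{alg:increase:s8}--\ref{alg:increase:s9}. The paper's proof is \emph{deterministic} and terse here: it states that after Step~\ref{alg:increase:s8} each vertex in $V$ is ``a root, a child of a root, or a grandchild of a root, because $w.p$ and $v.p$ are both roots,'' and then a single \textsc{Shortcut} flattens the trees. You instead propose to bound the depth of the functional graph on old roots via a probabilistic argument that exploits the independent $\mathrm{Bernoulli}(1/2)$ leader sampling, with a union bound over chains. This is a genuinely different route: the paper does not invoke the randomness of the leader sampling anywhere in this lemma. (You correctly identify that the paper's terse ``both roots'' justification does not on its face rule out chains $r_0 \to r_1 \to r_2$ in the functional graph -- this is a worry worth raising.)

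However, your proposed fix has two concrete gaps. First, the target is wrong by one: for the single \textsc{Shortcut} in Step~\ref{alg:increase:s9} to flatten the trees, the functional graph on old roots must have chain length at most \emph{one} (a disjoint union of stars), not depth at most two as you aim for. A chain $r_0 \to r_1 \to r_2$ with $r_2$ a sink puts the old children of $r_0$ at depth three, which one \textsc{Shortcut} reduces only to depth two, not one. Second, the probabilistic argument you sketch is not supported: the events underlying consecutive links in a chain $r_{i}\to r_{i+1}$ both involve vertices in $r_{i+1}$'s tree (a leader for the incoming link and a non-leader for the outgoing link), so they are not cleanly independent, and a union bound over a polynomial number of candidate chains weighted only by $(1/2)^{O(1)}$ does not give a $1-1/(\log n)^8$ failure bound. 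The step you flag as ``the delicate technical step'' is thus left genuinely unresolved in your proposal; the paper's own argument, while different, is not the argument you need to fill it.
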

\begin{proof}
    After Stage $1$, all trees are flat and both ends of any edge in $E$ are roots (Lemma~\ref{lem:alg:reduce_correctness}). 
    By Lemma~\ref{lem:truncate:flat},
    w.p. $1 - 1/(\log n)^8$, at the end of \textsc{Densify}$(H, b)$ (Step~\ref{alg:increase:s2} of \textsc{Increase}$(V, E, b)$), for any vertex $v \in V$, $v$ is either a root or a child of a root and both ends of any edge in $E_{\text{close}}$ are roots. We shall condition on this. 
    
    By Lemma~\ref{lem:corresponding_roots}, w.p. $1 - 1/(\log n)^8$, if $v.p^{(2R+1)} = v$ then $v$ is a root before Step~\ref{alg:increase:s4} of \textsc{Increase}$(V, E, b)$ and continues to be a root after this step.
    By Lemma~\ref{lem:corresponded_non_roots} and Lemma~\ref{lem:corresponding_roots}, w.p. $1 - 1/(\log n)^8$, if $v$ updates its parent to a vertex $u \ne v$ in Step~\ref{alg:increase:s4} of \textsc{Increase}$(V, E, b)$, then $v$ is a non-root and $u$ is a root before Step~\ref{alg:increase:s4}, which means $v$ is a child of a root and $u$ is a root before executing $v.p = u$ by the previous paragraph. 
    Therefore, each vertex in $V$ is either a root or a child of a root after Step~\ref{alg:increase:s4} of \textsc{Increase}$(V, E, b)$.
    
    In Step~\ref{alg:increase:s6} of \textsc{Increase}$(V, E, b)$, a non-head updates its parent to be a head if possible, which makes each vertex in $V$ a root, a child of a root, or a grandchild of a root. 
    The \textsc{Shortcut}$(V)$ in Step~\ref{alg:increase:s7} makes all vertices in $V$ roots or children of roots.
    
    In Step~\ref{alg:increase:s8}, $w.p.p$ is updated to $v.p$ if possible, which makes each vertex in $V$ a root, a child of a root, or a grandchild of a root, because $w.p$ and $v.p$ are both roots by the previous paragraph.  
    The \textsc{Shortcut}$(V)$ in Step~\ref{alg:increase:s9} makes all vertices in $V$ roots or children of roots. 
    
    Now both ends of any edges in $E$ are roots or children of roots. 
    So, the \textsc{Alter}$(E)$ in Step~\ref{alg:increase:s10} makes all ends of all edges in $E$ roots by the previous paragraph.
\end{proof}

\begin{lemma} \label{lem:corresponded_non_roots_final}
    W.p. $1 - 1/(\log n)^7$, 
    for any vertex $v \in V$, if $v.p^{(2R+1)} \ne v$ then $v$ is a non-root from Step~\ref{alg:increase:s3} till the end of \textsc{Increase}$(V, E, b)$.
\end{lemma}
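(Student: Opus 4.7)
The plan is to extend Lemma~\ref{lem:corresponded_non_roots}, which already shows that $v$ is a non-root from Step~\ref{alg:increase:s3} through Step~\ref{alg:increase:s7}, by checking that none of Step~\ref{alg:increase:s8}, Step~\ref{alg:increase:s9}, or Step~\ref{alg:increase:s10} can promote $v$ back to a root. I will condition on the high-probability event (already established inside the proof of Lemma~\ref{lem:increase_is_contraction}) that right before Step~\ref{alg:increase:s8}, every vertex in $V$ is either a root or a child of a root. By that proof this holds with probability $1 - 1/(\log n)^7$, matching the target bound, so a union bound at the end will finish things.

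The key structural observation is that every parent-cell write performed during Steps~\ref{alg:increase:s8} and~\ref{alg:increase:s9} deposits a value that is a \emph{root at the moment of the write}. Indeed, in Step~\ref{alg:increase:s8}, for every non-loop edge $(u,w) \in E_{\text{close}}$ with $u$ a leader and $w$ a non-leader, the update $w.p.p = u.p$ writes $u.p$ into the cell $(w.p).p$; by the invariant, $u$ is a root or a child of a root, so $u.p$ is a root. Under ARBITRARY CRCW semantics, whichever concurrent write to a given cell $x.p$ succeeds, its stored value is some such $u.p$, hence a root. Therefore a non-root $x$ at the beginning of Step~\ref{alg:increase:s8} can never satisfy $x.p = x$ after Step~\ref{alg:increase:s8}: that would force some leader $u$ with $u.p = x$, but $u.p$ must be a root, contradicting that $x$ is a non-root. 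Applied to our $v$, this yields $v.p \ne v$ after Step~\ref{alg:increase:s8}.

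For Step~\ref{alg:increase:s9}, note that after Step~\ref{alg:increase:s8} every vertex is a root, a child of a root, or a grandchild of a root (Step~\ref{alg:increase:s8} can only demote a root to a child of another root, thus turning its former children into grandchildren). Hence $v.p.p$ is a root immediately before the shortcut, and $v.p := v.p.p$ assigns $v$ a root as its new parent; this root is $v$ only if $v$ itself is a root, which is excluded. So $v$ remains a non-root after Step~\ref{alg:increase:s9}. Finally, Step~\ref{alg:increase:s10} performs \textsc{Alter}$(E)$, which only rewrites the endpoints stored in edge records and removes loops without touching any vertex's $.p$ field, so the non-root status of $v$ is preserved all the way to the end of \textsc{Increase}$(V, E, b)$. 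A union bound with the invariant event from Lemma~\ref{lem:increase_is_contraction} gives the stated probability $1 - 1/(\log n)^7$.

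The main obstacle is subtle: I have to argue carefully that concurrent writes in Step~\ref{alg:increase:s8} cannot collude to make a non-root's parent cell hold the non-root itself. The handle is that every candidate value written is of the form $u.p$ for a leader $u$, and the invariant forces $u.p$ to be a root; so no matter which write the adversary selects, the cell receives a root. The rest is bookkeeping about the labeled digraph's height being at most two after Step~\ref{alg:increase:s8}, which makes the shortcut in Step~\ref{alg:increase:s9} safe.
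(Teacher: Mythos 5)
Your proposal is correct and follows essentially the same route as the paper: it combines Lemma~\ref{lem:corresponded_non_roots} with the roots-or-children-of-roots invariant established in the proof of Lemma~\ref{lem:increase_is_contraction} (the source of the $1-1/(\log n)^7$ bound), and then observes that the writes $w.p.p = v.p$ in Step~\ref{alg:increase:s8} cannot turn a non-root into a root. The only difference is cosmetic: the paper argues that only roots' parent cells are written while you argue that only roots are written as values, and you additionally spell out Steps~\ref{alg:increase:s9}--\ref{alg:increase:s10}, which the paper leaves implicit.
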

\begin{proof}
    By Lemma~\ref{lem:corresponded_non_roots}, if $v.p^{(2R+1)} \ne v$ then $v$ is a non-root from Step~\ref{alg:increase:s3} to Step~\ref{alg:increase:s7} (inclusive) of \textsc{Increase}$(V, E, b)$. 
    By the last but two paragraph in the proof of Lemma~\ref{lem:increase_is_contraction}, w.p. $1 - 1/(\log n)^7$, all vertices in $V$ are roots or children of roots right after Step~\ref{alg:increase:s7}. 
    So, the parent updates in Step~\ref{alg:increase:s8} ($w.p.p = v.p$) can only update the parent of a root ($w.p$), which cannot turn a non-root into a root (which requires to update the parent of a non-root to itself).
\end{proof}


\begin{lemma} \label{lem:H_small_component_done}
    Given input graph $G = (V, E)$ and parameter $b \ge (\log n)^{100}$, let $H = \textsc{Build}(V, E, b)$ be computed in Step~\ref{alg:increase:s1} of \textsc{Increase}$(V, E, b)$. 
    W.p. $1 - 1/(\log n)^6$, 
    for any component $\mathcal{C}$ in $H$ such that $|\mathcal{C}|$ (the number of vertices in $\mathcal{C}$) is less than $b^6$, at the end of \textsc{Increase}$(V, E, b)$, all vertices in $\mathcal{C}$ must be roots or children of roots and all edges in $E$ adjacent to these vertices must be loops.
\end{lemma}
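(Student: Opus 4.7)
The plan is to establish that \textsc{Densify}$(H, b)$ collapses all of $V(\mathcal{C})$ to a single root $r$, and that the remaining steps of \textsc{Increase} neither disturb this collapse nor leave any surviving edge inside $\mathcal{C}$ after the closing \textsc{Alter}$(E)$.

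First I invoke Lemma~\ref{lem:H_component_size}: since $|V(\mathcal{C})| < b^6$, w.p.\ $1 - n^{-8}$ the component $\mathcal{C}$ of $H$ coincides with the component of $G$ containing its vertices, so every edge of $E$ touching $V(\mathcal{C})$ lies entirely inside $V(\mathcal{C})$, and the edges added inside \textsc{Expand-Maxlink} by hashing (being between roots and their neighbors in the current graph) also stay within $V(\mathcal{C})$. Next, for any two $u, v \in V(\mathcal{C})$, let $P$ be a shortest $u$--$v$ path in $H$; since $\mathcal{C}$ is connected in $H$, $|P| \le |V(\mathcal{C})| < b^6 \le b^7$, so Lemma~\ref{lem:truncate_path_length} yields $|P_{\text{final}}| \le 2$ with probability $1 - 1/(\log n)^8$ (uniformly over all qualifying paths). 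I then refine this to $|P_{\text{final}}| = 1$: two distinct vertices on $P_{\text{final}}$ would be roots joined by an edge of the current graph, which is the image under the \textsc{Alter}$(E_{\text{close}})$ of Step~\ref{alg:truncate:s6} of some $(a, b) \in E_{\text{close}}$; but Step~\ref{alg:truncate:s5} invokes Theorem~\ref{thm:ltz_main} on $(V(E_{\text{close}}), E_{\text{close}})$ and therefore makes $a.p = b.p$, whereupon \textsc{Alter}$(E_{\text{close}})$ turns $(a, b)$ into a loop that is deleted---a contradiction. Hence every $w \in V(\mathcal{C})$ ends \textsc{Densify} with the same root $r \in V(\mathcal{C})$, and since parents of $\mathcal{C}$-vertices stay inside $V(\mathcal{C})$ throughout (by the full-component property), chaining through the snapshots gives $w.p^{(2R+1)} = r$.

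Finally I trace through Steps~\ref{alg:increase:s3}--\ref{alg:increase:s10}. Step~\ref{alg:increase:s4} assigns $w.p \coloneqq w.p^{(2R+1)} = r$ for each $w \in V(\mathcal{C})$. Because every intra-$\mathcal{C}$ edge in $E_{\text{close}}$ was reduced to a loop at $r$ and removed in Step~\ref{alg:truncate:s6}, and no inter-component edge exists, $E_{\text{close}}$ contains no non-loop edge incident to $V(\mathcal{C})$; hence the conditional parent updates in Steps~\ref{alg:increase:s6} and~\ref{alg:increase:s8} never touch a $\mathcal{C}$-vertex, and the \textsc{Shortcut}s in Steps~\ref{alg:increase:s7} and~\ref{alg:increase:s9} preserve $w.p = r$ since $r$ stays a root. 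The closing \textsc{Alter}$(E)$ in Step~\ref{alg:increase:s10} replaces each $(u, v) \in E$ with both ends in $V(\mathcal{C})$ by $(r, r)$, a loop which is removed. Combined with Lemma~\ref{lem:increase_is_contraction}, both conclusions follow, and a union bound over the failure events of Lemmas~\ref{lem:H_component_size}, \ref{lem:truncate_path_length}, \ref{lem:increase_is_contraction}, and Theorem~\ref{thm:ltz_main} fits within $1/(\log n)^6$. The main obstacle is the refinement $|P_{\text{final}}| \le 2 \Rightarrow |P_{\text{final}}| = 1$, which requires carefully coordinating the connected-components computation of Step~\ref{alg:truncate:s5} with the subsequent \textsc{Alter} that collapses the now-loop intra-component edges.
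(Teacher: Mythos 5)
Your overall route matches the paper's (use Lemma~\ref{lem:H_component_size} to identify $\mathcal{C}$ with a full component of $G$, collapse it inside \textsc{Densify}, then check Steps~\ref{alg:increase:s3}--\ref{alg:increase:s10} leave only loops), but the pivotal refinement $|P_{\text{final}}|\le 2 \Rightarrow |P_{\text{final}}|=1$ has a genuine gap. You justify it by saying that Step~\ref{alg:truncate:s5} ``invokes Theorem~\ref{thm:ltz_main} on $(V(E_{\text{close}}), E_{\text{close}})$ and therefore makes $a.p=b.p$.'' That conclusion is not unconditional: the call in Step~\ref{alg:truncate:s5} must fit inside \textsc{Densify}'s $O(\log b)$ time budget (cf.\ Lemma~\ref{lem:truncate_time_work}), and Theorem~\ref{thm:ltz_main} only guarantees full contraction of a component within $O(\log d' + \log\log n)$ rounds, where $d'$ is that component's diameter in $(V(E_{\text{close}}), E_{\text{close}})$, and only with good probability per component. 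So ``$a.p=b.p$ after Step~\ref{alg:truncate:s5}'' is exactly the statement that needs the component corresponding to $\mathcal{C}$ to have small (in fact, at most $1$) diameter \emph{at the time of that call} --- a fact you never establish. If the claim held unconditionally, the entire path-potential machinery would be unnecessary and every $E_{\text{close}}$-component would contract regardless of diameter, which is false within the allotted rounds. Note also the circularity: your contradiction argument appeals to the contraction performed in Step~\ref{alg:truncate:s5}, but verifying that this contraction completes is precisely what requires the diameter bound.

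The missing ingredient is supplied by arguing \emph{before} Step~\ref{alg:truncate:s5}, as the paper does: apply Lemma~\ref{lem:expandm_path_length} (not Lemma~\ref{lem:truncate_path_length}, which describes the path only after Step~\ref{alg:truncate:s6}) to every shortest path inside $\mathcal{C}$, concluding that at the end of Step~\ref{alg:truncate:s1} every pair of (images of) vertices of $\mathcal{C}$ is at distance at most $1$ in the current graph; observe that the \textsc{Alter} in Step~\ref{alg:truncate:s3} does not increase this distance; hence the component of $(V(E_{\text{close}}), E_{\text{close}})$ corresponding to $\mathcal{C}$ has diameter at most $1$, Theorem~\ref{thm:ltz_main} contracts it within the truncated run, and only then do all intra-$\mathcal{C}$ edges of $E_{\text{close}}$ become loops after Step~\ref{alg:truncate:s6}. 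With that inserted, the rest of your tracing through Steps~\ref{alg:increase:s3}--\ref{alg:increase:s10} (which also needs Lemmas~\ref{lem:corresponded_non_roots} and~\ref{lem:corresponding_roots} to confirm Step~\ref{alg:increase:s4} does not disturb the collapsed tree) goes through as in the paper.
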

\begin{proof}
    W.p. $1 - n^{-8}$, assuming the statement in Lemma~\ref{lem:H_component_size} always holds. 
    We get that $V(\mathcal{C}) = V(\mathcal{C}_G)$, where $\mathcal{C}_G$ is the component in $G$ induced on vertex set $V(\mathcal{C})$. 
    
    Note that for any shortest path $P$ in $\mathcal{C}$, there are at most $b^6$ vertices on $P$. 
    So by Lemma~\ref{lem:expandm_path_length}, w.p. $1 - 1/(\log n)^8$, any shortest path in $\mathcal{C}$ at the end of Step~\ref{alg:truncate:s1} in \textsc{Densify}$(H, b)$ has length at most $1$, so any vertex pair in $\mathcal{C}$ (with all edges in $E_{\text{close}}$) has distance at most $1$.
    By the proof of Lemma~\ref{lem:truncate:flat}, w.p. $1 - 1/(\log n)^8$, both ends of any edge in $E_{\text{close}}$ are roots. 
    Also note that the \textsc{Alter}$(E(H))$ in Step~\ref{alg:truncate:s3} of \textsc{Densify}$(H, b)$ never increases the diameter of any component since it only replaces edges on any shortest path. 
    So, the component in graph $(V(E_{\text{close}}), E_{\text{close}})$ corresponding to $\mathcal{C}$ has diameter at most $1$ and Theorem~\ref{thm:ltz_main} is applicable. 
    After Step~\ref{alg:truncate:s5} of \textsc{Densify}$(H, b)$, w.p. $1-1/(\log n)^9$, for any such component $\mathcal{C}$, 
    the vertices in $\mathcal{C}$ must be in the same tree (the algorithm in Theorem~\ref{thm:ltz_main} runs independently on each component, so the success probability is not for all vertices in a fix $\mathcal{C}$, but for all vertices in all such $\mathcal{C}$), and these vertices must be roots or children of roots.
    
    Consider the parent updates in Step~\ref{alg:increase:s4} of \textsc{Increase}$(V, E, b)$. 
    By Lemma~\ref{lem:corresponded_non_roots} and Lemma~\ref{lem:corresponding_roots}, w.p. $1 - 1/(\log n)^8$, if $v$ updates its parent to a vertex $u \ne v$ in Step~\ref{alg:increase:s4} of \textsc{Increase}$(V, E, b)$, then $v$ is a non-root and $u$ is a root before Step~\ref{alg:increase:s4}. 
    By an induction on $j$ from $0$ to $2R+1$, one can show that $u = v.p^{(2R+1)}$ is in the same component $\mathcal{C}$ of $v$. 
    By the previous paragraph, $u, v$ are in the same tree rooted at $u$ and $v.p = u$ before Step~\ref{alg:increase:s4}, so the parent update does not change the parent of $v$. 
    On the other hand, if $v.p^{(2R+1)} = v$ then $v$ is a root before Step~\ref{alg:increase:s4} of \textsc{Increase}$(V, E, b)$ w.p. $1 - 1/(\log n)^8$ by Lemma~\ref{lem:corresponding_roots} and continues to be a root. 
    Therefore, the tree corresponding to any such component $\mathcal{C}$ does not change during Step~\ref{alg:increase:s4}. 
    
    By Lemma~\ref{lem:truncate:flat}, w.p. $1 - 1/(\log n)^8$, the edge in $E_{\text{close}}$ are only adjacent to roots before Step~\ref{alg:increase:s4}. 
    By the previous paragraph, w.p. $1 - 1/(\log n)^8$, Step~\ref{alg:increase:s4} never turns a root into a non-root. 
    So, w.p. $1 - 1/(\log n)^7$, the edge in $E_{\text{close}}$ are only adjacent to roots before Step~\ref{alg:increase:s6}. 
    Since any vertex pair $v,w$ in $\mathcal{C}$ are in the same tree, if there is an edge in $E_{\text{close}}$ between $v,w$, the edge must be a loop. 
    So the parent updates in Step~\ref{alg:increase:s6} of \textsc{Increase}$(V, E, b)$ do not change the tree corresponding to any such component $\mathcal{C}$, because only non-loop edges are used. 
    
    The \textsc{Shortcut}$(v)$ in Step~\ref{alg:increase:s7} does not change the tree since all vertices in $\mathcal{C}$ must be roots or children of roots. 
    Step~\ref{alg:increase:s8} does not change the tree corresponding to any $\mathcal{C}$ because all edges between vertices in $\mathcal{C}$ remain to be loops. 
    The \textsc{Shortcut}$(v)$ in Step~\ref{alg:increase:s9} does not change these trees either. 
    We shall prove the lemma by contradiction based on Lemma~\ref{lem:increase_is_contraction}: w.p. $1 - 1/(\log n)^7$, both ends of any edge in $E$ are roots at the end of \textsc{Increase}$(V, E, b)$. 
    
    Assume for contradiction that there exists a component $\mathcal{C}$ in $H$ such that there exists an edge $(v, w) \in E$ (where the edge set $E$ here is already altered by the \textsc{Alter}$(E)$ in Step~\ref{alg:increase:s10} of \textsc{Increase}$(V, E, b)$) such that $v \ne w$ at the end of \textsc{Increase}$(V, E, b)$. 
    Then by the previous paragraph we know that both $v, w$ are roots. Since $(v, w) \in E$, we have that $v, w$ are in the same component $\mathcal{C}_G$.\footnote{We omit the proof that during the algorithm, $v, v.p$ are in the same component of $G$ and so do $v, w$ if edge $(v,w)$ presents in the current graph, which is by a standard induction by steps, e.g., see \cite{liu2020connected} or \cite{liu2022simple} for a proof.}
    By the second sentence of this proof, $V(\mathcal{C}) = V(\mathcal{C}_G)$. 
    So $v,w$ must be in the same component $\mathcal{C}$. 
    By the previous paragraph, $v,w$ must be within the same tree in the labeled digraph, contradicting with the fact that $v \ne w$ and they are both roots. 
    As a result, $v = w$ and thus all edges adjacent to vertices in $\mathcal{C}$ must be loops. 
    The lemma follows from a union bound.
\end{proof}

Lemma~\ref{lem:H_small_component_done} means that such small component (with size less than $b^6$) in $H$ can be \emph{ignored} in the rest of the entire algorithm as the connectivity computation on them is already \emph{finished}. 
In our presentation, we do not detect and delete all loops from the current graph, because although the loop-deletion does not affect the components that are already finished, it might decrease the spectral gap of other unfinished components, hurting the running time of our algorithm. 

Our edge sampling in Stage $3$ (\S{\ref{sec:stage3}}) will operate on all edge including loops. Since each small component is contracted to $1$ vertex,\footnote{Strictly speaking, such components are not fully contracted due to the existence of non-roots at the end of Stage $1$. The contraction will be done at the end of the algorithm, see footnote~\ref{fn:leaves_not_in_trees}.}
the execution of our algorithm after Stage $2$ does not affect such components, and it is equivalent to ignore all such components in the graph. 

\begin{lemma} \label{lem:H_big_component_large_degree}
    Given input graph $G = (V, E)$ and parameter $b \ge (\log n)^{100}$,
    w.p. $1 - 1/(\log n)^5$, 
    for any root $v \in V$ at the end of \textsc{Increase}$(V, E, b)$, the degree of $v$ in the current graph is at least $b$.
\end{lemma}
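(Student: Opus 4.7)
My plan is to condition on the conjunction of the high-probability events of Lemmas~\ref{lem:H_small_component_done}, \ref{lem:H_component_size}, \ref{lem:truncate_path_length}, \ref{lem:build:high_vertex}, \ref{lem:corresponding_roots}, and \ref{lem:increase_is_contraction}, whose combined failure probability is $O(1/(\log n)^5)$ by a union bound, and then split on the size of the $H$-component $\mathcal{C}_H(v)$ containing the root $v$. If $|V(\mathcal{C}_H(v))| < b^6$, I invoke Lemma~\ref{lem:H_small_component_done}: every edge in $E$ adjacent to $\mathcal{C}_H(v)$ is a self-loop just before Step~\ref{alg:increase:s10}, and the subsequent \textsc{Alter}$(E)$ deletes all self-loops, so $v$ has no non-loop adjacent edge and the bound is vacuous (equivalently, $v$ is not an active root).

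For the remaining case $|V(\mathcal{C}_H(v))| \ge b^6$, I would first argue that $v$ must be a head, so $|A_v| \coloneqq |\{w \in V : w.p^{(2R+1)} = v\}| \ge 2b$. The key tool is Lemma~\ref{lem:truncate_path_length}: any shortest path in $H$ with at most $b^7$ vertices compresses, in the final labeled digraph, to at most two vertices, so any two roots that survive from the same $H$-component end up $E_{\text{close}}$-adjacent. Consequently the roots still alive in $\mathcal{C}_H(v)$ form an $E_{\text{close}}$-clique. A head among these roots would absorb every non-head root of $\mathcal{C}_H(v)$ in Step~\ref{alg:increase:s6} via an $E_{\text{close}}$-edge; and if no root of $\mathcal{C}_H(v)$ were a head, then pigeonhole ($|A_u| < 2b$ for each non-head root $u$) forces at least $b^5/2$ pairwise $E_{\text{close}}$-adjacent non-head roots in $\mathcal{C}_H(v)$, so the random leader election of Step~\ref{alg:increase:s8} turns $v$ into a non-root except on an event of probability $2^{-\Omega(b^5)}$, which is easily absorbed by the union bound.

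Once $v$ is known to be a head, I would lower-bound the cross-edges from $T_v \coloneqq \{w \in V : w.p_{\text{final}} = v\}$ to $V \setminus T_v$ in the altered $E$. Because every contraction performed by \textsc{Densify} and \textsc{Increase} stays inside $\mathcal{C}_H(v)$, we have $T_v \subseteq \mathcal{C}_H(v)$. If $T_v \subsetneq \mathcal{C}_H(v)$, the $E_{\text{close}}$-clique from the previous paragraph gives at least one other root $u$ of $\mathcal{C}_H(v)$ that is $E_{\text{close}}$-adjacent to $v$; pulling back these $E_{\text{close}}$-adjacencies to the original $H$-edges (which are a subset of $E$) and combining with $|A_v| \ge 2b$ delivers at least $b$ distinct $E$-edges leaving $T_v$. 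If instead $T_v = \mathcal{C}_H(v)$, then activity of $v$ forces $\mathcal{C}_H(v) \subsetneq \mathcal{C}_G(v)$ by Lemma~\ref{lem:H_component_size}, and the construction of $H$ in \textsc{Build} (edges are dropped only between high-high pairs) guarantees that $T_v$ contains a high vertex $w$; else all $G$-edges incident to $T_v$ would already be in $H$, contradicting $\mathcal{C}_H(v) \ne \mathcal{C}_G(v)$. By Lemma~\ref{lem:build:high_vertex}, such a $w$ has at least $b^8$ distinct $G$-neighbors, and at most $|T_v|-1 \le |\mathcal{C}_H(v)|-1$ of them lie inside $T_v$, leaving far more than $b$ in $V \setminus T_v$.

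The main obstacle I anticipate is the bookkeeping in the subcase $T_v \subsetneq \mathcal{C}_H(v)$: one must carefully convert the $E_{\text{close}}$-clique guarantee into at least $b$ \emph{distinct} $E$-edges, tracking which $E_{\text{close}}$-edges were originally $H$-edges (and hence subsets of $E$) versus hash-table edges added by \textsc{Expand-Maxlink} (which are not in $E$), and accounting for the $1/b$ down-sampling that \textsc{Build} applies to edges between high vertices. A final union bound over the conditioned events then gives the claimed success probability $1 - 1/(\log n)^5$.
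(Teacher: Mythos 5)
Your split into small and large $H$-components matches the paper, and the small-component case is handled the same way (via Lemma~\ref{lem:H_small_component_done}). The large-component case, however, has a genuine gap: your entire argument rests on the claim that the surviving roots of $\mathcal{C}_H(v)$ form an $E_{\text{close}}$-clique, which you derive from Lemma~\ref{lem:truncate_path_length}. That lemma only compresses shortest paths with at most $b^7$ vertices, while a component of size at least $b^6$ can have shortest paths with far more than $b^7$ vertices (e.g., a path-like component on $n/b^{10}$ vertices); for two roots at $H$-distance exceeding $b^7$, the $O(\log b)$ rounds of \textsc{Expand-Maxlink} give no $E_{\text{close}}$-adjacency at all. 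Since the head-absorption step, the pigeonhole count of $b^5/2$ pairwise-adjacent non-head roots, and the Step~\ref{alg:increase:s8} leader-election argument all presuppose this clique, your proof collapses exactly in the hard (large-diameter) case that the lemma is really about. The paper avoids any global clique: it fixes a shortest path $P$ starting at $v$ with $|P| \ge b^6/2$, applies the compression lemma to the \emph{prefixes} $P^{(i)}$, and splits on whether the multiset of compressed prefixes has at most or at least $b^2$ distinct elements. In the first case at least $b^2$ vertices of $P$ share the same final parent $v^*$, which is then a head that is either $v$ itself or $E_{\text{close}}$-adjacent to $v$ (so $v$ is demoted in Step~\ref{alg:increase:s6} or is itself a head with $2b$ children); in the second case $v$ has at least $b^2-1$ distinct non-loop $E_{\text{close}}$-edges, and the leader election in Step~\ref{alg:increase:s8} either demotes $v$ or gives it at least $2b$ children.

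Two secondary issues. First, being a non-head only bounds the number of \emph{occupied cells} of $\mathcal{H}'(u)$, not $|A_u|$; what holds w.h.p. is that $|A_u| \ge b^2$ forces $u$ to be a head (the paper's binomial estimate), so ``non-head $\Rightarrow |A_u| < 2b$'' is not justified — the correct threshold is roughly $b^2$, which changes your pigeonhole constants. Second, your final degree count is both harder than necessary and, in one branch, quantitatively broken: the paper simply observes that at least $2b$ (or $b^6$) vertices are contracted onto $v$, each with at least one adjacent edge of $G$, and all of these edges are re-attached to $v$ by \textsc{Alter}, giving degree at least $b$ without distinguishing cross-edges from internal edges; by contrast, your argument that a high vertex $w \in T_v$ leaves more than $b$ neighbors outside $T_v$ needs $|T_v| \le |\mathcal{C}_H(v)| \ll b^8$, which is unavailable since $|\mathcal{C}_H(v)|$ can be far larger than $b^8$.
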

\begin{proof}
    By Lemma~\ref{lem:H_small_component_done} and the discussion following it, w.p. $1 - 1/(\log n)^6$, any vertex in a component of $H$ with size less than $b^6$ is ignored in the rest of the algorithm since the computation of its component is finished. 
    Therefore, we assume that any vertex $v \in V$ is in a component of $H$ with size at least $b^6$. 
    
    Suppose any shortest path in $\mathcal{C}_H(v)$ (Definition~\ref{def:component_symbol}) has less than $b^6$ vertices on it. 
    Then, using the same argument in the proof of Lemma~\ref{lem:H_small_component_done}, w.p. $1 - 1/(\log n)^6$, all shortest paths in $\mathcal{C}_H(v)$ have length at most $1$ at the end of Step~\ref{alg:truncate:s3} of \textsc{Densify}$(H, b)$, and the following Step~\ref{alg:truncate:s5} makes all vertices in $\mathcal{C}_H(v)$ the root of a tree or the children of the root of this same tree. The parent updates after Step~\ref{alg:truncate:s2} of \textsc{Increase}$(V, E, b)$ do not change the tree corresponding to $\mathcal{C}_H(v)$ by the same argument in the proof of Lemma~\ref{lem:H_small_component_done}. 
    Let $u$ be the root of the tree containing $v$. 
    By $|\mathcal{C}_H(v)| \ge b^6$, we get that the number of vertices $v \in V$ with $v.p = u$ is at least $b^6$ ($u.p = u$ also contributes $1$). 
    Since each vertex in $\mathcal{C}_H(v)$ has at least one adjacent edge (otherwise it is a singleton and cannot be presented in a component of size at least $b^6$), it must have at least one adjacent edge in $G$, so the number of edges in $G$ adjacent to vertices in $\mathcal{C}_H(v)$ is at least $b^6 / 2 \ge b$. 
    By the proof of Lemma~\ref{lem:increase_is_contraction}, all these edges are altered to be adjacent to root $u$ at the end of \textsc{Increase}$(V, E, b)$, so the degree of $u$ in the current graph is at least $b$.
    
    Suppose there exists a shortest path in $\mathcal{C}_H(v)$ with at least $b^6$ vertices on it. 
    Consider a shortest path $P$ in $\mathcal{C}_H(v)$ starting at $v$ such that $|P| \ge b^6 / 2$. 
    Such a path $P$ must exist, otherwise any vertex $u \in \mathcal{C}_H(v)$ can travel to $v$ then travel to any vertex $w \in \mathcal{C}_H(v)$ (by connectivity), giving a path of less than $b^6$ vertices, a contradiction. 
    Recall that for each integer $i \in [0, |P|-1]$, $P(i)$ is the $i$-th vertex on $P$. 
    For each integer $i \in [3, |P|-1]$, let path $P^{(i)}$ be the subpath from $P(0)$ to $P(i)$ (inclusively) using edges on $P$, and let $P^{(i)}_{\text{final}}$ be defined in Lemma~\ref{lem:truncate_path_length} given $P^{(i)}$ as the initial shortest path in $H$.
    By Lemma~\ref{lem:truncate_path_length}, $|P^{(i)}_{\text{final}}| \le 2$ w.p. $1 - 1/(\log n)^8$ for any shortest $P$ and any integer $i \in [3, |P|-1]$.
    
    Consider the (multi-)set of paths $\mathcal{P} \coloneqq \{P^{(i)}_{\text{final}} \mid i \in [3, |P|-1]\}$. 
    There are two cases. 
    \begin{enumerate}
        \item Suppose the set $\mathcal{P}$ contains at most $b^2$ distinct elements (paths). Note that each path in $\mathcal{P}$ has form $(P(0).p^{(2R+1)}, P(i).p^{(2R+1)})$ for some $i \in [3, |P|-1]$ by Definition~\ref{def:iterative_parent} and the fact that all such paths have length at most $1$. 
        So, the vertex set 
        $$V_P \coloneqq \{P(0).p^{(2R+1)}\} \cup \{P(i).p^{(2R+1)}) \mid i \in [3, |P|-1]\}$$ 
        contains at most $b^2 + 1$ distinct elements. 
        Observe that all vertices originally on path $P$ are distinct since $P$ is a shortest path. 
        Therefore, there exists a vertex $v^* \in V_P$ such that at least
        $$ \frac{|P| - 3}{b^2 + 1} \ge \frac{b^6 / 2 -3}{b^2 + 1} \ge b^2$$
        different $v'$ on $P$ satisfies $v'.p^{(2R+1)} = v^*$. 
        In Step~\ref{alg:increase:s4} of \textsc{Increase}$(V, E, b)$, each such vertex $v'$ writes itself to the $h(v')$-th cell of $\mathcal{H}'(v^*)$. 
        The probability that at most $2b$ vertices are in $\mathcal{H}'(v^*)$ is at most 
        \begin{equation*}
            \binom{b^9}{2b} \cdot \left( \frac{2b}{b^9} \right)^{b^2} \le b^{18 b} \cdot b^{-7 b^2} \le b^{-b} \le n^{-9} 
        \end{equation*}
        by $b \ge (\log n)^{100}$. 
        Therefore, by a union bound over all vertices, w.p. $1 - n^{-8}$, any of such vertex $v^*$ is marked as a head in  Step~\ref{alg:increase:s5} of \textsc{Increase}$(V, E, b)$. 
        
        If $v \notin V_P$ then there exists a vertex $u \ne v$ such that $u \in V_P$ and $v.p^{(2R+1)} = u$, which means $v$ is a non-root w.p. $1 - 1/(\log n)^7$ by Lemma~\ref{lem:corresponded_non_roots_final}, and the lemma holds for this case. 
        Now suppose $v \in V_P$, we have that $v = P(0).p^{(2R+1)}$ by the fact that $v$ is the first vertex on $P$. 
        There are two cases.
        \begin{enumerate}
            \item If $v = v^*$ then the number of vertices $w \in \mathcal{C}_H(v)$ with $w.p^{(2R+1)} = v$ is at least $b^2$, and all such $w$'s update their parents to $v$ in Step~\ref{alg:increase:s4} of \textsc{Increase}$(V, E, b)$. 
            Note that for any vertex $w$, $w.p^{(2R+1)}$ is unique, so the number of children of $v$ after (parallel execution of) Step~\ref{alg:increase:s4} is at least $b^2$. 
            The parent updates in Step~\ref{alg:increase:s6} can only increase the number of children of $v$ or make $v$ a non-root, because w.p. $1 - 1/(\log n)^7$, the edge in $E_{\text{close}}$ are only adjacent to roots before Step~\ref{alg:increase:s6} by the fourth paragraph in the proof of Lemma~\ref{lem:H_small_component_done}. 
            Suppose $v$ is a root after Step~\ref{alg:increase:s6}. 
            Step~\ref{alg:increase:s7} can only increase the number of children of $v$ since $v$ is a root.
            Step~\ref{alg:increase:s8} can only increase the number of children of $v$ or make $v$ a non-root, because Step~\ref{alg:increase:s7} makes all vertex in $V$ roots or children of roots such that Step~\ref{alg:increase:s8} can only update the parents of roots. 
            By the proof of Lemma~\ref{lem:corresponded_non_roots_final}, w.p. $1 - 1/(\log n)^7$, a non-root can never become a root, 
            thus if $v$ is a root at the end of \textsc{Increase}$(V, E, b)$, then the degree of $v$ is at least $b^2 /2 \ge b$ by the same argument in the second paragraph of this proof.
            
            \item If $v \ne v^*$ then the path from $v$ to $v^*$ is an element in $\mathcal{P}$ with path length $1$. 
            Therefore, $(v, v^*)$ is an edge in $E_{\text{close}}$ after Step~\ref{alg:increase:s2} of \textsc{Increase}$(V, E, b)$ and remains to be in $E_{\text{close}}$ right before Step~\ref{alg:increase:s6}. 
            If $v$ is a head then the number of distinct vertices in $\mathcal{H}'(v)$ is at least $2b$ by Step~\ref{alg:increase:s5}, giving at least $2b$ vertices whose parent is $v$ (a root) by Step~\ref{alg:increase:s4}, which continues to hold till Step~\ref{alg:increase:s7} because Step~\ref{alg:increase:s6} cannot make $v$ a non-root nor remove any child from $v$ (since any child of $v$ has no adjacent edge). Step~\ref{alg:increase:s8} can only increase the number of children of $v$ or make $v$ a non-root by the same argument in the previous case. 
            As a result, if $v$ is a root at the end of \textsc{Increase}$(V, E, b)$ then the degree of $v$ is at least $b$ by the same argument in the second paragraph of this proof.
            Else if $v$ is a non-head, since $v^*$ is marked as a head w.p. $1 - n^{-8}$, $v$ becomes a non-root in Step~\ref{alg:increase:s6} dues to edge $(v, v^*) \in E_{\text{close}}$ and remains so till the end of the algorithm, and the lemma holds for this case. \label{case:big_component_proof}
        \end{enumerate}
        
        \item Suppose the set $\mathcal{P}$ contains at least $b^2$ distinct elements (paths). 
        By definition, all paths (edges) in $\mathcal{P}$ starts with $P(0).p^{(2R+1)}$.
        Excluding the possible loop $(P(0).p^{(2R+1)}, P(0).p^{(2R+1)}) \in \mathcal{P}$, there are at least $b^2-1$ paths with length $1$ in $\mathcal{P}$, which correspond to at least $b^2-1$ different non-loop edges in $E_{\text{close}}$ at the end of Step~\ref{alg:increase:s2} of \textsc{Increase}$(V, E, b)$. 
        If $v$ is a non-root before Step~\ref{alg:increase:s8}, then w.p. $1 - 1/(\log n)^7$, $v$ is a non-root at the end of \textsc{Increase}$(V, E, b)$ by the proof of Lemma~\ref{lem:corresponded_non_roots_final}, and the lemma holds for this case. 
        Suppose $v$ is a root before Step~\ref{alg:increase:s8}, which gives $v= P(0).p^{(2R+1)}$ by the argument in the previous case. 
        There are two cases.
        \begin{enumerate}
            \item Suppose $v$ is sampled as a non-leader in Step~\ref{alg:increase:s8}. The number of different non-loops edges $(v, w)$ adjacent to $v$ in $E_{\text{close}}$ is at least $b^2 - 1$. So, by a Chernoff bound and $b \ge (\log n)^{100}$, w.p. $1 - n^{-8}$, there exists a leader $w$. Since $v.p = p$, Step~\ref{alg:increase:s8} updates the parent of the non-leader $v$ to $w.p$. 
            If $v \ne w.p$ then $v$ is a non-root at the end of \textsc{Increase}$(V, E, b)$ and the lemma holds for this case. 
            Else if $v = w.p$ before Step~\ref{alg:increase:s8}, then since $w$ has an adjacent edge $(v, w) \in E_{\text{close}}$, $w$ must be a root before Step~\ref{alg:increase:s6} w.p. $1 - 1/(\log n)^7$ (the fourth paragraph in the proof of Lemma~\ref{lem:H_small_component_done}). 
            Since $w \ne v$ and $w$ updates its parent to $v$ in Step~\ref{alg:increase:s6}, we obtain that $v$ is a head, and the lemma holds in this case by the proof in case~\ref{case:big_component_proof}.
            
            \item Suppose $v$ is sampled as a leader in Step~\ref{alg:increase:s8}. The number of different non-loops edges $(v, w)$ adjacent to $v$ in $E_{\text{close}}$ is at least $b^2 - 1$. By a Chernoff bound and $b \ge (\log n)^{100}$, w.p. $1 - n^{-8}$, there exist $2b$ non-leaders $w$ such that $(v,w) \in E_{\text{close}}$ and $w \ne v$. 
            So, after Step~\ref{alg:increase:s8}, the number of different vertices $w$ with $w.p.p = v.p = v$ is at least $2b$. 
            After the \textsc{Shortcut}$(V)$ in Step~\ref{alg:increase:s9}, the number of children of $v$ is at least $2b$ since $v$ is a root. 
            Since each vertex in $\mathcal{C}_H(v)$ has at least one adjacent edge (otherwise it is a singleton and cannot be presented in a component of size at least $b^6$), it must have at least one adjacent edge in $G$, so the number of edges in $G$ adjacent to these (at least) $2b$ different vertices (whose parent is $v$) is at least $2b / 2 = b$. 
            By the proof of Lemma~\ref{lem:increase_is_contraction}, all these edges are altered to be adjacent to root $v$ at the end of \textsc{Increase}$(V, E, b)$, so the degree of $v$ in the current graph is at least $b$.
        \end{enumerate}
    \end{enumerate}
    By a union bound over all the bad events, the lemma follows.
\end{proof}

\section{Stage 3: Connectivity on the Sampled Graph} \label{sec:stage3}

After Stage $2$, by Lemma~\ref{lem:H_big_component_large_degree}, w.p. $1 - 1/(\log n)^6$, we can assume $\deg(G) \ge b$, where $G$ is the current graph at the end of \textsc{Increase}$(V, E, b)$ and $b = (\log n)^{100}$. 

We also assume that the minimum spectral gap $\lambda$ over all connected components 
(henceforth also called the {\it component-wise spectral gap})
of the original input graph satisfies
\begin{equation*}
    \lambda \ge b^{-0.1} = (\log n)^{-10} .
\end{equation*}
This assumption will be removed in \S{\ref{sec:assumption}} to give a connected components algorithm for any graphs with any spectral gap and the algorithm does not need any prior knowledge on the spectral gap.

The following lemma states that Stages 1 and 2 of the algorithm does not decrease the component-wise spectral gap of the original graph.
\begin{lemma} \label{lem:spectral_gap_after_contraction}
    Let $\lambda$ be the component-wise spectral gap of the original input graph. 
    The component-wise spectral gap $\lambda_G$ of the current graph $G$ at the end of Stage $2$ satisfies that $\lambda_G \ge \lambda$.
\elaine{for every connected component?}
\end{lemma}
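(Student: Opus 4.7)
The plan is to leverage the fact that Stages 1 and 2 are contraction algorithms (Lemma~\ref{lem:alg:reduce_correctness} and Lemma~\ref{lem:increase_is_contraction}), so the current graph $G$ at the end of Stage 2 is obtained from the original graph $G_0$ by a quotient map $\pi$. By induction on the sequence of contraction pairs, each equivalence class of $\pi$ lies entirely within a single connected component of $G_0$, and every component of $G$ is the image under $\pi$ of exactly one component of $G_0$. It therefore suffices to prove the component-wise statement: for every component $C$ of $G_0$ and its image $C' = \pi(C)$ in $G$, either $|V(C')| = 1$ (the claim is vacuous) or $\lambda(C') \ge \lambda(C)$.

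The main technical ingredient is the following class-contraction lemma: if $H$ is any graph, $C$ a connected component of $H$, and $S \subseteq V(C)$ a nonempty subset whose identification into a single vertex $w$ yields $H'$ with the corresponding component $C'$, then $\lambda(C') \ge \lambda(C)$ whenever $|V(C')| \ge 2$. Given this lemma, the full statement follows by decomposing $\pi$ into a sequence of single-class contractions (one equivalence class at a time) and chaining the resulting inequalities $\lambda(C) \le \lambda(C_1) \le \dots \le \lambda(C')$. To prove the lemma, I would use the Rayleigh-quotient characterization
\begin{equation*}
\lambda(C) = \min_{f \not\equiv 0,~\sum_{v \in C} d_v f(v) = 0} \frac{\sum_{(u,v) \in E(C)} (f(u)-f(v))^2}{\sum_{v \in C} d_v f(v)^2},
\end{equation*}
take an eigenfunction $f^*$ attaining $\lambda(C')$, and lift it to $f(v) = f^*(\pi(v))$ on $V(C)$. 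The numerator is preserved: edges internal to $S$ contribute $0$ because $f$ is constant on $S$, and become self-loops (contributing $0$) in $C'$; the remaining edges match up one-to-one with those of $C'$. A direct calculation, using the convention that self-loops count once toward degree, yields the denominator identity $\sum_{v \in C} d_v^{(H)} f(v)^2 = \sum_{v' \in C'} d_{v'}^{(H')} f^*(v')^2 + |E_H(S)| \cdot f^*(w)^2$, since each of the $|E_H(S)|$ edges internal to $S$ is double-counted in $\sum_{v \in S} d_v^{(H)}$ but contributes only once to $d_w^{(H')}$.

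The main obstacle is that the lifted $f$ need not satisfy the orthogonality constraint on $C$; a parallel calculation gives $\sum_{v \in C} d_v^{(H)} f(v) = |E_H(S)| \cdot f^*(w)$, which is generically nonzero. I would remedy this by replacing $f$ with $\tilde f = f - c$ where $c = |E_H(S)| f^*(w) / \mathrm{vol}(C)$, which leaves the numerator unchanged (edge-differences are translation-invariant) and decreases the denominator by exactly $c^2 \cdot \mathrm{vol}(C)$. Using the trivial bound $|E_H(S)| \le \mathrm{vol}(S) \le \mathrm{vol}(C)$, the denominator of $\tilde f$ in $C$ still dominates that of $f^*$ in $C'$, so $\lambda(C) \le \mathcal{R}_C(\tilde f) \le \mathcal{R}_{C'}(f^*) = \lambda(C')$, completing the argument. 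The self-loop/multi-edge bookkeeping is the only computation that requires care, but it is routine once one observes that internal edges of $S$ are the sole source of the discrepancy between $\sum_{v \in S} d_v^{(H)}$ and $d_w^{(H')}$.
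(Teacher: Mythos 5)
Your proposal is correct and follows the same skeleton as the paper's proof: both reduce the statement to the facts that Stage 1 (Lemma~\ref{lem:alg:reduce_correctness}) and Stage 2 (Lemma~\ref{lem:increase_is_contraction}) act as contraction algorithms, and then invoke monotonicity of the component-wise spectral gap under contraction. The one place you diverge is that the paper simply cites this monotonicity from Chung's book (Chapter 1.4, Lemma 1.15), with a footnote offering a weaker alternative via Cheeger's inequality that only yields $\lambda_G \ge \lambda^3$, whereas you prove it from scratch: decompose the quotient into single-class contractions, lift the optimal Rayleigh-quotient function $f^*$ of the contracted component, and re-center it to restore the constraint $\sum_v d_v f(v) = 0$. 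I checked your bookkeeping under the paper's conventions (parallel edges allowed, loops kept and counted once toward the degree): the numerator is indeed preserved, the denominator identity with the extra term $|E_H(S)| \cdot f^*(w)^2$ is right, centering removes exactly $c^2 \cdot \mathrm{vol}(C)$, and the inequality $|E_H(S)| \le \mathrm{vol}(C)$ closes the argument, so $\lambda(C) \le \mathcal{R}_C(\tilde f) \le \lambda(C')$. This buys a self-contained statement valid for multigraphs with self-loops exactly as they arise from the algorithm's contractions, which is slightly cleaner than citing a lemma stated in a simple-graph setting, and it gives the full bound $\lambda_G \ge \lambda$ rather than $\lambda^3$. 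One small caveat you inherit from the paper (its own proof states it explicitly): Stage 2 being a contraction algorithm holds only with probability $1 - 1/(\log n)^8$ via Lemma~\ref{lem:increase_is_contraction}, so the conclusion carries that probability; this is not a gap in your argument, just a qualifier worth stating.
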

\begin{proof}
    By Lemma~\ref{lem:alg:reduce_correctness}, the algorithm in Stage $1$ (\textsc{Reduce}$(V, E, k)$) is a contraction algorithm since all vertices $v$ are contracted to vertex $v.p$, so the component-wise spectral gap of the graph at the end of Stage $1$ is at least $\lambda$ since a contraction can only increase the spectral gap of any graph (\cite{chung1997spectral} Chapter 1.4, Lemma 1.15). 
    
    The algorithm in Stage $2$ (\textsc{Increase}$(V, E, b)$) might create trees of height $2$ dues to the existence of non-roots created in Stage $1$. 
    By Lemma~\ref{lem:increase_is_contraction}, w.p. $1 - 1/(\log n)^8$, all edges in the current graph are only adjacent on roots and the algorithm never deletes edges nor add edges (but only replace them by replacing ends with their parents), so all vertices in a tree rooted at $v$ are contracted to $v$, giving a contraction algorithm \textsc{Increase}$(V, E, b)$ for Stage $2$ w.p. $1 - 1/(\log n)^8$. 
    As a result, the component-wise spectral gap $\lambda_G$ of the current graph $G$ at the end of Stage $2$ is at least $\lambda$ w.p. $1 - 1/(\log n)^8$ (this probability will be boosted to high-probability in \S{\ref{sec:boosting}} \Cliff{change citation}).\footnote{As an alternative proof, one can apply Cheeger's inequality for non-simple graph (\cite{chung1996laplacians}) and note that a contraction can only increase the graph conductance since it removes candidate cuts. This would give $\lambda_G \ge \lambda^3$ that suffices for our use.}
\end{proof}

The algorithm below, called
\textsc{SampleSolve}$(G = (V, E))$, computes the connected components of a graph given a promised component-wise spectral gap.
Essentially, the algorithm runs Stages 1 and 2, at the end of which all vertices have sufficiently large degree (except for the ones in tiny components which we can ignore). At this moment, 
we downsample the number of edges
by a $1/\poly\log n$ factor, and then run \cite{liu2020connected}.
In this section, we prove that the algorithm achieves $O(\log \log n)$ time  and linear expected work with good probability. We remove the known spectral gap assumption in \S~\ref{sec:assumption} and boost the guarantees to high probability in \S\ref{sec:boosting}.

\begin{framed}
\noindent \textsc{SampleSolve}$(G = (V, E))$:
\begin{enumerate}
    \item If $|V| \le n^{0.1}$ then removes parallel edges and loops from $E$ and compute the connected components of graph $(V, E)$ by Theorem~\ref{thm:ltz_main} and goto Step~\ref{alg:samplesolve:s4}. \label{alg:samplesolve:s1}
    \item Sample each edge in $E$ w.p. $1/(\log n)^7$ to obtain graph $G'$. \label{alg:samplesolve:s2}
    \item Compute the connected components of graph $G'$ by Theorem~\ref{thm:ltz_main}. \label{alg:samplesolve:s3}
    \item For each vertex $v$ in the original input graph: $v.p = v.p.p.p$. \label{alg:samplesolve:s4}
\end{enumerate}
\end{framed}

\begin{theorem} \label{thm:connectivity_with_assumption}
    There is an ARBITRARY CRCW PRAM algorithm that computes the connected components of any given graph with component-wise spectral gap $\lambda \ge (\log n)^{-10}$ in $O(\log \log n)$ time and $O(m) + \overline{O}(n)$ work w.p. $1 - 1/(\log n)$.
\end{theorem}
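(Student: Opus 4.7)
The plan is to compose Stage 1 (\textsc{Reduce} with $k = 10^6 \log\log n$), Stage 2 (\textsc{Increase} with $b = (\log n)^{100}$), and \textsc{SampleSolve} into the claimed algorithm, then bound time, work, and success probability. Lemma~\ref{lem:polylog_shrink_reduce_all} gives Stage 1 in $O(\log\log n)$ time and $O(m) + \overline{O}(n)$ work, reducing the vertex count to $n/(\log n)^{1000}$. Lemmas~\ref{lem:increase_work_time} and~\ref{lem:H_big_component_large_degree} give Stage 2 in $O(\log b) = O(\log\log n)$ time and $O(m+n)$ work; afterwards every root in the current graph $G$ either lies in a component already correctly contracted by Lemma~\ref{lem:H_small_component_done} or has degree at least $b$. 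By Lemma~\ref{lem:spectral_gap_after_contraction}, since both stages are contraction algorithms, the component-wise spectral gap of $G$ is at least $\lambda \ge (\log n)^{-10}$.

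The heart of the proof is analyzing the edge-sampling step of \textsc{SampleSolve}: sample each edge of $G$ with probability $p = 1/(\log n)^7$ to form $G'$. I invoke the matrix concentration argument of \S\ref{sec:spectral} applied component-wise: with $p \cdot b \cdot \lambda^2 = (\log n)^{100 - 7 - 20} = (\log n)^{73} \gg \log n$, each unfinished component of $G$ retains a normalized-Laplacian spectral gap of at least $\lambda / 2$ in $G'$ with probability $1 - 1/\poly(\log n)$. Since this spectral gap is strictly positive, no component is split by sampling, so the component partition of $G'$ equals that of $G$. The standard bound $d \le O(\log n / \lambda)$ applied to the sampled graph then gives diameter $O((\log n)^{11})$ for $G'$.

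With this diameter bound, Theorem~\ref{thm:ltz_main} computes the components of $G'$ in $O(\log d + \log\log n) = O(\log\log n)$ time, and by a Chernoff bound $|E(G')| = O(m/(\log n)^6)$ with high probability, so its total work is $\overline{O}(m/(\log n)^5 + n)$. The small-vertex branch of \textsc{SampleSolve} ($|V| \le n^{0.1}$) is handled separately by deduplicating via PRAM perfect hashing in $O(m)$ work to get a simple graph on at most $n^{0.2}$ edges whose diameter is unchanged, and then applying Theorem~\ref{thm:ltz_main} in $O(\log\log n)$ time and $O(n)$ work.

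Correctness of the final labeling follows from combining Lemma~\ref{lem:H_small_component_done} (small skeleton components are already finished) with the fact that after LTZ on $G'$ every vertex in $V(G)$ is within one pointer hop of its component root while each Stage-1 leaf adds at most one more hop; thus the single shortcut $v.p = v.p.p.p$ in Step~\ref{alg:samplesolve:s4} flattens every tree. Summing gives $O(\log\log n)$ time and $O(m) + \overline{O}(n)$ work; a union bound over the $O(1)$-many stage failure events (each at most $1/\poly(\log n)$) yields an overall success probability of at least $1 - 1/\log n$. The principal technical obstacle is the matrix concentration step, which must show that sampling at the tiny rate $p = 1/(\log n)^7$ preserves the spectral gap of every component; the generous degree boost to $b = (\log n)^{100}$ in Stage 2 is precisely what supplies the slack needed to make $p \cdot b \cdot \lambda^2$ polylogarithmically large and thereby drive the concentration bound through.
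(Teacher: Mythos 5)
Your proposal is correct and follows essentially the same route as the paper's own proof: compose \textsc{Reduce} (Lemma~\ref{lem:polylog_shrink_reduce_all}), \textsc{Increase} (Lemmas~\ref{lem:increase_work_time}, \ref{lem:H_big_component_large_degree}, \ref{lem:H_small_component_done}), and \textsc{SampleSolve}, invoke Lemma~\ref{lem:spectral_gap_after_contraction} and Corollary~\ref{cor:sample_preserve} to preserve the component-wise spectral gap under contraction and sampling, then use $d' = O(\log n / \lambda')$ and Theorem~\ref{thm:ltz_main} to bound the final stage. Your reformulation of the slack condition as $p \cdot b \cdot \lambda^2 \gg \log n$ is a clean way to package the paper's bound $\lambda' \ge b^{-0.1} - b^{-0.4} \ge b^{-0.2}$; the only details you elide that the paper handles explicitly are the $|E| \le n^{0.2}$ subcase of the Chernoff bound on $|E(G')|$ and the fact that Corollary~\ref{cor:sample_preserve} actually yields probability $1 - 1/\poly(n)$ (not merely $1 - 1/\poly(\log n)$), neither of which affects the conclusion.
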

\begin{proof}
    We run algorithm \textsc{SampleSolve}$(G = (V, E))$ where $G$ is the current graph at the end of Stage $2$. 
    
    By Lemma~\ref{lem:alg:reduce_correctness}, all vertices at the end of Stage $1$ are roots or children of roots, as required by Stage $2$. 
    By Lemma~\ref{lem:increase_is_contraction}, w.p. $1 - 1/(\log n)^8$, all vertices in $V$ must be roots or children of roots, where $V$ is the set of roots at the end of Stage $1$. Therefore, all vertices in the input graph are in trees of height at most $2$. 
    Within \textsc{SampleSolve}$(G)$ in Stage $3$, either in its Step~\ref{alg:samplesolve:s1} or Step~\ref{alg:samplesolve:s3}, the contraction algorithm in Theorem~\ref{thm:ltz_main} makes each root at the beginning of Stage $3$ a root or a child of root at the end of Step~\ref{alg:samplesolve:s3}, all vertices in the input graph are in trees of height at most $3$. 
    Step~\ref{alg:samplesolve:s4} of \textsc{SampleSolve}$(G)$ makes all trees flat at the end of Stage $3$. 
    
    By Lemma~\ref{lem:polylog_shrink_reduce_all}, Stage $1$ runs in $O(\log \log n)$ time and $O(m) + \overline{O}(n)$ work w.p. $1 - n^{-2}$, and $|V| \le n/b^{10}$ as required by Stage $2$. 
    By Lemma~\ref{lem:increase_work_time}, w.p. $1 - 1/(\log n)^2$, Stage $2$ takes $O(\log b) = O(\log \log n)$ time and $O(m + n)$ work. 
    So, the first two stages take $O(\log \log n)$ time and $O(m) + \overline{O}(n)$ work w.p. $1 - 2/(\log n)^2$. 
    
    If $|V| \le n^{0.1}$ at the beginning of Stage $3$ (which can be detected in $O(\log^* n)$ time and $O(n)$ work w.p. $1 - n^{-9}$ by approximate compaction), then in Step~\ref{alg:samplesolve:s1} of \textsc{SampleSolve}$(G)$ we remove all loops from $E$ in $O(1)$ time and $O(m)$ work; 
    next, we remove all parallel edges from $E$ by PRAM perfect hashing on all edges of $E$ in $O(m)$ work and $O(\log^* n)$ time w.p. $1 - n^{-9}$ (\cite{DBLP:conf/focs/GilMV91}). 
    This results in a simple graph with at most $|V|^2 \le n^{0.2}$ edges, so the work of running the algorithm in Theorem~\ref{thm:ltz_main} on this graph is at most $O(n^{0.2} \cdot \log n) \le O(n)$ w.p. $1 - 1/(\log n)^9$, giving $O(m + n)$ total work for Step~\ref{alg:samplesolve:s1} w.p. $1 - 1/(\log n)^8$. 
    By Lemma~\ref{lem:spectral_gap_after_contraction} and the assumption on $\lambda$, we have $\lambda_G \ge (\log n)^{-10}$ at the beginning of Step~\ref{alg:samplesolve:s1}. 
    Let $d$ be the maximum diameter of any component in $G$, then $d \le O(\log n / \lambda) \le (\log n)^{12}$. 
    Removing loops and parallel edges from $G$ does not affect the diameter of $G$, nor its components. 
    Therefore, the algorithm in Theorem~\ref{thm:ltz_main} computes all connected components of $G$ in $O(\log d + \log\log n) \le O(\log \log n)$ time w.p. $1 - 1/(\log n)^9$. 
    Step~\ref{alg:samplesolve:s4} takes $O(n)$ work and $O(1)$ time. 
    As a result, if $|V| \le n^{0.1}$ at the beginning of Stage $3$, then Stage $3$ takes $O(m + n)$ work and $O(\log\log n)$ time w.p. $1 - 1/(\log n)^7$. 
    
    Suppose $|V| > n^{0.1}$ at the beginning of Stage $3$. 
    By Lemma~\ref{lem:H_big_component_large_degree}, w.p. $1 - 1/(\log n)^5$, we have $\deg(G) \ge b$ at the beginning of Step~\ref{alg:samplesolve:s2} of \textsc{SampleSolve}$(G)$. 
    By $|V| > n^{0.1}$ and Corollary~\ref{cor:sample_preserve} \Cliff{TBD}, w.p. $1 - n^{-0.1c} \ge 1 - n^{-9}$ (choosing $c$ large enough), the spectral gap $\lambda$ of any component $\mathcal{C}$ in $G$ and the spectral gap $\lambda'$ of this component $\mathcal{C}'$ in $G'$ after sampling satisfy
    \begin{equation*}
        \left| \lambda - \lambda' \right| \le C' \cdot \sqrt{\frac{\ln n}{(1/(\log n)^7) \cdot \deg(G)}} \le b^{-0.4}
    \end{equation*}
    by $1 / (\log n)^7 \cdot \deg(G) \ge b / (\log n)^7 \ge C \ln n$ and $b \ge (\log n)^{100}$. 
    So $\lambda' \ge \lambda - b^{-0.4} \ge b^{-0.2}$ by $\lambda \ge b^{-0.1}$ from the previous paragraph. 
    Since $\lambda' > 0$, the component $\mathcal{C}$ stays connected after sampling, and thus $V(\mathcal{C}) = V(\mathcal{C'})$. 
    Therefore, it remains to prove that each such component $\mathcal{C}'$ contracts to $1$ vertex in Step~\ref{alg:samplesolve:s3} of \textsc{SampleSolve}$(G)$. 
    Let $d'$ be the diameter of component $\mathcal{C'}$. 
    We have $d' \le O(\log n / \lambda') \le O(b^{0.2} \log n) \le (\log n)^{30}$ by $b = (\log n)^{100}$. 
    Therefore, the algorithm in Theorem~\ref{thm:ltz_main} computes all connected components of $G'$ in $O(\log d' + \log\log n) \le O(\log \log n)$ time w.p. $1 - 1/(\log n)^9$, giving all connected components of $G$. 
    Now we compute the work. If $|E| \le n^{0.2}$ then Step~\ref{alg:samplesolve:s3} takes $O(n)$ work w.p. $1 - 1/(\log n)^8$ as discussed in the previous paragraph. 
    Now suppose $|E| \ge n^{0.2}$. 
    By a Chernoff bound over all edges, after Step~\ref{alg:samplesolve:s2} of \textsc{SampleSolve}$(G)$, the number of edges in $G'$ is at most $m / (\log n)^6$ w.p. $1 - n^{-9}$. So the algorithm in Theorem~\ref{thm:ltz_main} takes $O(m + n) / (\log n)$ work in Step~\ref{alg:samplesolve:s3} w.p. $1 - 1/(\log n)^9$. 
    As a result, Stage $3$ takes $O(m + n)$ work and $O(\log\log n)$ time w.p. $1 - 1/(\log n)^4$ for this case. 
    
    Summing up all works and running time over all $3$ stages and by a union bound, the theorem follows.
\end{proof}

The total work of the algorithm will be improved to $O(m+n)$, and 
the success probability will be boosted to $1 - n^{-c}$ for arbitrary constant $c > 0$ in \S{\ref{sec:boosting}}. 
The assumption on the component-wise spectral gap of the input graph will be removed in \S{\ref{sec:assumption}}, resulting in a connected components algorithm for any given graph with any component-wise spectral gap $\lambda$ over all components that runs in $O(\log (1 / \lambda) + \log \log n)$ time and $O(m) + \overline{O}(n)$ work w.p. $1 - n^{-c}$ for arbitrary constant $c > 0$.

\section{The Overall Algorithm: Removing the Assumption on Spectral Gap} \label{sec:assumption}

In this section, we remove the assumption that the component-wise spectral gap $\lambda$ over all components of the input graph is at least $(\log n)^{-10}$, giving an algorithm that runs in $O(\log(1 / \lambda) + \log\log n)$ time.

\subsection{Algorithmic Framework} \label{subsec:overll_AF}

The overall algorithm starts with running \textsc{Reduce}$(V, E, k)$ with $k = 10^6 \log\log n$ on the input graph $(V, E)$ for preprocessing (same as Stage $1$) such that the number of vertices in the current graph is at most $n/b^{10}$ w.h.p., where $b = (\log n)^{100}$ initially. 
Next, we interweave the algorithms \textsc{Filter} from Stage $1$, \textsc{Increase} from Stage $2$, and \textsc{SampleSolve} from Stage $3$, such that in each \emph{phase} $i$ (defined below in the pseudocode):
\begin{itemize}
    \item If the component-wise spectral gap $\lambda$ (unknown to the algorithm) is at least $b^{-0.1}$, then \textsc{Increase} makes $\deg(G) \ge b$ in $O(\log b)$ time, and \textsc{SampleSolve} (with other subroutines) solves the problem in $O(\log b)$ time. Otherwise, $\lambda < b^{-0.1}$. 
    \item Revert the state of algorithm \textsc{Filter} to the previous phase. 
    Update $b \leftarrow b^{1.1}$. 
    Run $O(1.1^i \log\log n)$ rounds of \textsc{Filter} (and algorithm \textsc{Matching} with other subroutines) to reduce the number of vertices in the current graph to $n/b^{10}$.
\end{itemize}

\begin{framed}
\noindent \textsc{Connectivity}$(G = (V, E))$:
\begin{enumerate}
    \item Initialize the labeled digraph: for each vertex $v \in V$: $v.p = v$. \label{alg:connectivity:s1}
    \item Run \textsc{Reduce}$(V, E, k = 10^6 \log\log n)$ to obtain the current graph $G'$. \label{alg:connectivity:s2}
    \item Sample each edge of $G'$ w.p. $1 / (\log n)^7$ independently to get random subgraph $H_1$. Sample each edge of $G'$ w.p. $1 / (\log n)^7$ independently to get random subgraph $H_2$. The sampled graphs are copied out so that $G'$ is unchanged. \label{alg:connectivity:s3}
    \item Copy the edges of $G'$ to create an edge set $E_{\text{filter}}$. \label{alg:connectivity:s4}
    \item For \emph{phase} $i$ from $0$ to $10 \log\log n$: $E_{\text{filter}} = \textsc{Interweave}(G', H_1, H_2, E_{\text{filter}}, i)$, if $E_{\text{filter}} = \emptyset$ then goto Step~\ref{alg:connectivity:s6}.\label{alg:connectivity:s5}
    \item \textsc{Shortcut}$(V)$. \label{alg:connectivity:s6}
\end{enumerate}
\end{framed}

In the above, $H_1$ is the edges 
sampled to create the sparsified graph before we run  
Theorem~\ref{thm:ltz_main}~\cite{liu2020connected},
and $H_2$ 
is an independently sampled set of edges to help us create the skeleton
graph more work-efficiently in Stage $2$ of the algorithm. 
In particular, now that we need to 
run the algorithm for each assumption of the component-wise $\lambda$,  
we can no longer afford to spend $O(m)$ work for each assumption --
for this reason, we sparsify the graph before creating
the skeleton graphs such that we can 
create the skeleton graph in 
$O(m/\log^2 n)$ work.
See \S~\ref{subsec:reduce_work_increase}
for more details.

We now describe the \textsc{Interweave} algorithm which essentially searches for the right spectral gap by squaring the guess in every iteration.

\begin{framed}
\noindent \textsc{Interweave}$(G', H_1, H_2, E_{\text{filter}}, i)$:
\begin{enumerate}
    \item Set $b = (\log n)^{100 \cdot 1.1^i}$. \label{alg:interweave:s1}
    \item $\textsc{Increase}(G', H_1, H_2, b)$ ---  
see \S~\ref{subsec:reduce_work_increase} for more details. 
\label{alg:interweave:s2}
    \item Run $20 \log b$ rounds of $H_1 = \textsc{Expand-Maxlink}(H_1)$ then run $10^4 \log\log n$ rounds of the algorithm in Theorem~\ref{thm:ltz_main} on $H_1$. \textsc{Alter}$(H_1)$. \label{alg:interweave:s3}
    \item If all edges in $E(H_1)$ are loops then call \textsc{Remain}$(G', H_1)$ and return $\emptyset$. \label{alg:interweave:s4}
    \item Revert the labeled digraph and graph $H_1$ to their states in Step~\ref{alg:interweave:s1}. \label{alg:interweave:s5}
    \item Repeat for $10^6 \cdot 1.1^i \log\log n$ rounds: \textsc{Matching}$(E_{\text{filter}})$, \textsc{Alter}$(E_{\text{filter}})$, delete each edge from $E_{\text{filter}}$ w.p. $10^{-4}$. \label{alg:interweave:s6}
    \item Repeat for $i + 2 \log\log n$ times: \textsc{Shortcut}$(V(G'))$. \label{alg:interweave:s7}
    \item Edge set $E' = \{(v, w) \in E(G') \mid v.p \in V(G') \backslash V(E_{\text{filter}})\}$. \textsc{Alter}$(E')$. \label{alg:interweave:s8} 
    \item Repeat for $10^6 \cdot 1.1^i \log\log n$ rounds: \textsc{Matching}$(E')$, \textsc{Shortcut}$(V(G'))$, \textsc{Alter}$(E')$. \label{alg:interweave:s9}
    \item \textsc{Reverse}$(V(E_{\text{filter}}), E(H_2))$. \label{alg:interweave:s10}
    \item Return $E_{\text{filter}}$.
\end{enumerate}
\end{framed}

\begin{framed}
    \noindent \textsc{Remain}$(G', H_1)$:
    \begin{enumerate}
        \item \textsc{Alter}$(E(G'))$. \label{alg:remain:s1}
        \item Edge set $E_{\text{remain}} = E(G') \backslash E(H_1)$. \label{alg:remain:s2}
        \item Remove loops and parallel edges from $E_{\text{remain}}$. \label{alg:remain:s3}
        \item Compute the connected components of graph $(V(E_{\text{remain}}), E_{\text{remain}})$ by Theorem~\ref{thm:ltz_main}. \label{alg:remain:s4}
    \end{enumerate}
\end{framed}

At the beginning of each phase $i$, the algorithm assumes that $\lambda \ge b^{-0.1}$ where $b = (\log n)^{100 \cdot 1.1^i}$. 
If the assumption is correct, then the connected components of the input graph is correctly computed in $O(\log b) = O(\log(1/\lambda))$ time. 
Otherwise, $b$ is updated to $b^{1.1}$ and the number of vertices is reduces to $n/b^{10}$ for the next phase under the assumption that $\lambda \ge b^{-0.1}$. 

The algorithm eventually succeeds in a phase under the correct assumption. 
Since the running time $O(\log b)$ in each phase is geometrically increasing, the total running time is asymptotically the running time of the phase under the correct assumption, which is $O(\log(1/\lambda))$ as desired. 

An additional technique issue is to reduce the work in each phase from $O(m + n)$ to $O(m+n) /\log n$ such that the total work over all $O(\log\log n)$ phases are $O(m+n)$, which will be presented later in this section: 
In \S{\ref{subsec:reduce_work_increase}} we reduce the work of \textsc{Increase}$(G', H_1, H_2, b)$ by modifying the algorithm \textsc{Increase}$(V, E, b)$ in \S{\ref{subsec:increase}};
in \S{\ref{subsec:reduce_work_edge_set}} we reduce the work of computing the edge set $E'$ in Step~\ref{alg:interweave:s8} of \textsc{Interweave}$(G', H_1, H_2, E_{\text{filter}}, i)$.

\subsection{Basic Properties of \textsc{Connectivity}} \label{subse:basic_connectivity_overlall}

We give some basic properties of the algorithm \textsc{Connectivity}, which will be used later.

\begin{lemma} \label{lem:interweave_flat_each_phase}
    During the execution of \textsc{Connectivity}$(G)$, at the beginning of each phase in Step~\ref{alg:connectivity:s5}, all vertices in $V(G')$ are roots or children of roots.
\end{lemma}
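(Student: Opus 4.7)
The plan is to prove this by induction on the phase index $i$. For the base case ($i=0$), at the start of Step~\ref{alg:connectivity:s2} every vertex is a singleton root (set by Step~\ref{alg:connectivity:s1}), so Lemma~\ref{lem:alg:reduce_correctness} applied to \textsc{Reduce}$(V,E,k)$ guarantees that immediately before phase $0$, the labeled digraph consists of flat trees, hence every vertex in $V(G')$ is a root or a child of a root.

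For the inductive step, I assume the property holds at the start of phase $i$ and track the maximum tree depth (restricted to $V(G')$) through the phase. The key observation is that Step~\ref{alg:interweave:s5} of \textsc{Interweave} explicitly reverts the labeled digraph, so any height growth from Steps 2--4 is discarded, and after Step~\ref{alg:interweave:s5} the depth is at most $1$ by the inductive hypothesis. From there I analyze each later step: Step~\ref{alg:interweave:s6} performs $K_i := 10^6\cdot 1.1^i\log\log n$ rounds of \textsc{Matching}, \textsc{Alter}, and deletion without an outer \textsc{Shortcut}, so the depth can grow by at most $1$ per round (children outside $V(E_{\text{filter}})$ fail the test in Step~\ref{alg:matching:s1} of \textsc{Matching} and so are untouched, while their ancestor roots may be relinked by Lemma~\ref{lem:constant_shrink_correctness}); the depth is therefore at most $K_i+1$ after Step~\ref{alg:interweave:s6}. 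Step~\ref{alg:interweave:s7} applies $i+2\log\log n$ rounds of \textsc{Shortcut}$(V(G'))$, which reduces depth $h$ to $\lceil h/2\rceil$ each time; the bound I need, $K_i+1 \le 2^{i+2\log\log n}=2^i(\log n)^2$, is routine since $(\log n)^2/\log\log n \to \infty$ and $(2/1.1)^i$ is monotone in $i$, so after Step~\ref{alg:interweave:s7} all trees are flat.

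For the remainder, Step~\ref{alg:interweave:s8}'s \textsc{Alter}$(E')$ does not touch the labeled digraph, so flatness is preserved; each round of Step~\ref{alg:interweave:s9} starts flat, grows depth to at most $2$ through \textsc{Matching}$(E')$, and is immediately pulled back to depth $\le 1$ by the explicit \textsc{Shortcut}$(V(G'))$ inside the round; Step~\ref{alg:interweave:s10}'s \textsc{Reverse} ends with a global shortcut (Step~\ref{alg:reverse:s2}) which, together with the depth-$\le 2$ structure set up by Step~\ref{alg:reverse:s1}, again yields flat trees. Hence every vertex in $V(G')$ is a root or a child of a root at the end of phase $i$, completing the induction.

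The main obstacle will be the bookkeeping in Step~\ref{alg:interweave:s6}--Step~\ref{alg:interweave:s7}: I need to argue carefully that the ``silent'' depth growth during Step~\ref{alg:interweave:s6} is exactly one per round (and not more from some subtle interaction with \textsc{Alter} or \textsc{Matching}'s internal shortcut), and then verify that the chosen number of shortcut iterations really beats the exponential base $1.1$ used to grow $b$. The rest of the steps involve only constant-depth growth per round that is immediately undone by an explicit shortcut, so they are straightforward once the Step~\ref{alg:interweave:s6}--Step~\ref{alg:interweave:s7} analysis is in place.
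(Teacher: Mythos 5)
Your proof is correct and follows essentially the same route as the paper's: induction over phases, discarding Steps~\ref{alg:interweave:s2}--\ref{alg:interweave:s4} via the revert in Step~\ref{alg:interweave:s5}, a depth-growth-of-one-per-round argument for Step~\ref{alg:interweave:s6} (via Lemma~\ref{lem:constant_shrink_correctness}), flattening by the $i+2\log\log n$ shortcuts in Step~\ref{alg:interweave:s7}, and per-round re-flattening in Step~\ref{alg:interweave:s9}. The only differences are cosmetic: you use depth-halving per \textsc{Shortcut} where the paper uses the conservative $3/2$-factor bound (either rate suffices for the chosen iteration count), and you explicitly treat the \textsc{Reverse} in Step~\ref{alg:interweave:s10}, which the paper's written proof leaves implicit but handles identically elsewhere (e.g., in Lemma~\ref{lem:extract_contraction_alg}).
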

\begin{proof}
    By Lemma~\ref{lem:alg:reduce_correctness}, after Step~\ref{alg:connectivity:s2}, all trees are flat and edges in $E(G')$ are only adjacent to roots, which also holds for edge set $E_{\text{filter}}$ (Step~\ref{alg:connectivity:s4}). 
    By an induction on phase $i$ from $0$ to $10 \log\log n$, it suffices to prove that for any phase, if at the beginning of the phase all vertices in $V(G')$ are roots or children of roots, then this still holds at the beginning of the next phase. 
    We do not consider the case of $E_{\text{filter}} = \emptyset$ in Step~\ref{alg:connectivity:s5} because there will be no next phase. 
    Consider the \textsc{Interweave}$(G', H_1, H_2, E_{\text{filter}}, i)$ in this phase. 
    Step~\ref{alg:interweave:s2} to Step~\ref{alg:interweave:s5} can be ignored because \textsc{Interweave}$(G', H_1, H_2, E_{\text{filter}}, i)$ does not return $\emptyset$ and thus the labeled digraph (as well as $H_1$) are the same as in Step~\ref{alg:interweave:s1} dues to Step~\ref{alg:interweave:s5}. 
    By an induction on rounds in Step~\ref{alg:interweave:s6} and Lemma~\ref{lem:constant_shrink_correctness}, the edges in $E_{\text{filter}}$ are only adjacent to roots at the beginning of each round of Step~\ref{alg:interweave:s6}, and any vertex $v \in V(G')$ has distance at most $10^6 \cdot 1.1^i \log\log n$ from the root of the tree containing $v$ because such distance increases by at most $1$ in each round. 
    Each application of \textsc{Shortcut}$(V(G'))$ in Step~\ref{alg:interweave:s7} reduces such distance by a factor of at least $3/2$ if the distance is at least $2$. 
    Therefore, after $i + 2\log\log n \ge \log_{3/2} (10^6 \cdot 1.1^i \log\log n)$ applications of \textsc{Shortcut}$(V(G'))$, all vertices in $V(G')$ are roots or children of roots after Step~\ref{alg:interweave:s7}. 
    The \textsc{Alter}$(E')$ in Step~\ref{alg:interweave:s8} moves all edges only adjacent to roots since $E' \subseteq E(G')$. 
    Finally, by an induction on rounds in Step~\ref{alg:interweave:s9} and Lemma~\ref{lem:constant_shrink_correctness}, the edges in $E'$ are only adjacent to roots at the beginning of each round of Step~\ref{alg:interweave:s9} and all vertices in $V(G')$ are roots or children of roots at the beginning of each round of Step~\ref{alg:interweave:s9} dues to the \textsc{Shortcut}$(V(G'))$ in each round. 
    This finishes the total induction and gives the lemma.
\end{proof}

\begin{definition} \label{def:upper_degree}
    Let graph $G'$ be computed in Step~\ref{alg:connectivity:s2} of \textsc{Connectivity}$(G)$. 
    Given a time $t$ in the execution of \textsc{Connectivity}$(G)$, 
    for each vertex $v \in V(G')$, 
    define $\overline{\deg}_{G'}(v) \coloneqq \left|\left\{(u, w) \in E(G') \mid u.p = v \right\}\right|$, 
    and define graph $\overline{G'}$ as the graph $G'$ after executing \textsc{Alter}$(E(G'))$, where the parent points are given in the labeled digraph at time $t$.
\end{definition}

In our later analysis, we will frequently use $\overline{G'}$, because we cannot afford executing \textsc{Alter}$(E(G'))$ as it would take $\Theta(m)$ work. 
Recall from \S{\ref{sec:stage1}} that an active root is a vertex that is a root with adjacent edges to another root in the current graph. 
Our current graph would be $G'$ if we executed \textsc{Alter}$(E(G'))$ at the end of each phase, which is unaffordable, but we can still define the \emph{current graph} to be $\overline{G'}$ to be consistent.\footnote{We will show how to detect active roots in desired work later in the proof of Lemma~\ref{lem:new_increase_work}.}
As a result, a root $u$ that is non-active must be in a tree containing all the vertices in the component of $u$ in $G'$, and thus $u$ and all its descendants can be ignored in the rest of the execution of the algorithm because the computation of this component is finished.

\begin{lemma} \label{lem:interweave_roots_shrink_each_phase}
    During the execution of \textsc{Connectivity}$(G)$, w.p. $1 - n^{-7}$, for any integer $i \in [0, 10\log\log n)$, at the beginning of each phase $i$ in Step~\ref{alg:connectivity:s5}, the number of active roots in $V(G')$ is at most $n/b^{10}$, where $b = (\log n)^{100 \cdot 1.1^i}$.
\end{lemma}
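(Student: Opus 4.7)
The plan is induction on the phase index $i$, bounding the number $|A_i|$ of active roots at the start of phase $i$ by $n/b_i^{10}$. The base case ($i=0$) is immediate: by Lemma~\ref{lem:polylog_shrink_reduce_all} applied to Step~\ref{alg:connectivity:s2}, w.p.\ $1 - n^{-2}$ the current graph has at most $n/(\log n)^{1000} = n/b_0^{10}$ vertices, and every active root is such a vertex.

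For the inductive step, I would fix phase $i \ge 1$ and condition on $|A_{i-1}| \le n/b_{i-1}^{10}$. Because Step~\ref{alg:interweave:s5} reverts the labeled digraph, Steps~\ref{alg:interweave:s1}--\ref{alg:interweave:s5} have no persistent effect (if the branch in Step~\ref{alg:interweave:s4} triggers, the phase loop terminates and there is no phase $i$ to analyze). So I only need to track Steps~\ref{alg:interweave:s6}, \ref{alg:interweave:s8}, \ref{alg:interweave:s9}, and \ref{alg:interweave:s10}. By Lemma~\ref{lem:interweave_flat_each_phase} every vertex of $V(G')$ is a root or child of a root at the start of the phase, so every edge of $\overline{G'}$ has both endpoints rooted; in particular, each active root of $\overline{G'}$ has a neighboring active root, which guarantees the no-singleton precondition needed to invoke Lemma~\ref{lem:constant_shrink_reduce}.

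I would then partition the active roots into those still adjacent to $E_{\text{filter}}$ and those not. For the first group, I apply Lemma~\ref{lem:constant_shrink_reduce} to each of the $R := 10^6 \cdot 1.1^{i-1} \log\log n$ rounds of Step~\ref{alg:interweave:s6}, obtaining a geometric contraction of the roots adjacent to $E_{\text{filter}}$ by a factor of $0.999$ per round w.p.\ $1 - 2^{-|A_{i-1}|/10000}$ each (if $|A_{i-1}|$ is too small for the Chernoff-type bound, the conclusion $|A_i| \le n/b_i^{10}$ is already trivially true). For the second group -- active roots lying outside $V(E_{\text{filter}})$ after Step~\ref{alg:interweave:s6} -- Step~\ref{alg:interweave:s8} constructs $E' \subseteq E(G')$ to collect precisely their adjacent edges and \textsc{Alter}$(E')$ rebases those edges onto roots; then Step~\ref{alg:interweave:s9} performs another $R$ matching rounds on $E'$, to which the identical geometric contraction applies. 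Finally, \textsc{Reverse} in Step~\ref{alg:interweave:s10} only turns roots into non-roots and thus cannot increase $|A_i|$.

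The main calculation is $0.999^R \le 1/b_{i-1}$: since $\log(1/0.999) > 10^{-3}$, we get $R \cdot \log(1/0.999) > 10^{3} \cdot 1.1^{i-1} \log\log n > 100 \cdot 1.1^{i-1} \log\log n = \log b_{i-1}$, so $|A_i| \le 0.999^R \cdot n/b_{i-1}^{10} \le n/b_{i-1}^{11} = n/b_i^{10}$. A union bound over at most $10\log\log n$ phases and $\poly(\log n)$ rounds per phase keeps the total failure probability under $n^{-7}$. The main obstacle will be carefully distinguishing the three notions of ``root'' at play -- active roots of $\overline{G'}$, roots adjacent to $E_{\text{filter}}$, and roots adjacent to $E'$ -- to verify that every active root surviving Step~\ref{alg:interweave:s6} has an incident edge in $E'$ by the time Step~\ref{alg:interweave:s9} begins, and to confirm that the intermediate \textsc{Alter} and \textsc{Shortcut} operations never create new active roots.
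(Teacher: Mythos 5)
Your proposal follows the same inductive strategy as the paper's proof: anchor the base case to Stage~$1$, ignore Steps~\ref{alg:interweave:s2}--\ref{alg:interweave:s5} because the labeled digraph is reverted, apply the constant-shrink matching bound round-by-round in Step~\ref{alg:interweave:s6} and Step~\ref{alg:interweave:s9}, and take a union bound over phases and rounds. The paper packages the per-round shrinkage by reusing the proofs of Lemma~\ref{lem:extract_time_work_helper_V} and Lemma~\ref{lem:polylog_shrink_reduce}, whereas you re-derive it directly from Lemma~\ref{lem:constant_shrink_reduce}; that is a cosmetic difference. You also correctly flag that the Chernoff bound degrades when the remaining root count is small, which the paper handles implicitly through the standing ``$\ge n/\log n$'' assumption in \S\ref{subsubsec:alg_loglog_time_work}.

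One point you treat too casually: you say \textsc{Reverse} in Step~\ref{alg:interweave:s10} ``only turns roots into non-roots and thus cannot increase $|A_i|$,'' but this is not quite the right way to frame its role. After Step~\ref{alg:interweave:s9} there is a third category of root besides the two you explicitly bound: a root $v\notin V(E_{\text{filter}})$ that has no remaining $E'$-edge to another root, yet whose tree (built by the Step~\ref{alg:interweave:s9} contractions) contains a vertex of $V(E_{\text{filter}})$. Such a $v$ may still be active in $\overline{G'}$, so ``cannot increase'' does not by itself rule it out of $|A_i|$. The paper's argument (the case analysis in the proof of Lemma~\ref{lem:polylog_shrink_reduce}) shows that \textsc{Reverse}$(V(E_{\text{filter}}), E(H_2))$ converts precisely these vertices into non-roots by promoting the $V(E_{\text{filter}})$ vertex in the tree to be the root (which is why \textsc{Reverse} takes $V(E_{\text{filter}})$ as its first argument). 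With that, the final count is the \emph{sum} of the Step~\ref{alg:interweave:s6} and Step~\ref{alg:interweave:s9} bounds — the paper gets $2n/b^{12}\le n/b^{11}$, with the factor-$2$ slack absorbed by a spare power of $b$ — rather than the single-group bound your last inequality displays. You do flag the need to audit the three root notions, so this is an exposition gap rather than a wrong idea, but the \textsc{Reverse} case analysis is where the real work of the inductive step lives and it needs to be spelled out.
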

\begin{proof}
    The proof is by an induction on phase $i \in [0, 10\log\log n)$. 
    At the beginning of phase $0$, $|V(G')| \le n/(\log n)^{1000}$ guaranteed by Stage $1$. 
    Suppose the number of active roots is at most $n /b^{10}$ at the beginning of phase $i$ where $b = (\log n)^{100 \cdot 1.1^i}$, we will show that the number of active roots decreases to $n/b^{11}$ at the end of phase $i$, giving the lemma.
    
    Consider the \textsc{Interweave}$(G', H_1, H_2, E_{\text{filter}},i)$ in phase $i$. 
    We shall ignore Step~\ref{alg:interweave:s2} to Step~\ref{alg:interweave:s4} in the analysis, because the labeled digraph is reverted to Step~\ref{alg:interweave:s1} (by Step~\ref{alg:interweave:s4}) if \textsc{Interweave}$(G', H_1, H_2, E_{\text{filter}},i)$ does not return in Step~\ref{alg:interweave:s3}.
    
    By the proof of Lemma~\ref{lem:extract_time_work_helper_V}, each round in Step~\ref{alg:interweave:s6} decreases the number of roots in $V(E_{\text{filter}})$ by a factor of $0.001$ w.p. $1 - n^{-9}$. 
    There are at most $(\log n)^3$ rounds in Step~\ref{alg:interweave:s6} by $i \le 10 \log\log n$.
    So by a union bound, w.p. $1 - n^{-8}$, at the end of Step~\ref{alg:interweave:s6} we have the number of roots in $V(E_{\text{filter}})$ is at most
    \begin{equation*}
        n / b^{10} \cdot 0.999^{10^6 \cdot 1.1^i \log\log n} \le n/b^{12} .
    \end{equation*}
    
    After Step~\ref{alg:interweave:s8}, the edge set $E'$ contains all edges adjacent to all roots in $V(G') \backslash V(E_{\text{filter}})$, because all vertices in $V(G')$ are roots or children of roots by Lemma~\ref{lem:interweave_flat_each_phase} and \textsc{Alter}$(E')$ moves all edges to roots. 
    By the proof of Lemma~\ref{lem:polylog_shrink_reduce}, the number of roots such that the root has an adjacent edge to another root in $V(E')$ decreases by a factor of $0.001$ w.p. $1 - n^{-9}$ in each round of Step~\ref{alg:interweave:s9}. 
    By a union bound over all the at most $(\log n)^3$ rounds in Step~\ref{alg:interweave:s9}, the number of such roots (with an adjacent edge to another root) in $V(E')$ at the end of Step~\ref{alg:interweave:s9} is at most $n/b^{12}$ w.p. $1 - n^{-8}$. 
    By the same proof as in the proof of Lemma~\ref{lem:polylog_shrink_reduce}, 
    the \textsc{Reverse}$(V(E_{\text{filter}}), E(H_2))$ in Step~\ref{alg:interweave:s10} makes the number of active roots at most $2n/b^{12} \le n/b^{11}$ w.p. $1 - n^{-7}$.
\end{proof}

\subsection{Reducing the Work of \textsc{Increase}} \label{subsec:reduce_work_increase}

In this section, we modify Step~\ref{alg:increase:s1} and Step~\ref{alg:increase:s10} in the \textsc{Increase}$(V, E, b)$ in \S{\ref{subsec:increase}} such that the new \textsc{Increase}$(G', H_1, H_2, b)$ takes $(m + n) / (\log n)^2$ work w.p. $1 - 2/(\log n)^2$.

\begin{framed}
\noindent \textsc{Increase}$(G', H_1, H_2, b)$:
\begin{enumerate}
    \item Edge set $E_H = \textsc{SparseBuild}(G', H_2, b)$, vertex set $V = V(G')$, graph $H = (V(E_H), E_H)$. \label{alg:new_increase:s1}
    \item Run Step~\ref{alg:increase:s2} to Step~\ref{alg:increase:s9} in the \textsc{Increase}$(V, E, b)$ in \S{\ref{subsec:increase}}. \label{alg:new_increase:s2}
    \item \textsc{Alter}$(E(H_1))$. \label{alg:new_increase:s3}
\end{enumerate}
\end{framed}

\subsubsection{Building Sparse Skeleton Graph} \label{subsubsec:sparse_build}

Recall that the \textsc{Build}$(V, E, b)$ in Step~\ref{alg:increase:s1} of the \textsc{Increase}$(V, E, b)$ in \S{\ref{subsec:increase}} takes $O(m+n)$ work (Lemma~\ref{lem:build_time_work}), because the sparse subgraph $H$ is computed based on the input graph $(V, E)$ with $n$ vertices and $m$ edges. 
The computation of $H$ relies on classifying vertices as high-degree and low-degree. 
One of the key observation is that such classification is still accurate w.h.p. even if we sample the graph $\overline{G'}$ to get random subgraph $H_2$ and classify vertices using edges of $H_2$: a high-degree vertex in $\overline{G'}$ is still high-degree w.h.p. in $H_2$ (with different classification threshold). 
The final subgraph $H$ should also contain all edges adjacent to low-degree vertices. 
Another key observation is that the number of edges adjacent to low-degree vertices is bounded, and we present a new technique to extract all these edges from the original input graph in $n / \poly(\log n)$ work in \S{\ref{subsec:reduce_work_edge_set}}.

\begin{framed}
\noindent \textsc{SparseBuild}$(G', H_2, b)$:
\begin{enumerate}
    \item For each active root $v \in V(G')$: assign a block of $b^9$ processors, which will be used as an indexed hash table $\mathcal{H}(v)$ of size $b^9$. \label{alg:sparsebuild:s1}
    \item For each edge $(u, v) \in E(H_2)$: write $(u,v)$ into a random indexed cell of $\mathcal{H}(v)$. \label{alg:sparsebuild:s2}
    \item For each active root $v \in V(G')$: if the number of items in $\mathcal{H}(v)$ is more than $b^8$ then mark $v$ as \emph{high}, else mark $v$ as \emph{low}. \label{alg:sparsebuild:s3}
    \item $E' = \{(u, v) \in E(G') \mid u.p \text{~is~a~low~vertex}\}$. \textsc{Alter}$(E')$. \label{alg:sparsebuild:s4}
    \item Return $E' \cup E(H_2)$.  \label{alg:sparsebuild:s5}
\end{enumerate}
\end{framed}

Note that in Step~\ref{alg:sparsebuild:s2} of the \textsc{SparseBuild}$(G', H_2, b)$, we do not use hashing by vertex index as in \textsc{Build}$(V, E, b)$. 
This is because we cannot afford to removing parallel edges and loops in sublinear work as in \textsc{Build}$(V, E, b)$. 
As a result, we only obtain bounds on the vertex degrees (including parallel edges and loops) instead of bounds on the number of neighbors for each vertex as in \S{\ref{subsec:build}}. 
We will show that under this case, \textsc{Increase}$(G', H_1, H_2, b)$ still makes every root high-degree.

\begin{lemma} \label{lem:sparse_build:high_vertex}
    During the execution of \textsc{Connectivity}$(G)$,
    w.p. $1 - n^{-9}$, for any vertex $v \in V(G')$ that is marked as high in any call to \textsc{SparseBuild}$(G', H_2, b)$, $\overline{\deg}_{G'}(v) \ge b^8$ in that call;
    and for any vertex $v \in V(G')$ that is marked as low, $\overline{\deg}_{G'}(v) \le b^9$ in that call.
\end{lemma}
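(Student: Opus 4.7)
The plan is to split the statement into its two implications, prove each separately, and then close with a union bound over vertices and phases.

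The first direction—"high implies $\overline{\deg}_{G'}(v) \ge b^8$"—is essentially deterministic. If $v$ is marked high in some call to \textsc{SparseBuild}, then by Step~\ref{alg:sparsebuild:s3} at least $b^8$ distinct cells of $\mathcal{H}(v)$ are occupied, each by at least one edge $(u,v) \in E(H_2)$ written in Step~\ref{alg:sparsebuild:s2}. Since $H_2$ is a subgraph of $G'$ and the only explicit mutation of $H_2$ in the algorithm (the \textsc{Reverse}$(V(E_{\text{filter}}), E(H_2))$ in Step~\ref{alg:interweave:s10}) keeps its endpoints consistent with the parent pointers, every such edge of the current $H_2$ is in bijection with an edge of $G'$ counted by $\overline{\deg}_{G'}(v)$, and the conclusion follows.

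For the second direction I would prove the contrapositive: if $\overline{\deg}_{G'}(v) > b^9$, then $v$ is marked high except with probability $n^{-20}$. Let $X_v$ be the number of edges of the current $E(H_2)$ adjacent to $v$. By the consistency invariant above together with the independence of the sampling that built $H_2$ in Step~\ref{alg:connectivity:s3}, $X_v \sim \mathrm{Binomial}(\overline{\deg}_{G'}(v), 1/(\log n)^7)$, so $\mathbb{E}[X_v] \ge b^9/(\log n)^7 \ge (\log n)^{893}$, using $b \ge (\log n)^{100}$. A Chernoff bound yields $X_v \ge \mathbb{E}[X_v]/2$ except with probability $e^{-\Omega((\log n)^{893})}$. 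Conditioning on this, the $X_v$ edges are placed uniformly at random into the $b^9$ cells of $\mathcal{H}(v)$, and a standard balls-into-bins calculation gives
\[
\mathbb{E}[|\mathcal{H}(v)|] \;=\; b^9\Bigl(1 - (1-1/b^9)^{X_v}\Bigr) \;\ge\; \min(X_v, b^9)/2 \;\ge\; \frac{b^8 (\log n)^{93}}{2} \;\gg\; b^8.
\]
Because the number of occupied cells is $1$-Lipschitz in each ball's placement, McDiarmid's inequality gives sub-Gaussian concentration at scale $\sqrt{X_v}$, which is negligible compared to the gap $\mathbb{E}[|\mathcal{H}(v)|] - b^8$. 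Thus $|\mathcal{H}(v)| > b^8$ except with probability $n^{-20}$, making $v$ high and contradicting the hypothesis. A union bound over the at most $n$ active roots and the $O(\log\log n)$ calls to \textsc{SparseBuild} finishes the proof at overall failure probability $n^{-9}$.

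The main obstacle is the setup for the second direction: one must carefully audit how \textsc{Reverse}$(V(E_{\text{filter}}), E(H_2))$ rewrites the endpoints of $H_2$ across earlier phases so that, at the moment Step~\ref{alg:sparsebuild:s2} executes, the edges of $E(H_2)$ currently adjacent to $v$ really are an independent Bernoulli$(1/(\log n)^7)$ sample of the edges of $G'$ counted by $\overline{\deg}_{G'}(v)$ (and in particular, that conditioning on the history of the algorithm has not biased this Binomial). Once this independence invariant is nailed down, the Chernoff and McDiarmid steps carry enormous slack due to $b \ge (\log n)^{100}$ and are routine.
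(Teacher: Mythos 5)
Your proposal is correct and follows essentially the same route as the paper: the high-to-degree direction is deterministic via $H_2$ being a subgraph of $\overline{G'}$, and the low-to-degree direction is the contrapositive, with a Chernoff bound on $\deg_{H_2}(v)$ followed by showing the occupancy of $\mathcal{H}(v)$ exceeds $b^8$, then a union bound over vertices and the $O(\log\log n)$ calls. The only (inessential) difference is the last concentration step, where you use the expected occupancy plus McDiarmid while the paper bounds the event that all hashed edges land in some $b^8$-cell subset directly by a binomial-coefficient counting argument; the independence of the $H_2$ sample relative to the contraction history, which you flag as the main obstacle, is likewise only asserted (not elaborated) in the paper's proof.
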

\begin{proof}
    For any high vertex $v$, it must have degree at least $b^8$ in graph $H_2$ by Step~\ref{alg:sparsebuild:s2} and Step~\ref{alg:sparsebuild:s3}. Thus, the degree of $v$ in $\overline{G'}$ (Definition~\ref{def:upper_degree}) is at least $b^8$ since $H_2$ is a subgraph of $\overline{G'}$ dues to the \textsc{Alter}$(E(H_2))$ within \textsc{Reverse}$(V(E_{\text{filter}}), E(H_2))$ in Step~\ref{alg:interweave:s9} of \textsc{Connectivity}$(G)$ and Lemma~\ref{lem:interweave_flat_each_phase}, giving the first part of the lemma by Definition~\ref{def:upper_degree}. 
    
    A vertex $v$ is low if at most $b^8$ cells in $\mathcal{H}(v)$ are occupied (Step~\ref{alg:sparsebuild:s3}). 
    If $\overline{\deg}_{G'}(v) \ge b^9$, then since each edge of $\overline{G'}$ is sampled into $H_2$ w.p. $1/(\log n)^7$ (by Step~\ref{alg:connectivity:s3} of \textsc{Connectivity}$(G)$, Definition~\ref{def:upper_degree}, and the fact that $\overline{G'}$ performs the same \textsc{Alter} as $H_2$ by Step~\ref{alg:interweave:s9} of \textsc{Connectivity}$(G)$ and Lemma~\ref{lem:interweave_flat_each_phase}), by a Chernoff bound, we get $\deg_{H_2}(v) \ge b^9 / (\log n)^8 \ge b^{8.5}$ w.p. $1 - n^{-11}$ by $b \ge (\log n)^{100}$.
    Then $v$ is low w.p. at most
    \begin{equation*}
        \binom{b^{8.5}}{b^8} \cdot \left( \frac{b^8}{b^{8.5}} \right)^{b^9} \le b^{8.5 b^8} \cdot b^{-0.5 b^9} \le b^{-b^8} \le n^{-11} ,
    \end{equation*}
    where we used $b \ge (\log n)^{100}$. So, w.p. $1 - n^{-10}$ all such vertices are marked as high in one call to \textsc{SparseBuild}$(G', H_2, b)$ by a union bound. During the execution of \textsc{Connectivity}$(G)$, \textsc{SparseBuild}$(G', H_2, b)$ is called for at most $10 \log\log n$ times (Step~\ref{alg:connectivity:s5} of \textsc{Connectivity}$(G)$). 
    So, all such vertices are marked as high w.p. $1 - n^{-9}$ in any call to \textsc{SparseBuild}$(G', H_2, b)$ by a union bound, and we shall condition on this happening. 
    Since any vertex $v$ with $\overline{\deg}_{G'}(v) \ge b^9$ is marked as high, we have that any low vertex $v$ must have $\overline{\deg}_{G'}(v) \le b^9$. This gives the second part of the lemma.
\end{proof}

\begin{lemma} \label{lem:sparse_H_component_size}
    Let graph $H$ be computed in Step~\ref{alg:new_increase:s1} of \textsc{Increase}$(G', H_1, H_2, b)$. 
    W.p. $1 - n^{-8}$, for any active root $v \in V(G')$, either $\mathcal{C}_{\overline{G'}}(v) = \mathcal{C}_{H}(v)$ or $\sum_{w \in V(\mathcal{C}_{H}(v))} \deg_{H}(w) \ge b^6$.
\end{lemma}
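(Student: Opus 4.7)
The plan is to mirror the argument of Lemma~\ref{lem:H_component_size}, but adapted to the \textsc{SparseBuild} setting where the skeleton is $E_H = E' \cup E(H_2)$ rather than a down-sampling of $E$, and where we bound a degree sum rather than a vertex count. First I would perform, in parallel, breadth-first searches of $\mathcal{C}_{\overline{G'}}(v)$ and $\mathcal{C}_H(v)$ starting from $v$, and prove by induction on the BFS order that as long as no high vertex has been explored yet, the two searches examine the same vertices and edges. The key observation is Step~\ref{alg:sparsebuild:s4} of \textsc{SparseBuild}: every edge of $G'$ whose $p$-image is adjacent to a low vertex is added to $E'$ and then altered, so after \textsc{Alter}$(E')$ it becomes an edge of $\overline{G'}$ incident to that low vertex and is included in $E_H$. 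Hence whenever the currently explored vertex $u$ is low, every $\overline{G'}$-neighbor of $u$ is also an $H$-neighbor of $u$, which propagates the BFS agreement by one step.

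If no high vertex ever appears in $\mathcal{C}_H(v)$, the two BFSs coincide on the whole component and $\mathcal{C}_{\overline{G'}}(v) = \mathcal{C}_H(v)$, giving the first alternative. Otherwise pick any high vertex $u \in V(\mathcal{C}_H(v))$. By Lemma~\ref{lem:sparse_build:high_vertex}, with probability at least $1 - n^{-9}$ we have $\overline{\deg}_{G'}(u) \ge b^8$. Since each edge of $\overline{G'}$ is included in $H_2$ independently with probability $1/(\log n)^7$ (Step~\ref{alg:connectivity:s3} of \textsc{Connectivity}$(G)$), and since $H_2 \subseteq H$ by Step~\ref{alg:sparsebuild:s5}, a Chernoff bound yields
\[
\deg_H(u) \ \ge \ \deg_{H_2}(u) \ \ge \ \frac{b^8}{2(\log n)^7} \ \ge \ b^7
\]
with probability at least $1 - n^{-10}$, using $b \ge (\log n)^{100}$. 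Therefore
\[
\sum_{w \in V(\mathcal{C}_H(v))} \deg_H(w) \ \ge \ \deg_H(u) \ \ge \ b^7 \ \ge \ b^6 ,
\]
as required. A union bound over all active roots $v \in V(G')$, of which there are at most $n$, upgrades the failure probability to $n^{-8}$.

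The main obstacle I anticipate is being careful about the definition of $\overline{G'}$ (Definition~\ref{def:upper_degree}): it is the hypothetical graph obtained by executing \textsc{Alter}$(E(G'))$ with respect to the current parent pointers, and the BFS must be performed on this altered graph rather than on $G'$ itself. In particular, I must verify that the classification of $v$ as low in Step~\ref{alg:sparsebuild:s3} truly implies that every $\overline{G'}$-edge incident to $v$ survives as an $H$-edge after the \textsc{Alter}$(E')$ inside \textsc{SparseBuild}. This requires tracking that the parent pointers used in Step~\ref{alg:sparsebuild:s4} are the same as those defining $\overline{G'}$, and that \textsc{Alter}$(E')$ does not delete any such edge except if it becomes a loop at the low vertex (which still keeps it in the component and contributes to $\deg_H(v)$). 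Once this bookkeeping is settled, the combinatorics are identical to those of Lemma~\ref{lem:H_component_size}.
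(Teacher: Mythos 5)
Your proposal is correct and follows essentially the same route as the paper's proof: a simultaneous BFS coupling on $\mathcal{C}_{\overline{G'}}(v)$ and $\mathcal{C}_H(v)$ using that $E'\subseteq E_H$ covers all edges at low vertices, then Lemma~\ref{lem:sparse_build:high_vertex} plus a Chernoff bound on the $1/(\log n)^7$ sampling of $H_2\subseteq H$ to give a single high vertex degree at least $b^6$ in $H$, finished by a union bound. The extra bookkeeping you flag about $\overline{G'}$ and the parent pointers is exactly what the paper dispatches via Definition~\ref{def:upper_degree}, Step~\ref{alg:interweave:s9} of \textsc{Connectivity}$(G)$, and Lemma~\ref{lem:interweave_flat_each_phase}.
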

\begin{proof}
    The first part of the proof is an analog of the proof of Lemma~\ref{lem:H_component_size}. 
    Since the edge set $E_H$ of graph $H$ contains $E'$, the set of edges adjacent to low vertices (Step~\ref{alg:sparsebuild:s4} and Step~\ref{alg:sparsebuild:s5} of \textsc{SparseBuild}$(G', H_2, b)$), we can run breath-first searches on $\mathcal{C}_{\overline{G'}}(v)$ and $\mathcal{C}_{H}(v)$ simultaneously to get that if there is no high vertex in $\mathcal{C}_{H}(v)$, then $\mathcal{C}_{H}(v) = \mathcal{C}_{\overline{G'}}(v)$. 
    
    On the other hand, if there is a high vertex $w$ in $\mathcal{C}_{H}(v)$, then by Lemma~\ref{lem:sparse_build:high_vertex}, w.p. $1 - n^{-9}$, $\overline{\deg}_{G'}(w) \ge b^8$. 
    By a Chernoff bound, w.p. $1 - n^{-9}$, $\deg_{H_2}(w) \ge b^6$ because $H_2$ is a random subgraph of $\overline{G'}$ with edge sampling probability $1/(\log n)^7$ by Step~\ref{alg:connectivity:s3} of \textsc{Connectivity}$(G)$, Definition~\ref{def:upper_degree}, and the fact that $\overline{G'}$ performs the same \textsc{Alter} as $H_2$ by Step~\ref{alg:interweave:s9} of \textsc{Connectivity}$(G)$ and Lemma~\ref{lem:interweave_flat_each_phase}. 
    So, w.p. $1 - n^{-8}$, $\deg_{H}(w) \ge b^6$ since $E(H_2) \subseteq E_H$ (Step~\ref{alg:sparsebuild:s5} of \textsc{SparseBuild}$(G', H_2, b)$), 
    giving the second part of the lemma.
\end{proof}

\subsubsection{Increasing Vertex Degrees by Sparse Skeleton Graph} \label{subsubsec:increase_sparse_H}

\begin{lemma} \label{lem:sparse_H_small_component_done}
    Let graph $H$ be computed in Step~\ref{alg:new_increase:s1} of \textsc{Increase}$(G', H_1, H_2, b)$. 
    W.p. $1 - 1/(\log n)^6$, 
    for any component $\mathcal{C}$ in $H$ such that $\sum_{w \in V(\mathcal{C})} \deg_{H}(w) < b^6$, at the end of \textsc{Increase}$(G', H_1, H_2, b)$, all vertices in $\mathcal{C}$ must be roots or children of roots and all edges in $E(H_1)$ adjacent to these vertices must be loops.
\end{lemma}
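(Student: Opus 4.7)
The proof will parallel the structure of Lemma~\ref{lem:H_small_component_done}, with the volume bound $\sum_{w \in V(\mathcal{C})} \deg_{H}(w) < b^6$ playing the role of the vertex-count bound $|\mathcal{C}| < b^6$ in the old proof. The plan is first to condition on the high-probability conclusion of Lemma~\ref{lem:sparse_H_component_size}; since the hypothesis rules out the alternative $\sum_{w \in V(\mathcal{C})} \deg_{H}(w) \ge b^6$, we must have $V(\mathcal{C}) = V(\mathcal{C}_{\overline{G'}}(v))$ for every $v \in V(\mathcal{C})$. Moreover, the volume bound immediately gives $|V(\mathcal{C})| \le b^6$ (each non-isolated vertex contributes at least one to the volume; isolated vertices are trivially already at their final tree).

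Next I would invoke the path-shortening machinery of Section~\ref{subsubsec:truncate_path_construction}. Every shortest path in $\mathcal{C}$ has at most $b^6 \le b^7$ vertices, so Lemma~\ref{lem:truncate_path_length} applies (with $H$ playing the role of the skeleton graph in \textsc{Densify}$(H,b)$, which is called inside Step~\ref{alg:new_increase:s2} via the old Step~\ref{alg:increase:s2}): w.p.\ $1-1/(\log n)^8$, every shortest path between vertices of $\mathcal{C}$ has at most two vertices after Step~\ref{alg:truncate:s6} of \textsc{Densify}$(H,b)$. Hence the component $\mathcal{C}$ in $(V(E_{\text{close}}), E_{\text{close}})$ has diameter at most $1$, so the algorithm of Theorem~\ref{thm:ltz_main} invoked in Step~\ref{alg:truncate:s5} of \textsc{Densify}$(H,b)$ collapses all of $V(\mathcal{C})$ into a single tree whose vertices are all roots or children of one common root, w.p.\ $1-1/(\log n)^9$.

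I would then argue, step by step, that the remaining operations of \textsc{Increase}$(G', H_1, H_2, b)$ (namely the tail of old Steps~\ref{alg:increase:s3}--\ref{alg:increase:s9}, followed by Step~\ref{alg:new_increase:s3}) preserve this single-tree structure on $V(\mathcal{C})$. This is essentially the same induction as in the last three paragraphs of the proof of Lemma~\ref{lem:H_small_component_done}: parent updates in old Step~\ref{alg:increase:s4} cannot disturb the tree, because for any $v \in V(\mathcal{C})$ the vertex $v.p^{(2R+1)}$ lies in the same component of $H$ by a straightforward induction on $j$, hence in the same tree already rooted at some $u \in V(\mathcal{C})$; subsequent parent updates in old Steps~\ref{alg:increase:s6} and~\ref{alg:increase:s8} only use non-loop edges of $E_{\text{close}}$, all of which between vertices of $\mathcal{C}$ have become loops after Step~\ref{alg:truncate:s6}; and the shortcut steps preserve flatness. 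Finally, since $H_1$ is obtained from the current $G'$ by edge sampling and the same sequence of \textsc{Alter} operations on $V(G')$ that is applied to $\overline{G'}$ (the \textsc{Alter}$(E(H_1))$ calls all occur whenever \textsc{Alter}$(E(G'))$ would be applied), every edge of $H_1$ with an endpoint in $V(\mathcal{C})$ must have its other endpoint in $V(\mathcal{C}_{\overline{G'}}(v)) = V(\mathcal{C})$. Therefore the final \textsc{Alter}$(E(H_1))$ in Step~\ref{alg:new_increase:s3} turns every such edge into a loop.

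The main obstacle I anticipate is the bookkeeping for the last point: verifying that $E(H_1)$ remains, throughout \textsc{Increase}$(G', H_1, H_2, b)$, a subgraph of the current labeled version of $\overline{G'}$, so that no $H_1$-edge can straddle $V(\mathcal{C})$ and its complement after the final \textsc{Alter}. The cleanest way to handle this is to track, inductively, that before each \textsc{Alter}$(E(H_1))$ call (both the implicit ones inside the earlier \textsc{Reverse} of Step~\ref{alg:interweave:s10} of the previous phase and the one in Step~\ref{alg:new_increase:s3}), the labeled digraph has already identified all of $V(\mathcal{C})$ with a single root, and then invoke the fact that $H_1$ was originally sampled from $G'$ together with Lemma~\ref{lem:sparse_H_component_size} to conclude that no $H_1$-edge crosses the boundary of $\mathcal{C}_{\overline{G'}}(v)$. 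A union bound over the $O(\log\log n)$ phases and over the bad events above yields the claimed $1-1/(\log n)^6$ probability.
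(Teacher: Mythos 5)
Your proposal is correct and follows essentially the same route the paper takes: condition on Lemma~\ref{lem:sparse_H_component_size} to identify $V(\mathcal{C})$ with a component of $\overline{G'}$, bound $|V(\mathcal{C})| < b^6$ from the volume hypothesis, apply the path-shortening machinery (\textsc{PassivePathConstruction} / Lemma~\ref{lem:truncate_path_length}) to get diameter $\le 1$ after \textsc{Densify}, and then re-run the tail argument of Lemma~\ref{lem:H_small_component_done} to show the single-tree structure survives Steps~\ref{alg:increase:s3}--\ref{alg:increase:s9}, after which \textsc{Alter}$(E(H_1))$ makes every adjacent edge a loop. The one "anticipated obstacle" you flag — verifying $E(H_1)$ stays a subgraph of the current labeled graph so that no $H_1$-edge crosses out of $V(\mathcal{C})$ — is handled exactly as you sketch (containment is immediate from $H_1$ being a sampled subgraph of $G'$ plus $V(\mathcal{C}) = V(\mathcal{C}_{\overline{G'}})$); the paper compresses this into a single sentence.
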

\begin{proof}
    By Lemma~\ref{lem:sparse_H_component_size}, w.p. $1 - n^{-8}$, $V(\mathcal{C}) = V(\mathcal{C}_{\overline{G'}})$, where $\mathcal{C}_{\overline{G'}}$ is the component in $\overline{G'}$ induced on vertex set $V(\mathcal{C})$.
    Next, we prove that the number of vertices in $\mathcal{C}$ is less than $b^6$. 
    If $\mathcal{C}$ contains only $1$ vertex then the claim holds. 
    Otherwise, since each $v \in V(\mathcal{C})$ has at least $1$ adjacent edge in $H$ (otherwise $v$ is a singleton), we get $|\mathcal{C}| < b^6$ by $\sum_{w \in V(\mathcal{C})} \deg_{H}(w) < b^6$. 
    Therefore, for any shortest path $P$ in $\mathcal{C}$, there are at most $b^6$ vertices on $P$. 
    
    Using the same proof as of Lemma~\ref{lem:H_small_component_done}, w.p. $1 - 1/(\log n)^7$, at the end of \textsc{Increase}$(G', H_1, H_2, b)$, we obtain that all vertices in $\mathcal{C}$ must be roots or children of roots and they must be in the same tree. 
    The \textsc{Alter}$(E(H_1))$ in Step~\ref{alg:new_increase:s3} of \textsc{Increase}$(G', H_1, H_2, b)$ moves all edges adjacent to vertices in $\mathcal{C}$ to the root, which must be loops.
\end{proof}

By the same discussion after Lemma~\ref{lem:H_small_component_done}, such small component (with total degree less than $b^6$) in $H$ can be ignored in the rest of the entire algorithm as the connectivity computation on them is already finished. 

\begin{lemma} \label{lem:sparse_H_big_component_large_degree}
    During the execution of \textsc{Connectivity}$(G)$, w.p. $1 - 1/(\log n)^5$, for any integer $i \in [0, 10 \log\log n)$ and any active root $v \in V(G')$ at the end of Step~\ref{alg:interweave:s2} within \textsc{Interweave}$(G', H_1, H_2, E_{\text{filter}}, i)$, it must be 
    $\overline{\deg}_{G'}(v) \ge b = (\log n)^{100 \cdot 1.1^i}$.
\end{lemma}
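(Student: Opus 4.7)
The plan is to mirror the proof of Lemma~\ref{lem:H_big_component_large_degree}, carrying out the same case analysis on the component $\mathcal{C}_H(v)$ where $H$ is now the sparse skeleton graph produced by \textsc{SparseBuild}$(G', H_2, b)$. The two main differences to handle are: (i) the threshold distinguishing ``small'' from ``large'' components is total $H$-degree rather than vertex count (via Lemmas~\ref{lem:sparse_H_component_size} and \ref{lem:sparse_H_small_component_done}), and (ii) we must reason about $\overline{\deg}_{G'}(v)$ in the virtual graph $\overline{G'}$ rather than actual degree, since we never execute \textsc{Alter}$(E(G'))$ inside Step~\ref{alg:interweave:s2}. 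Throughout I would condition on the high-probability events in Lemmas~\ref{lem:sparse_build:high_vertex}, \ref{lem:sparse_H_component_size}, \ref{lem:sparse_H_small_component_done}, \ref{lem:expandm_max_level}, \ref{lem:truncate:flat}, and \ref{lem:expandm_path_length}, and apply a union bound over all $O(\log\log n)$ phases at the end.

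First I would dispose of the easy case $\sum_{w \in V(\mathcal{C}_H(v))} \deg_H(w) < b^6$. By Lemma~\ref{lem:sparse_H_component_size} this forces $\mathcal{C}_H(v) = \mathcal{C}_{\overline{G'}}(v)$, and by Lemma~\ref{lem:sparse_H_small_component_done} every vertex of this component becomes a descendant of one root whose adjacent $E(H_1)$-edges are all loops. Since $\mathcal{C}_{\overline{G'}}(v)$ contains every edge of $\overline{G'}$ incident to $V(\mathcal{C}_H(v))$ and \textsc{Alter}$(E(H_1))$ in Step~\ref{alg:new_increase:s3} aligns $\overline{G'}$ with the labeled digraph on these vertices, that root has no non-loop adjacent edge in $\overline{G'}$, contradicting the assumption that $v$ is active.

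Now assume $\sum_{w \in V(\mathcal{C}_H(v))} \deg_H(w) \ge b^6$, and split into the two subcases from Lemma~\ref{lem:H_big_component_large_degree}. In the short-diameter subcase (every shortest path in $\mathcal{C}_H(v)$ has fewer than $b^6$ vertices), Lemma~\ref{lem:expandm_path_length} combined with Theorem~\ref{thm:ltz_main} collapses the entire component into one tree rooted at some $u$ after Step~\ref{alg:truncate:s5} of \textsc{Densify}, and the argument of Lemma~\ref{lem:H_small_component_done} shows no later step in \textsc{Increase} can remove these vertices from $u$'s tree. Since $E(H) \subseteq E(\overline{G'})$ and \textsc{Alter}$(E(H_1))$ in Step~\ref{alg:new_increase:s3} retargets each $E(H_1)$-edge incident to this component onto $u$, we would obtain $\overline{\deg}_{G'}(u) \ge b^6/2 \ge b$, provided $v = u$; otherwise $v$ is a non-root descendant of $u$ by Lemma~\ref{lem:corresponded_non_roots_final} and is not active. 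In the long-path subcase (some shortest path $P$ with $|P| \ge b^6$), I would reuse verbatim the two-step argument of Lemma~\ref{lem:H_big_component_large_degree}: build the multiset $\mathcal{P}$ of final length-$\le 1$ paths and the vertex set $V_P$ using Lemma~\ref{lem:truncate_path_length}; then either $|\mathcal{P}|$ is small, in which case some $v^* \in V_P$ receives $\ge b^2$ writes into $\mathcal{H}'(v^*)$ in Step~\ref{alg:increase:s4} and becomes a head, forcing $v$ to either equal $v^*$ (and collect $\ge b^2$ children) or become a non-root via the head edge; or $|\mathcal{P}|$ is large, in which case $v$ has $\ge b^2 - 1$ distinct non-loop adjacent edges in $E_{\text{close}}$ and the leader/non-leader step (Step~\ref{alg:increase:s8}) either makes $v$ a non-root or gives $v$ at least $2b$ children whose $E(G')$-incident edges contribute to $\overline{\deg}_{G'}(v)$ through Definition~\ref{def:upper_degree}.

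The main obstacle is the short-diameter subcase, where the original proof used that ``each vertex has at least one adjacent edge in $G$'' to convert a component-size lower bound into an edge-count lower bound; here I must convert a total-$H$-degree lower bound into a lower bound on $\overline{\deg}_{G'}(u)$, which forces me to track how the at-least-$b^6$ endpoints of $E(H)$-edges in the component are double-counted among edges in $E(\overline{G'})$ (each non-loop edge is counted twice, loops once), and to argue that even after discarding loops of $E(H_1)$ we retain $\Omega(b^6)$ genuine $E(G')$-edges pointing at $u$ through the labeled digraph. A secondary obstacle is that, since \textsc{Increase}$(G', H_1, H_2, b)$ is invoked in every phase with a $\Theta(\log \log n)$-sized collection of active roots and under distinct values of $b$, each step that succeeds with probability $1 - 1/\poly(\log n)$ must be carefully aggregated by a union bound to yield the global $1 - 1/(\log n)^5$ guarantee claimed by the lemma.
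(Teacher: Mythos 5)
Your proposal is correct and follows essentially the same route as the paper: dismiss components of total $H$-degree below $b^6$ via Lemma~\ref{lem:sparse_H_small_component_done} (such roots are not active), then split on whether $\mathcal{C}_H(v)$ has a shortest path with at least $b^6$ vertices and reuse the two corresponding arguments of Lemma~\ref{lem:H_big_component_large_degree}, finishing with a union bound over the $10\log\log n$ phases. The "obstacle" you flag in the short-diameter case is handled in the paper by the simple observation that $E(H) \subseteq E(\overline{G'})$ and each edge is counted at most twice in the total degree, so $\overline{\deg}_{G'}(v) \ge b^6/2 \ge b$ (and loops still count toward $\overline{\deg}_{G'}$ by Definition~\ref{def:upper_degree}), so no extra bookkeeping is needed.
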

\begin{proof}
    By Lemma~\ref{lem:sparse_H_small_component_done}, w.p. $1 - 1/(\log n)^6$, any vertex in a component of $H$ with total degree less than $b^6$ is ignored in the rest of the algorithm since the computation of its component is finished. 
    Therefore, we assume that any vertex $v \in V$ is in a component with total degree at least $b^6$. 
    
    If any shortest path in $\mathcal{C}_H(v)$ has less than $b^6$ vertices on it, then since Step~\ref{alg:new_increase:s2} of \textsc{Increase}$(G', H_1, H_2, b)$ runs Step~\ref{alg:increase:s2} to Step~\ref{alg:increase:s9} in the \textsc{Increase}$(V, E, b)$ in \S{\ref{subsec:increase}}, we can use the same proof in the second paragraph in the proof of Lemma~\ref{lem:H_big_component_large_degree} to get that w.p. $1 - 1/(\log n)^6$, all vertices in $\mathcal{C}_H(v)$ contracts to $v$. 
    So, the degree of any root $v \in V(G')$ has $\overline{\deg}_{G'}(v) \ge b^6 / 2$ because the total degree of $\mathcal{C}_H(v)$ is at least $b^6$ and all edges in $\mathcal{C}_H(v)$ would be adjacent to $v$ if we executed \textsc{Alter}$(E(G'))$ (Definition~\ref{def:upper_degree}). 
    
    If there exists a shortest path in $\mathcal{C}_H(v)$ with at least $b^6$ vertices on it, then using the same proof in the proof of Lemma~\ref{lem:H_big_component_large_degree} for this case, we get that w.p. $1 - 1/(\log n)^6$, the degree of $v$ in $G'$ would be at least $b$ if we executed \textsc{Alter}$(E(G'))$. 
    Using the same argument in the previous paragraph, $\overline{\deg}_{G'}(v) \ge b$ holds w.p. $1 - 1/(\log n)^6$ for this case.
    
    By a union bound over all $10 \log\log n$ phases, $\overline{\deg}_{G'}(v) \ge b$ holds w.p. $1 - 1/(\log n)^5$ for all integers $i \in [0, 10 \log\log n)$ and all roots $v \in V(G')$.
\end{proof}

\begin{lemma} \label{lem:new_increase_work}
    If the edge set $E'$ in Step~\ref{alg:sparsebuild:s4} of \textsc{SparseBuild}$(G', H_2, b)$ can be computed in $n / (\log n)^7$ work, then \textsc{Increase}$(G', H_1, H_2, b)$ takes $(m + n) / (\log n)^2$ work w.p. $1 - 1/(\log n)^2$.
\end{lemma}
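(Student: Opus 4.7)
The plan is to bound the work contribution of each of the three steps of \textsc{Increase}$(G', H_1, H_2, b)$ separately, and then sum them. Throughout, I will rely on the size bound $|V(G')| \le n/b^{10}$ from Lemma~\ref{lem:interweave_roots_shrink_each_phase}, on the fact that $H_1$ and $H_2$ are random subgraphs of $\overline{G'}$ obtained by independent $1/(\log n)^7$ edge sampling (Step~\ref{alg:connectivity:s3} of \textsc{Connectivity}), and on the degree bound from Lemma~\ref{lem:sparse_build:high_vertex} classifying low vertices in $G'$.

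\textbf{Work of Step~\ref{alg:new_increase:s1} (\textsc{SparseBuild}).}  Using approximate compaction (Lemma~\ref{lem:apx_compaction}) on the active roots, Step~\ref{alg:sparsebuild:s1} assigns a block of $b^9$ processors to each of the at most $n/b^{10}$ active roots, contributing $O(n/b)$ work. Step~\ref{alg:sparsebuild:s2} hashes each edge of $H_2$ into its table end, for $O(|E(H_2)|)$ work. A Chernoff bound gives $|E(H_2)| \le m / (\log n)^5$ w.p.\ $1 - 1/(\log n)^5$ (the case $m = O((\log n)^{10})$ is trivial). Step~\ref{alg:sparsebuild:s3} counts occupied cells in each hash table using a binary tree reduction inside each block, again $O(n/b)$ work. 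Step~\ref{alg:sparsebuild:s4} is $n/(\log n)^7$ by the hypothesis of the lemma, and Step~\ref{alg:sparsebuild:s5} is $O(|E'| + |E(H_2)|)$. Summing, Step~\ref{alg:new_increase:s1} uses $O((m+n)/(\log n)^5)$ work.

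\textbf{Work of Step~\ref{alg:new_increase:s2} (Steps~\ref{alg:increase:s2}--\ref{alg:increase:s9} of the original \textsc{Increase}).} The crux is to check that the skeleton graph $H$ handed to \textsc{Densify} satisfies the sparsity hypothesis needed by Lemma~\ref{lem:truncate_time_work}. Write $E_H = E' \cup E(H_2)$. Edges in $E'$ are adjacent to low vertices, and by Lemma~\ref{lem:sparse_build:high_vertex} each low vertex $v$ has $\overline{\deg}_{G'}(v) \le b^9$, so $|E'| \le |V(G')| \cdot b^9 \le n/b$ (counting edges by their low endpoint). Combined with the Chernoff bound on $|E(H_2)|$ above, $|E_H| \le (m+n)/(\log n)^5$ w.p.\ $1 - 1/(\log n)^5$, which is exactly the hypothesis required by the proof of Lemma~\ref{lem:H_is_sparse} and hence Lemma~\ref{lem:truncate_time_work}. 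Therefore \textsc{Densify}$(H,b)$ (Step~\ref{alg:increase:s2}) uses $(m+n)/(\log n)^3$ work w.p.\ $1 - 1/(\log n)^3$. The remaining Steps~\ref{alg:increase:s3}--\ref{alg:increase:s9} of the original \textsc{Increase} were analyzed in Lemma~\ref{lem:increase_work_time} and footnote~\ref{fn:increase_partial_work_sublinear}: the hashing-table allocation and head/leader computations together spend $O(|V(G')| \cdot b^9) + O(|E_{\mathrm{close}}|) \le O(n/b) + O((m+n)/(\log n)^4)$ work.

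\textbf{Work of Step~\ref{alg:new_increase:s3} (\textsc{Alter}$(E(H_1))$).} This is the step that replaces the $\Theta(m)$-work call \textsc{Alter}$(E)$ of the original \textsc{Increase}. Since $E(H_1)$ is obtained by sampling each edge of the current graph with probability $1/(\log n)^7$, a Chernoff bound yields $|E(H_1)| \le m/(\log n)^5$ w.p.\ $1 - 1/(\log n)^5$, and \textsc{Alter} takes $O(|E(H_1)|)$ work, contributing $O(m/(\log n)^5)$.

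\textbf{Combining the bounds.} Adding the three contributions gives total work $(m+n)/(\log n)^3 + O((m+n)/(\log n)^5) + O(n/b) + n/(\log n)^7 \le (m+n)/(\log n)^2$. A union bound over the $O(1)$ bad events invoked above yields the claimed failure probability $1/(\log n)^2$. The only subtle point -- and the step I expect to be the main obstacle -- is verifying the cascaded use of Lemma~\ref{lem:sparse_build:high_vertex} and Lemma~\ref{lem:interweave_roots_shrink_each_phase} to obtain the right bound on $|E'|$ in $E_H$: one must count edges in $E'$ by their low endpoint rather than by their high endpoint so that the degree bound $b^9$ on low vertices can be used, and confirm that this counting remains valid after the \textsc{Alter}$(E')$ at the end of \textsc{SparseBuild} (which only moves endpoints to parents without changing multiplicities).
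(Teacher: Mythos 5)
Your proposal is correct and follows essentially the same route as the paper's proof: bound \textsc{SparseBuild} step by step (using the lemma's hypothesis for its Step~\ref{alg:sparsebuild:s4}), show $|E'|\le n/b$ and $|E(H_2)|\le m/\poly(\log n)$ so that the skeleton graph satisfies the sparsity needed for Lemma~\ref{lem:truncate_time_work} and the footnoted bound on Steps~\ref{alg:increase:s2}--\ref{alg:increase:s9}, and bound the final \textsc{Alter} by $|E(H_1)|$.

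Two small imprecisions are worth fixing. First, Lemma~\ref{lem:interweave_roots_shrink_each_phase} bounds the number of \emph{active roots} by $n/b^{10}$, not $|V(G')|$; in phases $i\ge 1$ one has $|V(G')|\le n/(\log n)^{1000}$ but possibly $|V(G')| \gg n/b^{10}$, so the inequalities ``$|E'|\le |V(G')|\cdot b^9\le n/b$'' and ``$O(|V(G')|\cdot b^9)\le O(n/b)$'' should be stated with the active-root count instead --- this is harmless because low vertices are, by Step~\ref{alg:sparsebuild:s3} of \textsc{SparseBuild}, active roots, and the modified algorithm runs Steps~\ref{alg:increase:s2}--\ref{alg:increase:s9} only on active roots. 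Second, you implicitly assume each vertex already knows whether it is an active root before the block assignment; the paper charges this detection explicitly (checking $v=v.p$ plus flags written during Steps~\ref{alg:interweave:s6} and~\ref{alg:interweave:s9} of \textsc{Interweave} of the previous phase, at $O(|V(G')|+|E_{\text{filter}}|+|E'|)$ extra work), and this bookkeeping is needed since scanning $E(G')$ to test activity would cost $\Theta(m)$ per phase.
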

\begin{proof}
    By a Chernoff bound, w.p. $1 - n^{-9}$, we have $|E(H_1)| \le m/(\log n)^6$ and $|E(H_2)| \le m/(\log n)^6$ by Step~\ref{alg:connectivity:s3} of \textsc{Connectivity}$(G)$, and we shall condition on this happening. 
    
    In Step~\ref{alg:new_increase:s1} of \textsc{Increase}$(G', H_1, H_2, b)$, consider the work of \textsc{SparseBuild}$(G', H_2, b)$. 
    Detecting whether a vertex $v \in V(G')$ is an active root can be done by checking whether all of the following $3$ conditions hold:
    (\romannumeral1) $v = v.p$, 
    (\romannumeral2) in the previous phase, at the end of Step~\ref{alg:interweave:s6} in \textsc{Interweave}$(G', H_1, H_2, E_{\text{filter}},i)$, any edge in $E_{\text{filter}}$ does not connect $v$ to another root,
    (\romannumeral3) in the previous phase, at the end of Step~\ref{alg:interweave:s9} in \textsc{Interweave}$(G', H_1, H_2, E_{\text{filter}},i)$, any edge in $E'$ does not connect $v$ to another root. 
    Condition~(\romannumeral1) can be checked by all vertices in $V(G')$ in $O(1)$ time, taking $|V(G')| \le n/(\log n)^{10}$ work.
    Condition~(\romannumeral2) can be checked by adding $O(|E_{\text{filter}}|)$ work into Step~\ref{alg:interweave:s6} in \textsc{Interweave}$(G', H_1, H_2, E_{\text{filter}},i)$ then write a flag into the private memory of $v$, which does not change the asymptotic work of that step.
    Condition~(\romannumeral3) can be checked by adding $O(|E'|)$ work into Step~\ref{alg:interweave:s9} in \textsc{Interweave}$(G', H_1, H_2, E_{\text{filter}},i)$ then write a flag into the private memory of $v$, which does not change the asymptotic work of that step.
    All in all, the total work (attributes to \textsc{SparseBuild}$(G', H_2, b)$) of letting each vertex know whether itself is an active root is $O(n/(\log n)^{10})$.
    
    By Lemma~\ref{lem:interweave_roots_shrink_each_phase}, w.p. $1 - n^{-7}$, the number of active roots is at most $n/b^{10}$, 
    so Step~\ref{alg:sparsebuild:s1} of \textsc{SparseBuild}$(G', H_2, b)$ takes $O(n/b) \le n/(\log n)^9$ work (here we assume all active roots are in an index array of length at most $2n/b^{10}$, which is guaranteed by approximate compaction at the end of each phase, see Lemma~\ref{lem:each_phase_running_time} for details).
    Step~\ref{alg:sparsebuild:s2} takes $O(|E(H_2)|) \le O(m/(\log n)^6)$ work. 
    Step~\ref{alg:sparsebuild:s3} takes $O(n/b^{10} \cdot b^9 \cdot O(\log b^9)) \le n/(\log n)^9$ work by a binary tree counting argument.
    Computing edge set $E'$ in Step~\ref{alg:sparsebuild:s4} takes $n / (\log n)^7$ work by the precondition of the lemma. 
    By Lemma~\ref{lem:sparse_build:high_vertex}, w.p. $1 - n^{-9}$, for any vertex $v \in V(G')$ that is marked as low, $\overline{\deg}_{G'}(v) \le b^9$, giving at most $b^9$ edges in $E'$ adjacent on $v$ by Definition~\ref{def:upper_degree}. 
    Since non-active roots can be ignored by the discussion after Definition~\ref{def:upper_degree}, and the number of active roots is at most $n/b^{10}$, we have that $|E'| \le n/b$ w.p. $1 - n^{-6}$. 
    So, the following \textsc{Alter}$(E')$ takes $O(|E'|) \le O(n/b)$ work. 
    Summing up, Step~\ref{alg:new_increase:s1} of \textsc{Increase}$(G', H_1, H_2, b)$ takes $O(m+n)/(\log n)^6$ work.
    
    Consider Step~\ref{alg:new_increase:s2} of \textsc{Increase}$(G', H_1, H_2, b)$. 
    Firstly, we show that $|E(H)| \le (m + n) / (\log n)^5$ w.p. $1 - n^{-8}$. 
    By the previous paragraph, $|E'| \le n/b$ w.p. $1 - n^{-6}$.
    Since $|E(H_2)| \le m/(\log n)^6$, we have that $|E_H| = |E(H)| \le (m + n) / (\log n)^5$ w.p. $1 - n^{-5}$ by Step~\ref{alg:sparsebuild:s5} of \textsc{SparseBuild}$(G', H_2, b)$. 
    Secondly, the vertex set $V$ in Step~\ref{alg:increase:s2} to Step~\ref{alg:increase:s9} of \textsc{Increase}$(V, E, b)$ is replaced by the set of active roots, which is already described in the second paragraph of this proof and by running approximate compaction on all active roots at the end of each phase. By Lemma~\ref{lem:interweave_roots_shrink_each_phase}, the number of active roots is at most $n/b^{10}$. 
    Since the requirements on $H$ and $|V|$ in \S{\ref{subsec:increase}} are satisfied, (the proof of) Lemma~\ref{lem:increase_work_time} still holds. 
    In the \textsc{Increase}$(V, E, b)$ in \S{\ref{subsec:increase}}, its Step~\ref{alg:increase:s2} to Step~\ref{alg:increase:s9} perform $(m + n) / (\log n)^{2.5}$ work w.p. $1 - 2/(\log n)^3$ (see footnote~\ref{fn:increase_partial_work_sublinear}).
    As a result, Step~\ref{alg:new_increase:s2} of \textsc{Increase}$(G', H_1, H_2, b)$ takes $(m + n) / (\log n)^{2.5}$ work w.p. $1 - 2/(\log n)^3$.
    
    Step~\ref{alg:new_increase:s3} of \textsc{Increase}$(G', H_1, H_2, b)$ takes $O(|E(H_1)|) \le O(m/(\log n)^6)$ work. 
    Summing up and by a union bound, the lemma follows.
\end{proof}

The computation of the edge set $E'$ in Step~\ref{alg:sparsebuild:s4} of \textsc{SparseBuild}$(G', H_2, b)$ will be given in \S\ref{subsec:reduce_work_edge_set}.

\subsection{Reducing the Work of Computing Edge Set $E'$} \label{subsec:reduce_work_edge_set}

In this section, we present a new technique that work-efficiently computes 
\begin{itemize}
    \item The edge set $E'$ in Step~\ref{alg:sparsebuild:s4} of \textsc{SparseBuild}$(G', H_2, b)$.
    \item The edge set $E'$ in Step~\ref{alg:interweave:s8} of \textsc{Interweave}$(G', H_1, H_2, E_{\text{filter}}, i)$.
\end{itemize}

\subsubsection{Building the Auxiliary Array} \label{subsubsec:auxiliary_array}

The computation of edges set $E'$ mentioned above relies on an auxiliary array that is computed and stored in the public RAM at the end of Stage $1$ where the current graph is $G'$ (Step~\ref{alg:connectivity:s2} of \textsc{Connectivity}$(G)$). 
We need the following tool.

\begin{lemma}[Section 1.3 in \cite{hagerup1992waste}] \label{lem:padded_sort}
    The \emph{padded sort} is to sort an array of $m$ items such that after sorting, the items are in sorted order in an array of length at most $2m$. 
    If the item values to be sorted are integers in $[m]$, then $m$ items can be padded sorted in $O(\log \log m)$ time and $O(m)$ work w.p. $1 - m^{-20}$. 
\end{lemma}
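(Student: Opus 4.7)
The plan is to establish the padded sort by recursive range-splitting on the value domain, which is the standard template for integer sorting on a CRCW PRAM and naturally hits the $O(\log\log m)$ time bound. Since every value lies in $[m]$, I would partition $[m]$ into $\sqrt{m}$ consecutive subranges of length $\sqrt{m}$ each; each item determines its target subrange in $O(1)$ time by a single arithmetic operation on its value. A recursive application of this split reduces the range length from $m$ to $\sqrt{m}$ to $m^{1/4}$ and so on, so after $O(\log\log m)$ levels each subrange has $O(1)$ values and the items inside are trivially in sorted order.

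The key primitive at every level is a \emph{padded distribution} step: given the current compacted array and its subrange assignments, produce, for each of the $\sqrt{r}$ child subranges of a parent range of size $r$, a padded compacted array holding exactly the items mapped there. I would implement this using the approximate compaction algorithm of Lemma~\ref{lem:apx_compaction}, run in parallel over all parent subranges. Approximate compaction of $k$ distinguished items within an array of length $r$ uses $O(r)$ work and $O(\log^* m)$ time with probability $1 - 2^{-k^{1/25}}$, which compounds by a union bound across all subranges and all $O(\log\log m)$ levels to a $1 - m^{-20}$ global success probability (one can boost individual subrange probabilities by re-running on failures, which is lower-order). The output at the top level is the concatenation of the children's padded arrays, and the length is at most $2m$ by an induction that the padding factor never exceeds $2$.

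The main obstacle is bounding total work by $O(m)$ rather than the naive $O(m \log\log m)$. I would address this with two ingredients. First, the cost of a padded distribution at any level can be charged to the items that move, not to the length of the arrays, because approximate compaction on an array of length $\ell$ with $k$ items is actually $O(k + \log^* m)$ work when $k \ge \ell / \polylog m$; combined with choosing subrange output lengths proportional to the number of items expected to land there (using a small first-pass count via the same hashing trick), per-level aggregate work telescopes to $O(m)$ total, not $O(m)$ per level. Second, once a subrange reaches length $O(\log m)$ items, I would hand it off to a sequential sort running in $O(\log m)$ work per processor; this truncation absorbs the small tail of recursion levels without affecting the $O(\log\log m)$ time budget. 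The $O(m)$ work and the $1 - m^{-20}$ success probability then follow by summing the per-level costs and union-bounding over the approximate compaction failure events.
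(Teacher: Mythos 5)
The paper does not prove this lemma at all: it is imported as a black box from Hagerup's work (the citation \cite{hagerup1992waste}, Section~1.3), so any self-contained proof you give has to stand entirely on its own. Yours does not, for several concrete reasons centered on the distribution step. Approximate compaction (Lemma~\ref{lem:apx_compaction}) maps the distinguished items of one array one-to-one into a single array of length $2k$; it does not split items by key into $\sqrt{r}$ separately allocated padded sub-arrays. To realize your ``padded distribution'' you must (a) determine, per child subrange, how much space to allocate, and (b) route each item to a cell inside its child's segment --- this is precisely the interval-allocation/semisorting problem that makes padded sorting nontrivial, and it is essentially what you are trying to prove. The only way to get it from Lemma~\ref{lem:apx_compaction} as stated is to run one compaction per child group over the parent array, which costs $O(\ell)$ work per group and hence $O(\ell\sqrt{r})$ per parent --- far from linear. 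Your fix, that compaction costs ``$O(k+\log^* m)$ work when $k \ge \ell/\mathrm{polylog}\, m$,'' is a strictly stronger primitive than the one the paper provides and is asserted without proof.

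Two further quantitative claims do not hold as written. The failure probability of Lemma~\ref{lem:apx_compaction} is $1-2^{-n^{1/25}}$ in the \emph{array length} $n$, so for the many small subranges at deeper recursion levels the per-group failure probability is essentially a constant; a union bound gives nothing, and ``re-run on failures'' needs $\Theta(\log m)$ retry rounds (each $\Theta(\log^* m)$ time) before \emph{all} polynomially many small groups succeed with probability $1-m^{-20}$, which breaks the time budget unless you introduce a separate mechanism for stragglers. Finally, even granting every level, your recursion has $\Theta(\log\log m)$ levels each costing $\Theta(\log^* m)$ time, i.e.\ $\Theta(\log\log m\cdot\log^* m)$ total, which exceeds the claimed $O(\log\log m)$; Hagerup's algorithm avoids this with machinery (approximate counting, interval allocation, processor reallocation) that your sketch neither cites nor reconstructs. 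As it stands the proposal assumes the hard part of the lemma rather than proving it.
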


In our application, we padded sort all $m$ edges of $G'$ by their first ends (see below). 
If $m \le n^{0.3}$, then the non-singletons in $G'$ are at most $2n^{0.3}$, which can be detected in $O(n)$ work. 
Since we only need to compute the connected components of the non-singletons, one can call Theorem~\ref{thm:ltz_main} for $\log n$ times in parallel to compute the connected components of $G'$ in $O(\log d + \log\log n) \le O(\log(1/\lambda) + \log\log n)$ time and $O(m + 2n^{0.3}) \cdot \log n \cdot O(\log n) \le O(n)$ work w.p. $1 - n^{-9}$. 
Therefore, we assume $m \ge n^{0.3}$, which means the padded sort succeeds w.p. at least $1 - n^{-6}$. 
Meanwhile, the padded sort takes $O(\log\log n)$ time by $m \le n^c$.

\begin{framed}
\noindent \textsc{BuildAuxiliary}$(G')$:
\begin{enumerate}
    \item Padded sort all edges of $G'$ by their first ends and store the output array $\mathcal{A}$. \label{alg:buildauxiliary:s1}
    \item For each item $(v, w) \in \mathcal{A}$ in parallel: if the first end of the predecessor of $(v, w)$ in $\mathcal{A}$ is not $v$ then write the index of $(v, w)$ in $\mathcal{A}'$ into $v.l$ in the private memory of (the processor corresponding to) $v$. \label{alg:buildauxiliary:s2}
    \item For each item $(v, w) \in \mathcal{A}$ in parallel: if the first end of the successor of $(v, w)$ in $\mathcal{A}$ is not $v$ then write the index of $(v, w)$ in $\mathcal{A}$ into $v.r$ in the private memory of $v$. \label{alg:buildauxiliary:s3}
    \item For each vertex $v \in V(G')$: if $v.r - v.l \ge (\log n)^{90}$ then (approximate) compact the subarray from index $v.l$ to index $v.r$ (inclusive) in $\mathcal{A}$. \label{alg:buildauxiliary:s4}
\end{enumerate}
\end{framed}

Padded sorting (Step~\ref{alg:buildauxiliary:s1}) create empty cells between items. 
In Step~\ref{alg:buildauxiliary:s4}, if the subarray from index $v.l$ to $v.r$ satisfies $v.r - v.l \ge (\log n)^{90}$, then the non-empty cells are moved to the beginning of this subarray while the original subarray is erased. 
The array $\mathcal{A}$ is called the \emph{auxiliary array} that is stored in the public RAM and will be used to efficiently compute the edge set $E'$ during the algorithm. 
Each cell of $\mathcal{A}$ that contains an edge $e$ also stores the processor id corresponding to $e$ for the purpose of mapping each edge in $\mathcal{A}$ back to the corresponding edge processor, which can be done by copying the edge set and storing the corresponding processor id in the item (edge) to be sorted before padded sorting. 

\begin{lemma} \label{lem:build_auxiliary_time_work}
    \textsc{BuildAuxiliary}$(G')$ takes $O(\log\log n)$ time and $O(m)$ work w.p. $1 - n^{-5}$.
\end{lemma}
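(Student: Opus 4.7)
The plan is to verify the time and work bounds step by step, using the earlier primitives (padded sort in Lemma~\ref{lem:padded_sort} and approximate compaction in Lemma~\ref{lem:apx_compaction}), and union-bound the failure probabilities at the end.

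First I would dispense with Step~\ref{alg:buildauxiliary:s1}. By the discussion preceding the lemma, I can assume $m \ge n^{0.3}$ (otherwise \textsc{Connectivity} already terminates through the Theorem~\ref{thm:ltz_main} shortcut). Under this assumption, Lemma~\ref{lem:padded_sort} applied to the $m$ edges, each sorted by its first end which is an integer in $[n] \subseteq [m]$ (after a trivial $O(m)$ relabelling if necessary), yields an array $\mathcal{A}$ of length at most $2m$ in $O(\log\log m) = O(\log\log n)$ time and $O(m)$ work, with failure probability at most $m^{-20} \le n^{-6}$.

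Next I would handle Steps~\ref{alg:buildauxiliary:s2} and~\ref{alg:buildauxiliary:s3}. Since $\mathcal{A}$ is indexed, a processor per cell suffices: each non-empty cell reads its neighbor and, when the first ends disagree, writes its own index into $v.l$ or $v.r$ for the unique vertex $v$ owning that run of edges. Concurrent writes are resolved arbitrarily by the ARBITRARY CRCW model, and the result is well defined because only one cell per vertex can be the first (resp. last) cell with first end $v$. Each of these steps runs in $O(1)$ time and $O(|\mathcal{A}|) = O(m)$ work deterministically.

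The subtle part is Step~\ref{alg:buildauxiliary:s4}, where we compact the long subarrays in parallel. For a vertex $v$ with $L \coloneqq v.r - v.l + 1 \ge (\log n)^{90}$, applying Lemma~\ref{lem:apx_compaction} to the length-$L$ subarray takes $O(\log^* L) \le O(\log^* n)$ time and $O(L)$ work with failure probability at most $2^{-L^{1/25}} \le 2^{-(\log n)^{90/25}} \le n^{-10}$. Processor allocation across vertices is handled by reading the precomputed $v.l, v.r$ values and assigning a block of $L$ processors to $v$'s subarray; the total work across all such $v$ is bounded by the total length $|\mathcal{A}| \le 2m$, and the running time is $O(\log^* n)$ because the compactions proceed in parallel. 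A union bound over the at most $n$ vertices bounds the total failure probability of Step~\ref{alg:buildauxiliary:s4} by $n^{-9}$.

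Combining the three failure contributions by a union bound gives an overall success probability of at least $1 - n^{-5}$, total time $O(\log\log n) + O(1) + O(\log^* n) = O(\log\log n)$, and total work $O(m)$, as claimed. The main obstacle, as usual in this paper, is allocating processors to the (many, variable-length) compactions without paying extra work: the $v.l, v.r$ fields computed in Steps~\ref{alg:buildauxiliary:s2}–\ref{alg:buildauxiliary:s3} are exactly what makes this allocation trivial, which is why those two steps are performed first.
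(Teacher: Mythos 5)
Your proof is correct and follows essentially the same route as the paper's: padded sort for Step~\ref{alg:buildauxiliary:s1} (using the $m \ge n^{0.3}$ assumption from the preceding discussion), constant-time index-based writes for Steps~\ref{alg:buildauxiliary:s2}--\ref{alg:buildauxiliary:s3}, per-vertex approximate compaction for Step~\ref{alg:buildauxiliary:s4} with the total work telescoping to $O(|\mathcal{A}|) = O(m)$, and a final union bound. No substantive differences to report.
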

\begin{proof}
    By Lemma~\ref{lem:padded_sort} and the discussion after it, Step~\ref{alg:buildauxiliary:s1} takes $O(m)$ work and $O(\log\log n)$ time w.p. $1 - n^{-6}$. Since $\mathcal{A}$ is indexed, we assign a processor for each of its cell.
    Step~\ref{alg:buildauxiliary:s2} and Step~\ref{alg:buildauxiliary:s3} take $O(m)$ work and $O(1)$ time by $|\mathcal{A}| \le 2m$.
    Step~\ref{alg:buildauxiliary:s4} takes $O(\log^* n)$ time and $O(v.r - v.l)$ work w.p. $1 - 1/2^{(\log n)^{90/25}} \ge 1 - n^{-9}$ for each $v \in V(G')$ by Lemma~\ref{lem:apx_compaction}.
    W.p. $1 - n^{-8}$, the total work of Step~\ref{alg:buildauxiliary:s4} is at most $\sum_{v \in V(G')} O(v.r - v.l) = O(|\mathcal{A}|) \le O(m)$ by telescoping. 
    The lemma follows.
\end{proof}

We will use the following property of the auxiliary array $\mathcal{A}$.
\begin{lemma} \label{lem:build_auxiliary_property}
    Let $\mathcal{A}$ be constructed in \textsc{BuildAuxiliary}$(G')$. 
    W.p. $1 - n^{-4}$, for any vertex $v \in V(G')$, all edges adjacent on $v$ in $E(G')$ are stored in the subarray of $\mathcal{A}$ that starts from index $v.l$ and ends at index $v.l + v.s$, where $v.s \le \max\{(\log n)^{90}, 2 \deg_{G'}(v)\}$.
\end{lemma}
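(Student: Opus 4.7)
The plan is to chain together the guarantees of padded sort (Lemma~\ref{lem:padded_sort}) and approximate compaction (Lemma~\ref{lem:apx_compaction}) with a short case analysis on the length of each vertex's run in $\mathcal{A}$.

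First I would apply Lemma~\ref{lem:padded_sort} to Step~\ref{alg:buildauxiliary:s1}: the padded sort of the $O(m)$ items succeeds with probability at least $1 - (2m)^{-20} \ge 1 - n^{-6}$, invoking $m \ge n^{0.3}$ from the discussion preceding the algorithm. Implicit in this step is that for the lemma to refer to ``all edges adjacent on $v$'', each undirected edge $(u,v)$ is presented to the sort as two items, one with first end $u$ and one with first end $v$; this only doubles the item count and preserves all the above bounds. Conditioned on the sort succeeding, the items with first end $v$ (equivalently, the edges of $E(G')$ adjacent to $v$) occupy a contiguous run of $\mathcal{A}$ whose first and last non-empty positions are exactly the values written into $v.l$ and $v.r$ in Steps~\ref{alg:buildauxiliary:s2}--\ref{alg:buildauxiliary:s3}.

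Next I would split on the length $v.r - v.l$. If $v.r - v.l < (\log n)^{90}$, then Step~\ref{alg:buildauxiliary:s4} performs no compaction and the claim holds with $v.s \coloneqq v.r - v.l < (\log n)^{90}$, trivially within $\max\{(\log n)^{90}, 2\deg_{G'}(v)\}$. Otherwise $v.r - v.l \ge (\log n)^{90}$, in which case Step~\ref{alg:buildauxiliary:s4} runs approximate compaction on the subarray of length $v.r - v.l + 1$, whose distinguished cells are exactly the $\deg_{G'}(v)$ edges adjacent to $v$. By Lemma~\ref{lem:apx_compaction}, these are mapped one-to-one into the first $2\deg_{G'}(v)$ positions of the subarray with failure probability at most $1/2^{((\log n)^{90})^{1/25}} = n^{-\omega(1)}$, so setting $v.s \coloneqq 2\deg_{G'}(v)$ establishes the claim for this vertex.

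Finally I would take a union bound over the at most $n$ vertices in $V(G')$: the total compaction-failure probability is still $n^{-\omega(1)}$, and combining with the padded-sort failure yields the overall bound $1 - n^{-4}$. There is no real obstacle here; the one subtlety worth flagging in the write-up is the doubled-edge convention that ensures every edge adjacent to $v$ (not just those in which $v$ is the ``first'' endpoint) is captured in $v$'s run. Everything else is direct bookkeeping on top of the two quoted PRAM primitives.
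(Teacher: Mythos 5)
Your proof is correct and follows the same route as the paper's: condition on the padded sort and approximate compactions succeeding, split on whether $v.r - v.l$ is below or at least $(\log n)^{90}$, and invoke Definition~\ref{def:apx_compaction} / Lemma~\ref{lem:apx_compaction} in the second case, finishing with a union bound. The clarification about each undirected edge being represented once per endpoint is consistent with the paper's implicit arc-based edge representation and does not change the argument.
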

\begin{proof}
    By the proof of Lemma~\ref{lem:build_auxiliary_time_work}, w.p. $1 - n^{-5}$, the padded sorting in Step~\ref{alg:buildauxiliary:s1} and all approximate compactions in Step~\ref{alg:buildauxiliary:s4} succeed. 
    For any vertex $v \in V(G')$, if $v.r - v.l < (\log n)^{90}$, then all adjacent edges are stored in the subarray of $\mathcal{A}$ that starts from index $v.l$ and ends at index $v.r \le v.l + (\log n)^{90}$, so the lemma holds for this case.
    Else if $v.r - v.l \ge (\log n)^{90}$, the all the $\deg_{G'}(v)$ edges adjacent to $v$ are compacted to the subarray of $\mathcal{A}$ that starts from index $v.l$ and ends at index $v.r \le v.l + v.s \le v.l + 2 \deg_{G'}(v)$ by Definition~\ref{def:apx_compaction}, and the lemma holds for this case.
\end{proof}

\subsubsection{Computing the Edge Set in \textsc{SparseBuild}} \label{subsubsec:edge_set_sparse_build}

In this section we show that the edge set $E'$ in Step~\ref{alg:sparsebuild:s4} of \textsc{SparseBuild}$(G', H_2, b)$ can be computed in $n / (\log n)^7$ work and $O(\log b)$ time, giving the precondition of Lemma~\ref{lem:new_increase_work}.

\begin{lemma} \label{lem:parent_low_its_low}
    At the beginning of Step~\ref{alg:sparsebuild:s4} of \textsc{SparseBuild}$(G', H_2, b)$, w.p. $1 - n^{-9}$, for any vertex $u \in V(G')$, if $u.p$ is a low vertex, then $\deg_{G'}(u) \le b^9$.
\end{lemma}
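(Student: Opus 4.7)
The plan is to derive this lemma as an almost immediate corollary of Lemma~\ref{lem:sparse_build:high_vertex}, by bounding $\deg_{G'}(u)$ by the ``upper degree'' $\overline{\deg}_{G'}(u.p)$ of its parent.

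First I would unpack Definition~\ref{def:upper_degree}: $\overline{\deg}_{G'}(v)$ counts the half-edges $(u, w) \in E(G')$ whose first end $u$ satisfies $u.p = v$, which is exactly the degree of $v$ in the virtual graph $\overline{G'}$ obtained by applying \textsc{Alter}$(E(G'))$ to $G'$. Since each undirected edge $\{u,w\}$ in $E(G')$ appears (in the half-edge convention used by the degree counting) as $(u,w)$ with $u$ as first end, every edge incident to $u$ in $G'$ contributes to $\overline{\deg}_{G'}(u.p)$ via this orientation. Hence the key pointwise inequality
\[
\deg_{G'}(u) \;\le\; \overline{\deg}_{G'}(u.p)
\]
holds for every vertex $u \in V(G')$, regardless of whether the incident edges are parallel edges or self-loops (a self-loop at $u$ contributes $1$ on each side, and a parallel edge $(u,w)$ contributes $1$ on each side, matching the conventions fixed in \S{\ref{sec:pre}}).

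Next, I would invoke Lemma~\ref{lem:sparse_build:high_vertex}, which asserts that w.p.\ $1 - n^{-9}$, every vertex marked as low in any call to \textsc{SparseBuild} during the execution of \textsc{Connectivity}$(G)$ satisfies $\overline{\deg}_{G'}(\cdot) \le b^9$. Conditioning on this high-probability event and applying it to $v = u.p$, we obtain $\deg_{G'}(u) \le \overline{\deg}_{G'}(u.p) \le b^9$, which is the desired bound. Since the global event from Lemma~\ref{lem:sparse_build:high_vertex} already holds with probability $1 - n^{-9}$ across \emph{all} calls and \emph{all} vertices, no extra union bound is needed, and the claimed probability matches.

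The main ``obstacle'' is really just the bookkeeping in the first step: making sure the half-edge interpretation in Definition~\ref{def:upper_degree} lines up correctly with the definition of $\deg_{G'}(u)$ from \S{\ref{sec:pre}} (in particular, that parallel edges are counted with multiplicity and self-loops are counted once on both sides). Once this is unambiguous, the proof is one line.
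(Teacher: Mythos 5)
Your proof is correct and follows essentially the same route as the paper: the paper also conditions on Lemma~\ref{lem:sparse_build:high_vertex} giving $\overline{\deg}_{G'}(u.p) \le b^9$ for the low vertex $u.p$, and then notes (phrased as a contradiction rather than your direct inequality $\deg_{G'}(u) \le \overline{\deg}_{G'}(u.p)$) that every edge of $G'$ incident to $u$ is counted toward $\overline{\deg}_{G'}(u.p)$ under Definition~\ref{def:upper_degree}. The probability accounting also matches, so no changes are needed.
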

\begin{proof}
    By Lemma~\ref{lem:sparse_build:high_vertex}, w.p. $1 - n^{-9}$, $\overline{\deg}_{G'}(u.p) \le b^9$, and we shall condition on this happening. 
    Assume for contradiction that $\deg_{G'}(u) > b^9$, then the degree of $u.p$ is more than $b^9$ if we perform \textsc{Alter}$(E(G'))$, giving $\overline{\deg}_{G'}(u.p) > b^9$, a contradiction.
\end{proof}

\begin{lemma} \label{lem:edge_set_sparse_build_time_work}
    W.p. $1 - n^{-3}$, the edge set $E'$ in Step~\ref{alg:sparsebuild:s4} of \textsc{SparseBuild}$(G', H_2, b)$ can be computed in $O(\log b)$ time and $n / (\log n)^7$ work.
\end{lemma}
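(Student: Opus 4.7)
The plan is to exploit the auxiliary array $\mathcal{A}$ produced by \textsc{BuildAuxiliary}$(G')$. By Lemma~\ref{lem:build_auxiliary_property}, the edges adjacent to any vertex $u$ reside in a known contiguous subarray of $\mathcal{A}$ of length $u.s \le \max\{(\log n)^{90}, 2\deg_{G'}(u)\}$ starting at index $u.l$. Computing $E'$ therefore reduces to scanning only those subarrays for the vertices $u \in V(G')$ whose parent $u.p$ was marked low in Step~\ref{alg:sparsebuild:s3} of \textsc{SparseBuild}, and copying out the edges found.

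My algorithm has four steps. First, using one processor per vertex, I mark every $u \in V(G')$ for which $u.p$ is a low active root and $u.l$ is defined; this costs $O(|V(G')|) \le O(n/b^{10})$ work in $O(1)$ time. Second, I bucket the marked $u$'s by the power-of-two class $k(u) \coloneqq \lceil \log_2 u.s \rceil$, which takes values in a range of size $O(\log b)$, and apply approximate compaction (Lemma~\ref{lem:apx_compaction}) to each bucket in parallel using disjoint zones of the processor pool. Third, for each bucket $k$ with $B_k$ marked vertices packed into an index array of length $\le 2B_k$, I allocate a $2B_k \times 2^{k+1}$ grid of processors and let the $(i,j)$-th processor read $\mathcal{A}[u_i.l + j]$, recording any edge that it finds into a tentative output slot. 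Fourth, I apply one more approximate compaction to collect all the recorded edges into the compact array $E'$.

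To bound the work I partition marked $u$'s into \emph{small} ones with $\deg_{G'}(u) < (\log n)^{90}/2$ and \emph{large} ones with $\deg_{G'}(u) \ge (\log n)^{90}/2$. Large $u$'s had their subarrays compacted in Step~\ref{alg:buildauxiliary:s4} of \textsc{BuildAuxiliary}, so $u.s \le 2\deg_{G'}(u)$, and the allocation over the large class totals at most $2\sum_u \deg_{G'}(u)$. The key bound is
\[
\sum_{u:\ u.p \text{ low}} \deg_{G'}(u) \;\le\; \sum_{v \text{ low active root}} \overline{\deg}_{G'}(v) \;\le\; (n/b^{10}) \cdot b^9 \;=\; n/b,
\]
which combines Lemma~\ref{lem:sparse_build:high_vertex} (low implies $\overline{\deg}_{G'} \le b^9$) with Lemma~\ref{lem:interweave_roots_shrink_each_phase} (at most $n/b^{10}$ active roots). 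For small $u$'s each is assigned $O((\log n)^{90})$ processors; since the total number of marked $u$'s is at most $n/b$ (each contributes at least one edge to the budget above), the small class contributes $O(n(\log n)^{90}/b)$ processors. Using $b \ge (\log n)^{100}$ the grand total is $O(n/(\log n)^{10}) \le n/(\log n)^7$, and each processor performs only $O(1)$ steps of scanning.

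The running time is dominated by the two rounds of approximate compaction, each $O(\log^{\ast} n)$ time (Lemma~\ref{lem:apx_compaction}), which is safely within $O(\log b)$. The main subtlety to verify is that the $O(\log b)$ parallel bucket-wise compactions can be laid out on disjoint processor zones without exceeding the total work budget; this holds because there are only $O(\log b)$ buckets and bucket $k$ needs only $O(B_k)$ work for its compaction, summing to $O(\sum_k B_k) = O(n/b)$. A union bound over the failure events of Lemma~\ref{lem:sparse_build:high_vertex}, Lemma~\ref{lem:interweave_roots_shrink_each_phase}, Lemma~\ref{lem:build_auxiliary_property}, Lemma~\ref{lem:parent_low_its_low}, and the $O(\log b)$ approximate compactions yields overall success probability at least $1 - n^{-3}$.
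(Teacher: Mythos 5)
Your proposal is correct and rests on the same pillars as the paper's proof: mark exactly the vertices $u$ with $u.p$ low, use the locality guarantee of the auxiliary array (Lemma~\ref{lem:build_auxiliary_property}) to touch only the subarray $[u.l,\,u.l+u.s]$, bound $u$'s degree by $b^9$ via Lemma~\ref{lem:parent_low_its_low}, and charge the total scanned length against the $n/b$ bound coming from Lemma~\ref{lem:sparse_build:high_vertex} and Lemma~\ref{lem:interweave_roots_shrink_each_phase}. Where you differ is the mechanism for deploying processors to the subarrays: the paper lets each marked vertex ``awaken'' cells of $\mathcal{A}$ by a doubling process that stops after $O(\log b)$ rounds or upon reaching a cell whose edge does not start at $u$ (and explicitly discards such cells), so it needs no knowledge of $u.s$, no bucketing, and no dynamic processor allocation, at the cost of $O(\log b)$ time per scan; you instead bucket marked vertices by length class, compact each bucket, and statically lay out a processor grid, which buys a constant-time scan but requires the allocation machinery. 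Both land at the same work and time bounds, so this is a legitimate alternative implementation rather than a different proof idea.

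Three small points to tighten. First, your claim that bucket $k$'s compaction costs $O(B_k)$ work is not right as stated: approximate compaction costs work proportional to the \emph{input} array length, so running $O(\log b)$ of them over the vertex array costs $O(|V(G')|\log b)$ — still negligible against the $n/(\log n)^7$ budget since $|V(G')|\le n/(\log n)^{1000}$, but the accounting should say this. Second, your scan window of width $2^{k+1}$ can spill past $u.r$ into another vertex's region; a processor that finds an edge must check that its first end equals $u_i$ before recording it (the paper does exactly this check), otherwise you may inject edges of high-parent vertices or duplicates into $E'$; the fix is an $O(1)$ test and does not affect the bounds. Third, bucketing by $u.s$ presupposes the algorithm knows the post-compaction subarray length from Step~\ref{alg:buildauxiliary:s4} of \textsc{BuildAuxiliary} (for compacted vertices $u.r-u.l$ overestimates $u.s$); this is easy to arrange — the compaction output length can be stored with $u$ — but it is an assumption you should state, and it is precisely the kind of bookkeeping the paper's awakening trick avoids.
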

\begin{proof}
    In Step~\ref{alg:sparsebuild:s4} of \textsc{SparseBuild}$(G', H_2, b)$, each vertex $u \in V(G')$ checks whether $u.p$ is a low vertex using $O(|V(G')|) \le O(n / (\log n)^{10})$ work in total. 
    If $u.p$ is low, then we do the following. 
    The processor corresponding to $u$ awakens the processor corresponding to the cell indexed at $u.l$ of the auxiliary array $\mathcal{A}$ in round $0$, then in the next round (round $1$) the processor corresponding to the cell at $u.l$ awakens the cells at $u.l$ and $u.l + 1$, then in round $2$, the processor at $u.l$ awakens $u.l$ and $u.l+1$, and the processor at $u.l+1$ awakens $u.l+2$ and $u.l+3$, and so on. That is, in round $j$, each of the $2^{j-1}$ awakened processor at $u.l + k$ awakens the processors at $u.l + 2k$ and $u.l + 2k+1$. 
    We continue this process until the $(9 \log b + 1)$-th round or a cell containing an edge not starting at $u$ is awakened. 
    Since in each round, a processor only needs to awaken $2$ processors, the running time for this part is $O(\log b)$. 
    After $9 \log b + 1$ rounds, all cells indexed from $u.l$ to $u.l + 2^{9 \log b + 1} = u.l + 2 b^9$ such that the cell contains an edge starting at $u$ are awakened by $u$ (if a cell containing an edge not starting at $u$ is awakened by $u$, we mark this cell as not awakened by $u$ to avoid the case that a cell is awakened by more than $1$ vertices). 
    
    By Lemma~\ref{lem:parent_low_its_low}, w.p. $1 - n^{-9}$, $\deg_{G'}(u) \le b^9$ if $u.p$ is a low vertex. 
    By Lemma~\ref{lem:build_auxiliary_property}, w.p. $1 - n^{-4}$, all edges adjacent on $u$ in $E(G')$ are stored in the subarray of $\mathcal{A}$ that starts from index $u.l$ and ends at index $u.l + u.s$, where 
    $$u.s \le \max\{(\log n)^{90}, 2 \deg_{G'}(v)\} \le 2 b^9$$ 
    by the previous sentence and $b \ge (\log n)^{100}$. 
    Therefore, if $u.p$ is a low vertex, then w.p. $1 - n^{-3}$, any cell that stores an edge in $E(G')$ adjacent to $u$ is awakened by $u$. 
    The processor corresponding to each cell then notifies the stored edge (processor) that it is in $E'$, giving the edge set $E'$. 
    
    The total running time is $O(\log b)$. 
    The total work is at most $O(|V(G')| + O(|E'|)) \le n / (\log n)^7$ because at most $2|E'|$ cells are awakened in total and $|E'| \le n/b$ w.p. $1 - n^{-6}$ by the third paragraph in the proof of Lemma~\ref{lem:new_increase_work}. 
    The lemma follows.
\end{proof}

\subsubsection{Computing the Edge Set in \textsc{Interweave}} \label{subsubsec:edge_set_interweave}

In this section we show that the edge set $E'$ in Step~\ref{alg:interweave:s8} of \textsc{Interweave}$(G', H_1, H_2, E_{\text{filter}}, i)$ can be computed in $n / (\log n)^7$ work and $O(\log b)$ time.

\begin{lemma} \label{lem:high_degree_has_edge_in_filter}
    During the execution of \textsc{Connectivity}$(G)$, w.p. $1 - n^{-7}$, for any phase $i \in [0, 10 \log\log n)$ and for any vertex $v \in V(G')$, if $\deg_{G'}(v) \ge (\log n)^{10^6 \cdot 1.1^i}$, then $v.p \notin V(G') \backslash V(E_{\text{filter}})$ in Step~\ref{alg:interweave:s8} of \textsc{Interweave}$(G', H_1, H_2, E_{\text{filter}}, i)$.
\end{lemma}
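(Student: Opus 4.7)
The plan is to argue that with overwhelming probability every edge incident to a high-degree vertex has a non-negligible chance to survive all the random edge deletions performed in Step~\ref{alg:interweave:s6} across phases $0, 1, \dots, i$, so that by a concentration argument at least one such edge remains in $E_{\text{filter}}$ and witnesses $v.p \in V(E_{\text{filter}})$.
First I would bound the total number of ``delete w.p.\ $10^{-4}$'' rounds applied to any edge that has persisted in $E_{\text{filter}}$ up through Step~\ref{alg:interweave:s6} of phase $i$:
\[
T_i \;=\; \sum_{j=0}^{i} 10^6 \cdot 1.1^{j} \log\log n \;\le\; C\cdot 1.1^{i}\log\log n
\]
for a universal constant $C$, using the geometric sum. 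Since \textsc{Matching}$(E_{\text{filter}})$ is pass-by-value and \textsc{Alter}$(E_{\text{filter}})$ only relabels ends and removes loops, the only way an edge that has not become a loop leaves $E_{\text{filter}}$ is by one of these deletion steps, so its survival probability is at least $q := (1-10^{-4})^{T_i} \ge (\log n)^{-c_0\cdot 1.1^i}$ for some absolute constant $c_0$ (which can be taken much smaller than $10^6$).

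Next, fix $v$ with $\deg_{G'}(v) \ge D := (\log n)^{10^{6}\cdot 1.1^{i}}$ and let $r = v.p$ at Step~\ref{alg:interweave:s8}, with $T$ the tree rooted at $r$. Note $r \in V(E_{\text{filter}})$ as soon as at least one edge $(a,b) \in E(G')$ with $a \in T$, $b \notin T$ survives all random deletions, because after the last \textsc{Alter}$(E_{\text{filter}})$ within Step~\ref{alg:interweave:s6} such an edge appears in $E_{\text{filter}}$ as $(r,b.p)$ with $r \ne b.p$. I would therefore lower-bound the number of ``external'' edges $(v,u) \in E(G')$ with $u \notin T$ at that time: among the $D$ edges incident to $v$, call an edge \emph{internal} if its other endpoint is absorbed into $T$ and \emph{external} otherwise. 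Each external edge survives independently with probability at least $q$, so by a Chernoff bound the probability that all external edges are killed is at most $\exp(-(\text{\#external})\cdot q/2)$; this is $\le n^{-10}$ as long as $\text{\#external}\cdot q = \omega(\log n)$. It remains to handle the complementary case in which almost all of $v$'s incident edges are internal.

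The hardest part is exactly this internal-heavy case, since with multi-edges a small tree $T$ could in principle absorb all of $v$'s edges. I would handle it as follows: each time $T$ absorbs a new distinct vertex $u$ into itself, some surviving edge of $E_{\text{filter}}$ between $T$ and $u$ must have been used by \textsc{Matching}; consequently the number of \emph{distinct} neighbors of $v$ lying in $T$ is at most $|T|-1$, and $|T|$ is at most the number of active roots that have ever been merged into $r$, which I will bound against $|V(G')|\le n/b^{10}$ from Lemma~\ref{lem:interweave_roots_shrink_each_phase}. Then I separately control the contribution of multi-edges from $v$ to a single absorbed neighbor $u$: for each such neighbor $u$ the edge $(v,u)$ is put into $E_{\text{filter}}$ with multiplicity $m_{v,u}$ at the start, and the probability that \emph{all} $m_{v,u}$ parallel copies get killed (either by random deletion or by becoming loops after $u$ is absorbed) before any of them is still present in $E_{\text{filter}}$ as a non-loop is at most $(1-q)^{m_{v,u}}$. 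Summing the ``internal'' multi-edge contributions and comparing with $D$, I would conclude that either enough external edges exist to invoke the Chernoff argument above, or the event ``every incident edge of $v$ is a deleted/loop copy of an absorbed multi-edge'' happens with probability $\le n^{-10}$. A union bound over all $v \in V(G')$ and all $i \in [0, 10\log\log n)$ yields the claimed $1-n^{-7}$ guarantee, where I also union with the high-probability events from Lemmas~\ref{lem:interweave_flat_each_phase} and~\ref{lem:interweave_roots_shrink_each_phase} that were implicitly used throughout.
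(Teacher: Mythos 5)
The first half of your argument --- summing the deletion rounds of Step~\ref{alg:interweave:s6} over phases $0,\dots,i$ to get at most $10^8\cdot 1.1^i\log\log n$ coin flips per edge, hence a per-edge survival probability of at least $(1-10^{-4})^{10^8\cdot 1.1^i\log\log n}\ge(\log n)^{-10^5\cdot 1.1^i}$ --- is exactly how the paper's proof begins. But the paper then finishes in one step: it considers the corresponded edges in $E_{\text{filter}}$ of all $\deg_{G'}(v)\ge(\log n)^{10^6\cdot 1.1^i}$ edges of $G'$ incident to $v$, bounds the probability that every one of them is ever hit by the independent $10^{-4}$-deletions by $\left(1-(\log n)^{-10^5\cdot 1.1^i}\right)^{(\log n)^{10^6\cdot 1.1^i}}\le e^{-(\log n)^{9}}\le n^{-9}$, observes that a surviving corresponded edge is an edge of $E_{\text{filter}}$ adjacent to $v.p$ at Step~\ref{alg:interweave:s8}, and closes with a union bound over vertices and phases. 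There is no tree/external-edge decomposition, no Chernoff bound, and no per-neighbor multi-edge accounting in the paper's argument.

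The place where your proposal genuinely breaks is the internal-heavy case that you yourself flag as the hardest part. The claim that all $m_{v,u}$ parallel copies get killed (by random deletion or by becoming loops after $u$ is absorbed) with probability at most $(1-q)^{m_{v,u}}$ is not valid: loop formation is not an independent per-copy event. All parallel copies of $(v,u)$ are relabelled identically by \textsc{Alter}, so they become loops simultaneously --- at the moment $u$'s root merges with $v$'s --- and at that moment every still-surviving copy is removed with probability $1$, regardless of the deletion coins. Consequently the dichotomy you want (``either enough external edges exist for the concentration step, or the all-internal event has probability at most $n^{-10}$'') cannot be closed: the all-internal event is simply the event that every $G'$-neighbour of $v$ is absorbed into $v.p$'s tree, which is not a small-probability event (it is precisely what the matchings are driving toward), and no per-copy coin counting controls it; comparing against $D=\deg_{G'}(v)$ also cannot work, because $D$ counts multiplicities while your bound ``distinct neighbours absorbed $\le |T|-1$'' (and the comparison with $|V(G')|\le n/b^{10}$ from Lemma~\ref{lem:interweave_roots_shrink_each_phase}) does not. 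A secondary issue: your Chernoff step treats the external edges' survival as independent of which edges are external, but externality is determined by the matchings, which depend on the same deletion coins, so this conditioning needs justification. The paper sidesteps all of this by never conditioning on the contraction structure: it argues only that with probability $1-n^{-9}$ some corresponded edge of $v$ is never randomly deleted and is then an edge of $E_{\text{filter}}$ with $v.p$ as an endpoint at Step~\ref{alg:interweave:s8}; it gives the loop case no separate treatment at all.
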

\begin{proof}
    Since the statement in the lemma is about the algorithm at Step~\ref{alg:interweave:s8} of \textsc{Interweave}$(G', H_1, H_2, E_{\text{filter}}, i)$, we can ignore Step~\ref{alg:interweave:s2} to Step~\ref{alg:interweave:s4} of \textsc{Interweave}$(G', H_1, H_2, E_{\text{filter}}, i)$ in this proof, because if the algorithm executes to Step~\ref{alg:interweave:s8}, then the labeled digraph is reverted to its state in Step~\ref{alg:interweave:s1} (by Step~\ref{alg:interweave:s4}).
    
    In Step~\ref{alg:connectivity:s4} of \textsc{Connectivity}$(G)$, the edge set of $G'$ is copied as $E_{\text{filter}}$. 
    In each phase, the edges in $E_{\text{filter}}$ are altered and deleted randomly in Step~\ref{alg:connectivity:s6} of \textsc{Interweave}$(G', H_1, H_2, E_{\text{filter}}, i)$. For each edge $e \in E_{\text{filter}}$ at a timestamp of the algorithm, let the \emph{corresponding edge} $\overline{e}$ of $e$ be the edge in $E(G')$ that is altered (in Step~\ref{alg:connectivity:s6} of \textsc{Interweave}$(G', H_1, H_2, E_{\text{filter}}, i)$) over phases to become $e$ at this timestamp, and let $e$ be the \emph{corresponded edge} of $\overline{e}$.
    
    Let $v$ be a vertex in $ V(G')$ with $\deg_{G'}(v) \ge (\log n)^{10^6 \cdot 1.1^i}$. At the end of Step~\ref{alg:interweave:s6} in phase $i$, consider the edge set $\mathcal{E}_v$ of corresponded edges of all edges in $E(G')$ adjacent to $v$. 
    For each edge $\overline{e} \in \mathcal{E}_v$, every time the algorithm executes Step~\ref{alg:interweave:s6} of \textsc{Interweave}$(G', H_1, H_2, E_{\text{filter}}, i)$ in a phase $j \le i$, the edge $\overline{e}$ is deleted w.p. $10^{-4}$. 
    In all the phases before and including phase $i$, the algorithm executes Step~\ref{alg:interweave:s6} for at most
    \begin{equation*}
        \sum_{j = 0}^{i} 10^6 \cdot 1.1^j \log\log n \le 10^8 \log\log n \cdot 1.1^i
    \end{equation*}
    times, so the probability that edge $\overline{e}$ ever gets deleted before phase $i$ is at most
    \begin{equation*}
        1 - (1 - 10^{-4})^{10^8 \log\log n \cdot 1.1^i} .
    \end{equation*}
    Since $|\mathcal{E}_v| = \deg_{G'}(v) \ge (\log n)^{10^6 \cdot 1.1^i}$, the probability that all edges in $\mathcal{E}_v$ are deleted before phase $i$ is at most
    \begin{equation*}
        \left( 1 - (1 - 10^{-4})^{10^8 \log\log n \cdot 1.1^i} \right)^{(\log n)^{10^6 \cdot 1.1^i}} \le \left(1-(\log n)^{-10^5 \cdot 1.1^i} \right)^{(\log n)^{10^6 \cdot 1.1^i}} \le \left(\frac{1}{e}\right)^{(\log n)^{9}} \le n^{-9} .
    \end{equation*}
    Therefore, w.p. $1 - n^{-9}$, there is an edge in $\mathcal{E}_v$ that is not deleted at the beginning of Step~\ref{alg:interweave:s8} in phase $i$. 
    By the definition of corresponded edges, there is an edge in $E_{\text{filter}}$ that is adjacent to $v.p$ at this timestamp, which means $v.p \in V(E_{\text{filter}})$ and thus $v.p \notin V(G') \backslash V(E_{\text{filter}})$. 
    The lemma follows by a union bound over all the at most $10 \log\log n$ phases and all the at most $n$ vertices.
\end{proof}

\begin{lemma} \label{lem:E_prime_bound_interweave}
    W.p. $1 - n^{-8}$, the edge set $E'$ in Step~\ref{alg:interweave:s8} of \textsc{Interweave}$(G', H_1, H_2, E_{\text{filter}}, i)$ in any phase $i \in [0, 10 \log\log n)$ of \textsc{Connectivity}$(G)$ satisfies $|E'| \le n/(\log n)^9$.
\end{lemma}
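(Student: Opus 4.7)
My plan is to deduce the bound on $|E'|$ by combining Lemma~\ref{lem:high_degree_has_edge_in_filter} with the active-root shrinkage bound of Lemma~\ref{lem:interweave_roots_shrink_each_phase}. The starting observation is that by the definition in Step~\ref{alg:interweave:s8}, every edge $(v,w) \in E'$ has $v.p \in V(G') \setminus V(E_{\text{filter}})$; hence the contrapositive of Lemma~\ref{lem:high_degree_has_edge_in_filter} yields, w.p.\ at least $1-n^{-7}$ taken over all phases, that $\deg_{G'}(v) < T_i := (\log n)^{10^6 \cdot 1.1^i}$ for every such $v$. Thus every $v$ contributing edges to $E'$ contributes at most $T_i$ of them.

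The next step is to group these contributions by the root $u = v.p$. By Lemma~\ref{lem:interweave_flat_each_phase} together with the $i + 2\log\log n$ iterations of $\textsc{Shortcut}$ in Step~\ref{alg:interweave:s7}, at the moment Step~\ref{alg:interweave:s8} is executed every vertex of $V(G')$ is either a root or a child of a root, so $u$ is always a root. Non-active roots, whose component computations are already finished (see the discussion after Definition~\ref{def:upper_degree}), can be excluded; Lemma~\ref{lem:interweave_roots_shrink_each_phase} then bounds the remaining active roots at the start of the phase by $n/b^{10} = n/(\log n)^{1000 \cdot 1.1^i}$. Writing $V_u = \{v : v.p = u\}$ and noting $\sum_u |V_u| \le |V(G')|$, the edge count aggregates to
\[
|E'| \;\le\; \sum_{\substack{u \text{ active root}\\ u \notin V(E_{\text{filter}})}} |V_u|\cdot T_i \;\le\; T_i \cdot \tfrac{n}{(\log n)^{1000 \cdot 1.1^i}}.
\]
A union bound over the at most $10\log\log n$ phases converts the per-phase $n^{-7}$ failure probability into the target $n^{-8}$.

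The main obstacle is that the stated threshold $T_i = (\log n)^{10^6 \cdot 1.1^i}$ is too loose for the bound $T_i \cdot n/(\log n)^{1000 \cdot 1.1^i} \le n/(\log n)^9$ to follow directly. I will therefore sharpen the probability calculation inside Lemma~\ref{lem:high_degree_has_edge_in_filter}: using $(1-10^{-4})^{R_i} \ge (\log n)^{-c \cdot 1.1^i}$ for a small constant $c$ (where $R_i = O(1.1^i \log\log n)$ is the total number of Step~\ref{alg:interweave:s6} rounds up to phase $i$), the bound $(1-(1-10^{-4})^{R_i})^X \le e^{-X (1-10^{-4})^{R_i}}$ shows that an effective threshold $T_i = (\log n)^{c' \cdot 1.1^i}$ with $c' \ll 1000$ already makes the failure probability $\le n^{-9}$. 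With this sharper $T_i$ the factor $(\log n)^{(c' - 1000)\cdot 1.1^i}$ is $\le (\log n)^{-9}$ for every $i \ge 0$, yielding $|E'| \le n/(\log n)^9$ as required.
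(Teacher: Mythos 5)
Your proposal is not the paper's argument, and it has a gap that the proposed ``sharpening'' does not close.

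The first issue is the aggregation step. You correctly observe (via the contrapositive of Lemma~\ref{lem:high_degree_has_edge_in_filter}) that every $v$ contributing to $E'$ has $\deg_{G'}(v) < T_i$, and that $v.p$ is always a root by Lemma~\ref{lem:interweave_flat_each_phase}. But the displayed chain
\[
|E'| \le \sum_{u} |V_u|\cdot T_i \le T_i \cdot \tfrac{n}{(\log n)^{1000 \cdot 1.1^i}}
\]
confuses the number of active roots $u$ (which Lemma~\ref{lem:interweave_roots_shrink_each_phase} bounds by $n/b^{10}$) with the number of vertices $v$ whose parent is such a root. These are very different quantities: a single active root $u$ may have had thousands of children accumulate across the previous phases, so $\sum_u |V_u|$ is only bounded by $|V(G')| \le n/(\log n)^{1000}$, which does \emph{not} shrink with $i$. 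Your own intermediate observation $\sum_u |V_u| \le |V(G')|$ already contradicts the jump to the $(\log n)^{1000\cdot 1.1^i}$ denominator, so the inequality as displayed does not hold.

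The second issue is that the sharpening cannot rescue the argument. Once the aggregate is corrected to $|V(G')| \le n/(\log n)^{1000}$, you would need $T_i \le (\log n)^{991}$ for \emph{every} phase $i$. But any threshold of the form $T_i = (\log n)^{c'\cdot 1.1^i}$ grows unboundedly with $i$: for $i$ near $10\log\log n$ one has $1.1^i = (\log n)^{10\log 1.1} \approx (\log n)^{0.95}$, so $T_i \approx (\log n)^{c'\cdot(\log n)^{0.95}}$ is superpolynomial in $n$. There is no constant $c'$ for which the product $T_i\cdot n/(\log n)^{1000}$ stays below $n/(\log n)^9$ across all phases. The threshold in Lemma~\ref{lem:high_degree_has_edge_in_filter} is genuinely large precisely because the edge-deletion probability has been compounded over $\Theta(1.1^i\log\log n)$ rounds, so a vertex can survive with a lot of edges deleted; a worst-case per-vertex degree bound is simply too coarse.

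For comparison, the paper avoids worst-case degree bounds entirely. It observes that the $E'$ of Step~\ref{alg:interweave:s8} coincides (by construction) with the edge set $E'$ at the end of Step~\ref{alg:extract:s2} of \textsc{Extract}, with $k = 10^6\cdot 1.1^i\log\log n$ and $V(E_{\text{filter}})$ playing the role of $V'$. The argument from Lemma~\ref{lem:extract_time_work_helper_E} then bounds the \emph{expected} contribution to $E'$ from each vertex by a constant, giving $\E[|E'|] \le O(|V(G')|)$, and the self-bounding-functions concentration bound developed in \S\ref{subsec:bound_edge_set_whp} converts this to a high-probability statement. The expectation-plus-concentration route exploits cancellation that a per-vertex worst-case degree bound cannot see, which is exactly why it does not hit the parametrization problem that sinks your approach.
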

\begin{proof}
    Note that the definition of edge set $E'$ in Step~\ref{alg:interweave:s8} of \textsc{Interweave}$(G', H_1, H_2, E_{\text{filter}}, i)$ in phase $i$ is exactly the edge set $E'$ at the end of Step~\ref{alg:extract:s2} of \textsc{Extract}$(E, k)$ if we set $k = 10^6 \cdot 1.1^i \log\log n$, because $V(E_{\text{filter}})$ is exactly the vertex set $V'$ defined at the end of Step~\ref{alg:extract:s2} of \textsc{Extract}$(E, k)$. 
    
    Using the same technique used in the proof of Lemma~\ref{lem:extract_time_work_helper_E}, we get that the expected contribution to $|E'|$ from each vertex in $V(E_{\text{filter}})$ is at most $2$,
    and we get that $\E[|E'|] \le O(|V(E_{\text{filter}})|)$. 
    Using the technique in \S{\ref{sec:boosting}}, we get that w.p. $1 - n^{-9}$, $|E'| \le O(|V(E_{\text{filter}})|) \le O(|V(G')|) \le n / (\log n)^9$. 
    The lemma follows from a union bound over all phases.
\end{proof}

With the above two lemmas, we are ready to give the cost of compute the edge set $E'$.

\begin{lemma} \label{lem:edge_set_interweave_time_work}
    W.p. $1 - n^{-3}$, the edge set $E'$ in Step~\ref{alg:interweave:s8} of \textsc{Interweave}$(G', H_1, H_2, E_{\text{filter}}, i)$ in any phase $i \in [0, 10 \log\log n)$ of \textsc{Connectivity}$(G)$ can be computed in $O(\log b)$ time and $n / (\log n)^7$ work.
\end{lemma}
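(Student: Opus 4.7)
The plan is to adapt the doubling-awakening technique from Lemma~\ref{lem:edge_set_sparse_build_time_work}, using the auxiliary array $\mathcal{A}$ built by \textsc{BuildAuxiliary}$(G')$ and exploiting Lemma~\ref{lem:high_degree_has_edge_in_filter} to bound the degrees of the relevant vertices. First, I would maintain a flag in the private memory of each vertex indicating membership in $V(E_{\text{filter}})$. This flag can be set at the end of Step~\ref{alg:interweave:s6} of \textsc{Interweave} by having each surviving edge in $E_{\text{filter}}$ notify its two endpoints, which adds only $O(|E_{\text{filter}}|)$ work to that step and does not affect its asymptotic cost. After this, each vertex $u \in V(G')$ can check in $O(1)$ time whether $u.p \in V(G') \setminus V(E_{\text{filter}})$, and iterating over all vertices takes $O(|V(G')|) \le O(n/(\log n)^{1000})$ total work.

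For each vertex $u$ with $u.p \in V(G') \setminus V(E_{\text{filter}})$, the contrapositive of Lemma~\ref{lem:high_degree_has_edge_in_filter} yields (w.p.\ $1 - n^{-7}$) $\deg_{G'}(u) < (\log n)^{10^6 \cdot 1.1^i} = b^{10^4}$, so by Lemma~\ref{lem:build_auxiliary_property} all edges of $E(G')$ adjacent to $u$ are stored in the subarray $\mathcal{A}[u.l \,..\, u.l + u.s]$ with $u.s \le \max\{(\log n)^{90}, 2\deg_{G'}(u)\} < 2 b^{10^4}$. I then run the same doubling-awakening scheme as in Lemma~\ref{lem:edge_set_sparse_build_time_work}: in round $0$, $u$ awakens the processor for cell $u.l$; in round $j$, each awakened processor at $u.l + k$ awakens those at $u.l + 2k$ and $u.l + 2k + 1$. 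We continue for $\lceil \log_2(2 b^{10^4}) \rceil + 1 = O(\log b)$ rounds (the constant $10^4$ being absorbed) or until an awakened cell reports that it contains an edge whose first end is not $u$ or is empty in a compacted region. The key implementation point, and the main subtlety of the proof, is enforcing this early-stop so that total work per vertex stays $O(u.s)$ rather than growing to $O(b^{10^4})$; we do so by having each vertex $u$ maintain a single `stop' flag in a designated cell that any awakened processor of $u$ can concurrently write monotonically upon detecting a non-matching cell, so that subsequent rounds check the flag before doubling. Each awakened cell that contains an edge $(u, w)$ finally notifies the corresponding edge processor that it belongs to $E'$.

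The running time is $O(\log b)$. For the work, early stopping ensures that each relevant $u$ awakens at most $O(u.s) = O(\deg_{G'}(u) + (\log n)^{90})$ cells. Summing over all relevant $u$ and using Lemma~\ref{lem:E_prime_bound_interweave} (which gives $|E'| \le n/(\log n)^9$ w.p.\ $1 - n^{-8}$) together with $|V(G')| \le n/(\log n)^{1000}$, the total work is bounded by
\begin{equation*}
O(|V(G')|) + O(|V(G')| \cdot (\log n)^{90}) + O(|E'|) \le n/(\log n)^{910} + n/(\log n)^9 \le n/(\log n)^7 .
\end{equation*}
A union bound over the failure events of Lemmas~\ref{lem:build_auxiliary_property}, \ref{lem:high_degree_has_edge_in_filter}, and \ref{lem:E_prime_bound_interweave} yields success probability at least $1 - n^{-3}$, as desired.
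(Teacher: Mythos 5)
Your proposal is correct and follows essentially the same route as the paper's proof: flag membership in $V(E_{\text{filter}})$ by adding $O(|E_{\text{filter}}|)$ work to Step~\ref{alg:interweave:s6}, invoke Lemma~\ref{lem:high_degree_has_edge_in_filter} to bound the degrees of the relevant vertices by $b^{10^4}$, run the doubling-awakening scheme of Lemma~\ref{lem:edge_set_sparse_build_time_work} on the auxiliary array for $O(\log b)$ rounds via Lemma~\ref{lem:build_auxiliary_property}, and charge the work to $|E'|$ using Lemma~\ref{lem:E_prime_bound_interweave}. Your explicit early-stop flag and the extra per-vertex $(\log n)^{90}$ term are only a slightly more careful accounting of the same argument (the paper charges the awakened cells directly to $2|E'|$), so no substantive difference remains.
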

\begin{proof}
    We shall use the auxiliary array $\mathcal{A}$ built in \S{\ref{subsubsec:auxiliary_array}}, similar to the proof of Lemma~\ref{lem:edge_set_sparse_build_time_work}. 
    In Step~\ref{alg:interweave:s8} of \textsc{Interweave}$(G', H_1, H_2, E_{\text{filter}}, i)$, each vertex $v \in V(G')$ first checks whether $v.p \in V(G') \backslash V(E_{\text{filter}})$. Since each vertex knows whether itself its in $V(E_{\text{filter}})$ in $O(1)$ time by adding $O(|E_{\text{filter}}|)$ work to Step~\ref{alg:interweave:s6} (which does not affect the asymptotic work of Step~\ref{alg:interweave:s6}), the above computation can be done in $O(|V(G')|) \le n/(\log n)^{10}$ work and $O(1)$ time. 
    Next, each vertex $v$ such that $v.p \in V(G') \backslash V(E_{\text{filter}})$ awakens all its adjacent edges in $\mathcal{A}$ following the same manner in the proof of Lemma~\ref{lem:edge_set_sparse_build_time_work} for $10^4 \log b$ rounds or until a cell containing an edge not starting at $v$ is reached. 
    By Lemma~\ref{lem:high_degree_has_edge_in_filter}, w.p. $1 - n^{-7}$, any such vertex $v$ has $\deg_{G'}(v) \le  (\log n)^{10^6 \cdot 1.1^i} \le b^{10^4}$ by $b = (\log n)^{100 \cdot 1.1^i}$. 
    Therefore, by Lemma~\ref{lem:build_auxiliary_property}, w.p. $1 - n^{-4}$, all adjacent edges of any such vertex $v$ are awakened in $10^4 \log b$ rounds, and each awakened edge notifies the edge processor that it is in $E'$.
    
    The total running time is $O(\log b)$. 
    The total work is at most $n / (\log n)^7$ because at most $2|E'| \le 2n / (\log n)^9$ cells are awakened w.p. $1 - n^{-8}$ by Lemma~\ref{lem:E_prime_bound_interweave}.
\end{proof}

\subsection{The Work of \textsc{Connectivity}} \label{subsec:work_connectivity}

In this section, we give the total work of \textsc{Connectivity}$(G)$. 

\begin{lemma} \label{lem:connectivity_total_work}
    For any graph $G$, w.p. $1 - 1/\log n$, \textsc{Connectivity}$(G)$ takes $O(m) + \overline{O}(n)$ work.
\end{lemma}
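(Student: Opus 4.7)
The plan is to sum up the work contributions from every operation in \textsc{Connectivity}$(G)$ and apply a union bound over the failure events. The key observation is that the work in each phase of Step~\ref{alg:connectivity:s5} is sub-linear (at most $(m+n)/(\log n)^2$), so even summing over the $O(\log\log n)$ phases stays $O((m+n)/\log n)$, while the one-time $O(m)$ contributions come only from preprocessing.

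First, I would handle the one-time costs. Step~\ref{alg:connectivity:s1} is $O(n)$. Step~\ref{alg:connectivity:s2} (\textsc{Reduce}) is $O(m) + \overline{O}(n)$ w.p.\ $1 - n^{-2}$ by Lemma~\ref{lem:polylog_shrink_reduce_all}. Step~\ref{alg:connectivity:s3} assigns an indicator per edge for two independent sampling processes, which is $O(m)$ work. Step~\ref{alg:connectivity:s4} copies $E(G')$, also $O(m)$ work. In addition, \textsc{BuildAuxiliary}$(G')$ is invoked once (implicitly by \S{\ref{subsec:reduce_work_edge_set}}) and costs $O(m)$ work w.p.\ $1 - n^{-5}$ by Lemma~\ref{lem:build_auxiliary_time_work}. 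The final \textsc{Shortcut}$(V)$ in Step~\ref{alg:connectivity:s6} is $O(n)$. These contributions are collectively $O(m) + \overline{O}(n)$.

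Next, I would bound the cost of a single call to \textsc{Interweave}$(G', H_1, H_2, E_{\text{filter}}, i)$. By a Chernoff bound, both $|E(H_1)|$ and $|E(H_2)|$ are at most $m/(\log n)^6$ w.p.\ $1 - n^{-9}$. Step~\ref{alg:interweave:s2} takes $(m+n)/(\log n)^2$ work w.p.\ $1-1/(\log n)^2$ by Lemma~\ref{lem:new_increase_work}, whose precondition is supplied by Lemma~\ref{lem:edge_set_sparse_build_time_work}. Step~\ref{alg:interweave:s3} runs $O(\log b) = O(\log n)$ rounds of \textsc{Expand-Maxlink} and $O(\log\log n)$ rounds of Theorem~\ref{thm:ltz_main} on $H_1$, yielding $O(\log n \cdot |E(H_1)|) = O(m/(\log n)^5)$ work by Lemma~\ref{lem:expandm_processors}. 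Steps~\ref{alg:interweave:s4} and~\ref{alg:interweave:s5} are $O(|E(H_1)| + |V(G')|) = O(m/(\log n)^6 + n/(\log n)^{1000})$. Step~\ref{alg:interweave:s7} costs $(i+2\log\log n) \cdot |V(G')| \le O(n/(\log n)^{999})$ by $|V(G')| \le n/(\log n)^{1000}$. Step~\ref{alg:interweave:s8} constructs $E'$ in $n/(\log n)^7$ work w.p.\ $1-n^{-3}$ by Lemma~\ref{lem:edge_set_interweave_time_work}; the following \textsc{Alter}$(E')$ is $O(|E'|) = O(n/(\log n)^9)$ w.p.\ $1-n^{-8}$ by Lemma~\ref{lem:E_prime_bound_interweave}. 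Step~\ref{alg:interweave:s9} runs for $10^6 \cdot 1.1^i \log\log n = O(\log n)$ rounds on $E'$, giving $O(\log n \cdot n / (\log n)^9) = O(n/(\log n)^8)$ work. Finally, Step~\ref{alg:interweave:s10} is $O(|V(E_{\text{filter}})| + |E(H_2)|)$.

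The main subtlety is bounding Step~\ref{alg:interweave:s6} across all phases. Note that Step~\ref{alg:interweave:s5} only reverts the labeled digraph and $H_1$, while $E_{\text{filter}}$ is preserved across phases (deleted edges stay deleted). Therefore, across all phases, the sequence of \textsc{Matching}/\textsc{Alter}/edge-deletion calls on $E_{\text{filter}}$ is exactly a long run of the \textsc{Filter}-type process analyzed in Lemma~\ref{lem:filter_time_work}, in which each edge is deleted w.p.\ $10^{-4}$ per round independently. By the Chernoff analysis used there, the cumulative size $|E_{\text{filter}}|$ after $j$ rounds is at most $0.99999^j \cdot m$ w.p.\ $1 - n^{-8}$, so the total work summed over all rounds in all phases is geometric and bounded by $O(m)$ w.p.\ $1 - n^{-7}$. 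The hardest piece is confirming that the reversion of the labeled digraph in Step~\ref{alg:interweave:s5} does not disturb this telescoping: since \textsc{Alter}$(E_{\text{filter}})$ operates via the \emph{current} parent pointers and edges are physically removed independent of the digraph state, the potential-based argument for decreasing $|E_{\text{filter}}|$ continues to apply, and we can charge each \textsc{Matching}$(E_{\text{filter}})$ and \textsc{Alter}$(E_{\text{filter}})$ call to the current size of $E_{\text{filter}}$.

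Combining, the per-phase work is $O((m+n)/(\log n)^2) + O(n/(\log n)^7)$ plus the $E_{\text{filter}}$ contributions (which telescope globally to $O(m)$). Summing over the at most $10 \log\log n$ phases in Step~\ref{alg:connectivity:s5} gives $O((m+n)/\log n) + O(m) = O(m) + \overline{O}(n)$. Adding the preprocessing contributions keeps the bound at $O(m) + \overline{O}(n)$. A union bound over the failure events (the dominant one being the per-phase $1/(\log n)^2$ failure of \textsc{Increase}, which sums to $O(\log\log n / (\log n)^2) = o(1/\log n)$ across all phases) gives overall success probability at least $1 - 1/\log n$, completing the proof.
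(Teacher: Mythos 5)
There is a genuine gap: you never account for the call to \textsc{Remain}$(G', H_1)$ in Step~\ref{alg:interweave:s4} of \textsc{Interweave}. Your bound for Step~\ref{alg:interweave:s4} is stated as $O(|E(H_1)| + |V(G')|)$, which only covers the work of checking whether all edges of $E(H_1)$ are loops; it ignores the possibility that the check succeeds and the algorithm enters \textsc{Remain}. Inside \textsc{Remain}, Step~\ref{alg:remain:s1} performs \textsc{Alter}$(E(G'))$ which costs $\Theta(m)$ work, and Step~\ref{alg:remain:s4} runs Theorem~\ref{thm:ltz_main} on $E_{\text{remain}}$, which would be $\Theta(m \log n)$ work unless you argue that $E_{\text{remain}}$ is small. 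The paper's proof devotes an entire paragraph to this: it observes that \textsc{Remain} is called exactly once during the execution (so the $O(m)$ cost of its first three steps is a one-time charge), and it bounds $|E_{\text{remain}}| \le n/(\log n)^6$ w.h.p.\ via the sampling lemma of Karger, Klein, and Tarjan~\cite{karger1995randomized}. Crucially, this step also requires the observation (footnote~\ref{fn:H_1_observation}) that $H_1$ remains an \emph{unbiased} random subgraph of $\overline{G'}$ because the labeled digraph is reverted to its Step~\ref{alg:interweave:s1} state in every non-terminal phase, so the algorithm's execution trajectory before the terminal phase is independent of the randomness used to generate $H_1$. Without this independence argument, the KKT sampling lemma does not directly apply. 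Your proof cannot conclude as written without handling \textsc{Remain}, since it is precisely the place where the algorithm touches all $m$ edges of $G'$ inside Step~\ref{alg:connectivity:s5}.

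A secondary, smaller point: your bound for Step~\ref{alg:interweave:s3} of $O(\log n \cdot |E(H_1)|)$ tacitly treats the number of edges in $H_1$ as fixed across the $20 \log b$ rounds of \textsc{Expand-Maxlink}. In fact \textsc{Expand-Maxlink} adds edges to the graph (the hash-table entries become edges), so the per-round work is not simply $|E(H_1)|$; the correct bound comes from Lemma~\ref{lem:expandm_processors}, which caps the total number of processors at $O((m+n)/(\log n)^4)$, yielding $O((m+n)/(\log n)^3)$ work over $O(\log n)$ rounds as in Lemma~\ref{lem:truncate_time_work}. Both bounds are comfortably inside the $(m+n)/\log n$ budget, so this does not break anything, but the stated reasoning is not quite right. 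The rest of your decomposition, in particular the telescoping argument for $|E_{\text{filter}}|$ across Step~\ref{alg:interweave:s6} and the bounding of Steps~\ref{alg:interweave:s8}--\ref{alg:interweave:s10} via Lemmas~\ref{lem:E_prime_bound_interweave} and~\ref{lem:edge_set_interweave_time_work}, matches the paper's argument.
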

\begin{proof}
    Step~\ref{alg:connectivity:s1} takes $O(n)$ work.
    By Lemma~\ref{lem:polylog_shrink_reduce_all}, w.p. $1 - n^{-2}$, Step~\ref{alg:connectivity:s2} takes $O(m) + \overline{O}(n)$ work because the edge set $E'$ in Step~\ref{alg:reduce:s4} of \textsc{Reduce}$(V, E, k)$ satisfies $\E[|E'|] \le \overline{O}(n)$ by the same proof of Lemma~\ref{lem:extract_time_work_helper_E}. 
    Step~\ref{alg:connectivity:s3} and Step~\ref{alg:connectivity:s4} take $O(m)$ work. 
    Step~\ref{alg:connectivity:s6} takes $O(n)$ work.
    
    Consider Step~\ref{alg:connectivity:s5}. 
    Consider the \textsc{Interweave}$(G', H_1, H_2, E_{\text{filter}}, i)$ in each phase $i \in [0, 10\log\log n)$.
    By Lemma~\ref{lem:new_increase_work} and Lemma~\ref{lem:edge_set_sparse_build_time_work}, 
    the \textsc{Increase}$(G', H_1, H_2, b)$ in Step~\ref{alg:interweave:s2} takes $(m + n) / (\log n)^2$ work w.p. $1 - 2/(\log n)^2$. 
    By the same proof for Lemma~\ref{lem:truncate_time_work}, w.p. $1 - 1/(\log n)^3$, Step~\ref{alg:interweave:s3} takes $(m + n) / (\log n)^3$ work, because $|E(H_1)|$ is bounded same as $|E(H)|$ required in \textsc{Densify}$(H, b)$. 
    The following \textsc{Alter}$(H_1)$ takes $O(|E(H_1)|) \le m /(\log n)^6$ work w.p. $1 - n^{-9}$. 
    Step~\ref{alg:interweave:s4} without calling \textsc{Remain}$(G', H_1)$ takes the same work as in the previous step.
    To implement Step~\ref{alg:interweave:s5}, we make a copy of $H_1$ at the beginning of Step~\ref{alg:interweave:s1}, which takes $O(|E(H_1)|) \le m /(\log n)^6$ work w.p. $1 - n^{-9}$. To revert the labeled digraph, we copy all parent pointers for each vertex $v \in V(G')$ in Step~\ref{alg:interweave:s1} which takes $O(|V(G')|) \le n/(\log n)^9$ work, because the vertices in $V(G) \backslash V(G')$ are non-roots after Step~\ref{alg:connectivity:s2} of \textsc{Connectivity}$(G)$ which do not change their parent until Step~\ref{alg:connectivity:s6} of \textsc{Connectivity}$(G)$. 
    Running approximate compaction on all active roots (required by $\textsc{Increase}(G', H_1, H_2, b)$ when assigning blocks to active roots) at the end of each phase takes at most $n / (\log n)^7$ work.
    As a result, the total work of all steps but Step~\ref{alg:interweave:s6} to Step~\ref{alg:interweave:s10} of \textsc{Interweave}$(G', H_1, H_2, E_{\text{filter}}, i)$ in all phases except the call to \textsc{Remain}$(G', H_1)$ is at most $2(m + n) / (\log n)^2 \cdot 10 \log\log n \le (m + n) / \log n$ w.p. $1 - 1/(\log n)^{1.5}$.
    
    Next, we show that the total work of Step~\ref{alg:interweave:s6} to Step~\ref{alg:interweave:s10} of \textsc{Interweave}$(G', H_1, H_2, E_{\text{filter}}, i)$ over all phases is at most $O(m + n)$ w.p. $1 - n^{-2}$. 
    To total work of Step~\ref{alg:interweave:s6} over all phases is $O(m)$ w.p. $1 - n^{-8}$. 
    Because the algorithm deletes each edge from $E_{\text{filter}}$ with constant probability in each round of Step~\ref{alg:interweave:s6} over phases so $|E_{\text{filter}}|$ decreases by a constant fraction in each round w.p. $1 - n^{-9}$, and the claim follows from a union bound over all the at most $(\log n)^4$ rounds. 
    To total work of Step~\ref{alg:interweave:s7} over all phases is $O(n)$ by $|V(G')| \le n / (\log n)^{1000}$. 
    By Lemma~\ref{lem:E_prime_bound_interweave} and Lemma~\ref{lem:edge_set_interweave_time_work}, the edge set $E'$ in Step~\ref{alg:interweave:s8} in any phase $i \in [0, 10 \log\log n)$ satisfies $|E'| \le n/(\log n)^9$ and Step~\ref{alg:interweave:s8} takes $n/ (\log n)^7$ work w.p. $1 - n^{-3}$, so the total work of Step~\ref{alg:interweave:s8} and Step~\ref{alg:interweave:s9} over all phases is at most $O(1.1^{10\log\log n} \log\log n) \cdot n/(\log n)^9 + n/(\log n)^7 \le O(n)$ w.p. $1 - n^{-3}$. 
    Step~\ref{alg:interweave:s10} in each phase takes $O(|V(E_{\text{filter}})| + |E(H_2)|)$ work, which is at most $O(|V(G')| + n/(\log n)^6) \le n / (\log n)$ w.p. $1 - n^{-8}$, so the total work of Step~\ref{alg:interweave:s10} over all phases is $O(n)$ w.p. $1 - n^{-7}$.
    
    Finally, we show that the call to \textsc{Remain}$(G', H_1)$ in Step~\ref{alg:interweave:s4} takes $O(m + n)$ work w.p. $1 - 1/(\log n)^8$, which completes the proof because it is called for $1$ time during the execution of \textsc{Connectivity}$(G)$. 
    Step~\ref{alg:remain:s1} and Step~\ref{alg:remain:s2} of \textsc{Remain}$(G', H_1)$ take $O(|E(G')|) \le O(m)$ work. 
    Step~\ref{alg:remain:s3} take $O(m)$ work w.p. $1 - n^{-9}$ by PRAM perfect hashing \cite{DBLP:conf/focs/GilMV91}. 
    We claim that at the beginning of Step~\ref{alg:remain:s4}, $|E_{\text{remain}}| \le n /(\log n)^6$ w.p. $1 - n^{-9}$. 
    Since the algorithm runs \textsc{Alter}$(E(G'))$ in Step~\ref{alg:remain:s1}, the graph $H_1$ is an (unbiased) random subgraph of $G'$ by sampling each edge w.p. $1 / (\log n)^7$ independently. 
    The key observation is that, although $H_1$ has been used (in Step~\ref{alg:interweave:s2} to Step~\ref{alg:interweave:s5} of \textsc{Interweave}$(G', H_1, H_2, E_{\text{filter}}, i)$) before the current phase, it is reverted to its initial state (in Step~\ref{alg:interweave:s5} of the previous phase), so the execution of the algorithm before phase $i$ (excluding Step~\ref{alg:interweave:s2} to Step~\ref{alg:interweave:s5} in each previous phase, which is irrelevant to the analysis of the current phase) is independent of the randomness in generating $H_1$.\footnote{Without this observation, it could be the case that the execution of the algorithm before the current phase is dependent on the generation of $H_1$, such that $H_1$ is not an unbiased random subgraph of $\overline{G'}$ given the fact that the algorithm does not terminate before the current phase. \label{fn:H_1_observation}}
    By the sampling lemma of Karger, Klein and Tarjan (\cite{karger1995randomized}), the number of edges in $G'$ connecting different connected components of its random subgraph $H_1$ is at most $O(|V(G')| / p)$ w.p. $1 - n^{-9}$ where the edge sampling probability $p = (1/(\log n)^7)$.\footnote{\cite{karger1995randomized} states a more general version for the application in minimum spanning tree and the bound on  the number of inter-component edges is in expectation. A high-probability result is a straightforward corollary (e.g., see  \cite{DBLP:journals/jcss/HalperinZ96}, Lemma 9.1).}
    Therefore, $|E_{\text{remain}}| \le n /(\log n)^6$ w.p. $1 - n^{-9}$ at the beginning of Step~\ref{alg:remain:s4}. 
    By Theorem~\ref{thm:ltz_main}, Step~\ref{alg:remain:s4} takes at most $n /(\log n)^6 \cdot O(\log n)$ work w.p. $1 - 1/(\log n)^9$.
\end{proof}

\subsection{The Running Time and Correctness of \textsc{Connectivity}} \label{subsec:time_connectivity}

In this section, we give the running time of \textsc{Connectivity}$(G)$ and prove that it correctly computes the connected components of the input graph $G$.

\begin{lemma} \label{lem:each_phase_running_time}
    During the execution of \textsc{Connectivity}$(G)$, w.p. $1 - 1/(\log n)^2$, for any integer $i \in [0, 10 \log\log n)$, phase $i$ in Step~\ref{alg:connectivity:s5} takes $O(\log b)$ time unless $E_{\text{filter}} = \emptyset$, where $b = (\log n)^{100 \cdot 1.1^i}$. 
\end{lemma}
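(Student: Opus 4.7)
The plan is to walk through Step~\ref{alg:interweave:s1}--\ref{alg:interweave:s10} of \textsc{Interweave}$(G', H_1, H_2, E_{\text{filter}}, i)$ and show that each step finishes in $O(\log b)$ time, then take a union bound over phases. The key arithmetic identity driving the argument is $\log b = 100 \cdot 1.1^i \log\log n$, so any per-step bound of the form $O(1.1^i \log\log n)$, $O(\log\log n)$, or $O(i + \log\log n)$ is automatically $O(\log b)$.

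First I would handle the ``heavy'' subroutines. Step~\ref{alg:interweave:s2} invokes \textsc{Increase}$(G',H_1,H_2,b)$, whose running time is $O(\log b)$ by Lemma~\ref{lem:increase_work_time} together with Lemma~\ref{lem:new_increase_work} (the work modification does not change the asymptotic time). Step~\ref{alg:interweave:s3} runs $20\log b$ rounds of \textsc{Expand-Maxlink}, each round taking $O(1)$ time by Lemma~\ref{lem:expandm_constant_time}, followed by $10^4\log\log n$ rounds of the algorithm of Theorem~\ref{thm:ltz_main} (each round $O(1)$ by the same implementation techniques used in \S{\ref{subsubsec:truncate_AF}}), plus a single \textsc{Alter}, for a total of $O(\log b)$. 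Step~\ref{alg:interweave:s4} is just a parallel scan of $E(H_1)$, taking $O(1)$ time; when it does \emph{not} trigger \textsc{Remain}, we proceed to Step~\ref{alg:interweave:s5}, which reverts the labeled digraph and $H_1$ in $O(1)$ time because the pre-phase state was saved into a fresh copy at Step~\ref{alg:interweave:s1} (this is the natural implementation used in the work accounting of Lemma~\ref{lem:connectivity_total_work}).

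Next I would analyze the bulk of Step~\ref{alg:interweave:s6}--\ref{alg:interweave:s10}. Step~\ref{alg:interweave:s6} runs $10^6\cdot 1.1^i\log\log n$ rounds, each consisting of a \textsc{Matching} and an \textsc{Alter} and an edge deletion; by Lemma~\ref{lem:constant_shrink_work_time} each round is $O(1)$ time, so this step is $O(1.1^i\log\log n) = O(\log b)$. Step~\ref{alg:interweave:s7} does $i + 2\log\log n$ \textsc{Shortcut}s, each $O(1)$ time; since $\log b \ge 100\log\log n \ge i + 2\log\log n$ for all $i$ in the relevant range, this is $O(\log b)$. Step~\ref{alg:interweave:s8} computes $E'$ and runs \textsc{Alter}$(E')$: by Lemma~\ref{lem:edge_set_interweave_time_work}, $E'$ is built in $O(\log b)$ time w.p. $1 - n^{-3}$, and the following \textsc{Alter} is $O(1)$. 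Step~\ref{alg:interweave:s9} is identical in shape to Step~\ref{alg:interweave:s6} (with an added \textsc{Shortcut} per round) and is again $O(\log b)$ by Lemma~\ref{lem:constant_shrink_work_time}. Finally Step~\ref{alg:interweave:s10} is a single \textsc{Reverse}, which is $O(1)$ time.

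Putting these bounds together gives that phase $i$ finishes in $O(\log b)$ time whenever it does not return $\emptyset$ at Step~\ref{alg:interweave:s4}. To conclude, I would collect the high-probability events required above (Lemmas~\ref{lem:increase_work_time}, \ref{lem:expandm_constant_time}, \ref{lem:edge_set_interweave_time_work}, plus the events $|V(G')|\le n/b^{10}$, $|E(H_1)|,|E(H_2)|\le m/(\log n)^6$, and the auxiliary-array event from Lemma~\ref{lem:build_auxiliary_property}) and union-bound over the at most $10\log\log n$ phases. Each event fails with probability at most $1/\poly(\log n)$ per phase, so the overall failure probability is at most $1/(\log n)^2$ as claimed. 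The main subtlety --- and the only potentially awkward step --- is verifying that the $i+2\log\log n$ \textsc{Shortcut}s of Step~\ref{alg:interweave:s7} are actually enough to flatten the trees built up during the previous $10^6\cdot 1.1^i\log\log n$ rounds of \textsc{Matching}, but this is exactly the calculation already carried out in the proof of Lemma~\ref{lem:interweave_flat_each_phase}, so it can be invoked directly rather than redone.
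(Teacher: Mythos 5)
Your proposal is correct and follows essentially the same route as the paper's proof: a step-by-step $O(\log b)$ time bound for each of Steps 1--10 of \textsc{Interweave}, using Lemma~\ref{lem:increase_work_time}/\ref{lem:new_increase_work} for Step~\ref{alg:interweave:s2}, Lemma~\ref{lem:expandm_constant_time} and the constant-time rounds of Theorem~\ref{thm:ltz_main} for Step~\ref{alg:interweave:s3}, Lemma~\ref{lem:constant_shrink_work_time} for Steps~\ref{alg:interweave:s6} and~\ref{alg:interweave:s9}, Lemma~\ref{lem:edge_set_interweave_time_work} for Step~\ref{alg:interweave:s8}, and a union bound over the $O(\log\log n)$ phases. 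The only cosmetic difference is that where you assert ``the work modification does not change the asymptotic time,'' the paper justifies the timing of the modified \textsc{SparseBuild} explicitly (compacted active roots, binary-tree counting, and Lemma~\ref{lem:edge_set_sparse_build_time_work} for computing $E'$ in $O(\log b)$ time), which is the lemma you should cite there.
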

\begin{proof}
    Consider the \textsc{Interweave}$(G', H_1, H_2, E_{\text{filter}}, i)$ in phase $i$. 
    Consider the $\textsc{Increase}(G', H_1, H_2, b)$ in Step~\ref{alg:interweave:s2} of \textsc{Interweave}$(G', H_1, H_2, E_{\text{filter}}, i)$. 
    In Step~\ref{alg:new_increase:s1} of $\textsc{Increase}(G', H_1, H_2, b)$, we calculate the running time of $\textsc{SparseBuild}(G', H_2, b)$. 
    We already described how to let each vertex in $V(G')$ knows whether itself is an active root in the proof of Lemma~\ref{lem:new_increase_work}, so at the end of each phase the algorithm runs approximate compaction on all active roots to compact them in an index array of length $O(n / b^{10})$ in $O(\log^* n)$ time w.p. $1 - n^{-6}$ by Lemma~\ref{lem:interweave_roots_shrink_each_phase}.
    Step~\ref{alg:sparsebuild:s1} of $\textsc{SparseBuild}(G', H_2, b)$ takes $O(1)$ time since the active roots are indexed. 
    Step~\ref{alg:sparsebuild:s2} takes $O(1)$ time. 
    Step~\ref{alg:sparsebuild:s3} takes $O(\log b)$ time by the binary tree counting argument. 
    Step~\ref{alg:sparsebuild:s4} takes $O(\log b)$ time w.p. $1 - n^{-3}$ by Lemma~\ref{lem:edge_set_sparse_build_time_work}. 
    Therefore, Step~\ref{alg:new_increase:s1} of $\textsc{Increase}(G', H_1, H_2, b)$ takes $O(\log b)$ time w.p. $1 - 2n^{-3}$.
    By the fourth paragraph in the proof of Lemma~\ref{lem:new_increase_work} and Lemma~\ref{lem:increase_work_time}, Step~\ref{alg:new_increase:s2} of $\textsc{Increase}(G', H_1, H_2, b)$ takes $O(\log b)$ time w.p. $1 - 2/(\log n)^3$. 
    (This probability will be boosted to $1 - n^{-c}$ in \S{\ref{sec:boosting}}.)
    Summing up, $\textsc{Increase}(G', H_1, H_2, b)$ takes $O(\log b)$ time w.p. $1 - 3/(\log n)^3$.
    
    Consider Step~\ref{alg:interweave:s3} of \textsc{Interweave}$(G', H_1, H_2, E_{\text{filter}}, i)$. 
    By Lemma~\ref{lem:expandm_constant_time}, each $\textsc{Expand-Maxlink}(H_1)$ takes $O(1)$ time w.p. $1 - 1/(\log n)^6$. 
    So all $20 \log b$ rounds of Step~\ref{alg:interweave:s3} take $O(\log b)$ time w.p. $1 - 1/(\log n)^5$. 
    By Theorem~\ref{thm:ltz_main}, the following call takes $O(\log\log n) \le O(\log b)$ time w.p. $1 - 1/(\log n)^8$ by $b \ge (\log n)^{100}$. 
    The \textsc{Alter}$(E(H_1))$ takes $O(1)$ time. 
    Summing up, Step~\ref{alg:interweave:s3} of \textsc{Interweave}$(G', H_1, H_2, E_{\text{filter}}, i)$ takes $O(\log b)$ time w.p. $1 - 1/(\log n)^7$.
    
    In Step~\ref{alg:interweave:s4} of \textsc{Interweave}$(G', H_1, H_2, E_{\text{filter}}, i)$ the algorithm only needs to check whether all edges in $E(H_1)$ are loops without calling \textsc{Remain}$(G', H_1)$ because the algorithm does not return $\emptyset$, which can be done in $O(1)$ time. 
    Step~\ref{alg:interweave:s5} can be done in $O(1)$ time by the second paragraph in the proof of Lemma~\ref{lem:connectivity_total_work}.
    Each round of Step~\ref{alg:interweave:s6} of \textsc{Interweave}$(G', H_1, H_2, E_{\text{filter}}, i)$ can be done in $O(1)$ time by Lemma~\ref{lem:constant_shrink_work_time}. 
    So Step~\ref{alg:interweave:s6} takes $O(10^6 \cdot 1.1^i \log\log n) = O(\log b)$ time by $b = (\log n)^{100 \cdot 1.1^i}$. 
    Step~\ref{alg:interweave:s7} takes $O(i + 2\log\log n) \le O(\log \log n) \le O(\log b)$ time. 
    Step~\ref{alg:interweave:s8} takes $O(\log b)$ time w.p. $1 - n^{-3}$ by Lemma~\ref{lem:edge_set_interweave_time_work}. 
    Step~\ref{alg:interweave:s9} takes $O(\log b)$ time, same as Step~\ref{alg:interweave:s6}.
    Step~\ref{alg:interweave:s10} takes $O(1)$ time. 
    
    Summing up and by a union bound, the lemma follows.
\end{proof}

\begin{lemma} \label{lem:remain_correct}
    During the execution of \textsc{Connectivity}$(G)$, w.p. $1 - 1/(\log n)^7$, for any integer $i \in [0, 10 \log\log n)$, if the \textsc{Interweave}$(G', H_1, H_2, E_{\text{filter}}, i)$ in phase $i$ returns $\emptyset$, then the \textsc{Remain}$(G', H_1)$ in Step~\ref{alg:interweave:s3} of \textsc{Interweave}$(G', H_1, H_2, E_{\text{filter}}, i)$ takes $O(\log d + \log\log n)$ time where $d$ is the maximum diameter of $G$, and \textsc{Connectivity}$(G)$ computes all the connected components of $G$.
\end{lemma}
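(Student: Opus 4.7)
The proof has two parts: bounding the running time of $\textsc{Remain}(G', H_1)$ and establishing the correctness of $\textsc{Connectivity}(G)$.

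For the running time, Steps~\ref{alg:remain:s1}--\ref{alg:remain:s3} take $O(\log^* n)$ time (using PRAM perfect hashing~\cite{DBLP:conf/focs/GilMV91} for parallel-edge removal), so the cost is dominated by Step~\ref{alg:remain:s4}, which runs Theorem~\ref{thm:ltz_main} on the simple graph $G_{\mathrm{rem}} \coloneqq (V(E_{\mathrm{remain}}), E_{\mathrm{remain}})$. Theorem~\ref{thm:ltz_main} gives $O(\log d_{\mathrm{rem}} + \log \log n)$ time, so it suffices to show $d_{\mathrm{rem}} \le d$. The key observation is that at the moment $\textsc{Remain}$ is invoked every edge of $E(H_1)$ is a loop (the premise of the lemma), and since $\textsc{Alter}(H_1)$ was the last operation on $H_1$ and no parent updates happen before $\textsc{Remain}$ is entered, the current parent map $\pi(v) \coloneqq v.p$ satisfies $\pi(u) = \pi(v)$ for every $(u, v) \in E(H_1)$. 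After $\textsc{Alter}(E(G'))$ in Step~\ref{alg:remain:s1}, each edge of $E(G')$ becomes an edge of the partially contracted multigraph $G/\pi$; those whose $H_1$-copy was a loop also become loops and are dropped in Step~\ref{alg:remain:s3}. Hence $G_{\mathrm{rem}}$ is the simple skeleton of $G/\pi$, which has the same CCs as $G$ (since $\pi$ only identifies vertices in the same CC of $G$) and diameter at most $d$ (contraction cannot increase the diameter).

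For correctness, Theorem~\ref{thm:ltz_main} applied to $G_{\mathrm{rem}}$ sends every $v \in V(E_{\mathrm{remain}})$ to the root of its CC in $G_{\mathrm{rem}}$, which by the CC correspondence above is its CC representative in $G$. The final $\textsc{Shortcut}(V)$ in Step~\ref{alg:connectivity:s6} of $\textsc{Connectivity}(G)$ then assigns $v.p$ to this representative for every $v \in V$, provided the trees in the labeled digraph all have depth at most $2$ at the end of $\textsc{Remain}$. To establish depth $\le 2$, I would chain the invariants: Lemma~\ref{lem:alg:reduce_correctness} (flat trees at the end of Stage $1$, so $V \setminus V(G')$ starts at depth $\le 1$); Lemma~\ref{lem:interweave_flat_each_phase} (at the start of each phase, $V(G')$ is root or child of root, giving global depth $\le 2$); Lemma~\ref{lem:increase_is_contraction} (same after $\textsc{Increase}$ in Step~\ref{alg:interweave:s2}); and the internal shortcutting of $\textsc{Expand-Maxlink}$ plus Theorem~\ref{thm:ltz_main} applied to $H_1$ in Step~\ref{alg:interweave:s3} (keeping $V(H_1)$ flat within the $H_1$-contracted structure). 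This matches the claim made in footnote~\ref{fn:leaves_not_in_trees}.

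The main obstacle is the depth bookkeeping through Step~\ref{alg:interweave:s3}: a vertex $u \in V(G') \setminus V(H_1)$ is not directly touched by $\textsc{Expand-Maxlink}(H_1)$, yet its parent may lie in $V(H_1)$ and get promoted to a non-root, which threatens to push $u$ (and any descendant of $u$ in $V \setminus V(G')$) to depth $3$. Resolving this cleanly requires combining the shortcutting inside $\textsc{Expand-Maxlink}$ (its Step~\ref{alg:expandm:s6}) with the flatness that Theorem~\ref{thm:ltz_main} enforces on $V(H_1)$ at the end of Step~\ref{alg:interweave:s3}, together with the $\textsc{Alter}(E(G'))$ that $\textsc{Remain}$ performs before its Theorem~\ref{thm:ltz_main} call, to conclude that no vertex is left at depth $3$. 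The $1 - 1/(\log n)^7$ success probability then follows by a union bound over the relevant failure events of Theorem~\ref{thm:ltz_main}, Lemma~\ref{lem:increase_is_contraction}, and the sampling lemma of~\cite{karger1995randomized} invoked in Lemma~\ref{lem:connectivity_total_work}.
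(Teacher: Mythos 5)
Your time-bound argument coincides with the paper's: Steps~\ref{alg:remain:s1}--\ref{alg:remain:s3} cost $O(\log^* n)$ via perfect hashing, and the call in Step~\ref{alg:remain:s4} runs in $O(\log d + \log\log n)$ by Theorem~\ref{thm:ltz_main} because $(V(E_{\text{remain}}), E_{\text{remain}})$ is (up to loops and parallel edges, which you may discard) a contraction of $G'$, and contractions do not increase the diameter. Two small remarks: the edges of $E(G')$ whose copies lie in $H_1$ are removed by the set difference in Step~\ref{alg:remain:s2}, not by Step~\ref{alg:remain:s3}, though since they are loops under the premise this does not affect your conclusion; and the parenthetical ``$\pi$ only identifies vertices in the same component of $G$'' is not a freebie --- it is exactly what the paper establishes by arguing that Step~\ref{alg:interweave:s2} and Step~\ref{alg:interweave:s3} of \textsc{Interweave} in the last phase are contraction algorithms on $H_1$ (via the proofs of Lemma~\ref{lem:increase_is_contraction} and Lemma~\ref{lem:truncate:flat}, each w.p. $1 - 1/(\log n)^8$), and that Stage $1$ and all earlier phases are contraction algorithms on $G'$. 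That claim is also what makes your diameter bound legitimate, so it should be the load-bearing part of the proof rather than an aside.

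The genuine gap is in your correctness half: you reduce correctness to ``all trees have depth at most $2$ at the end of \textsc{Remain},'' correctly observe that vertices of $V(G)\setminus V(G')$ hanging below $V(G')$-vertices threaten depth $3$, and then explicitly leave the resolution as something that ``requires combining'' several lemmas --- i.e., the key step is not actually carried out. The paper does not attempt this per-vertex depth accounting inside this lemma at all. Its proof stays at the contraction-algorithm level: the pre-phase execution is a contraction algorithm on $G'$ (Lemma~\ref{lem:alg:reduce_correctness} together with the invariant of Lemma~\ref{lem:interweave_flat_each_phase}), the last phase's Steps~\ref{alg:interweave:s2}--\ref{alg:interweave:s3} are contraction algorithms on $H_1$, hence the premise ``all edges of $E(H_1)$ are loops'' means all components of $H_1$ are already contracted, so every remaining non-loop edge of the contracted $G'$ lies in $E_{\text{remain}}$ and the Theorem~\ref{thm:ltz_main} call in Step~\ref{alg:remain:s4} finishes the component computation of $G'$; correctness of the labeling after Step~\ref{alg:connectivity:s6} then follows from the contraction framing and a union bound over the cited failure events. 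So either adopt that abstraction (and cite the relevant lemmas for the contraction property), or actually complete the depth argument you sketch --- as written, your proof of the second assertion of the lemma is incomplete at precisely the point you flag.
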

\begin{proof}
    Let $i \in [0, 10 \log\log n)$ be the last phase of \textsc{Connectivity}$(G)$. 
    By Step~\ref{alg:connectivity:s1} of \textsc{Connectivity}$(G)$ and Lemma~\ref{lem:alg:reduce_correctness}, Step~\ref{alg:connectivity:s2} is a contraction algorithm. 
    Note that in each phase $j < i$, Step~\ref{alg:interweave:s6} to Step~\ref{alg:interweave:s9} of \textsc{Interweave}$(G', H_1, H_2, E_{\text{filter}}, i)$ is a contraction algorithm on graph $G'$ by the same proof for Lemma~\ref{lem:alg:reduce_correctness}. 
    Vertices in $V(G) \backslash V(G')$ are non-roots created at the end of Step~\ref{alg:connectivity:s2} of \textsc{Connectivity}$(G)$.
    Therefore, the execution of \textsc{Connectivity}$(G)$ before phase $i$ is a contraction algorithm if we execute \textsc{Shortcut}$(V(G))$ (which will be done in Step~\ref{alg:connectivity:s6} of \textsc{Connectivity}$(G)$). 
    
    Now consider phase $i$. 
    A proof similar to the proof of Lemma~\ref{lem:increase_is_contraction} shows that Step~\ref{alg:interweave:s2} of \textsc{Interweave}$(G', H_1, H_2, E_{\text{filter}}, i)$ is a contraction algorithm on $G'$ if we execute \textsc{Alter}$(E(G'))$, so Step~\ref{alg:interweave:s2} is a contraction algorithm on $H_1$. 
    By the same proof of Lemma~\ref{lem:truncate:flat}, w.p. $1 - 1/(\log n)^8$, Step~\ref{alg:interweave:s3} of \textsc{Interweave}$(G', H_1, H_2, E_{\text{filter}}, i)$ is a contraction algorithm on $H_1$. 
    In Step~\ref{alg:interweave:s4} of \textsc{Interweave}$(G', H_1, H_2, E_{\text{filter}}, i)$, since all edges in $E(H_1)$ are loops, all connected components of $H_1$ are computed because Step~\ref{alg:interweave:s2} and Step~\ref{alg:interweave:s3} are contraction algorithms on $H_1$. 
    Now consider \textsc{Remain}$(G', H_1)$ in Step~\ref{alg:interweave:s4}.\footnote{The algorithm needs to run \textsc{Remain}$(G', H_1)$ to maintain correctness because if the assumption $\lambda \ge b^{-0.1}$ is wrong in this phase, then the edge sampling on $G'$ (to obtain $H_1$) disconnects some components of $G'$, but Step~\ref{alg:interweave:s3} of \textsc{Interweave}$(G', H_1, H_2, E_{\text{filter}}, i)$ contracts all components of $H_1$ either because the contraction algorithm on $H_1$ takes less running time than its upper bound or the disconnection does not create any component with spectral gap less than $\lambda$. In both cases, the computation of connected components of $G'$ is not finished, which will be done in \textsc{Remain}$(G', H_1)$ by adding all the remaining edges unsampled into $H_1$. \S{\ref{sec:stage3}} is correct without \textsc{Remain}$(G', H_1)$ because $\lambda \ge b^{-0.1}$ holds.}
    Step~\ref{alg:remain:s1} of \textsc{Remain}$(G', H_1)$ takes $O(1)$ time. 
    Step~\ref{alg:remain:s2} takes $O(1)$ time to let each edge of $E(G')$ check whether itself is in $H_1$. 
    Step~\ref{alg:remain:s3} take $O(\log^* n)$ time w.p. $1 - n^{-9}$ by PRAM perfect hashing \cite{DBLP:conf/focs/GilMV91}. 
    Note that the diameter of graph $(V(E_{\text{remain}}), E_{\text{remain}})$ is at most $d$ because computing all connected components of $H_1$ and executing \textsc{Alter}$(E(G'))$ is a contraction algorithm on $G'$ and any contraction does not increase the diameter; moreover, removing loops and parallel edges from $E_{\text{remain}}$ does not influence the diameter of the graph.
    Therefore, Step~\ref{alg:remain:s4} takes $O(\log d + \log\log n)$ time w.p. $1 - 1/(\log n)^9$ by Theorem~\ref{thm:ltz_main}, and it computes all connected components of $G'$.
    
    By the first paragraph, the algorithm computes all connected components of $G$ after Step~\ref{alg:connectivity:s6} of \textsc{Connectivity}$(G)$. The lemma follows from a union bound.
\end{proof}

\begin{lemma} \label{lem:spectral_gap_b_i}
    For any graph $G$ with component-wise spectral gap $\lambda$, w.p. $1 - 1/(\log n)^4$, \textsc{Connectivity}$(G)$ computes all the connected components of $G$ before or in phase $i$, where $i \in [0, 10 \log\log n)$ is the minimum integer that satisfies $(\log n)^{-10 \cdot 1.1^i} \le \lambda$.
\end{lemma}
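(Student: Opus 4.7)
The plan is to verify that in phase $i$, where $b = (\log n)^{100 \cdot 1.1^i}$, the assumption $\lambda \ge b^{-0.1}$ built into \textsc{Interweave} holds and drives the algorithm to success. First I would observe that Stage~1 and Steps~\ref{alg:interweave:s6}--\ref{alg:interweave:s10} of every previous phase only perform contractions of vertices that lie in the same connected component of $G$ (the edges consulted are always edges of $G'$, hence within components), so by the contraction-preserves-spectral-gap fact used in Lemma~\ref{lem:spectral_gap_after_contraction}, the component-wise spectral gap of the current graph $\overline{G'}$ at the beginning of phase $i$ is at least $\lambda$. By the minimality of $i$, this is at least $(\log n)^{-10 \cdot 1.1^i} = b^{-0.1}$.

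Next I would apply Lemma~\ref{lem:sparse_H_big_component_large_degree} to conclude that after Step~\ref{alg:interweave:s2} every active root of $\overline{G'}$ has $\overline{\deg}_{G'}(\cdot) \ge b$. The crucial additional observation is that $H_1$, sampled once in Step~\ref{alg:connectivity:s3}, is reverted at the end of every phase (Step~\ref{alg:interweave:s5}) and is not consulted anywhere outside Steps~\ref{alg:interweave:s2}--\ref{alg:interweave:s5}; hence the persistent state of the algorithm (labeled digraph and $E_{\text{filter}}$) at the start of phase $i$ is independent of the random edges of $H_1$, and $H_1$ remains an unbiased $1/(\log n)^7$-subgraph of $\overline{G'}$. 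By the matrix concentration argument used inside the proof of Theorem~\ref{thm:connectivity_with_assumption} (Corollary~\ref{cor:sample_preserve} applied with $p = 1/(\log n)^7$, $\lambda_{\overline{G'}} \ge b^{-0.1}$, and $\deg \ge b$), the component-wise spectral gap of $H_1$ is at least $b^{-0.1} - O(b^{-0.4}) \ge b^{-0.2}$, so $H_1$ preserves the component structure of $\overline{G'}$ and each component of $H_1$ has diameter at most $O(b^{0.2} \log n) \le b^{0.3}$.

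With this diameter bound in hand, the $20 \log b$ rounds of \textsc{Expand-Maxlink} followed by the $10^4 \log\log n$ rounds of the algorithm in Theorem~\ref{thm:ltz_main} invoked in Step~\ref{alg:interweave:s3} suffice to contract every component of $H_1$ to a single root, since the required round complexity is $O(\log b^{0.3} + \log\log n) = O(\log b)$. After the following \textsc{Alter}$(E(H_1))$, every edge of $H_1$ is a loop, so the test in Step~\ref{alg:interweave:s4} passes and \textsc{Remain}$(G', H_1)$ is called. Lemma~\ref{lem:remain_correct} then concludes that \textsc{Connectivity}$(G)$ correctly finishes computing all connected components of $G$ in phase $i$ (the Karger--Klein--Tarjan sampling lemma already bounds $|E_{\text{remain}}|$ in the proof of Lemma~\ref{lem:connectivity_total_work}). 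A union bound over the failure events of Lemma~\ref{lem:sparse_H_big_component_large_degree}, the spectral concentration bound, and the single invocation of Theorem~\ref{thm:ltz_main} within Step~\ref{alg:interweave:s3} (each of order $1/\poly(\log n)$) yields overall success probability at least $1 - 1/(\log n)^4$.

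The main obstacle is confirming that the spectral concentration bound of Stage~3 transfers to arbitrary $b$ in the range $(\log n)^{100} \le b \le n$ rather than only the hardcoded $b = (\log n)^{100}$ there; since Corollary~\ref{cor:sample_preserve} scales continuously in its parameters as long as $p \cdot \deg = \Omega(\ln n)$ and $\lambda \gg b^{-0.4}$, this is routine but requires tracking constants carefully. A secondary subtlety is cleanly justifying the independence of $H_1$ from the pre-phase-$i$ execution; the revert-at-end-of-phase argument above suffices, but one must check that $H_2$ (which is used persistently in \textsc{SparseBuild}) is sampled using fresh randomness disjoint from $H_1$.
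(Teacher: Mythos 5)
Your proposal is correct and follows essentially the same route as the paper's proof: component-wise spectral gap preserved under contractions so it is at least $b^{-0.1}$ in phase $i$, minimum degree at least $b$ from Lemma~\ref{lem:sparse_H_big_component_large_degree}, unbiasedness of $H_1$ via the revert-at-end-of-phase observation, Corollary~\ref{cor:sample_preserve} giving gap at least $b^{-0.2}$ and hence diameter at most $b^{0.3}$ for $H_1$, full contraction of $H_1$ in Step~\ref{alg:interweave:s3} so that Step~\ref{alg:interweave:s4} triggers, and Lemma~\ref{lem:remain_correct} for correctness, finished by a union bound. The only points the paper makes explicit that you gloss over are the existence of such an $i$ in $[0,10\log\log n)$ (via Cheeger's inequality and $m\le n^c$), the case where the algorithm terminates \emph{before} phase $i$ (also handled by Lemma~\ref{lem:remain_correct}), and the reason $20\log b$ rounds of \textsc{Expand-Maxlink} suffice, which the paper obtains by reusing the path/potential argument of Lemma~\ref{lem:truncate_path_length} (applicable because $b^{0.3}+1\le b^7$) rather than a generic ``$O(\log d'+\log\log n)$ rounds'' claim.
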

\begin{proof}
    First of all, such an integer $i$ must exist because $(\log n)^{-10 \cdot 1.1^{9c \log\log n}} \le n^{-9c}$, and $\lambda \ge \phi^2 / 2 \ge m^{-2} / 2 \ge n^{-3c}$ by Cheeger's inequality and $m \le n^c$ where constant $c \ge 2$.
    
    By Lemma~\ref{lem:remain_correct}, w.p. $1 - 1/(\log n)^8$, if \textsc{Connectivity}$(G)$ terminates in a phase $j \in [0, 10 \log\log n)$, then it correctly computes all connected components of $G$. 
    Suppose \textsc{Connectivity}$(G)$ does not terminate before phase $i$. 
    Consider phase $i$ and let $b = (\log n)^{100 \cdot 1.1^i}$. 
    By Lemma~\ref{lem:sparse_H_big_component_large_degree}, w.p. $1 - 1/(\log n)^5$, it must be $\overline{\deg}_{G'}(v) \ge b$ for any active root $v \in V(G')$ at the end of Step~\ref{alg:interweave:s2} within \textsc{Interweave}$(G', H_1, H_2, E_{\text{filter}}, i)$. 
    By Definition~\ref{def:upper_degree} and Step~\ref{alg:connectivity:s3} of \textsc{Connectivity}$(G)$, at the end of Step~\ref{alg:interweave:s2} of \textsc{Interweave}$(G', H_1, H_2, E_{\text{filter}}, i)$, the graph $H_1$ is an (unbiased) random subgraph of $\overline{G'}$ by sampling each edge w.p. $1 / (\log n)^7$ independently (see the observation in last paragraph in the proof of Lemma~\ref{lem:connectivity_total_work} and footnote~\ref{fn:H_1_observation}).
    Note that the component-wise spectral gap of $\overline{G'}$ is at least $\lambda$ since the algorithm only performs contractions on $G$ (same as in \S{\ref{sec:stage3}}). 
    Now since $\deg(\overline{G'}) \ge b$ and the sampling probability $p = 1/(\log n)^7$, we have that $p \cdot \deg(\overline{G'}) \ge C \ln n$ and thus Corollary~\ref{cor:sample_preserve} is applicable. We have that w.p. $1 - n^{-9}$, the component-wise spectral gap $\lambda'$ of $H_1$ satisfies
    \begin{equation*}
        \lambda' \ge \lambda - C' \cdot \sqrt{\frac{\ln n}{1 / (\log n)^7 \cdot b}} \ge b^{-0.1} - b^{-0.4} \ge b^{-0.2} 
    \end{equation*}
    by $\lambda \ge (\log n)^{-10 \cdot 1.1^i} = b^{-0.1}$, where $C'$ is the constant in Corollary~\ref{cor:sample_preserve}. 
    So the maximum diameter $d'$ over all components of $H_1$ is at most $O(\log n / \lambda') \le b^{0.3}$. 
    Since any shortest path of $H$ has at most $b^{0.3} + 1 \le b^7$ vertices, we can apply the same proof of Lemma~\ref{lem:truncate_path_length} to get that $20 \log b$ rounds of $H_1 = \textsc{Expand-Maxlink}(H_1)$ (Step~\ref{alg:interweave:s3} of \textsc{Interweave}$(G', H_1, H_2, E_{\text{filter}}, i)$) contracts $H_1$ such that its diameter becomes at most $1$ w.p. $1 - 1/(\log n)^8$, and the following call to Theorem~\ref{thm:ltz_main} computes all connected components of $H_1$ in $O(\log\log n)$ time w.p. $1 - 1/(\log n)^9$ (the same, more detailed proof appears in \cite{liu2020connected}). 
    As a result, the following \textsc{Alter}$(H_1)$ makes all edges in $E(H_1)$ loops and Step~\ref{alg:interweave:s4} of \textsc{Interweave}$(G', H_1, H_2, E_{\text{filter}}, i)$ returns $\emptyset$. 
    So phase $i$ is the last phase and \textsc{Connectivity}$(G)$ correctly computes all connected components of $G$ by the second paragraph of this proof. 
    The lemma follows from a union bound. 
\end{proof}

\begin{lemma} \label{lem:connectivity_time_correct_final}
    For any graph $G$ with component-wise spectral gap $\lambda$, w.p. $1 - 1/\log n$, \textsc{Connectivity}$(G)$ computes all the connected components of $G$ in $O(\log(1/\lambda) + \log\log n)$ time.
\end{lemma}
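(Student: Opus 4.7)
The plan is to chain together the two key results already established: Lemma~\ref{lem:spectral_gap_b_i} (which bounds the terminating phase in terms of $\lambda$) and Lemma~\ref{lem:each_phase_running_time} (which bounds the time of each phase), and observe that the per-phase times form a geometric series dominated by the last phase.

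First I would let $i^*$ denote the minimum integer with $(\log n)^{-10 \cdot 1.1^{i^*}} \le \lambda$, i.e.\ the phase identified by Lemma~\ref{lem:spectral_gap_b_i}. Solving the inequality yields $1.1^{i^*} \le \max\{1,\, \log(1/\lambda) / (10 \log\log n)\} \cdot 1.1$ by minimality. By Lemma~\ref{lem:spectral_gap_b_i}, w.p.\ $1 - 1/(\log n)^4$, \textsc{Connectivity}$(G)$ terminates in some phase $i \le i^*$, so the main loop runs at most the first $i^* + 1$ phases.

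Next I would sum the per-phase running times. By Lemma~\ref{lem:each_phase_running_time}, phase $j$ takes $O(\log b_j) = O(1.1^j \log\log n)$ time w.p.\ $1 - 1/(\log n)^2$. A union bound over the at most $10\log\log n$ phases preserves this with probability $1 - O(1/\log n)$. Summing the geometric series,
\begin{equation*}
\sum_{j=0}^{i^*} O(1.1^j \log\log n) \;=\; O\!\left(1.1^{i^*} \log\log n\right) \;=\; O\!\left(\log(1/\lambda) + \log\log n\right).
\end{equation*}
If the terminating phase ends via the \textsc{Remain}$(G', H_1)$ branch, Lemma~\ref{lem:remain_correct} (plus correctness of the contraction argument) contributes an additional $O(\log d + \log\log n)$ time w.p.\ $1 - 1/(\log n)^7$, where $d$ is the maximum component diameter of $G$. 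Using Cheeger-type bounds $d \le O(\log n / \lambda)$, this is $O(\log(1/\lambda) + \log\log n)$ as well.

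Finally, I would account for the preprocessing: Step~\ref{alg:connectivity:s1}--\ref{alg:connectivity:s4} take $O(\log\log n)$ time (Lemma~\ref{lem:polylog_shrink_reduce_all} for \textsc{Reduce}, plus $O(1)$ for initialization, sampling, and copying), and the trailing \textsc{Shortcut}$(V)$ in Step~\ref{alg:connectivity:s6} takes $O(1)$ time. Correctness follows from Lemma~\ref{lem:spectral_gap_b_i} and Lemma~\ref{lem:remain_correct}. A final union bound over (i) the terminating-phase event from Lemma~\ref{lem:spectral_gap_b_i}, (ii) the per-phase time bound across all at most $10\log\log n$ phases, (iii) the preprocessing bound, and (iv) the correctness of \textsc{Remain} yields an overall success probability of $1 - 1/\log n$. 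No step poses a real obstacle; the only mild subtlety is splitting into the cases $\log(1/\lambda) \le \log\log n$ (where the $\log\log n$ term dominates and $i^* = 0$) and $\log(1/\lambda) > \log\log n$ (where the geometric sum gives $\Theta(\log(1/\lambda))$), which together give the desired $O(\log(1/\lambda) + \log\log n)$ bound.
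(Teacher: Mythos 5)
Your proposal is correct and follows essentially the same route as the paper's proof: bound the index of the terminating phase via Lemma~\ref{lem:spectral_gap_b_i}, sum the geometrically increasing per-phase times from Lemma~\ref{lem:each_phase_running_time}, add the $O(\log d + \log\log n)$ cost of \textsc{Remain} from Lemma~\ref{lem:remain_correct} together with $d \le O(\log n / \lambda)$, account for the $O(\log\log n)$ preprocessing, and finish with a union bound. The only detail the paper states explicitly that you fold silently into preprocessing is the \textsc{BuildAuxiliary}$(G')$ call at the end of Step~\ref{alg:connectivity:s2}, which costs $O(\log\log n)$ time by Lemma~\ref{lem:build_auxiliary_time_work} and does not change the bound.
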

\begin{proof}
    By Lemma~\ref{lem:polylog_shrink_reduce_all}, w.p. $1-n^{-2}$, Step~\ref{alg:connectivity:s2} of \textsc{Connectivity}$(G)$ takes $O(\log\log n)$ time. 
    By Lemma~\ref{lem:build_auxiliary_time_work}, the \textsc{BuildAuxiliary}$(G')$ at the end of Step~\ref{alg:connectivity:s2} takes $O(\log\log n)$ time w.p. $1 - n^{-5}$ (required for reducing the work, see \S{\ref{subsubsec:auxiliary_array}}). 
    Step~\ref{alg:connectivity:s3}, Step~\ref{alg:connectivity:s4}, and Step~\ref{alg:connectivity:s6} take $O(1)$ time. 
    W.p. $1 - 2/(\log n)^2$, assuming Lemma~\ref{lem:each_phase_running_time}, Lemma~\ref{lem:remain_correct}, and Lemma~\ref{lem:spectral_gap_b_i} hold.
    
    Let phase $i$ be the last phase of \textsc{Connectivity}$(G)$. 
    If $i = 0$, then in \textsc{Interweave}$(G', H_1, H_2, E_{\text{filter}}, i)$ of phase $0$, before calling \textsc{Remain}$(G', H_1)$ in its Step~\ref{alg:interweave:s4}, the algorithm takes $O(\log ((\log n)^{100})) = O(\log\log n)$ time by the proof of Lemma~\ref{lem:each_phase_running_time}.
    Now suppose $i \ge 1$. 
    By Lemma~\ref{lem:spectral_gap_b_i}, it must be
    \begin{equation} \label{eq:lamba_phase_i}
        \lambda \le (\log n)^{-10 \cdot 1.1^{i+1}} .
    \end{equation}
    By Lemma~\ref{lem:each_phase_running_time}, the total running time in Step~\ref{alg:connectivity:s5} of \textsc{Connectivity}$(G)$ before calling \textsc{Remain}$(G', H_1)$ in phase $i$ is at most
    \begin{equation*}
        \sum_{j < i} O\left(\log \left(\left(\log n\right)^{100 \cdot 1.1^j}\right)\right) \le \sum_{j < i} O\left(1.1^j \log\log n \right) \le O\left(1.1^i \log\log n\right) \le O\left(\log\left(1/\lambda\right)\right) ,
    \end{equation*}
    where the last inequality is by Equation~(\ref{eq:lamba_phase_i}). 
    By Lemma~\ref{lem:remain_correct}, the \textsc{Remain}$(G', H_1)$ in phase $i$ takes $O(\log d + \log\log n) \le O(\log(1/\lambda) + \log\log n)$ time by $d \le O(\log n / \lambda)$. 
    Summing up and by a union bound, the lemma follows.
\end{proof}

The success probability $1 - 1/\log n$ will be boosted to $1 - n^{-c}$ for arbitrary constant $c > 0$ in \S{\ref{sec:boosting}}.

\section{Boosting the Work and Success Probability} \label{sec:boosting}

In this section we improve the total work of our algorithm to $O(m + n)$ and the success probability to high probability in $n$.

\subsection{Bounding the Size of Edge Set with High Probability} \label{subsec:bound_edge_set_whp}

The main step towards getting a high probability bound is to show that in the execution of  \textsc{Extract}$(E, k)$, the size of each $E_i$ is bounded by $2 \cdot 0.999^{i-1}\cdot n$ with high probability for $i \in [1,k]$. As we discussed in the proof of Lemma \ref{lem:extract_time_work_helper_E} in order for some edge $(v,u)$  to contribute to $E_{i+1}$ through a vertex $v$, all  of the edges originally in $E_{i,v}$ must be deleted by the end of the \textsc{Filter}$(E',k)$. We show that the excepted number of edges that contribute to $E_{i+1}$ is concentrated around its mean. The main challenge to provide a concentration bound is that the events that two different edges $e, e' \in E_i$ contribute to $E_{i+1}$ are not necessarily independent from each other. For example, suppose that $e=(v,u)$ and $e'=(v,u')$. In this case $e$ and $e'$ share a common endpoint and if we condition on the event that $e$ contributes to $E_{i+1}$, then it implies that all of the edges in $E_{i,v}$ or all edges in $E_{i,u}$ are deleted within  \textsc{Filter}. This intuitively increases the probability that $e'$ contributes to the $E_{i+1}$ since if all edges in $E_{i,v}$ are deleted, then $e'$ also has a good chance contributes to the matching. The reason that we cannot say that $e'$ contributes to $E_{i+1}$ with the probability $1$, is that $e'$ might have moved to some other vertex and that vertex might have some edges that are not deleted by the end of \textsc{Filter}. This however, still shows  a possible positive correlation between the contribution of vertices, and we can't simply use Chernoff-like concentration bounds as there are only valid for random variables with negative association. 

In order to prove the concentration bound, we take two steps. First we show that for all rounds $i$ and vertices $v \in V'_i$ if $|E_{i,v}| \ge 1000 \log n$, then with the high probability no edges contribute to $E_{i+1}$ through these vertices. Therefore, we can assume that for every vertex $v$ such that some edge contributes to $E_{i+1}$ through $v$ we have $|E_{i,v}| < 1000 \log n$, thus the total contribution of $v$ is relatively small. We then use this observation, and apply the concentration bound for self-bounding functions to show that $E_i$ is bounded by $2 \cdot 0.999^{i-1}\cdot n$ with high probability.

Consider any edge $e \in E_i$, the probability that this edge gets deleted by the end of \textsc{Filter}$(E',k)$ is $(1-(1-10^{-4})^{k+1})$. Since we run the algorithm for $k=1000 \log \log \log n$, then we have  the following bound for the probability that $e$ gets deleted by the end of the \textsc{Filter}$(E',k)$.

\begin{align*}
(1-(1-10^{-4})^{k+1}) = (1-(1-10^{-4})^{1000 \log \log \log n+1}) \le 1- \frac{1}{\log \log n} .
\end{align*}
Now suppose that for a vertex $v \in V'_i$ we have $|E_{i,v}| \ge 1000  \log n$, then the probability that all of the edges in $|E_{i,v}|$ gets deleted is at most
\begin{align*}
(1- \frac{1}{\log \log n})^ {1000 \log n} \le n^{-10} . 
\end{align*}

Therefore, we can say that if for a vertex $v$, we have $|E_{i,v}| \ge 1000  \log n$, then no edges will contribute to $E_{i+1}$ through $v$ with the probability of $(1-n^{-10})$. By taking the union bound over all  possible $n$ different vertices $v$ and $k$ rounds of \textsc{Extract}$(E, k)$, we can say that with the probability of at least $(1-n^{-9} k) \le (1-n^{-8})$, for all rounds $i \in [0,k-1]$ if a vertex $v \in V'_i$ has more than $1000 \log n$ edges in $E_{i,v}$, then none of these edges contribute to $E_{i+1}$ through $v$.

For a round $i$  of \textsc{Extract}$(E, k)$, let $V''_i$ be the set of vertices in $V_i$ such that for each $v \in V''_i$ we have $|E_{i,v}| < 1000  \log n$. As we discussed, all of the contribution to $E_{i+1}$ are through vertices in $V''_i$ with high probability in $n$.  This means that with high probability in $n$, the total contribution of all vertices to $E_{i+1}$  is at most $1000 \log n$. We then use this property, and apply the concentration bounded for self-bounding functions. First we define the self-bounding functions.

\begin{definition}
Let $X_1, X_2, \cdots, X_n$ be $n$  independent random variables taking values in some measure space $\mathcal{X}$. Also, let $f: \mathcal{X}^n \rightarrow  \mathbb{R}$, then $f$ is $(a)$-self-bounding if there exists functions $f_1, f_2, \cdots, f_n : \mathcal{X}^{n-1} \rightarrow \mathbb{R}$ such that
\begin{align*}
0 \le f(x) - f_j(x^{(j)}) \le 1 \,.
\end{align*} 
And,
\begin{align*}
\sum_{i=1}^n \big( f(x) - f_j(x^{(j)}) \big) \le a \cdot f(x) \,.
\end{align*}

For all inputs $x=(x_1, x_2, \cdots, x_n) \in \mathcal{X}^n$ where $x^{(j)}=(x_1, \cdots, x_{j-1}, x_{j+1}, \cdots, x_n)$ is the $x$ when $j^{\text{th}}$ element is dropped.
\end{definition}

The following concentration bound is known for self-bounding functions

\begin{lemma} [\cite{boucheron2009concentration}]
Let $Z= f(X_1, X_2, \cdots, X_n)$, then if $f(.)$ is $(a)$-self-bounding function  we have
\begin{align*}
\Pr [ Z \ge \E[Z]+t] \le \exp\bigg(\frac{-t^2}{2a \E[Z] + a \cdot t}\bigg) \,.
\end{align*}
\end{lemma}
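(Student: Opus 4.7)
The plan is to prove this tail bound via the \emph{entropy method} (a.k.a.\ the Herbst argument), which is the standard route for concentration of self-bounding functions. Define the log-moment generating function $\phi(\lambda) = \log \E[e^{\lambda Z}]$ for $\lambda > 0$, and write $Z_j = f_j(X_1,\dots,X_{j-1},X_{j+1},\dots,X_n)$ and $\Delta_j = Z - Z_j$, so by hypothesis $\Delta_j \in [0,1]$ and $\sum_j \Delta_j \le a Z$. The goal is to turn these pointwise bounds on the $\Delta_j$ into a differential inequality on $\phi(\lambda)$ that can be integrated from $0$ to give a sub-gamma-type MGF bound, after which Markov's inequality yields the stated tail.

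First I would invoke the tensorization inequality for entropy: for any non-negative function $Y$ of independent variables $X_1,\dots,X_n$, $\operatorname{Ent}[Y] \le \sum_j \E[\operatorname{Ent}_j[Y]]$, where $\operatorname{Ent}_j$ denotes entropy taken with respect to $X_j$ with the other coordinates fixed. Applied to $Y = e^{\lambda Z}$ and combined with the variational bound $\operatorname{Ent}_j[e^{\lambda Z}] \le \E_j[e^{\lambda Z}\,\psi(-\lambda \Delta_j)]$ with $\psi(x) = e^x - x - 1$ (this step uses that $Z_j$ does not depend on $X_j$, so $e^{\lambda Z_j}$ is a legitimate reference measure inside $\operatorname{Ent}_j$), I obtain
\begin{equation*}
\lambda \E[Z e^{\lambda Z}] - \E[e^{\lambda Z}]\log\E[e^{\lambda Z}] \;\le\; \sum_{j=1}^n \E\bigl[e^{\lambda Z}\,\psi(-\lambda \Delta_j)\bigr].
\end{equation*}
Because $\Delta_j \in [0,1]$, the elementary inequality $\psi(-x) \le x^2/2$ for $x \ge 0$ can be sharpened to $\psi(-x) \le x\cdot(\,\text{something linear in }x)$, but the crucial step exploits $\psi(-x) \le x(e^{-x}-1)+x \le \tfrac{x^2}{2}$ together with $\sum_j \Delta_j \le aZ$, giving
\begin{equation*}
\sum_{j=1}^n \psi(-\lambda \Delta_j) \;\le\; \lambda^2 \cdot \tfrac{1}{2}\sum_j \Delta_j \;\le\; \tfrac{a\lambda^2}{2}\, Z.
\end{equation*}
Substituting back yields $\lambda \phi'(\lambda) - \phi(\lambda) \le \tfrac{a\lambda^2}{2}\phi'(\lambda)$ after dividing by $\E[e^{\lambda Z}]$ and using $\phi'(\lambda) = \E[Ze^{\lambda Z}]/\E[e^{\lambda Z}]$.

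The next step is to solve (or at least usefully bound) this differential inequality. Rewriting it as $\bigl(1 - \tfrac{a\lambda}{2}\bigr)\phi'(\lambda) \le \phi(\lambda)/\lambda$ and integrating from $0$ (using $\phi(0)=0$, $\phi'(0)=\E[Z]$) yields, after standard manipulations, $\phi(\lambda) \le \lambda \E[Z] + \tfrac{a\lambda^2 \E[Z]}{2(1 - a\lambda/2)}$ for $\lambda \in (0, 2/a)$. Then Markov's inequality gives
\begin{equation*}
\Pr[Z \ge \E[Z] + t] \;\le\; \exp\!\bigl(\phi(\lambda) - \lambda(\E[Z]+t)\bigr),
\end{equation*}
and optimizing over $\lambda$ (the optimum lies at $\lambda = t/(a\E[Z] + at/2)$, which one can verify is admissible) yields $\Pr[Z \ge \E[Z] + t] \le \exp\!\bigl(-t^2/(2a\E[Z] + at)\bigr)$, as desired.

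The main obstacle I expect is verifying the modified log-Sobolev inequality in a form tight enough to preserve the constants; a loose version gives a bound of the form $\exp(-t^2/(Ca\E[Z] + Cat))$ for some $C > 1$, which would be qualitatively correct but would not match the stated constants. Getting the constant $2$ in the denominator requires carefully choosing the reference random variable inside the Herbst argument (using $Z_j$ rather than an arbitrary $\sigma(X_{-j})$-measurable variable) and using the sharp form $\psi(-x) \le x^2/2$ for $x \in [0,1]$ rather than a weaker bound. Alternatively, one can cite \cite{boucheron2009concentration} directly and simply verify that our setting fits its hypotheses, which is how the paper has chosen to proceed.
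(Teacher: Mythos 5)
The paper does not prove this lemma at all: it is imported verbatim from Boucheron et al.\ \cite{boucheron2009concentration}, so there is no in-paper argument to compare against. Your entropy-method sketch is essentially the standard proof of that cited result, and it is correct, including the constants: tensorization of entropy, the variational bound $\operatorname{Ent}_j[e^{\lambda Z}]\le \E_j[e^{\lambda Z}\psi(-\lambda\Delta_j)]$ with reference variable $e^{\lambda Z_j}$ (valid precisely because $Z_j$ is constant in $X_j$), the bound $\psi(-x)=e^{-x}+x-1\le x^2/2$ for $x\ge 0$ combined with $\Delta_j\le 1$ and $\sum_j\Delta_j\le aZ$, the Herbst integration of $G(\lambda)=\phi(\lambda)/\lambda$ giving $\phi(\lambda)\le \lambda\E[Z]/(1-a\lambda/2)$ for $\lambda\in(0,2/a)$, and the sub-gamma optimization at $\lambda=t/(a\E[Z]+at/2)$, which is admissible and yields exactly the denominator $2a\E[Z]+at$. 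Two small remarks. First, your intermediate chain $\psi(-x)\le x(e^{-x}-1)+x\le x^2/2$ is wrong as written (the middle expression equals $xe^{-x}$, which is neither an upper bound for $\psi(-x)$ beyond $x=1$ nor bounded by $x^2/2$ near $x=1/2$); just use the direct inequality $e^{-x}+x-1\le x^2/2$ on $[0,\infty)$, which is all you need. Second, your worry about losing constants is unfounded: the simple $x^2/2$ bound already reproduces the lemma as stated, and the actual theorem in \cite{boucheron2009concentration} is in fact slightly sharper (with $(3a-1)/6$ in place of $a/2$ in the linear term), so the paper's version has slack in any case.
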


We now use the concentration bound above, to provide a concentration bound for $|E_{i+1}|$. Let us fix an ordering on the edges in $E_i$ and assume that $|E_i|=\{e_1, e_2, \cdots, e_{\ell}\}$ where $\ell = |E_i|$. Also, let $X_{e_1}, X_{e_2}, \cdots, X_{e_{\ell}}$ be $\ell$ random variables where $X_{e_j}$ indicates whether $e_j$ is deleted by the end of the \textsc{Filter}$(E',k)$. Specifically, $X_{e_j} = 1$ if $e_j$ is deleted by \textsc{Filter}, and it is $0$ otherwise. It is then clear that these random variables are independent as \textsc{Filter} deletes each edge independent of others. Now we define function $f_v : \{0,1\}^{\ell} \rightarrow \mathbb{R}$ as follows for every $v \in V''_i$.
$$
f_v(x) =
\begin{cases}
|E_{i,v}| , & \text{If for all edges $e \in E_{i,v}$ we have $x_e=1$}
\\ & \text{, i.e., all edges in $E_{i,v}$ are deleted based on the input $x$} \\
0, & \text{Otherwise}
\end{cases}
$$
and we define function $f$ as follows $f(x) = \frac{\sum_{v \in V''_i} f_v(x)}{2000 \log n}$. We then claim that $f$ is $(1000 \log n)$-self-bounding.

\begin{claim}
Let $f$ be the function defined as above, then $f$ is $(1000 \log n)$-self-bounding.
\end{claim}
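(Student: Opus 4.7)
The plan is to exhibit the auxiliary functions $f_j$ explicitly by freezing the $j$-th coordinate at $0$. That is, for each edge $e_j$ set
\begin{equation*}
    f_j(x^{(j)}) \coloneqq f(x_1,\ldots,x_{j-1},0,x_{j+1},\ldots,x_\ell).
\end{equation*}
Each $f_v$ is the product of $|E_{i,v}|$ with the AND of the indicators $\{x_e : e \in E_{i,v}\}$, so $f_v$ (and hence $f$) is monotone non-decreasing in every coordinate. This gives the lower bound $f(x) - f_j(x^{(j)}) \ge 0$ for free.

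For the upper bound $f(x) - f_j(x^{(j)}) \le 1$, I would argue that flipping $x_j$ from $1$ to $0$ changes only those $f_v$ with $e_j \in E_{i,v}$, of which there are at most two (the two endpoints of $e_j$). Each such $v$ lies in $V''_i$, so $f_v(x) \le |E_{i,v}| < 1000\log n$ by the definition of $V''_i$. Dividing by the normalization $2000\log n$ in $f$ gives at most $2 \cdot (1000\log n)/(2000\log n) = 1$.

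The sum condition is the real heart of the computation, but it is routine once the previous bookkeeping is in place: exchanging the order of summation,
\begin{equation*}
    \sum_{j=1}^{\ell}\bigl(f(x) - f_j(x^{(j)})\bigr) \;=\; \frac{1}{2000\log n}\sum_{j=1}^{\ell}\sum_{\substack{v \in V''_i \\ e_j \in E_{i,v}}} f_v(x) \;=\; \frac{1}{2000\log n}\sum_{v \in V''_i} f_v(x)\cdot |E_{i,v}|,
\end{equation*}
since each $v$ is counted once per edge of $E_{i,v}$. Because $|E_{i,v}| < 1000\log n$ for every $v \in V''_i$, this quantity is at most $\frac{1000\log n}{2000\log n}\sum_v f_v(x) = 1000\log n \cdot f(x)$, establishing the claim with $a = 1000\log n$.

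There is no genuine obstacle here; the main thing to be careful about is the normalization constant $2000\log n$ in the definition of $f$, which is exactly calibrated so that each single-coordinate change is bounded by $1$ (using that an edge has two endpoints and each endpoint contributes at most $1000\log n$), and simultaneously so that the double-counting in the sum condition yields the clean self-bounding constant $1000\log n$ rather than a larger quantity.
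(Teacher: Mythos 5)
Your proposal is correct and follows essentially the same route as the paper: the same auxiliary functions $f_j$ obtained by zeroing the $j$-th coordinate, the same two-endpoint argument (with $|E_{i,v}| \le 1000\log n$ for $v \in V''_i$) for the bound $0 \le f(x) - f_j(x^{(j)}) \le 1$, and the same exchange-of-summation/double-counting for the sum condition. Your exact identity $\sum_j (f(x)-f_j(x^{(j)})) = \frac{1}{2000\log n}\sum_v f_v(x)\,|E_{i,v}|$ is just a slightly cleaner rendering of the inequality the paper states directly.
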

\begin{proof}
For each $e_j \in E_i$, we define function $f_{e_j}(x^{(e_j)})$ as below.
\begin{align*}
f_{e_j}(x^{(e_j)}) = f(x_{e_1}, \cdots, x_{e_{i-1}},0, x_{e_{i+1}}, \cdots, x_{e_\ell}) \,.
\end{align*}
In other words, $f_{e_j}(x^{(e_j)})$ is the same as function $f$ when we change that $e_j$ is not deleted. For each $e_j$, we use $x'_j$ to denote the input sequence $(x_{e_1}, \cdots, x_{e_{i-1}},0, x_{e_{i+1}}, \cdots, x_{e_\ell})$. Thus, it is equivalent to define $f_{e_j}(x^{(e_j)})= f(x'_j)$. Also, for an input sequence $x$, we use $\eta(x)$ to denote the set of vertices in $V''_i$ such that for each $v \in \eta(x)$ all edges in $E_{i,v}$ are deleted according to $x$. Based on this definition, we can re-write function $f$ as follows.
\begin{align*}
f(x) = \frac{\sum_{v \in \eta(x)} |E_{i,v}|}{2000 \log n}
\end{align*}

Now we are ready to show these function $f$ satisfies the properties of self-bounding functions. Considering the first property we have to show that for every edge $e_j$ we have
\begin{align*}
0 \le f(x) - f_{e_j}(x^{(e_j)}) \le 1 
\end{align*}

Let $e_j=(v',u')$. It is then clear that $\eta({x'_j}) \subseteq \eta(x)$ since in $x'$ we are changing the input such that $e_j$ is not deleted, and it is only changed for one edge. Then by the definition of functions $f(.)$ and $f_{e_j}(.)$ we have
\begin{align}
 f(x) - f_{e_j}(x^{(e_j)}) &= f(x) - f(x'_j) \nonumber \\
 \label{eq:self_bound_p1}
&= \frac{\sum_{v \in \eta(x), v \notin \eta({x'_j})} |E_{i,v}|}{2000 \log n} 
\\ & \le \frac{|E_{i,v'}|+ |E_{i,u'}|}{2000 \log n} \nonumber \\
& (\text{$\eta({x'_j})$ is different from $\eta(x)$ for at most vertices adjacent to $e_j$}) \nonumber
\\ & \le 1 \,. \nonumber
\\ & (\text{Since $|E_{i,v}| \le 1000 \log n$ for every vertex $v \in V''_i$ }) \nonumber
\end{align}
 Thus, function $f$ satisfies the first property of self-bounding functions.
 
 For the second property we have
 \begin{align*}
 &\sum_{e_j \in E_i} f(x) - f_{e_j}(x^{(e_j)}) \\
 & = \sum_{e_j \in E_i} \bigg(\frac{\sum_{v \in \eta(x), v \notin \eta({x'_j})} |E_{i,v}|}{2000 \log n}\bigg) \\ 
&\le \sum_{v \in \eta(x)} \bigg(\frac{1000 \log n \cdot |E_{i,v}|}{2000 \log n}\bigg)  \\
& (\text{$\eta({x'_j})$ is different from $\eta(x)$ for at most vertices adjacent to $e_j$} \\
& \text{and each $v \in V''_i$ is adjacent to at most $1000 \log n$ edges}) \nonumber
\\ &= \sum_{v \in \eta(x)} \bigg(\frac{|E_{i,v}|}{2}\bigg) \\
& = 1000 \log n \cdot f(x) \,.
 \end{align*}
Thus, function $f$ is $(1000 \log n)$-self-bounded and it proof the claim.
\end{proof}


We now show the concentration bound for $E_{i+1}$. Recall that $X_{e_1}, X_{e_2}, \cdots, X_{e_\ell}$ are the random variables where $X_{e_i}$ indicates whether $e_i$ is deleted by the end of the \textsc{Filter}. Also, let $\vec{X}=(X_{e_1}, X_{e_2}, \cdots, X_{e_\ell})$. First we claim that with the probability of $(1-n^{-8})$ we have $|E_{i+1}| \le \sum_{v \in V''_i} f_v(\vec{X})$. The reason is that in order an edge $(v,u)$ contribute to $E_{i+1}$, for at least one of its endpoints, let's say $v$, all the edges in $E_{i,v}$ must be deleted. In this case $f_v(\vec{X})$ is $E_{i,v}$ which is an upper bound on the number of edges that contribute to $E_{i+1}$ through $v$. Also with the probability of at least $(1-n^{-8})$ all of the contribution to $E_{i+1}$ are through vertices in $V''_i$. This implies that $|E_{i+1}| \le \sum_{v \in V''_i} f_v(\vec{X})$, and $\frac{|E_{i+1}|}{2000 \log n} \le f(\vec{X})$.

Also, as we discussed in the proof of Lemma \ref{lem:extract_time_work_helper_E}, the expected number of edges contributed through each vertex $v \in V'_i$ is upper bounded by $1$. Using the similar argument we can say that $\E[f_v(\vec{X})] \le 1$ for each $v \in V''_i$. Thus, $\E[f(\vec{X})] \le \frac{|V''_i|}{2000 \log n}$. Let $Z=f(\vec{X})$. Since $f$ is $(1000 \log n)$-self-bounded function, using the concentration bound for self-bounding functions we can say

\begin{align*}
\Pr[ Z \ge \E[Z] + n^{0.6}] &\le \exp\bigg(\frac{-n^{1.2}}{2000 \log n \E[Z]+ 1000 \log n \cdot n^{0.6}} \bigg) \\
\\& \le \exp\bigg(\frac{-n^{1.2}}{|V''_i|+ 1000 \log n \cdot n^{0.6}}\bigg)  \\
& (\text{Since $\E[Z] \le \frac{|V''_i|}{2000 \log n}$}) \\
& \le \exp\bigg(\frac{-n^{1.2}}{n+ 1000 \log n \cdot n^{0.6}}\bigg) \le n^{-10} \\ \,
& (\text{Since $|V''_i| \le n$}) . \\
\end{align*}

Recall that with the probability of at least $(1-n^{-8})$ we had $\frac{|E_{i+1}|}{2000 \log n} \le f(\vec{X})$. By combining this with the concentration bound above, we can say with the probability of at least $(1-n^{-7})$ we have
\begin{align*}
&\frac{|E_{i+1}|}{2000 \log n} \le \E[f(\vec{X})] + n^{0.6}
\end{align*}
Thus,
\begin{align*}
|E_{i+1}| &\le 2000 \log n \cdot E[f(\vec{X})]+ 2000 \log n \cdot n^{0.6}  \\
&= \sum_{v \in V''_i} E[f_v(\vec{X})] + 2000 \log n \cdot n^{0.6} \\
&\le |V''_i| + 2000 \log n \cdot n^{0.6} \,.
\end{align*}

By Lemma \ref{lem:extract_time_work_helper_Vprime}, with the probability of at least $1-kn^{-9} \ge 1-n^{-8}$, we have $|V''_i| \le 0.999^i\cdot n$. Thus with the probability of at least $1-n^{-6}$ we have
\begin{align*}
|E_{i+1}| \le |V''_i| + 2000 \log n \cdot n^{0.6} \le  2 \cdot 0.999^i\cdot n \,,
\end{align*}
which proves the following lemma.
\begin{lemma}
\label{lem:concentrationExtract}
For each round $i \in [1,k]$ of \textsc{Extract}, with the probability of at least $1-n^{-6}$ we have $|E_i| \le 2 \cdot 0.999^{i-1}\cdot n$.
\end{lemma}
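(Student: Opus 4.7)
The plan is to bound $|E_{i+1}|$ in two stages: first argue that ``high-contribution'' vertices almost surely contribute nothing, and then apply a concentration inequality for self-bounding functions to the remaining ``low-contribution'' vertices. Recall from the proof of Lemma~\ref{lem:extract_time_work_helper_E} that an edge survives into $E_{i+1}$ only if, for at least one of its endpoints $v$, every edge of $E_{i,v}$ gets deleted during the call to \textsc{Filter}$(E',k)$. Since each individual edge is deleted in \textsc{Filter} with probability $1-(1-10^{-4})^{k+1} \le 1-1/\log\log n$ (using $k=1000\log\log\log n$), the probability that all edges of $E_{i,v}$ are deleted when $|E_{i,v}|\ge 1000\log n$ is at most $(1-1/\log\log n)^{1000\log n}\le n^{-10}$. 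A union bound over all $n$ vertices and $k$ rounds shows that with probability $1-n^{-8}$, no contribution to $E_{i+1}$ ever comes through a vertex $v$ with $|E_{i,v}|\ge 1000\log n$.

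Next, let $V''_i$ denote the vertices of $V'_i$ with $|E_{i,v}|<1000\log n$. Define the random variables $X_{e_1},\dots,X_{e_\ell}$ indicating whether each edge of $E_i$ is deleted (these are independent since \textsc{Filter}'s deletions are independent), and set
\[
f_v(x) = \begin{cases} |E_{i,v}| & \text{if $x_e=1$ for every $e\in E_{i,v}$,} \\ 0 & \text{otherwise,}\end{cases}
\qquad f(x) = \frac{1}{2000\log n}\sum_{v\in V''_i} f_v(x).
\]
The previous paragraph gives $|E_{i+1}|\le 2000\log n\cdot f(\vec X)$ with probability $1-n^{-8}$. The main technical step will be to verify that $f$ is $(1000\log n)$-self-bounding: flipping a single $x_{e_j}$ with $e_j=(v',u')$ can only change the status of $v'$ and $u'$, so $0\le f(x)-f_{e_j}(x^{(e_j)})\le(|E_{i,v'}|+|E_{i,u'}|)/(2000\log n)\le 1$, and summing over $e_j$ each vertex $v\in V''_i$ with all its edges deleted is charged at most $|E_{i,v}|\le 1000\log n$ times, giving $\sum_j (f(x)-f_{e_j}(x^{(e_j)}))\le 1000\log n\cdot f(x)$.

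Then I invoke the self-bounding concentration inequality of Boucheron--Lugosi--Massart. Since the expected contribution through any single vertex is at most $1$ (by the calculation in Lemma~\ref{lem:extract_time_work_helper_E}), we get $\E[f(\vec X)]\le |V''_i|/(2000\log n)$, and the tail bound with deviation $t=n^{0.6}$ yields
\[
\Pr\bigl[f(\vec X)\ge \E[f(\vec X)]+n^{0.6}\bigr] \le \exp\!\left(\frac{-n^{1.2}}{2000\log n\cdot \E[f(\vec X)]+1000\log n\cdot n^{0.6}}\right)\le n^{-10}.
\]
Combining with Lemma~\ref{lem:extract_time_work_helper_Vprime} (which gives $|V''_i|\le |V'_i|\le 0.999^i n$ with probability $1-kn^{-9}$) and unwinding the definition of $f$ yields $|E_{i+1}|\le |V''_i|+2000\log n\cdot n^{0.6}\le 2\cdot 0.999^i n$ with probability $1-n^{-6}$, which is exactly the claim.

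The delicate part is not the self-boundedness verification, which is largely mechanical, but rather justifying that the truncation to $V''_i$ is lossless, i.e., that with high probability no edge slips through via a high-degree vertex --- here the use of the \emph{round-count} $k=1000\log\log\log n$ in \textsc{Extract} is essential, because it is precisely what makes the per-edge deletion probability $1-1/\log\log n$ large enough for $(1-1/\log\log n)^{1000\log n}\le n^{-10}$ to hold. The rest of the argument then passes through independence of edge-deletions and the standard self-bounding concentration machinery without further issue.
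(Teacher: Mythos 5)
Your proposal is correct and follows essentially the same route as the paper: the same truncation argument showing that vertices with $|E_{i,v}|\ge 1000\log n$ contribute nothing w.h.p. (relying on the per-edge deletion probability $1-(1-10^{-4})^{k+1}\le 1-1/\log\log n$ for $k=1000\log\log\log n$), the same $(1000\log n)$-self-bounding function $f$ built from the low-degree vertices, the same Boucheron--Lugosi--Massart tail bound with deviation $n^{0.6}$, and the same final combination with Lemma~\ref{lem:extract_time_work_helper_Vprime}. No gaps to report.
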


We can use the same technique to provide the high probability bound for the size of $E'$ in the \textsc{Interweave}$(G', H_1, H_2, E_{\text{filter}}, i)$.
\begin{corollary}
Let $E'$ be the set of edges by end of Step \ref{alg:interweave:s8} of \textsc{Interweave}$(G', H_1, H_2, E_{\text{filter}}, i)$. Then, $|E'|\le 10^9 \cdot n$ with the probability of at least $1-n^{-6}$.
\end{corollary}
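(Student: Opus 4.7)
The plan is to adapt the self-bounding function machinery developed for Lemma~\ref{lem:concentrationExtract} to the \textsc{Interweave} setting. The key observation, already exploited in the proof of Lemma~\ref{lem:E_prime_bound_interweave}, is that the edge set $E'$ in Step~\ref{alg:interweave:s8} is structurally identical to the edge set $E'$ appearing at the end of Step~\ref{alg:extract:s2} of \textsc{Extract}$(E, k)$ with $k = 10^6 \cdot 1.1^i \log\log n$: the random deletion/\textsc{Alter} loop in Step~\ref{alg:interweave:s6} is exactly the \textsc{Filter} body, and $V(E_{\text{filter}})$ plays the role of the returned vertex set $V'$. So the mechanism by which an edge $(v,u) \in E(G')$ ends up in $E'$ is that all corresponded edges adjacent on $v$ or all corresponded edges adjacent on $u$ must be deleted during Step~\ref{alg:interweave:s6}.

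For each vertex $v \in V(G')$ I would let $E_{i,v}$ denote the set of edges of $E(G')$ whose corresponded edge in $E_{\text{filter}}$ is adjacent to $v$ at the beginning of Step~\ref{alg:interweave:s6}. Since each edge is deleted independently with probability $10^{-4}$ in each of the $k$ rounds, the probability that a single edge survives is at least $(1-10^{-4})^k$; hence an edge is deleted with probability at most $1 - 1/\poly(\log n)$. I would then split the argument into two pieces, following the template of \S\ref{subsec:bound_edge_set_whp}. First, for a sufficiently large threshold $\tau = \Theta(\log n \cdot \log\log n)$ (chosen so that $(1 - (1-10^{-4})^k)^\tau \le n^{-10}$), no edge contributes to $E'$ through any vertex $v$ with $|E_{i,v}| \ge \tau$; a union bound over at most $n$ vertices gives this with probability $1 - n^{-8}$.

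Second, restricting to vertices with $|E_{i,v}| < \tau$, I would define $f_v(\vec{X}) = |E_{i,v}|$ if every edge in $E_{i,v}$ is deleted by Step~\ref{alg:interweave:s6} and $f_v(\vec{X}) = 0$ otherwise, and set
\[
f(\vec{X}) = \frac{1}{2\tau} \sum_{v : |E_{i,v}| < \tau} f_v(\vec{X}),
\]
where $\vec{X}$ is the independent vector of per-round deletion bits. Exactly as in the proof of Lemma~\ref{lem:concentrationExtract}, flipping one coordinate changes $f$ by at most $(|E_{i,v'}| + |E_{i,u'}|)/(2\tau) \le 1$, and the one-sided differences sum to at most $\tau \cdot f(\vec{X})$, so $f$ is $\tau$-self-bounding. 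The expected contribution of each vertex is $O(1)$ by exactly the computation used in Lemma~\ref{lem:extract_time_work_helper_E}, so $\E[f(\vec{X})] = O(n/\tau)$. Applying the concentration inequality for self-bounding functions with deviation $t = n^{0.6}$ then yields $|E'| \le 2\tau \cdot \E[f(\vec{X})] + O(\tau \cdot n^{0.6}) = O(n)$ with probability at least $1 - n^{-7}$, and in particular the claimed bound $|E'| \le 10^9 \cdot n$ with probability at least $1 - n^{-6}$ after a final union bound with the event in Lemma~\ref{lem:interweave_roots_shrink_each_phase}.

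The main obstacle is that, unlike \textsc{Extract} where $k = 1000\log\log\log n$ kept the per-edge deletion probability bounded away from $1$, here $k$ can grow as large as $\poly(\log n)$, so the survival probability of an edge is $1/\poly(\log n)$ rather than a constant. This forces the high-degree cutoff $\tau$ to be $\Theta(\log n \cdot \log \log n)$ rather than $\Theta(\log n)$, which slightly weakens the final bound from the $n/(\log n)^9$ of Lemma~\ref{lem:E_prime_bound_interweave} to $O(n)$; fortunately the corollary only claims the weaker $O(n)$ bound with high probability, which is all that is needed for the linear-work analysis in \S\ref{subsec:work_connectivity}.
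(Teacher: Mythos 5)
Your route is the same one the paper takes for this corollary: identify $E'$ with the \textsc{Extract}-style edge set, discard vertices above a degree cutoff, and apply the self-bounding concentration bound to the rest. The gap is quantitative, and it is fatal to the step as you wrote it. In Step~\ref{alg:interweave:s6} an edge of $E_{\text{filter}}$ survives all $k = 10^6\cdot 1.1^i\log\log n$ deletion rounds with probability $(1-10^{-4})^{k} \le (\log n)^{-100}$ (and in later phases $(\log n)^{-\Theta(1.1^i)}$, as computed in Lemma~\ref{lem:high_degree_has_edge_in_filter}). Hence for a vertex with $\tau=\Theta(\log n\cdot\log\log n)$ adjacent edges, the probability that \emph{all} of them are deleted is $\left(1-(1-10^{-4})^{k}\right)^{\tau} \ge 1-\tau\,(\log n)^{-100} = 1-o(1)$, which is nowhere near the $n^{-10}$ you require of your cutoff. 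To make ``vertices above the cutoff contribute nothing w.h.p.'' true you need $\tau\gtrsim \log n\cdot(1-10^{-4})^{-k} = (\log n)^{\Theta(1.1^i)}$, i.e.\ the phase-dependent polylogarithmic threshold of Lemma~\ref{lem:high_degree_has_edge_in_filter}; the paper's own proof of this corollary instead works with the fixed cutoff $10^8(\log n)^4$ after bounding $k\le 10^6(\log n)^3$. Your closing ``obstacle'' paragraph correctly notices that the survival probability here is only $1/\poly(\log n)$, but then draws the wrong conclusion: a polylogarithmically small survival probability forces the cutoff \emph{up} by that polylogarithmic factor, not up by a mere $\log\log n$ factor.

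The same oversight breaks your claim that the per-vertex expected contribution is $O(1)$ ``by exactly the computation used in Lemma~\ref{lem:extract_time_work_helper_E}'': that computation bounds the expectation by $(1-10^{-4})^{-k}$, which is $O(1)$ only when $k=O(\log\log\log n)$; for the present $k$ it is $(\log n)^{\Theta(1.1^i)}$. Once the cutoff is repaired, both the self-bounding parameter $a$ and $\E[f]$ pick up $(\log n)^{\Theta(1.1^i)}$ factors, and the argument only closes if you additionally use the phase-dependent shrinkage of the vertex set (Lemma~\ref{lem:interweave_roots_shrink_each_phase}: at phase $i$ there are at most $n/b^{10}$ active roots with $b=(\log n)^{100\cdot 1.1^i}$), whose $(\log n)^{-1000\cdot 1.1^i}$ decay dominates those losses and keeps both the mean and the $a\cdot n^{0.6}$ deviation term at $O(n)$. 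Your proposal never invokes this shrinkage, so as written neither the high-degree step nor the expectation step survives the actual parameter regime of \textsc{Interweave}.
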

\begin{proof}
    The set $E'$ is computed in the same way that $E_{i+1}$ is computed in round $i$ of \textsc{Extract}. The only difference is that in the proof above, we are assuming that each edge gets deleted with the probability of $(1-10^{-4})$ for $1000 \log \log \log n$ rounds. However, \textsc{Interweave}$(G', H_1, H_2, E_{\text{filter}}, i)$ this happens for every edge for $10^6 \cdot 1.1^i \log\log n$ rounds. Also, it is guaranteed that $10^6 \cdot 1.1^i \log\log n$ is bounded by $10^6 \cdot (\log n)^3$. We show that we can still claim that in order for a vertex to have a contribution to $E'$, it should have at most $10^8 \cdot (\log n)^4$ edges. Consider a vertex with at least  $10^8 \cdot (\log n)^4$ edges. Then, the probability that all of these edges get deleted is
    \begin{align*}
        (1-(1-10^{-4})^{10^6 \cdot 1.1^i \log\log n})^{10^8 \cdot (\log n)^4} \le  (1-(1-10^{-4})^{10^6 \cdot (\log n)^3})^{10^8 \cdot (\log n)^4} \le n^{-10} \,.
    \end{align*}
    Thus, with high probability in $n$, vertices with more than $10^8 \cdot (\log n)^4$ edges do not contribute to $E'$.  Then we define the function $f= \frac{\sum_{v \in V''} f_v(x)}{2 \cdot 10^8 \cdot (\log n)^4}$ where $V''$ is the set of vertices with at most $10^8 \cdot (\log n)^4$ edges. Then, it is straightforward to verify that $f$ is $(10^8 \cdot (\log n)^4)$-self bounding and get the concentration bound stated in this corollary.
\end{proof}

We further leverage this technique to get the concentration bound for the set of edges $E'$ in Step \ref{alg:reduce:s4} of \textsc{Reduce}. The proof immediately follows from what we have discussed so far as it calls \textsc{Filter} with $k=10^6 \log \log n$.
\begin{corollary}
Let $E'$ be the set of edges by end of Step \ref{alg:reduce:s4} of \textsc{Reduce}. Then, $|E'|\le 10^7 \cdot n$ with the probability of at least $1-n^{-6}$. 
\end{corollary}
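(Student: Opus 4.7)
The proof will transport the self-bounding-function technique of Lemma~\ref{lem:concentrationExtract} and the preceding corollary for \textsc{Interweave} verbatim, adjusting only the numerical constants to reflect the new parameter $k = 10^6 \log\log n$ that \textsc{Reduce} passes into \textsc{Filter}. As was observed in the proof of Lemma~\ref{lem:reduce_time_work}, the edge set $E'$ in Step~\ref{alg:reduce:s4} arises via exactly the same combinatorial mechanism that produces $E_{i+1}$ from $E_i$ inside a single round of \textsc{Extract}: one runs $\textsc{Filter}(E,k)$, reads off the set $V'$ of surviving endpoints, and keeps those edges of $E$ that have at least one endpoint outside $V'$. Under this correspondence each edge is independently deleted by \textsc{Filter} with probability $q = 1 - (1-10^{-4})^{k+1}$, and the quantity $1-q = (1-10^{-4})^{10^6 \log\log n + 1}$ remains inverse polylogarithmic in $n$ (concretely at least $(\log n)^{-O(1)}$), so the same template applies without change of structure.

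The first step will fix a polylogarithmic degree threshold $T = \Theta((\log n)^c)$, with $c$ a sufficiently large constant chosen so that $q^T \le n^{-10}$, and then argue by a union bound that with probability at least $1 - n^{-9}$ only vertices $v$ having $|E_v| \le T$ originally adjacent edges can contribute to $E'$; call this set $V''$. This is the same ``kill off high-degree vertices'' step used in the \textsc{Interweave} corollary, differing only in the explicit value of $T$; no care beyond plugging in the new $k$ is required.

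The second step will then run a verbatim copy of the self-bounding calculation from the proof of Lemma~\ref{lem:concentrationExtract}: for each $v \in V''$ define $f_v(x) = |E_v|$ if the random sequence $x$ deletes all edges of $v$ and $0$ otherwise, set $f(x) = (\sum_{v \in V''} f_v(x))/(2T)$, and verify by the same two inequality chains (with $1000\log n$ replaced by $T$) that $f$ is $(T)$-self-bounding. Applying the Boucheron--Bousquet--Lugosi concentration bound with deviation $n^{0.6}$, and combining with the expected-contribution estimate inherited from the proof of Lemma~\ref{lem:extract_time_work_helper_E}---which here is very favorable because the preceding \textsc{Extract} call in Step~\ref{alg:reduce:s1} has already shrunk the relevant vertex set to size $n/\poly(\log n)$---will deliver $|E'| \le 10^7 n$ with probability at least $1 - n^{-6}$. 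I expect the main obstacle to be purely arithmetic bookkeeping: confirming that the polylog factors introduced by the larger $T$ and the self-bounding constant $a = T$ are comfortably absorbed by the small size of $V''$, so that the final constant $10^7$ survives; no new conceptual ingredient beyond what the preceding corollary already supplies is needed.
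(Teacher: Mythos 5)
Your proposal is essentially the paper's own proof: the paper dismisses this corollary with a one-line pointer (``follows from what we have discussed so far as it calls \textsc{Filter} with $k=10^6\log\log n$''), and what you write out is precisely the instantiation that pointer is gesturing at --- reuse the \textsc{Interweave} corollary's template, recompute the degree threshold $T$ from the new survival probability $(1-10^{-4})^{k+1}$, and rerun the self-bounding argument with $1000\log n$ replaced by $T$. One minor imprecision: the vertex set after the \textsc{Extract} call in Step~\ref{alg:reduce:s1} of \textsc{Reduce} has size $O(n/\log\log n)$ (Lemma~\ref{lem:loglog_shrink_reduce}), not $n/\poly(\log n)$ as you state, but since you only ever need $|V''|\le n$ and $T\cdot n^{0.6}=o(n)$ for the loose bound $10^7 n$, this does not affect the argument.
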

\subsection{Boosting to Linear Work with High Probability}

We show that \textsc{Connectivity}$(G)$ can be implemented to take $O(m + n)$ work with high probability.

\begin{lemma} \label{lem:connectivity_total_work_high_prob}
    For any graph $G$, w.p. $1 - 1/n$, \textsc{Connectivity}$(G)$ takes $O(m + n)$ work.
\end{lemma}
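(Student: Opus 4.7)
The plan is to upgrade the earlier Lemma~\ref{lem:connectivity_total_work} in two orthogonal directions: converting every expected-work (``$\overline{O}$'') bound that appears in its proof into a with-high-probability bound, and boosting every subroutine that currently succeeds only with probability $1-1/\poly(\log n)$ to success probability $1-1/n^c$. The failure probability budget will be distributed over a union bound across all $O(\log\log n)$ phases and all subroutines invoked inside each phase.

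First, I would walk through the proof of Lemma~\ref{lem:connectivity_total_work} and replace each appearance of an expected size bound on an edge set with the concentration bound proved in \S\ref{subsec:bound_edge_set_whp}. Concretely: the edge set $E'$ constructed inside Step~\ref{alg:reduce:s4} of \textsc{Reduce} is $O(n)$ w.p.\ $1-n^{-6}$; each round $i$ of \textsc{Extract} satisfies $|E_i|\le 2\cdot 0.999^{i-1}n$ w.p.\ $1-n^{-6}$ by Lemma~\ref{lem:concentrationExtract}; and the edge set $E'$ in Step~\ref{alg:interweave:s8} of \textsc{Interweave} is $O(n)$ w.p.\ $1-n^{-6}$ by the corollary in \S\ref{subsec:bound_edge_set_whp}. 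Summed geometrically over rounds and over the at most $10\log\log n$ phases, this converts every $\overline{O}(n)$ term in the earlier accounting into a deterministic $O(n)$ work bound that holds w.p.\ $1-1/n^5$.

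Second, I would boost the success probability of the $1-1/\poly(\log n)$ subroutines by running $\Theta(\log n)$ independent copies in parallel and taking any one that succeeds (success is cheap to detect since each copy outputs a labeled digraph that can be validated locally). The two pieces that need this treatment are Stage~$2$ (\textsc{Increase}) and the invocations of Theorem~\ref{thm:ltz_main} inside Step~\ref{alg:interweave:s3} and inside \textsc{Remain}. The key point, already highlighted in \S\ref{subsec:overview_boosting}, is that each of these subroutines operates on a sparse object: the skeleton graph $H$ built inside \textsc{Increase} has $|E(H)|\le (m+n)/(\log n)^5$ w.h.p.\ by the analysis in \S\ref{subsubsec:sparse_build}, and the sampled subgraph $H_1$ passed to \cite{liu2020connected} in Step~\ref{alg:interweave:s3} has $|E(H_1)|\le m/(\log n)^6$ w.h.p.\ by a Chernoff bound on Step~\ref{alg:connectivity:s3}. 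Thus each parallel copy consumes at most $(m+n)/(\log n)^2$ work, and running $\Theta(\log n)$ copies still costs only $(m+n)/\log n$ work per phase, which sums to $O(m+n)$ across all phases. A single copy succeeds with probability $1-1/\poly(\log n)$, so by independence at least one of $\Theta(\log n)$ copies succeeds with probability $1-2^{-\Theta(\log n/\log\log n)}\ge 1-n^{-c}$ for any constant $c$. The same trick applied to the final \textsc{Remain} call boosts its success probability to $1-n^{-c}$ while keeping its work at $O(m+n)$ (it is only invoked once, so we can even afford $\log n$ copies on the full instance since $|E_{\text{remain}}|\le n/(\log n)^6$ w.h.p.\ by the Karger--Klein--Tarjan sampling lemma as in Lemma~\ref{lem:connectivity_total_work}).

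Finally, I would take a union bound over all these high-probability events: the $O(\log\log n)$ phases, the $O(\log n)$ parallel copies per subroutine, and the concentration events from \S\ref{subsec:bound_edge_set_whp}. Each event fails with probability at most $n^{-c'}$ for a sufficiently large constant $c'$, and there are only $\poly\log n$ many, so the total failure probability is at most $1/n$. The hardest step of this plan is the bookkeeping to confirm that running $\Theta(\log n)$ parallel copies of \textsc{Increase} is genuinely compatible with the rest of the algorithmic framework of \S\ref{sec:assumption}; in particular one must verify that each parallel copy can be reverted in Step~\ref{alg:interweave:s5} without extra work (by copying only the $O(|E(H_1)|+|V(G')|)$ bits of state each copy touches) and that the auxiliary array $\mathcal{A}$ built by \textsc{BuildAuxiliary} can be shared read-only across all $\log n$ copies, so the work per phase remains $O((m+n)/\log n)$ rather than $O(m+n)$.
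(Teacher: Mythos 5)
Your proposal matches the paper's proof in both essential ingredients: upgrading the expected-work terms to high-probability bounds via the self-bounding concentration result of \S\ref{subsec:bound_edge_set_whp}, and running $\Theta(\log n)$ independent parallel copies of the $1-1/\poly(\log n)$-success subroutines (exploiting that they operate on the sparse objects $H$, $H_1$, $H_2$, and $E_{\text{remain}}$) so that the per-phase work stays at $O((m+n)/\log n)$. One small arithmetic slip that does not affect your conclusion: $\Theta(\log n)$ copies each failing w.p.\ $1/\poly(\log n)$ all fail w.p.\ $2^{-\Theta(\log n\cdot\log\log n)}$, not $2^{-\Theta(\log n/\log\log n)}$ --- which is even smaller, so the bound $\ge 1-n^{-c}$ holds a fortiori.
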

\begin{proof}
    Consider the proof of Lemma~\ref{lem:connectivity_total_work}. 
    Step~\ref{alg:connectivity:s2} of \textsc{Connectivity}$(G)$ takes $O(m) + \overline{O}(n)$ work because the edge set $E'$ in Step~\ref{alg:reduce:s4} of \textsc{Reduce}$(V, E, k)$ has $\E[|E'|] \le \overline{O}(n)$.
    Using the technique in \S{\ref{subsec:bound_edge_set_whp}}, we can show that $|E'| \le O(n)$ w.p. $1 - n^{-6}$, giving $O(m + n)$ total work for Step~\ref{alg:connectivity:s2} of \textsc{Connectivity}$(G)$ w.p. $1 - n^{-6}$. 

    It remains to improve the total work of Step~\ref{alg:connectivity:s5} of \textsc{Connectivity}$(G)$ to $O(m + n)$ with high probability.
    Consider the \textsc{Interweave}$(G', H_1, H_2, E_{\text{filter}}, i)$ in each phase $i \in [0, 10\log\log n)$. 
    In the proof of Lemma~\ref{lem:connectivity_total_work}, we already showed that the total work of Step~\ref{alg:interweave:s6} to Step~\ref{alg:interweave:s10} of \textsc{Interweave}$(G', H_1, H_2, E_{\text{filter}}, i)$ over all phases is at most $O(m + n)$ w.p. $1 - n^{-2}$. 

    Observe that Step~\ref{alg:interweave:s2} and Step~\ref{alg:interweave:s3} take $2(m+n)/(\log n)^2$ work w.p. $1 - 1/\log n$. 
    The idea is to run $\log n$ instances of Step~\ref{alg:interweave:s2} and Step~\ref{alg:interweave:s3} in parallel independently, and halt the execution of an instance when the work of that execution reaches $2 (m + n / (\log n)^2)$, so the total work would be $2(m+n) / (\log n)^2 \cdot \log n$ w.p. $1 - n^{-9}$. 
    To implement $\log n$ instances of parallel execution, note that $|E(H_1)|, |E(H_2)| \le m/(\log n)^6$ w.p. $1 - n^{-8}$. 
    At the end of Step~\ref{alg:connectivity:s3} of \textsc{Connectivity}$(G)$, we approximate compact the arrays of edges of $H_1$ and $H_2$ into arrays of length at most $2m / (\log n)^6$ in $O(\log^* n)$ time and $O(m)$ work w.p. $1 - n^{-9}$. 
    Since the edge of $H_1, H_2$ are indexed, at the beginning of each phase we can use $4m / (\log n)^5$ processors to make $\log n$ copies of $H_1, H_2$. 
    Observe from the proof of Lemma~\ref{lem:new_increase_work}, Lemma~\ref{lem:edge_set_sparse_build_time_work}, and Lemma~\ref{lem:truncate_time_work}, that each atomic step of Step~\ref{alg:interweave:s2} and Step~\ref{alg:interweave:s3} in \textsc{Interweave}$(G', H_1, H_2, E_{\text{filter}}, i)$ uses $O(|V(G')| + |E(H_1)| + |E(H_2)|) \le (m+n)/(\log n)^5$ processors w.p. $1 - 1/(\log n)^2$, so we can use $(m+n)/(\log n)^4$ processors for the $\log n$ independent parallel executions. 
    The same observation implies that Step~\ref{alg:interweave:s2} and Step~\ref{alg:interweave:s3} perform more than $2 (m + n / (\log n)^2)$ work only if the algorithm wants to use more than $(m+n)/(\log n)^5$ processors in an atomic step (e.g., when assigning block of processors to a vertex with level more than $100 \log\log n$), which can be detected and we halt the execution of that instance. 
    Finally, for all instances of Step~\ref{alg:interweave:s2} and Step~\ref{alg:interweave:s3} that finishes its execution without halting, pick an arbitrary one to continue the execution of \textsc{Interweave}$(G', H_1, H_2, E_{\text{filter}}, i)$ and discard the others. 
    As a result, Step~\ref{alg:interweave:s2} and Step~\ref{alg:interweave:s3} over all the $10\log\log n$ phases takes $O(m+ n)$ work w.p. $1 - n^{-8}$.

    To bound the total work of Step~\ref{alg:interweave:s4}, we only need to show that Step~\ref{alg:remain:s4} of \textsc{Remain}$(G', H_1)$ takes $O(n)$ work w.p. $1 - n^{-8}$ by the last paragraph in the proof of Lemma~\ref{lem:connectivity_total_work}. 
    Since $|E_{\text{remain}}| \le n /(\log n)^6$ w.p. $1 - n^{-9}$, we run $\log n$ instances of the algorithm in Theorem~\ref{thm:ltz_main} in parallel and halt the execution of the instance if it reaches $n / (\log n)^5$ using the technique described in the previous paragraph. 
    As a result, Step~\ref{alg:interweave:s4} of \textsc{Interweave}$(G', H_1, H_2, E_{\text{filter}}, i)$ takes $O(n)$ work w.p. $1-n^{-8}$. 
    The lemma follows from a union bound.
\end{proof}

Next, we show that our algorithm computes the connected components in desired time with high probability using the implementation in the proof of Lemma~\ref{lem:connectivity_total_work_high_prob}.

\begin{lemma} \label{lem:connectivity_time_correct_final_high_prob}
    For any graph $G$ with component-wise spectral gap $\lambda$, w.p. $1 - 1/n$, \textsc{Connectivity}$(G)$ computes all the connected components of $G$ in $O(\log(1/\lambda) + \log\log n)$ time.
\end{lemma}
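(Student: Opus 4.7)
The plan is to revisit the running-time proof of Lemma~\ref{lem:connectivity_time_correct_final} and upgrade every $1-1/\poly(\log n)$ guarantee into a $1-n^{-c}$ guarantee, then take a union bound over the $O(\log\log n)$ phases and the $O(1)$ low-probability events inside each phase. The steps that already hold with high probability (\textsc{Reduce} in Step~\ref{alg:connectivity:s2} by Lemma~\ref{lem:polylog_shrink_reduce_all}, the padded-sort in \textsc{BuildAuxiliary} by Lemma~\ref{lem:build_auxiliary_time_work}, Lemma~\ref{lem:sparse_build:high_vertex}, Lemma~\ref{lem:sparse_H_component_size}, Lemma~\ref{lem:interweave_roots_shrink_each_phase}, Lemma~\ref{lem:edge_set_sparse_build_time_work}, Lemma~\ref{lem:edge_set_interweave_time_work}, the sampling-lemma bound used in \textsc{Remain}, and Theorem~\ref{thm:ltz_main} applied with $\log n$ parallel copies) contribute only $n^{-\Omega(1)}$ failure. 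Hence only the handful of events whose current bound is $1-1/\poly(\log n)$ require boosting.

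First I would isolate those events. Inspecting the proof of Lemma~\ref{lem:each_phase_running_time}, the sources of $1/\poly(\log n)$ slack are exactly (i) the call to \textsc{Increase}$(G',H_1,H_2,b)$ in Step~\ref{alg:interweave:s2}, which inherits its running-time guarantee from Lemma~\ref{lem:increase_work_time} through Lemma~\ref{lem:new_increase_work}; and (ii) the $20\log b$ rounds of \textsc{Expand-Maxlink}$(H_1)$ inside Step~\ref{alg:interweave:s3}, which inherit their $O(1)$-per-round bound from Lemma~\ref{lem:expandm_constant_time} and their path-shrinking bound from Lemma~\ref{lem:truncate_path_length}. The standard PRAM trick is parallel repetition: run $\log n$ independent copies of each such block using freshly drawn randomness, endow each copy with its own processor zone (feasible because the per-copy processor budget in the proof of Lemma~\ref{lem:connectivity_total_work_high_prob} is $(m+n)/(\log n)^4$, and $\log n$ copies still fit in $O(m+n)$ processors), abort any copy that exceeds its promised $O(\log b)$ step count, and pick the first successful copy. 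Since each copy succeeds independently with probability $1-1/\poly(\log n)$, the probability that all $\log n$ copies fail is at most $(1/\poly(\log n))^{\log n} \le n^{-c}$ for arbitrarily large $c$.

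Second, I would rerun the end-game analysis. Lemma~\ref{lem:spectral_gap_b_i} locates the correct phase $i^\star$ with $(\log n)^{-10\cdot 1.1^{i^\star}}\le \lambda$ in $O(\log(1/\lambda))$ aggregate time; its internal uses of Corollary~\ref{cor:sample_preserve} and the Chernoff bound already hold with probability $1-n^{-9}$. The spectral-sparsifier analysis of $H_1$ as an unbiased random subgraph of $\overline{G'}$ (the observation in footnote~\ref{fn:H_1_observation}) is unchanged. Then Lemma~\ref{lem:remain_correct}'s call to \textsc{Remain} becomes the final $O(\log d+\log\log n)\le O(\log(1/\lambda)+\log\log n)$ time using the $\log n$-parallel version of Theorem~\ref{thm:ltz_main} exploited in the proof of Lemma~\ref{lem:connectivity_total_work_high_prob}. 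Taking a union bound over the $O(\log\log n)$ phases and the $O(1)$ high-probability events within each phase gives a total failure probability of $n^{-c}\cdot O(\log\log n)\le 1/n$.

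The only subtle part, and the main obstacle, is ensuring that the parallel repetitions are genuinely independent without multiplying the work by more than a $\log n$ factor. Concretely, inside Step~\ref{alg:interweave:s3} we must replicate $H_1$ into $\log n$ private copies before running the $\log n$ replicated executions of \textsc{Expand-Maxlink}, but the replicated copies are used only locally and we keep exactly one after Step~\ref{alg:interweave:s3} for Step~\ref{alg:interweave:s4}; this preserves the unbiasedness argument for $H_1$ (footnote~\ref{fn:H_1_observation}) because the decision of which copy to keep depends only on running-time measurements that are symmetric across copies. Similarly for Step~\ref{alg:interweave:s2}, we replicate the hash tables assigned in \textsc{SparseBuild} and discard all but one surviving execution before proceeding. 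Once this bookkeeping is in place, combining Lemma~\ref{lem:spectral_gap_b_i} (upgraded to high probability by the same parallel-repetition trick) with the boosted bounds on each phase yields the claimed $O(\log(1/\lambda)+\log\log n)$ running time with probability $1-1/n$.
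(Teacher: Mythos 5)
Your overall strategy is the one the paper uses: run $\log n$ independent parallel instances of Step~\ref{alg:interweave:s2} and Step~\ref{alg:interweave:s3} of \textsc{Interweave} (and of the Theorem~\ref{thm:ltz_main} call inside \textsc{Remain}), halt instances that exceed the promised time, keep one, and union bound over the $O(\log\log n)$ phases. However, there is a genuine gap in your selection rule. You insist that ``the decision of which copy to keep depends only on running-time measurements,'' but running time is not the only event that fails with probability $1/\poly(\log n)$: the upgrades of Lemma~\ref{lem:remain_correct} and Lemma~\ref{lem:spectral_gap_b_i} need the \emph{chosen} instance to satisfy two detectable correctness predicates, namely (i) that Step~\ref{alg:interweave:s3} acts as a contraction algorithm on $H_1$ (every vertex of $V(H_1)$ is a root or a child of a root), which \textsc{Remain} relies on, and (ii) in the phase $i^\star$ with $\lambda \ge b^{-0.1}$, that all edges of $H_1$ are loops, which is what triggers termination in Step~\ref{alg:interweave:s4}. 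If you pick an arbitrary copy that merely finished within its step budget, that copy may have had unlucky algorithmic randomness (e.g., its \textsc{Expand-Maxlink} path-shrinking failed), so termination is not detected in phase $i^\star$ except with probability $1-1/\poly(\log n)$; this can repeat across phases, and you can no longer conclude the $O(\log(1/\lambda)+\log\log n)$ bound with probability $1-1/n$. The paper's proof explicitly tests these predicates on each instance (checking flatness of the trees, and checking in $O(1)$ time with $|E(H_1)|$ processors whether all edges of $H_1$ are loops) and discards instances that fail, so that with probability $1-n^{-\Omega(1)}$ at least one surviving instance both meets the time bound and satisfies the predicates.

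Relatedly, your stated reason for restricting to time-only selection --- protecting the unbiasedness of $H_1$ (footnote~\ref{fn:H_1_observation}) --- is misplaced. All $\log n$ copies are copies of the \emph{same} sampled graph $H_1$; the edge-sampling randomness that makes $H_1$ an unbiased subgraph of $\overline{G'}$ is fixed once in Step~\ref{alg:connectivity:s3} of \textsc{Connectivity}, and only the algorithmic randomness (hashing, leader sampling, level increments) differs across copies. Selecting a copy by its outcome therefore conditions only on algorithmic randomness, not on the randomness generating $H_1$, so the Karger sampling-lemma argument in \textsc{Remain} and the spectral-gap preservation in Lemma~\ref{lem:spectral_gap_b_i} are unaffected. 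Once you replace ``time-only'' selection with selection on the detectable correctness predicates as well, your argument coincides with the paper's proof.
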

\begin{proof}
    In the proofs of Lemma~\ref{lem:connectivity_time_correct_final} and Lemma~\ref{lem:connectivity_total_work_high_prob}, we already showed that Step~\ref{alg:connectivity:s1}, Step~\ref{alg:connectivity:s2}, Step~\ref{alg:connectivity:s3}, Step~\ref{alg:connectivity:s4}, and Step~\ref{alg:connectivity:s6} can be implemented to take $O(\log\log n)$ time w.p. $1 - 2/n^{-2}$. 
    It remains to prove for Step~\ref{alg:connectivity:s5} of \textsc{Connectivity}$(G)$.
    
    In the proof of Lemma~\ref{lem:each_phase_running_time}, we showed that Step~\ref{alg:interweave:s2} and Step~\ref{alg:interweave:s3} of \textsc{Interweave}$(G', H_1, H_2, E_{\text{filter}}, i)$ take $O(\log b)$ time w.p. $1 - 1/(\log n)^2$. 
    In the proof of Lemma~\ref{lem:connectivity_total_work_high_prob}, we run $\log n$ parallel instances of Step~\ref{alg:interweave:s2} and Step~\ref{alg:interweave:s3} of \textsc{Interweave}$(G', H_1, H_2, E_{\text{filter}}, i)$. 
    So we can halt the execution of an instance if it reaches $c \log n$ running time where $c > 0$ is an absolute constant such that Step~\ref{alg:interweave:s2} and Step~\ref{alg:interweave:s3} take at most $c \log b$ time. 
    Therefore, the statement in Lemma~\ref{lem:each_phase_running_time} holds w.p. $1 - n^{-2}$.
    
    In the proof of Lemma~\ref{lem:remain_correct}, we used the fact that w.p. $1 - 1/(\log n)^8$, Step~\ref{alg:interweave:s3} of the algorithm \textsc{Interweave}$(G', H_1, H_2, E_{\text{filter}}, i)$ is a contraction algorithm on $H_1$. 
    One can detect whether this step is a contraction algorithm on $H_1$ by checking whether all vertices in $V(H_1)$ are either roots or children of roots. 
    In the proof of Lemma~\ref{lem:connectivity_total_work_high_prob}, we run $\log n$ parallel instances of Step~\ref{alg:interweave:s2} and Step~\ref{alg:interweave:s3} of \textsc{Interweave}$(G', H_1, H_2, E_{\text{filter}}, i)$. 
    So we can discard an instance at the end of the execution if it is not a contraction algorithm on $H_1$. 
    Therefore, Step~\ref{alg:interweave:s3} is a contraction algorithm on $H_1$ w.p. $1 - n^{-9}$. 
    In \textsc{Remain}$(G', H_1)$ in Step~\ref{alg:interweave:s4}, its first $3$ steps take $O(\log^* n)$ time w.p. $1 - n^{-9}$. 
    In the proof of Lemma~\ref{lem:connectivity_total_work_high_prob}, 
    we run $\log n$ instances of the algorithm in Theorem~\ref{thm:ltz_main} in parallel.  
    So we halt the execution of the instance if it reaches $c (\log d + \log\log n)$ time where $c$ is an absolute constant, which happens w.p. at most $1/(\log n)^9$. 
    As a result, the statement in Lemma~\ref{lem:remain_correct} holds w.p. $1 - n^{-7}$.
    
    In the proof of Lemma~\ref{lem:spectral_gap_b_i}, we proved that in the last phase $i$, w.p. $1 - 1/(\log n)^4$, all edges of $H_1$ are loops at the end of Step~\ref{alg:interweave:s3} of \textsc{Interweave}$(G', H_1, H_2, E_{\text{filter}}, i)$, which can be detected using $|E(H_1)| \le m / (\log n)^6$ processors in $O(1)$ time w.p. $1 - n^{-8}$. 
    In the proof of Lemma~\ref{lem:connectivity_total_work_high_prob}, we run $\log n$ parallel instances of Step~\ref{alg:interweave:s2} and Step~\ref{alg:interweave:s3}, so w.p. $1 - n^{-9}$, at least one of the instance at the end of its execution satisfies that all edges of $H_1$ are loops and we choose this one to continue the execution of \textsc{Interweave}$(G', H_1, H_2, E_{\text{filter}}, i)$ and discard the others. 
    Therefore, the statement in Lemma~\ref{lem:spectral_gap_b_i} holds w.p. $1 - n^{-7}$. 
    
    Since Lemma~\ref{lem:each_phase_running_time}, Lemma~\ref{lem:remain_correct}, and Lemma~\ref{lem:spectral_gap_b_i} hold w.p. $1 - 1/n$, the rest of the proof is the same as the proof of Lemma~\ref{lem:connectivity_time_correct_final}.
\end{proof}

By Lemma~\ref{lem:connectivity_total_work_high_prob} and Lemma~\ref{lem:connectivity_time_correct_final_high_prob}, Theorem~\ref{thm:our_main_result} follows immediately.

\appendix

\section{Conditional Lower Bound} \label{sec:lowerbound}

In this section, we show that our time bound is optimal when the input graph has component-wise spectral gap at most $1/\poly(\log(n))$ conditioning on the widely believed 
\textsc{2-Cycle} Conjecture (see below). 
The approach is similar to that of \cite{DBLP:conf/focs/BehnezhadDELM19}, who gives an $\Omega(\log d)$ conditional lower bound, where $d \ge \poly(\log(n))$ is the diameter.

\begin{conjecture}[\textsc{2-Cycle} Conjecture \cite{DBLP:journals/jacm/RoughgardenVW18,DBLP:conf/podc/AssadiSW19,DBLP:conf/focs/Andoni,DBLP:conf/focs/BehnezhadDELM19}]
    Any MPC algorithm that uses $n^{1 - \Omega(1)}$ space per processor requries $\Omega(\log n)$ MPC rounds to distinguish between one cycle of size $n$ from two cycles of size $n/2$ w.h.p.
\end{conjecture}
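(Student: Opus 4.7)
The plan is to attack the conjecture via a round-elimination / graph-automorphism indistinguishability argument. First I would fix a hard distribution $\mathcal{D}$: with probability $1/2$ sample a uniformly random Hamiltonian cycle on $[n]$, and with probability $1/2$ sample two uniformly random disjoint Hamiltonian cycles on a uniformly random balanced bipartition of $[n]$. Because every vertex has degree exactly $2$ in both cases, and edges are presented via a random labeling, the local $r$-neighborhood of any vertex in either world is distributed as a path of length $2r+1$ with uniformly relabeled vertices; the two worlds are therefore locally indistinguishable up to radius $\Theta(n)$. The target is to lift this local indistinguishability from the combinatorial graph to the MPC computation graph.

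Second, I would set up a formal ``view'' potential. For each processor $p$ at the end of round $t$, define $V_t(p)$ to be the (input) edges whose identity has provably influenced the content of $p$'s memory, where ``influence'' is tracked through the random source of the algorithm conditioned on. I expect to prove a doubling lemma: if $s = n^{1-\varepsilon}$ is the space per processor, then $|V_t(p)| \le s^{O(t)}$, since in one round a processor can absorb at most $s$ new edges from its incoming messages and each such edge itself depends on at most $|V_{t-1}(\cdot)|$ earlier edges. Choosing $t = c\log n$ with $c$ a sufficiently small constant then gives $|V_t(p)| \le n^{1-\Omega(1)}$, so for all but a $o(1)$ fraction of vertices $v$, no processor has absorbed the entire connected component of $v$. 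I would then argue, using the automorphism symmetry of the hard distribution, that such a processor's output distribution on input ``one cycle'' and ``two cycles'' has total variation $o(1)$, and a standard Yao-style argument converts this into the desired $\Omega(\log n)$ bound against randomized protocols that succeed w.h.p.

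The third step, which is where I expect the approach to genuinely work or fail, is the \emph{copy lemma}: showing that for any fixed deterministic MPC protocol $\Pi$ and any set $S$ of edges of size $\ll n$, one can couple the execution of $\Pi$ on the two distributions so that the joint transcripts agree everywhere except on processors whose view meets $S$. Intuitively this is the statement that a short prefix of edges cannot reveal the global ``cycle count'' because cutting and re-gluing a cycle outside $S$ produces an indistinguishable graph. Formalizing this requires handling concurrent writes and the adaptive choice of which processor queries which edge, and I would follow the approach of Roughgarden--Vassilvitskii--Wang and of Andoni et al.\ by phrasing the round as a bounded-degree communication graph and applying a hypergraph edge-isoperimetric inequality to control how fast influence can spread.

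The main obstacle is precisely this last step: showing that adaptive routing cannot amplify information faster than the ``$s$-fan-in per round'' bound would naively suggest, once we quantify over worst-case protocols. All existing conditional lower bounds for connectivity in MPC leave this as an assumption exactly because any unconditional proof would separate $\mathrm{NC}^1$ from $\mathrm{L}$ (as observed by Roughgarden--Vassilvitskii--Wang), so I would not expect the plan above to close without either (i) restricting to a structured subclass of MPC protocols (e.g., oblivious, or linear-sketching), in which case the copy lemma becomes provable and we obtain an unconditional lower bound in that model, or (ii) assuming an additional cryptographic / circuit-lower-bound hypothesis that rules out the hypothetical fast protocol. A realistic write-up would therefore prove the conjecture conditionally in such a restricted model and present the doubling lemma plus local indistinguishability as the unconditional contribution.
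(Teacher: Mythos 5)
There is a basic mismatch here: the statement you were asked to prove is the \textsc{2-Cycle} Conjecture, which the paper does not prove and does not claim to prove. It is stated as a conjecture and used only as a hypothesis for the conditional lower bound in Appendix~A (Theorem~\ref{thm:lower_bound}), where the paper shows that \emph{if} the conjecture holds, then any PRAM/MPC algorithm with the stated resources needs $\Omega(\log(1/\lambda))$ time on graphs with small component-wise spectral gap. So there is no proof in the paper to compare your argument against, and your write-up --- to its credit --- essentially concedes the same thing: your third step (the ``copy lemma'' controlling how adaptively routed information can spread across processors) \emph{is} the open content of the conjecture, and you correctly note that an unconditional proof would entail complexity separations (\`a la Roughgarden--Vassilvitskii--Wang) that nobody knows how to obtain. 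A plan whose pivotal lemma is acknowledged to be unprovable with current techniques is not a proof, conditional framing notwithstanding.

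Beyond that structural issue, the quantitative core of your ``doubling lemma'' does not work as stated. With $s = n^{1-\varepsilon}$ space per processor, the bound $|V_t(p)| \le s^{O(t)}$ evaluated at $t = c\log n$ is $n^{(1-\varepsilon)\,O(c\log n)} = n^{\Theta(\log n)}$, which is superpolynomial in $n$, not $n^{1-\Omega(1)}$; indeed $s^{O(t)} < n$ already forces $t = O\left(\frac{1}{1-\varepsilon}\right) = O(1)$, so this accounting could at best yield a constant-round lower bound, not $\Omega(\log n)$. The folklore heuristic behind the conjecture is of a different shape (in $t$ rounds one can only ``resolve'' path segments of length roughly exponential in $t$, so $\Theta(\log n)$ rounds are needed to tell one $n$-cycle from two $n/2$-cycles), and making that heuristic rigorous against arbitrary adaptive MPC protocols is precisely what remains open. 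If your goal is the paper's actual contribution on the lower-bound side, the right target is the reduction in Appendix~A: assuming the conjecture, repeatedly subsample edges with probability $\lambda'^{1/4}$, argue the sampled graph is a union of short paths/cycles (hence has spectral gap at least $\lambda'$), apply the hypothesized fast connectivity algorithm, contract, and iterate; too-fast a connectivity algorithm would then distinguish one cycle from two in $o(\log n)$ rounds, contradicting the conjecture.
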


For simplicity, we state the theorem in the MPC model so that our proof can be based on the MPC model. 
The conditional lower bound easily holds in any PRAM model with $O(m+n)$ total memory by a simple simulation \cite{DBLP:conf/podc/AssadiSW19, DBLP:conf/focs/Andoni}: if there exists a PRAM algorithm that distinguishes between one cycle from two cycles in $o(\log n)$ time, then an MPC with $n^{1 - \Omega(1)}$ space per processor can simulate this PRAM algorithm with $O(1)$ overhead, which violates the conjecture.

\begin{theorem} \label{thm:lower_bound}
    If the \textsc{2-Cycle} Conjecture is true, then there exists a constant $c > 20$ such that any MPC algorithm with $n^{1 - \Omega(1)}$ space per processor that w.h.p. computes connected components of any given $n$-vertex graph with component-wise spectral gap $\lambda'$ requires $\Omega(\log(1/\lambda'))$ rounds, where $\lambda' \le \log^{-c} n$ is a fixed parameter.
\end{theorem}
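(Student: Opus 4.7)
The plan is to adapt the 2-Cycle reduction of BDELM by parametrizing the hard instances by their spectral gap rather than by their diameter. The key fact I would exploit is that the cycle $C_L$ has normalized Laplacian spectral gap $\lambda(C_L) = 1 - \cos(2\pi/L) = \Theta(1/L^2)$. Thus, given a target $\lambda' \le \log^{-c} n$, choosing $L = \lceil c_0 / \sqrt{\lambda'} \rceil$ for an appropriate absolute constant $c_0$ produces a cycle with spectral gap $\Theta(\lambda')$ and forces $L \ge \log^{c/2} n$.

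First I would construct the family of hard instances. For each $n$ and the chosen $L$, let $\mathcal{F}_{n,L}$ denote the family of $n$-vertex graphs obtained by taking $\lfloor n/L \rfloor$ disjoint ``blocks'' of size $L$ (plus a small fixed padding component of $n \bmod L$ vertices), where each block independently is either a single $L$-cycle or two disjoint $L/2$-cycles. Every graph in $\mathcal{F}_{n,L}$ has component-wise spectral gap $\Theta(1/L^2) = \Theta(\lambda')$, which by a slight adjustment of $L$ can be made $\le \log^{-c} n$ exactly. Any connected-components algorithm applied to such a graph must, block-by-block, distinguish a single $L$-cycle from two $L/2$-cycles.

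Next I would run the standard reduction from the $L$-vertex 2-Cycle problem. Assume for contradiction that some MPC algorithm $A$ using $n^{1-\Omega(1)}$ space per processor correctly computes connected components of every $G \in \mathcal{F}_{n,L}$ in $T = o(\log(1/\lambda'))$ rounds w.h.p.; since $\log(1/\lambda') = \Theta(\log L)$, this means $T = o(\log L)$. Given any $L$-vertex 2-Cycle instance $H$, define an algorithm $B$ that embeds $H$ into one designated block of some $G \in \mathcal{F}_{n,L}$, fills the remaining blocks with known single $L$-cycles, simulates $A$ on $G$, and then reads out from the labeling whether the designated block has one or two components. Then $B$ solves the $L$-vertex 2-Cycle problem in $T = o(\log L)$ rounds.

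The main obstacle will be verifying that $B$ falls within the regime to which the 2-Cycle Conjecture applies on $L$-vertex inputs. A direct simulation endows $B$ with space-per-processor bound $n^{1-\Omega(1)}$ rather than the $L^{1-\Omega(1)}$ that the conjecture is phrased for, and uses total memory/processors polynomial in $n$. This is handled by the standard robust formulation of the 2-Cycle Conjecture (as used in the references cited alongside it in the paper): the conjectured $\Omega(\log L)$ lower bound holds even for MPC algorithms that use polynomial-in-$L$ total memory and per-processor space $L^\beta$ for any constant $\beta < 1$. The role of the constraint $c > 20$ is to make the exponent arithmetic go through: a large enough $c$ ensures that the embedding overhead from the $L$-vertex input to the $n$-vertex ambient graph can be absorbed into the permissible $L^\beta$ space-per-processor budget (either by restricting attention to $\lambda'$ for which $n$ and $L$ are polynomially related, or by distributing each simulated $n$-processor's state across polynomially many $L$-processors). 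Once this is in place, the existence of $B$ contradicts the conjecture, yielding $T = \Omega(\log L) = \Omega(\log(1/\lambda'))$ and completing the proof.
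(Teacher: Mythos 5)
There is a genuine gap in your reduction, and it sits exactly at the point you flag as the ``main obstacle.'' Your plan derives hardness from an $L$-vertex \textsc{2-Cycle} instance with $L = \Theta(1/\sqrt{\lambda'})$ embedded as one block of an $n$-vertex graph. But for the parameter range the theorem must cover, $\lambda'$ can be as large as $\log^{-c} n$, so $L$ can be as small as $\poly(\log n)$, while the assumed algorithm has per-processor space $n^{1-\Omega(1)} \gg L$. In that regime the conjecture gives no hardness at all: a single machine can hold the entire designated block and resolve it in $O(1)$ rounds, so no contradiction can be extracted. The ``robust formulation'' you invoke does not rescue this, because the simulated algorithm $B$ uses total memory $\poly(n)$, which is super-polynomial in $L$ when $L = \poly(\log n)$, and per-processor space $n^{1-\Omega(1)}$, which is not $L^{\beta}$ for any constant $\beta < 1$; no version of the \textsc{2-Cycle} Conjecture cited in the paper applies to such machines. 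Restricting to $\lambda'$ with $L$ polynomially related to $n$ both changes the theorem (it would only cover $\lambda' \le n^{-\Omega(1)}$) and still requires $n^{1-\Omega(1)} \le L^{1-\Omega(1)}$, i.e.\ $L = n^{1-o(1)}$, so the embedding route essentially only proves the case $\lambda' \approx 1/n^{2}$. The role of $c>20$ in the paper is not exponent bookkeeping for an embedding; it is what makes the subsampled pieces short enough for a Cheeger-type bound (see below).

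The paper takes a different, and crucially size-preserving, route: assume the $o(\log(1/\lambda'))$-round algorithm exists and use it to solve the \emph{full} $n$-vertex \textsc{2-Cycle} instance. In each of $R = O(\log n / \log(1/\lambda'))$ rounds, each edge of the current graph (always a union of paths and cycles) is deleted independently with probability $\lambda'^{1/4}$; with high probability every surviving piece has at most $\lambda'^{-1/3}$ vertices, so by Cheeger's inequality the subsampled graph has component-wise spectral gap at least $\lambda'$ (this is where $\lambda' \le \log^{-c} n$ with $c>20$ is used), the assumed algorithm is run on it, its components are contracted, and the deleted edges are restored. After $R$ rounds only $n^{o(1)}$ edges remain and the instance is finished on one machine, giving $R \cdot o(\log(1/\lambda')) = o(\log n)$ total rounds and contradicting the conjecture at its native instance size $n$, with no rescaling of the space-per-processor bound needed. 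If you want to salvage your write-up, you would need to replace the block-embedding step by such an iterative sparsify--solve--contract loop (or otherwise keep the hard instance at size $n$).
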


For the contradiction, we assume that there exists an MPC algorithm \textsf{ALG} with $n^{1 - \Omega(1)}$ space per processor that w.h.p. computes connected components of any given $n$-vertex graph with component-wise spectral gap $\lambda'$ in $o(\log(1/\lambda'))$ rounds, where $\lambda'$ is given in Theorem~\ref{thm:lower_bound}. 

Given a graph $G$ which consists a cycle of size $n$ or two cycles of size $n/2$. 
The algorithm proceeds in rounds. 
In each round, delete each edge of $G$ w.p. $\lambda'^{1/4}$ independently to get graph $G'$. 
Compute the connected components of $G'$ by \textsf{ALG}.
Contract each connected components of $G'$ to a single vertex in $G$ and add the edges deleted in this round to $G$, completing this round. 
Repeat for $R$ rounds until the number of edges in $G$ is $n^{o(1)}$ such that $G$ can be stored in one processor and solved where $R$ is a parameter to be tuned. 
We need the following claim.
\begin{lemma}\label{lem:lb_path_cycle}
    The graph $G$ at the beginning of each round and subgraph $G'$ during each round is a collection of paths and cycles.
\end{lemma}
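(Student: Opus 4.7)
The plan is to prove the lemma by induction on the round number, with a joint statement that simultaneously handles both $G$ at the beginning of each round and the intermediate $G'$ constructed during that round. The base case is immediate: at the start of round $0$, the graph $G$ is either one $n$-cycle or two $(n/2)$-cycles, so it is a collection of cycles (and hence of paths and cycles). For $G'$ in any round, since $G'$ is obtained from $G$ just by deleting edges, and the subgraph of a path or cycle by edge-deletion is always a disjoint union of paths (together with the full cycle if no edge of it is deleted), the conclusion for $G'$ follows for free from the conclusion for $G$ at the start of the round.

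The inductive step for $G$ is the one that needs care. I would fix the start-of-round graph $G$ (a collection of paths and cycles by the hypothesis) and argue component by component, since the round's operations respect the connected components of $G$ (the deleted edges are internal to components of $G$, and contracting components of $G'\subseteq G$ cannot merge distinct components of $G$). For a path component $P=v_0v_1\cdots v_\ell$ of $G$, the deleted edges partition $P$ into consecutive subpaths $P_1,\dots,P_k$; each $P_i$ contracts to a single vertex $u_i$, and re-inserting the $k-1$ deleted edges gives exactly the new path $u_1u_2\cdots u_k$. For a cycle component $C$, either no edge of $C$ is deleted (in which case $C$ becomes a single contracted vertex with no incident edges, a trivial path), or exactly $k\ge 1$ edges of $C$ are deleted; then $C\cap G'$ is a disjoint union of $k$ subpaths $P_1,\dots,P_k$ (arranged cyclically around $C$), each contracting to some $u_i$, and reinserting the $k$ deleted edges produces the multigraph cycle $u_1u_2\cdots u_ku_1$. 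Degenerate cases $k=1$ (a self-loop at $u_1$) and $k=2$ (two parallel edges between $u_1$ and $u_2$) are still ``cycles'' in the sense the lemma requires, since the preliminaries permit self-loops and parallel edges.

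Putting these component-wise conclusions together, the graph $G$ at the start of the next round is again a disjoint union of paths and cycles, closing the induction. The only minor subtlety, which I would be explicit about, is that contraction is performed only on the connected components of $G'$ (not on arbitrary vertex subsets), so a path component of $G$ cannot accidentally be closed into a cycle by re-insertion: the edges of $G$ are exactly the original edges of the path component, and those get reassembled into a path in the stated order. I would state this once and then invoke it in both the path and cycle cases. No step is really an obstacle; the main thing to get right is the careful bookkeeping of which edges are reinserted where, and the explicit acknowledgement that self-loops and length-$2$ multi-edge cycles arising from over-contracted cycle components are permitted by the claim.
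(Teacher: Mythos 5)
Your proof is correct and follows the same inductive structure as the paper's: base case on the initial one/two-cycle graph, observe $G'$ is an edge-deletion subgraph hence a collection of paths and cycles, then argue the contraction-plus-reinsertion step preserves the property. The paper simply asserts the last step, while you spell out the component-wise bookkeeping (including the degenerate self-loop and parallel-edge cycles); this is a more complete write-up of the same argument, not a different route.
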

\begin{proof}
    The statement is trivially true at the beginning of the algorithm. 
    By an induction, assuming the graph $G$ at the beginning of each round is a collection of paths and cycles. 
    During this round, $G'$ is a subgraph of $G$ after edge deletions, which is still a collection of paths and cycles (singleton vertex is considered as a path or a cycle of size $1$, where \emph{size} is defined as the number of vertices). 
    After contracting connected components of $G'$, the resulting $G$ is also a collection of paths and cycles, completing the induction.
\end{proof}
\begin{lemma} \label{lem:lb_round}
    Each round of the algorithm takes $o(\log(1/\lambda'))$ MPC rounds w.h.p.
\end{lemma}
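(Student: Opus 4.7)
The plan is to show that in each round the graph $G'$ on which $\textsf{ALG}$ is invoked satisfies the required component-wise spectral gap condition, so that the dominant cost of the round—the call to $\textsf{ALG}$—completes in $o(\log(1/\lambda'))$ MPC rounds. The remaining per-round operations (independent edge sampling, contraction, and re-insertion of deleted edges) are all local and cost $O(1)$ MPC rounds.

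By Lemma~\ref{lem:lb_path_cycle}, $G'$ is a disjoint union of paths and cycles. First I would bound the size of each such component. Since each edge of $G$ is independently retained with probability $1 - \lambda'^{1/4}$, the probability that a particular path or cycle in $G'$ reaches at least $L$ edges emanating from a fixed starting point is at most $(1 - \lambda'^{1/4})^{L} \leq e^{-\lambda'^{1/4} L}$. Choosing $L_{\max} \coloneqq C \lambda'^{-1/4} \log n$ with a sufficiently large constant $C$ and union-bounding over the at most $n$ possible starting edges shows that every component of $G'$ has at most $L_{\max}$ vertices with high probability in $n$.

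Next I would lower-bound the component-wise spectral gap of $G'$ using a standard spectral computation for the normalized Laplacian of a path or cycle: both $\lambda(P_\ell)$ and $\lambda(C_\ell)$ are $\Theta(1/\ell^2)$ (the eigenvalues of these normalized Laplacians are explicit cosine expressions). Combined with the size bound, every component of $G'$ has spectral gap at least $\Omega(L_{\max}^{-2}) = \Omega(\lambda'^{1/2}/\log^2 n)$. Using $\lambda' \leq \log^{-c} n$ with $c > 20$, we have $\log^2 n \leq \lambda'^{-2/c}$, whence $\lambda'^{1/2}/\log^2 n \geq \lambda'^{\,1/2 + 2/c} \geq \lambda'$ since $c > 4$. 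Thus the component-wise spectral gap of $G'$ is at least $\lambda'$ with high probability. Since $|V(G')| \leq n$, applying $\textsf{ALG}$ to $G'$ completes in $o(\log(1/\lambda'))$ MPC rounds with high probability, giving the lemma.

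The main obstacle, conceptually, is calibrating the sampling exponent $1/4$: it must be small enough that the resulting components are short enough to force spectral gap $\geq \lambda'$ (which needs $\lambda'^{1/4} \cdot 2 \leq 1$ in the exponent arithmetic above), yet large enough that a constant fraction of edges are deleted per round so that the outer algorithm, whose progress and termination are analyzed in subsequent lemmas to derive the contradiction with the \textsc{2-Cycle} Conjecture, makes sufficient shrinkage per round. The constraint $c > 20$ provides comfortable slack, so that the spectral-gap inequality and the eventual round-count bound both hold with room to spare, and the per-round failure probabilities (from both the component-size estimate and the correctness of $\textsf{ALG}$) remain $1/\poly(n)$.
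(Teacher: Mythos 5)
Your proposal is correct and takes essentially the same route as the paper: the sampling and contraction steps cost $O(1)$ MPC rounds, the independent deletions force every path/cycle component of $G'$ to be short w.h.p., this yields component-wise spectral gap at least $\lambda'$ (using $\lambda' \le (\log n)^{-c}$ with $c>20$), and then the assumed guarantee of \textsf{ALG} gives $o(\log(1/\lambda'))$ rounds. The only cosmetic differences are that you invoke the explicit $\Theta(1/\ell^2)$ spectral gap of paths and cycles where the paper lower-bounds the conductance of each component and applies Cheeger's inequality, and you omit the paper's footnote device of appending a dummy component with spectral gap exactly $\lambda'$ so that \textsf{ALG} is formally applicable even when the gap of $G'$ exceeds $\lambda'$.
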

\begin{proof}
    The edge deletion at the beginning of the round and the contraction at the end of the round can be implemented in $O(1)$ MPC rounds (e.g., see \cite{DBLP:conf/focs/Andoni}). 
    It remains to show that the graph $G'$ ($G$ after edge deletions) has spectral gap at least $\lambda'$ w.h.p. at the beginning of each round.\footnote{If graph $G'$ has component-wise spectral gap greater than $\lambda'$, the algorithm creates a \emph{dummy} component of spectral gap $\lambda'$ and adds it into $G'$ such that $G'$ has component-wise spectral gap exactly $\lambda'$ and thus \textsf{ALG} is applicable on $G'$. The dummy component has no edge with the original $G'$ so does not influence the computation of other components. Moreover, the dummy component has at most $m$ edges (e.g. a path of fixed length) and can be created and deleted at the end of this round in $O(1)$ MPC rounds.}
    
    By Lemma~\ref{lem:lb_path_cycle}, $G'$ is a collection of paths and cycles. 
    In the following, we show that w.h.p. the size of each path and cycle is at most $\lambda'^{-1/3}$, then a union bound gives the result since there are at most $n$ such paths and cycles.
    Fix a path/cycle of size $\lambda'^{-1/3} + 1$, the number of edges in it is at least $\lambda'^{-1/3}$. 
    The probability that none of these edges gets deleted in this round is at most
    $$ \left(1 - \lambda'^{1/4} \right)^{\lambda'^{-1/3}} \le \left(1 - \lambda'^{1/4} \right)^{\lambda'^{-1/4} \cdot 5 \log n} \le n^{-3} $$
    by $\lambda' \le \log^{-c} n$ and $c > 20$. 
    By a union bound, all paths and cycles in $G'$ have size at most $\lambda'^{-1/3}$ w.h.p.
    By Definition~\ref{def:conductance} and the fact that $G'$ is a collection of paths and cycles (Lemma~\ref{lem:lb_path_cycle}), the conductance $\phi_{G'}$ of $G'$ is at least $2 \lambda'^{1/3}$. 
    By Cheeger's inequality, w.h.p. the spectral gap of $G'$ is at least ${\phi_{G'}}^2 / 2 \ge \lambda'^{2/3} \ge \lambda'$.
    
    Finally, by our assumption, \textsf{ALG} computes the connected components of $G'$ in $o(\log(1/\lambda'))$ rounds w.h.p., giving the lemma.
\end{proof}

\begin{lemma}\label{lem:lb_shrink}
    If $G$ has $m$ edges at the beginning of a round, then w.h.p. $G$ has at most $\max\{3^{\sqrt{\log n}}, m \lambda'^{1/5}\}$ edges at the end of that round.
\end{lemma}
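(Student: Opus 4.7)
My plan is to bound the number of new edges in $G$ by the number of edges deleted in the round. Since $G$ is a disjoint union of paths and cycles (Lemma~\ref{lem:lb_path_cycle}), and new $G$ is obtained by contracting each connected component of $G'$ to a single vertex, I will argue that any non-loop edge of new $G$ corresponds to a distinct deleted edge. Indeed, an edge $e = (u,v)$ of $G$ becomes a non-loop in new $G$ iff $u$ and $v$ lie in different components of $G'$; if $e$ were kept in $G'$ this is impossible, so $e$ must have been deleted. (On a path each deleted edge is automatically inter-component; on a cycle with $d\ge 2$ deletions all deleted edges are inter-component; the remaining cases contribute only self-loops.) Thus it suffices to control $X$, the number of deleted edges, which is $\mathrm{Bin}(m,\lambda'^{1/4})$ with mean $\mu = m\lambda'^{1/4}$.

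Next I will split into two regimes based on which term of the max dominates. In the regime $m\lambda'^{1/5} \ge 3^{\sqrt{\log n}}$, I want $X \le m\lambda'^{1/5} = \mu\cdot\lambda'^{-1/20}$. Setting $\delta = \lambda'^{-1/20} - 1$ and using $\lambda' \le \log^{-c} n$ with $c > 20$, we have $\delta \ge \log^{c/20} n - 1 \gg 1$, so the multiplicative Chernoff bound in the large-deviation form gives
\[
\Pr\!\left[X \ge (1+\delta)\mu\right] \le \exp\!\left(-\delta\mu/3\right) \le \exp\!\left(-m\lambda'^{1/5}/6\right) \le \exp\!\left(-3^{\sqrt{\log n}}/6\right),
\]
which is $n^{-\omega(1)}$ because $3^{\sqrt{\log n}} = \omega(\log n)$.

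In the complementary regime $m\lambda'^{1/5} < 3^{\sqrt{\log n}}$, I want $X \le k := 3^{\sqrt{\log n}}$. Here $\mu < 3^{\sqrt{\log n}} \lambda'^{1/20}$, so $e\mu/k \le e\lambda'^{1/20} \le e\log^{-c/20} n \le 1/2$ for large $n$ (using $c>20$). Then the standard binomial upper tail
\[
\Pr[X\ge k] \le \left(\frac{e\mu}{k}\right)^{k} \le \left(\tfrac{1}{2}\right)^{3^{\sqrt{\log n}}} = n^{-\omega(1)},
\]
finishes this case.

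Combining the two regimes by a union bound and using $X$ as an upper bound on the non-loop edge count of new $G$ yields the claimed bound with high probability. The only subtle step is the edge-counting reduction in the first paragraph — identifying that the contraction step (as used by the algorithm, which discards self-loops each round exactly so that the graph shrinks) only produces as many edges as were actually deleted, so that a one-sided Chernoff bound on $X$ suffices; the rest is a routine two-case tail calculation tuned to the choice $c>20$ that controls both $\lambda'^{-1/20}$ and $\lambda'^{1/20}$.
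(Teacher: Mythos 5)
Your proof is correct and takes essentially the same route as the paper: the edges of $G$ after the round are exactly the deleted edges, a $\mathrm{Bin}(m,\lambda'^{1/4})$ quantity, and a Chernoff-type tail bound with a case split at the $3^{\sqrt{\log n}}$ threshold yields the claim. The only difference is cosmetic — the paper disposes of the small case by simply noting the lemma holds outright whenever the realized number of deleted edges is at most $3^{\sqrt{\log n}}$, whereas you split deterministically on whether $m\lambda'^{1/5} \ge 3^{\sqrt{\log n}}$ and add a (valid but not strictly necessary) $\left(e\mu/k\right)^{k}$ tail estimate in the small-mean regime, which is arguably a slightly cleaner bookkeeping of the same argument.
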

\begin{proof}
    Observe that after \textsf{ALG} contracts all edges of $G'$ in this round, the resulting edges in $G$ are exactly those deleted edges in this round. 
    The expected number of deleted edges is $m\lambda'^{1/4}$. 
    If the number of deleted edges is at most $3^{\sqrt{\log n}}$, then the lemma holds.
    Otherwise, by a Chernoff bound, the number of deleted edges is at most $2m \lambda'^{1/4}$ w.p. at least
    $$1 - \exp\left( -m\lambda'^{1/4} / 3\right) \ge 1 - \exp\left(- 3^{\sqrt{\log n}} / 6 \right) \ge 1 - n^{-9}, $$
    where the second inequality follows from the fact that $2m \lambda'^{1/4} \ge 3^{\sqrt{\log n}}$. 
    The lemma follows from $2m \lambda'^{1/4} \le m \lambda'^{1/5}$.
\end{proof}

Now set parameter $R = -10 \log n / \log \lambda'$. 
(Recall that $R$ is the number of rounds of the algorithm.)
Initially the graph $G$ has at most $n^2$ edges. 
Applying Lemma~\ref{lem:lb_shrink} for $R$ times, we have that the number of edges at the end of round $R$ is at most
$$ 3^{\sqrt{\log n}} +  n^2 \cdot (\lambda'^{1/5})^R \le 3^{\sqrt{\log n}} +  n^2 \cdot \lambda'^{-2 \log n / \log \lambda'} \le 3^{\sqrt{\log n}} + 1 = n^{o(1)} $$
w.h.p. by a union bound over all $R \le \poly\log(n)$ rounds. (We use the fact that the spectral gap $\lambda' \ge 1/\poly(n)$.)

By Lemma~\ref{lem:lb_round}, the total number of rounds are at most 
$$R \cdot o(\log(1/\lambda')) = o(\log(1/\lambda') \cdot (-10 \log n / \log \lambda')) = o(\log n) $$
w.h.p., giving an MPC algorithm that distinguishes one cycle of size $n$ with two cycles of size $n/2$ in $o(\log n)$ rounds, a contradiction.

\section{Edge Sampling Does Not Preserve Diameter} \label{sec:diameter}


In this section we show that edge sampling does not preserve the diameter of graphs, which partially explains why an $O(\log d)$-time, linear-work PRAM algorithm is hard to obtain. 

We provide an undirected graph $G$ with a diameter of $O(\log n)$ such that if we sample every edge in $G$ with probability of $p=\frac{1}{\log n}$, the largest diameter in the connected components of $G$ would be at least $\frac{n}{\poly (\log n)}$ where $n$ is the number of vertices in the graph. 
This shows that edge sampling can dramatically increase the diameter of a graph. 

We construct the graph $G=(V,E)$ as follows. The vertex set of $G$ consists of  $V=(\{x\} \cup P \cup Z)$. Here, $Z$  consists of the vertices $\{z_1, z_2,  \cdots, z_k\}$ where $k$ is a parameter determined later. Also, $P= P_1 \cup P_2 \cup \cdots \cup P_k$ where each $P_i$ is a vertex-disjoint path of length $\log n$ between $x$ and $z_i$.  In the graph $G$ we further add an edge between any two vertices $z_i$ and $z_j$ such that $|i-j| \le (\log n)^3$. 
Note that in the mentioned construction, we have $n=|V|=1+k + k \log n$. This implies $k= \frac{n-1}{\log n +1}$.

It is clear that the graph $G$ has a diameter of at most $2 \log n$ since every vertex is within distance of at most $\log n$ from $x$. Now let $\widetilde G$ be a graph with the same vertex set as $G$, and we add every edge in $G$ to $\widetilde G$ independently with the probability $p= \frac{1}{\log n}$. We claim that w.h.p. the largest diameter in  $\widetilde G$ is $\Omega (\frac{n}{\poly (\log n)})$. First notice that w.h.p.  $x$ is not connected to any of the vertices $z_1, z_2, \cdots, z_k$. 
This is because there was a path $P_i$ with length $\log n$ between $x$ and each $z_i$ in $G$. The probability that we sample all of the edges of $P_i$ and include them in $\widetilde G$ is $(\frac{1}{\log n})^ {\log n}$ which is at most $1/{n^3}$ for large enough $n$. Thus, with probability at least $1- 1/n^3$, $x$ is not connected to $z_i$ via the path $P_i$. By taking the union bound, we can say that  with probability at least $1- 1/n^2$, $x$ is not connected to any of the vertex in $Z$. 

Now we show that w.h.p. the subgraph induced by the vertices of $Z$ remains connected. This immediately implies that the diameter in connected component of the vertices of $Z$ is at least $\Omega (\frac{n}{\poly (\log n)})$. Consider the shortest path between $z_1$ and $z_k$. Since none of the vertices $z_i$ are connected to $x$, the shortest way to travel from $z_1$ and $z_k$ is by using the vertices in $Z$. Recall that each $z_i$ is connected to $z_{j}$ in $G$ iff $|i-j| \le \log^2 n$. Thus, in order to travel from $z_1$ to $z_k$  we have to use at least $\Omega(\frac{k}{(\log n)^3}) = \Omega(\frac{n}{\poly(\log n)})$ edges.  Thus, the diameter in the connected component of the vertices of $Z$ is at least $\Omega (\frac{n}{\poly (\log n)})$. It is now sufficient to show that w.h.p. the subgraph induced by the vertices of $Z$ remains connected in $\widetilde G$.

To show the connectivity, first consider the vertices $z_1, \cdots, z_{(\log n)^3}$. We show that these vertices form a connected subgraph in $\widetilde G$. Notice that these vertices form a complete induced subgraph in $G$. Since we sample each edge with probability $p= 1/\log n$, we can say that in $\widetilde G$, the subgraph induced by these vertices has the same distribution as $G((\log n)^3, p)$. Here,  $G(n,p)$ is used to denote the Erdős–Rényi graph with $n$ vertices where each edge is included in it independently with probability $p$. 
It is known that $G(n,p)$ is connected with probability $e^{-e^{-c}}$ if $p \ge \frac{(1 + c) \ln n}{n}$ \cite{erdHos1960evolution}. 
Recall that in the random subgraph $G((\log n)^3, p)$, we have $p= 1/\log n$. 
Thus, $p \gg \frac{3\ln n \ln ((\log n)^3)}{(\log n)^3}$ for large enough $n$ which implies the connectivity of this random graph w.p. $1-1/n^2$. 

So far, we have showed that w.h.p. the induced subgraph of the first $(\log n)^3$ vertices of $Z$ is connected. 
Now consider a vertex $z_i$ where $i > (\log n)^3$, we show that w.h.p. $z_i$ has an edge in $\widetilde G$ to some other vertex $z_j$ where $j <i$. 
Recall that there is an edge between $z_i$ and each of the vertices $z_{i-1}, \cdots, z_{i- (\log n)^3}$ in $G$.
Thus, the probability that we include none of these edges in $\widetilde G$ is $(1- 1/\log n)^{(\log n)^3}$ which is at most $1/n^3$ for large enough $n$. Therefore, we can say that with probability at least $1-1/n^3$ there is an edge between $z_i$ and another vertex $z_j$ where $j <i$. By taking the union bound, we can say that w.h.p. for every vertex $z_i$ where $i > (\log n)^3$, there is an edge between $z_i$ and some other vertex $z_j$ where $j<i$. This is enough to show that the induced subgraph of vertices $Z$ is connected. 
To see that, we show w.h.p. there is a path from every vertex $z_i$ to $z_1$ in $\widetilde G$. Consider a vertex $z_i$, if $i > (\log  n)^3$, we know that there is an edge between $z_i$ and some other vertex $z_j$ where $j<i$. Thus, we can travel from $z_i$ to $z_j$ as long as $i > (\log n)^3$. By traveling to these neighbors repeatedly, we eventually reach a vertex $z_j$ such that $j \le (\log n)^3$. In this case we already know that the induced subgraph of the first $(\log n)^3$ vertices of $Z$ is connected. Thus, there is a path from $z_j$ to $z_1$ in $\widetilde G$ which implies that there is also a path from $z_i$ to $z_1$.

\newcommand{\R}{\ensuremath{\mathbb{R}}\xspace}

\section{Spectrum of Randomly Sampled Graphs}
\label{sec:spectral}
We prove a concentration bound for the spectrum of randomly sampled graphs.
Our proof is an adaptation of that of Oliveira~\cite{oliveira}. 
Specifically, known concentration bounds 
for the spectrum of randomly sampled graphs~\cite{chunghorn,oliveira,chungradcliffe} 
are stated for graphs without parallel edges.
In our application, we need essentially the same concentration
bounds for multi-graphs, i.e., allowing parallel edges and loops.
It is not too difficult to extend the proof of Oliveira~\cite{oliveira} 
to multi-graphs.  For completeness, we provide the 
modified proof in this section.

\subsection{Preliminaries}
Suppose $x \in \R^n$ is a vector over the reals, we use
the notation $\|x\|$ to denote the Euclidean norm of $x$.
\begin{definition}[Spectral radius norm]
Let $A \in \R^{n \times n}$ be a symmetric matrix over the reals.
We define the spectral radius norm of $A$, denoted $\| A\|$, 
to be the following:
\[
\|A\| = \sup_{x \in \R^n, \|x\| = 1} \|A x\|
\]
\end{definition}

Let $\lambda_0(A) \leq \lambda_1(A) \leq \ldots \leq \lambda_{n-1}(A)$ be the 
eigenvalues of $A$, which correspond to orthonormal eigenvectors 
$\psi_0, \ldots, \psi_{n-1}$.
The {\it spectrum} of $A$ is the set of all eigenvalues of $A$.
Since $A$ is symmetric, the following holds:
\[
\|A\| = \max_{0 \leq i \leq n-1} |\lambda_i(A)|
\]

\begin{theorem}[Matrix concentration bound~\cite{oliveira,chungradcliffe}] 
Let $X_1, \ldots, X_m$ be mean-zero independent $n \times n$ 
random matrices over the reals.
Suppose that there exists a $B > 0$ such that 
for all $1 \leq i \leq m$, 
$\| X_i\| \leq B$; 
and let $\sigma^2 = \lambda_{\rm max}\left(\sum_{i = 1}^m \E[X_i^2]\right)$.
Then, for all $t \geq 0$, 
\[
\Pr\left(\left\| \sum_{i = 1}^m X_i\right\| \geq t\right)
\leq  
2 n \cdot \exp\left(-\frac{t^2}{8 \sigma^2 + 4 B t}\right) 
\]
\label{thm:matrixconcentr}
\end{theorem}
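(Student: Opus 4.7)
The plan is to follow the matrix Laplace transform method, essentially reproducing Oliveira's argument, which itself adapts the Ahlswede--Winter inequality to sums of bounded mean-zero random matrices. Since $\|A\| = \max(\lambda_{\max}(A),\lambda_{\max}(-A))$ for any symmetric $A$, it suffices to prove a one-sided tail bound
\[
\Pr\!\left(\lambda_{\max}\!\Big(\sum_{i=1}^m X_i\Big)\ge t\right)\le n\cdot\exp\!\left(-\frac{t^2}{8\sigma^2+4Bt}\right),
\]
applied once to $\{X_i\}$ and once to $\{-X_i\}$; a union bound then yields the factor $2n$ in the statement. Since the $X_i$ are symmetric (or else one can symmetrize by passing to the $2n\times 2n$ Hermitian dilation, which preserves $\|\cdot\|$, $\|\E[X_i^2]\|$, and mean-zero), I will assume they are symmetric.

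The first step is the standard exponential Markov bound: for any $\theta>0$,
\[
\Pr\!\left(\lambda_{\max}\!\Big(\sum_i X_i\Big)\ge t\right)\le e^{-\theta t}\,\E\!\left[\operatorname{tr}\exp\!\Big(\theta\sum_i X_i\Big)\right].
\]
The second step, which is the technical heart, is to ``decouple'' the independent summands inside the matrix exponential. Using Lieb's concavity theorem in the Tropp form (or, alternatively, iterating the Golden--Thompson inequality as in Ahlswede--Winter, which costs only constants), one obtains
\[
\E\!\left[\operatorname{tr}\exp\!\Big(\theta\sum_i X_i\Big)\right]\le n\cdot \exp\!\left(\lambda_{\max}\!\Big(\sum_i\log\E[e^{\theta X_i}]\Big)\right).
\]

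The third step is to bound $\log\E[e^{\theta X_i}]$ in the semidefinite order. Because $\E[X_i]=0$ and $\|X_i\|\le B$, a Taylor expansion of $e^{\theta X_i}$ together with the inequality $X_i^k\preceq B^{k-2}X_i^2$ for $k\ge 2$ gives
\[
\E[e^{\theta X_i}]\preceq I+\theta^2\,\E[X_i^2]\cdot g(\theta B),\qquad g(x)=\sum_{k\ge 2}\frac{x^{k-2}}{k!},
\]
and then $\log(I+M)\preceq M$ for $M\succeq 0$ yields $\log \E[e^{\theta X_i}]\preceq \theta^2 g(\theta B)\,\E[X_i^2]$. Summing over $i$ and taking $\lambda_{\max}$ bounds the exponent by $\theta^2 g(\theta B)\,\sigma^2$. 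Plugging back into the Markov bound we get
\[
\Pr\!\left(\lambda_{\max}\!\Big(\sum_i X_i\Big)\ge t\right)\le n\cdot\exp\!\left(-\theta t+\theta^2 g(\theta B)\,\sigma^2\right).
\]

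The final step is to optimize $\theta$. Restricting to $\theta B\le 3/2$ (say), one checks $g(\theta B)\le 1/(2(1-\theta B/3))$, so the exponent is at most $-\theta t+\tfrac{\theta^2\sigma^2}{2(1-\theta B/3)}$; choosing $\theta=t/(2\sigma^2+Bt)$ produces an exponent $-t^2/(8\sigma^2+4Bt)$ after collecting constants, matching the stated bound. The main obstacle in executing this plan is the decoupling step, since Lieb's theorem (or the Golden--Thompson iteration) is the only nontrivial ingredient and must be quoted carefully; everything else is elementary operator-monotone bookkeeping, and the extension to multigraphs is automatic because nothing in the above argument depends on the structure of the entries of $X_i$ beyond $\E[X_i]=0$ and $\|X_i\|\le B$.
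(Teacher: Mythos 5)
The paper itself does not prove this theorem: it is imported verbatim from Oliveira and Chung--Radcliffe, so there is no in-paper argument to compare against, and your proposal has to stand on its own. It does. Your route is the modern one -- matrix Laplace transform, then Lieb's concavity theorem in Tropp's form to get $\E\bigl[\operatorname{tr}\exp(\theta\sum_i X_i)\bigr]\le n\exp\bigl(\lambda_{\max}(\sum_i\log\E[e^{\theta X_i}])\bigr)$, then the moment bound $\E[e^{\theta X_i}]\preceq I+\theta^2 g(\theta B)\,\E[X_i^2]$ with $g(x)=(e^x-1-x)/x^2\le \tfrac{1}{2(1-x/3)}$, then optimization in $\theta$ -- and the arithmetic closes: with $\theta=t/(2\sigma^2+Bt)$ (so $\theta B<1$ and the restriction is automatic) the exponent is $-t^2(9\sigma^2+4Bt)/\bigl(4(2\sigma^2+Bt)(3\sigma^2+Bt)\bigr)\le -t^2/(8\sigma^2+4Bt)$, and the standard Bernstein choice $\theta=t/(\sigma^2+Bt/3)$ even gives $-t^2/(2\sigma^2+2Bt/3)$, which dominates the stated bound. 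This is in fact sharper than, and different from, Oliveira's own derivation, which symmetrizes and reduces to a Golden--Thompson-based bound for matrix Rademacher series; that detour is where the looser constants $8$ and $4$ come from, and your Lieb/Tropp argument simply absorbs them. Two caveats. First, your aside that one could instead iterate Golden--Thompson ``as in Ahlswede--Winter, which costs only constants'' is incorrect: that iteration replaces $\lambda_{\max}\bigl(\sum_i\E[X_i^2]\bigr)$ by $\sum_i\lambda_{\max}\bigl(\E[X_i^2]\bigr)$, a loss that can be as large as a factor of $m$ and would actually be too weak for the paper's application (the per-edge variances sum to roughly $m/(p\cdot\deg^2)$ rather than $1/(p\cdot\deg)$), so the Lieb/Tropp decoupling is essential, not a convenience. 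Second, the Hermitian-dilation remark should be stated more carefully: the theorem as the paper uses it concerns symmetric matrices (the norm and $\sigma^2$ are only defined there in the symmetric case), and for non-symmetric $X_i$ the dilation changes the variance parameter to $\max\{\|\sum_i\E[X_iX_i^T]\|,\ \|\sum_i\E[X_i^TX_i]\|\}$ and the dimension to $2n$; your choice to assume symmetry is the right reading and matches the application to $Y_e=TX_eT$.
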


\subsection{Spectrum of Randomly Sampled Graphs}

Consider a graph $\widetilde{\graph}$ that is down-sampled from 
some original graph $\graph$ through the following process:
\begin{itemize}[leftmargin=5mm,itemsep=1pt]
\item 
$\widetilde{\graph}$ has the same set of vertices as $\graph$;
\item 
for every edge (including loops and parallel edges)
in $\graph$, 
with probability $p \in (0, 1)$, preserve the edge in $\widetilde{\graph}$.
\end{itemize}

We want to show that the down-sampled graph  
$\widetilde{\graph}$ approximately preserves the spectrum 
of the original graph $\graph$, as 
stated in the following theorem:

\begin{theorem}[Spectrum of randomly sampled graph]
Let $\deg(\graph)$ be the minimum degree of any vertice in $\graph$,
let $n$ be the number of vertices in $\graph$, and let $N \geq n$.
For any constant $c > 0$, there exists 
a constant $C = C(c) > 0$, 
such that 
if $p \cdot \deg(\graph) \geq C \cdot \ln N$, then 
for $N^{-c} \leq \delta \leq 1/2$,
\[
\Pr\left(
\|\mcal{L}(\widetilde{\graph}) - \mcal{L}(\graph) \| \leq 
13\sqrt{\frac{\ln(\frac{4n}{\delta})}{p \cdot \deg(\graph)}}
\right) \geq 1-\delta
\]
\label{thm:spectrumsample}
\end{theorem}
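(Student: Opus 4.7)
\medskip

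\noindent\textbf{Proof Proposal.} The plan is to follow Oliveira's approach for simple graphs \cite{oliveira} and adapt it to multi-graphs (allowing parallel edges and self-loops). The main idea is to isolate the randomness in two separate places: the numerator (the sampled adjacency matrix) and the denominator (the sampled degrees). Specifically, define the deterministic diagonal rescaling $B \coloneqq (pD)^{-1/2}\widetilde{A}(pD)^{-1/2}$, where $D$ is the degree matrix of $\graph$ and $\widetilde{A}$ is the random adjacency matrix of $\widetilde{\graph}$. Since each edge of $\graph$ is preserved independently with probability $p$, we have $\E[B] = D^{-1/2}AD^{-1/2} = I - \mcal{L}(\graph)$. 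Writing $S \coloneqq (pD)^{1/2}\widetilde{D}^{-1/2}$, which is diagonal, we have $\mcal{L}(\widetilde{\graph}) = I - SBS$, and hence by the triangle inequality
\[
\|\mcal{L}(\widetilde{\graph}) - \mcal{L}(\graph)\| \;\le\; \underbrace{\|B - \E[B]\|}_{(\mathrm{I})} \;+\; \underbrace{\|B - SBS\|}_{(\mathrm{II})} .
\]

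For (I), I would decompose $B - \E[B] = \sum_{e \in E(\graph)} Z_e$, where the sum is over edges of $\graph$ \emph{with multiplicity} (treating each parallel edge and each self-loop as a separate term), and $Z_e \coloneqq (X_e - p)(pD)^{-1/2}M_e(pD)^{-1/2}$ with $X_e \sim \mathrm{Bernoulli}(p)$ independent and $M_e = E_{uv}+E_{vu}$ for a non-loop edge $e=(u,v)$ or $M_e = E_{vv}$ for a loop. A direct calculation (using $E_{ij}E_{kl} = \delta_{jk}E_{il}$) gives $M_e D^{-1} M_e = \tfrac{1}{\deg(v)}E_{uu} + \tfrac{1}{\deg(u)}E_{vv}$ for non-loops, leading to $\|Z_e\| \le 1/(p\,\deg(\graph))$ and, after summing and using $\sum_{u : (u,v)\in E}1/\deg(u) \le \deg(v)/\deg(\graph)$, the variance bound $\sigma^2 = \bigl\|\sum_e \E[Z_e^2]\bigr\| \le 1/(p\,\deg(\graph))$. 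Plugging these into the matrix concentration bound (Theorem~\ref{thm:matrixconcentr}) with $t = O\bigl(\sqrt{\ln(4n/\delta)/(p\,\deg(\graph))}\bigr)$ yields (I) $\le O\bigl(\sqrt{\ln(4n/\delta)/(p\,\deg(\graph))}\bigr)$ with probability at least $1-\delta/2$.

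For (II), I would use a scalar Chernoff bound on each vertex's sampled degree $\widetilde{\deg}(v) = \sum_{e \ni v} X_e$ (again with multiplicity), which has mean $p\,\deg_{\graph}(v) \ge p\,\deg(\graph)$. Setting $\epsilon \coloneqq \sqrt{3\ln(4n/\delta)/(p\,\deg(\graph))}$ and a union bound over all $n$ vertices gives, with probability at least $1-\delta/2$, that $\widetilde{\deg}(v) \in [(1-\epsilon)p\,\deg_\graph(v),(1+\epsilon)p\,\deg_\graph(v)]$ for every $v$, hence $\|S-I\| \le 2\epsilon$. Then
\[
\|B - SBS\| \;\le\; \|(I-S)B\| + \|SB(I-S)\| \;\le\; (1+\|S\|)\,\|I-S\|\,\|B\| \;\le\; O(\epsilon)\,\|B\|,
\]
and $\|B\| \le 1 + \|B-\E[B]\| = O(1)$ in the regime $p\,\deg(\graph) \ge C\ln N$. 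Combining (I) and (II) and tuning constants (choosing $C$ large enough depending on $c$ so that both $\epsilon \le 1/2$ and the $O$-constants line up below $13$) gives the stated bound with probability at least $1-\delta$.

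The main technical obstacle is verifying that the variance/spectral-norm computations go through unchanged in the multi-graph setting: parallel edges each contribute an independent indicator $X_e$ and a separate term $Z_e$, and self-loops use the asymmetric matrix $M_e = E_{vv}$ (consistent with the paper's convention that a loop contributes once to $\deg(v)$ and once to the diagonal of $A$). The per-edge bound and the telescoping sum $\sum_{u:(u,v)\in E}1/\deg(u) \le \deg(v)/\deg(\graph)$ are interpreted with multiplicity, so the bounds on $\sigma^2$ and $B$ are unaffected. Finally, I will instantiate the corollary (Corollary~\ref{cor:sample_preserve}) invoked in \S\ref{sec:stage3} and \S\ref{sec:assumption} as the special case $\delta = N^{-c'}$ for an arbitrary constant $c' > 0$.
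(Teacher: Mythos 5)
Your proposal is correct and takes essentially the same route as the paper's proof. The paper's intermediate matrix $\mcal{M} = I - T\widetilde{A}T$ is exactly $I - B$ in your notation, and the paper's diagonal factor $T\widetilde{T}^{-1}$ is the inverse of your $S$; the split $\|\mcal{L}(\widetilde{\graph}) - \mcal{L}(\graph)\| \le \|B - \E[B]\| + \|B - SBS\|$ is identically the paper's $\|\mcal{L} - \mcal{M}\| + \|\mcal{M} - \widetilde{\mcal{L}}\|$, and the per-edge norm bound $1/(p\deg(\graph))$, the variance bound $1/(p\deg(\graph))$, and the scalar Chernoff bound on sampled degrees all match the paper's computations (the paper's explicit constants sum to $9 + 4 = 13$).
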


\begin{proof}
The expected degree of every vertex $v$ in $\widetilde{\graph}$ is at least 
$\E[d_{\widetilde{\graph}}(v)] = p \cdot d_{\graph}(v) \geq p 
\cdot \deg(\graph)$.  
By the standard Chernoff bound, 
for a sufficiently large constant $C$, we have 
\[
\forall \text{vertex } v: \Pr\left(\left| 
\frac{d_{\widetilde{\graph}}(v)}{ p \cdot d_{\graph}(v)} - 1
\right| > 
2 \sqrt{\frac{\ln \frac{4n}{\delta}}{p \cdot d_{\graph}(v)}}\right) 
\leq \frac{\delta}{2n}   
\]
By the union bound, 
with probability at least $1-\delta/2$, 
we have that 
\[
\forall \text{vertex } v \in [n]:
\left| \frac{d_{\widetilde{\graph}}(v)}{ p \cdot d_{\graph}(v)} - 1
\right| \leq 2 \sqrt{\frac{\ln \frac{4n}{\delta}}{p \cdot d_{\graph}(v)}}
\]
Since $\delta \geq N^{-c}$, 
for a sufficiently large constant $C$, the above implies that  
with probability at least $1-\delta/2$, 
\begin{equation}
\forall \text{vertex } v \in [n]:
\left| \frac{ p \cdot d_{\graph}(v)}{d_{\widetilde{\graph}}(v)} - 1
\right| \leq 4 \sqrt{\frac{\ln \frac{4n}{\delta}}{p \cdot d_{\graph}(v)}}
\leq 4 \sqrt{\frac{\ln \frac{4n}{\delta}}{p \cdot \deg(\graph)}}
\label{eqn:degconcentr}
\end{equation}

We will the above inequality to compare the matrices:
\[
\widetilde{T} = \text{diagonal with } d_{\widetilde{\graph}}(v)^{-1/2} \text{ at the $(v, v)$-th position}
\]
and 
\[
T = \text{diagonal with } (p \cdot d_{\graph}(v))^{-1/2} \text{ at the $(v, v)$-th position}
\]
For a sufficiently large constant $C$, the right-hand-side 
of Equation~(\ref{eqn:degconcentr}) is at most $3/4$.
Observe that for any $x \in [-3/4, 3/4]$, 
it must be 
$\left| \sqrt{1 + x} - 1\right| \leq x$.
Applying this fact to $x = \frac{p \cdot d_{\graph}(v)}{d_{\widetilde{\graph}}(v)} - 1$, 
we have  the following with probability 
at least $1-\delta/2$:
\begin{equation}
\|T \widetilde{T}^{-1} - I \| = \max_{1 \leq v \leq n}\left| 
\frac{\sqrt{p \cdot d_{\graph}(v)}}{\sqrt{d_{\widetilde{\graph}}(v)}} - 1\right|
\leq 4\sqrt{\frac{\ln(4n/\delta)}{p \cdot \deg(\graph)}}
\label{eqn:diagconcentr}
\end{equation}

Let $A = p \cdot A(\graph)$ where $A(\graph)$ is the adjacency matrix
of $\graph$, and let $\widetilde{A} = A(\widetilde{\graph})$  be the adjacency matrix
of the sampled graph $\widetilde{\graph}$.
Recall that for a multi-graph, the  
$(u, v)$-position in the adjacency matrix stores the 
number of edges between $u$ and $v$.
We now compare $\widetilde{\mcal{L}} = I - \widetilde{T} \widetilde{A} \widetilde{T}$
and $\mcal{L} = I - T A T$, where $\widetilde{\mcal{L}}$ 
is the normalized Laplacian of $\widetilde{\graph}$ and $\mcal{L}$ is the normalized
Laplacian of $\graph$.
To compare the two, we will introduce
an intermediate matrix 
\[
\mcal{M} = I - T \widetilde{A} T = 
I - (T \widetilde{T}^{-1}) ( I - \widetilde{\mcal{L}}) (T \widetilde{T}^{-1})
\]

We have that 
\begin{align*}
\| \mcal{M}- \widetilde{\mcal{L}}\| & = 
\| (T \widetilde{T}^{-1}) (I - \widetilde{\mcal{L}}) (T \widetilde{T}^{-1})
- (I - \widetilde{\mcal{L}}) \|\\
& \leq \| (T \widetilde{T}^{-1} - I) (I - \widetilde{\mcal{L}}) (T \widetilde{T}^{-1})
- (I - \widetilde{\mcal{L}}) (T \widetilde{T}^{-1} - I)  \|  \\
& \leq 
\| T \widetilde{T}^{-1} - I\| \cdot \|I - \widetilde{\mcal{L}}\| \cdot
\| T \widetilde{T}^{-1}\|
+ \|I - \widetilde{\mcal{L}}\| \cdot \|T \widetilde{T}^{-1} - I \|
\end{align*}

Now, since the spectrum of any normalized Laplacian matrix lies 
in $[0, 2]$, it must be that $\|I - \widetilde{\mcal{L}}\| \leq 1$.
Using this fact as well as 
Equation~(\ref{eqn:diagconcentr}), we have that 
\begin{align*}
\| \mcal{M}- \widetilde{\mcal{L}}\| \leq
4\sqrt{\frac{\ln(4n/\delta)}{p \cdot \deg(\graph)}}
\cdot \left(1 + 4\sqrt{\frac{\ln(4n/\delta)}{p \cdot \deg(\graph)}}\right)
+ 4\sqrt{\frac{\ln(4n/\delta)}{p \cdot \deg(\graph)}}
\leq 9 \sqrt{\frac{\ln(4n/\delta)}{p \cdot \deg(\graph)}}
\end{align*}
where the last inequality again relies 
on the fact that $C$ is a sufficiently large constant.

To finish the proof, it suffices
to show that $\|\mcal{M} - \mcal{L}\| \leq 
4\sqrt{\frac{\ln(4n/\delta)}{p \cdot \deg(\graph)}}$
with probability at least $1-\delta/2$.
Let $e = (u, v)$ be an edge in $\graph$,
we define the matrix $A_e$ as the adjacency matrix
for a single edge $e$:
\[
A_e = \begin{cases}
{\bf b}_u {\bf b}_v^T
+ {\bf b}_v {\bf b}_u^T & \text{if $u \neq v$}\\
{\bf b}_u {\bf b}_u^T & \text{if $u = v$}
\end{cases}
\]
where ${\bf b}_u$
denote the $u$-th vector in the standard basis.

Henceforth, 
for every edge $e \in \graph$, 
let $I_e$
be the indicator random variable that denotes 
whether the edge $e$ is preserved in the down-sampled graph $\widetilde{\graph}$;
define the random matrix $X_e = (I_e - p) \cdot A_e$.
We can rewrite $\mcal{L}-\mcal{M}$ as the following
where the summation over $e \in \graph$ iterates over all 
parallel edges (including loops) with the corresponding multiplicity:
\[
\mcal{L}-\mcal{M} = \sum_{e \in \graph} T X_e T 
\]
Observe that each $X_e$ is an independent mean-0 matrix.
Let $Y_e = T X_e T$, we have the following
where we use $u_e, v_e$ to iterate
over the two vertices 
incident to the edge $e$: 
\[
Y_e =  T \cdot (I_e - p) \cdot A_e \cdot T 
= (I_e - p) \cdot \frac{A_e}{p \cdot \sqrt{d_{\graph}(u_e) \cdot {d_{\graph}(v_e)}}}
\]
The possible eigenvalues of $Y_e$ 
are contained in the following set:
\[
\left\{
\frac{\pm\left(1-p\right)}{p \cdot \sqrt{d_{\graph}(u_e) \cdot {d_{\graph}(v_e)}}}, 
\frac{\pm p}{p \cdot \sqrt{d_{\graph}(u_e) \cdot {d_{\graph}(v_e)}}}, 0
\right\}
\]
Therefore, 
\[
\|Y_e\|
\leq \frac{1}{p \cdot \sqrt{d_{\graph}(u_e) \cdot {d_{\graph}(v_e)}}}
\leq \frac{1}{p \cdot \deg(\graph)}
\]

The sum of variance is the following
where the notation $\sum_{e = (u, v) \in \graph}$
iterates over all edges adjacent to some fixed $u$ in $\graph$:
\begin{align*}
\sum_{e \in \graph} \E[Y_e^2] & =  
\sum_{e \in \graph} \E[(I_e - p)^2] \cdot \left(\frac{A_e}{p \cdot \sqrt{d_{\graph}(u_e) \cdot {d_{\graph}(v_e)}}} \right)^2 \\
& = \sum_{(u, v) \in \graph, u \neq v} 
p(1-p) \cdot \frac{{\bf b}_u {\bf b}_u^T + {\bf b}_v {\bf b}_v^T}{p^2 \cdot 
d_{\graph}(u) \cdot d_{\graph}(v)}
+ \sum_{(u, u) \in \graph} p(1-p) \cdot 
 \frac{{\bf b}_u {\bf b}_u^T}{p^2 \cdot
d_{\graph}(u)^2} \\
& = \frac{1-p}{p} \cdot \sum_{u \in \graph} \frac{1}{d_{\graph}(u)}
\sum_{e = (u, v) \in \graph} \frac{{\bf b}_u {\bf b}_u^T}{d_{\graph}(v)}
\end{align*}
The above is a diagonal matrix, and its $(u, u)$-th entry 
is at most: 
\begin{align*}
\frac{1-p}{p} \cdot \frac{1}{d_{\graph}(u)} 
\cdot \sum_{e = (u, v) \in G} \frac{1}{d_{\graph}(v)}
& \leq 
\frac{1-p}{p} \cdot \frac{1}{d_{\graph}(u)} 
\cdot \sum_{e = (u, v) \in G} \frac{1}{\deg(\graph)}\\
& =  
\frac{1-p}{p} \cdot \frac{1}{d_{\graph}(u)} 
\cdot d_{\graph}(u) \cdot \frac{1}{\deg(\graph)}\\
& \leq \frac{1}{p \cdot \deg(\graph)}
\end{align*}
Now, we apply the matrix concentration theorem, i.e., 
Theorem~\ref{thm:matrixconcentr}, 
to $\sum_{e}Y_e$
with $B = \sigma^2 = \frac{1}{p \cdot \deg(\graph)}$.
We thus have 
\[
\Pr\left(\|\mcal{L} - \mcal{M}\| \geq t \right)
\leq 2n \cdot \exp\left(-\frac{t^2 \cdot p \cdot \deg(\graph)}{8 + 4t}\right)
\]
Plugging in $t = 4 \sqrt{\frac{\ln(4n/\delta)}{p \cdot \deg(\graph)}}\leq 3/4$, 
we have 
\[
\Pr\left(\|\mcal{L} - \mcal{M}\| \geq 4 \sqrt{\frac{\ln(4n/\delta)}{p \cdot \deg(\graph)}}\right)
\leq 
2n \cdot \exp\left(-\frac{16 \ln(4n/\delta)}{16}\right) \leq \delta/2  
\]
\end{proof}

\begin{corollary}[Random sampling approximately preserves spectral gap]  
Let $n$ be the number of vertices of $\graph$ and let $N \geq n$.
Let $c > 0$ be an arbitrary constant.
There exist constants $C$
and $C'$ such that as long as $p \cdot \deg(\graph) \geq C \ln N$,
then with probability at least $1-1/N^c$, 
\[
|\lambda_1(\mcal{L}) - \lambda_1(\widetilde{\mcal{L}})|
\leq C' \cdot \sqrt{\frac{\ln n}{p \cdot \deg(\graph)}} 
\]
where $\mcal{L} = \mcal{L}(\graph)$
and $\widetilde{\mcal{L}}= \mcal{L}(\widetilde{\graph})$
denote the normalized Laplacian matrices of the original  graph $\graph$ 
and the sampled graph $\widetilde{\graph}$, respectively.
\label{cor:samplespectral1}
\end{corollary}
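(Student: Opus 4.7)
The plan is to obtain the corollary as an almost immediate consequence of Theorem~\ref{thm:spectrumsample} combined with Weyl's eigenvalue perturbation inequality. The heavy spectral concentration work has already been done in the theorem; all that remains is to translate an operator-norm bound on $\mcal{L} - \widetilde{\mcal{L}}$ into a bound on the difference of the individual eigenvalues $\lambda_1(\mcal{L})$ and $\lambda_1(\widetilde{\mcal{L}})$, and to simplify the probability and constants.

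First I would instantiate Theorem~\ref{thm:spectrumsample} with $\delta = N^{-c}$. The hypothesis $N^{-c} \le \delta \le 1/2$ is satisfied provided $N$ exceeds a constant depending on $c$ (otherwise the claimed bound becomes vacuous by absorbing the case of small $N$ into the constant $C'$), and the hypothesis $p \cdot \deg(\graph) \ge C \ln N$ is exactly the assumption of the corollary. Applying the theorem then yields
\[
\Pr\!\left( \|\mcal{L}(\widetilde{\graph}) - \mcal{L}(\graph)\| \le 13 \sqrt{\frac{\ln(4 n N^c)}{p \cdot \deg(\graph)}} \right) \ge 1 - N^{-c}.
\]
Using $n \le N$ gives $\ln(4 n N^c) \le \ln 4 + (c+1)\ln N$, so the right-hand side is at most $C_0 \sqrt{\ln N / (p \cdot \deg(\graph))}$ for some constant $C_0 = C_0(c)$.

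Next I would invoke Weyl's perturbation inequality for symmetric matrices: if $A$ and $B$ are real symmetric with eigenvalues listed in nondecreasing order $\lambda_0(\cdot) \le \lambda_1(\cdot) \le \dots$, then $\max_i |\lambda_i(A) - \lambda_i(B)| \le \|A - B\|$. Applied with $A = \mcal{L}(\graph)$ and $B = \mcal{L}(\widetilde{\graph})$ at index $i = 1$, this gives
\[
|\lambda_1(\mcal{L}(\graph)) - \lambda_1(\mcal{L}(\widetilde{\graph}))| \le \|\mcal{L}(\graph) - \mcal{L}(\widetilde{\graph})\|.
\]
Chaining with the previous bound, on the same high-probability event we obtain the estimate $|\lambda_1 - \widetilde\lambda_1| \le C_0 \sqrt{\ln N / (p \cdot \deg(\graph))}$. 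Setting $C' := C_0$ (redefined if we want $\ln n$ rather than $\ln N$ in the statement, which is absorbed using $n \le N$ only when $N = \poly(n)$, the regime of interest in our applications), this is the bound claimed in the corollary.

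There is essentially no obstacle here beyond bookkeeping: the technical content lives entirely in Theorem~\ref{thm:spectrumsample}, and Weyl's inequality is a textbook fact that follows from the Courant--Fischer min-max characterization of eigenvalues of symmetric matrices. The only place that requires a bit of care is verifying the hypotheses of Theorem~\ref{thm:spectrumsample} under the chosen $\delta$, and choosing $C$ large enough (relative to $c$ and the absolute constant required by Theorem~\ref{thm:spectrumsample}) so that the implicit Chernoff concentration step inside that theorem succeeds with probability at least $1 - N^{-c}$.
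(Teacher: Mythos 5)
Your proposal is correct and follows essentially the same route as the paper: the paper also reduces the corollary to the operator-norm bound of Theorem~\ref{thm:spectrumsample} and then bounds $|\lambda_1(\mcal{L}) - \lambda_1(\widetilde{\mcal{L}})|$ by $\|\mcal{L} - \widetilde{\mcal{L}}\|$, only it rederives that perturbation step inline via the variational characterization rather than citing Weyl's inequality as you do. Your explicit handling of $\delta = N^{-c}$ and the $\ln n$ versus $\ln N$ bookkeeping is a fine (and slightly more careful) way to finish the same argument.
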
 
\begin{proof}
\begin{align*}
\lambda_1(\widetilde{\mcal{L}})  = 
\min_{{\bf 1}^T x = 0,  \|x \| = 1} x^T \widetilde{\mcal{L}}x 
& = 
\min_{{\bf 1}^T x = 0,  \|x \| = 1} x^T 
\left(\mcal{L} + (\widetilde{\mcal{L}} - \mcal{L}) \right) x \\
&  \geq
\min_{{\bf 1}^T y = 0,  \|y \| = 1} y^T \mcal{L} y
-  \| \widetilde{\mcal{L}} - \mcal{L} \| \\
& = 
\lambda_1({\mcal{L}})  -  \| \widetilde{\mcal{L}} - \mcal{L} \| \\
\end{align*}
Similarly, we can show the other direction, that is, 
$\lambda_1({\mcal{L}}) \geq \lambda_1(\widetilde{\mcal{L}}) 
- \| \widetilde{\mcal{L}} - \mcal{L} \| $.
The corollary now follows in a straightforward manner 
from Theorem~\ref{thm:spectrumsample}.
\end{proof}

\begin{corollary}[Random sampling approximately preserves the spectral gap of
every connected component]  
Suppose that the graph $\graph$ contains $k$ connected components
denoted $\graph_1, \ldots, \graph_k$.
Let $N$ be the number of vertices of $\graph$. 
Let $c > 0$ be an arbitrary constant.
There exist constants $C$
and $C'$ such that as long as $p \cdot \deg(\graph) \geq C \ln N$,
then with probability at least $1-1/N^c$, 
it must be that for all $i \in [k]$, 
\[
|\lambda_1(\mcal{L}(\graph_i)) - \lambda_1({\mcal{L}(\widetilde{\graph_i})})|
\leq C' \cdot \sqrt{\frac{\ln N}{p \cdot \deg(\graph)}} 
\]
where $\mcal{L}(\graph)$
and $\mcal{L}(\widetilde{\graph})$
denote the normalized Laplacian matrices of the original connected component $\graph_i$ 
and the connected component $\widetilde{\graph}_i$ after sampling, respectively.
\label{cor:sample_preserve}
\end{corollary}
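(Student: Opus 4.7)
My proof approach is to reduce to Corollary~\ref{cor:samplespectral1} applied component-by-component, with a union bound over all components. The essential observation is that the sampling procedure on $\graph$ decomposes naturally across components: since edges are sampled independently and distinct components $\graph_i, \graph_j$ share no edges, the restriction of $\widetilde{\graph}$ to $V(\graph_i)$ is distributed exactly like what one would obtain by running the same independent edge sampling directly on $\graph_i$, and the sampled restrictions across components are mutually independent. This is what lets us invoke the single-component bound on each $\graph_i$ separately.

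Concretely, I would fix a component $\graph_i$ with $n_i \le N$ vertices and note that $\deg(\graph_i) \ge \deg(\graph)$, so the hypothesis $p \cdot \deg(\graph_i) \ge C \ln N \ge C \ln n_i$ of Corollary~\ref{cor:samplespectral1} is satisfied (possibly after enlarging the constant $C$). Applying that corollary to $\graph_i$ with its exponent replaced by $c+1$ (and adjusting $C, C'$ accordingly), and using $N$ in place of $n$ in the logarithmic factor (valid since $n_i \le N$), gives that with probability at least $1 - N^{-(c+1)}$,
$$|\lambda_1(\mcal{L}(\graph_i)) - \lambda_1(\mcal{L}(\widetilde{\graph}_i))| \le C' \sqrt{\frac{\ln N}{p \cdot \deg(\graph_i)}} \le C' \sqrt{\frac{\ln N}{p \cdot \deg(\graph)}}.$$
A union bound over the at most $k \le N$ components then yields failure probability at most $N \cdot N^{-(c+1)} = N^{-c}$, which is the desired bound.

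The only subtlety I anticipate — not really an obstacle — is handling components with very few vertices. For a single-vertex component the normalized Laplacian is the $1\times 1$ zero matrix, so $\lambda_1$ is not defined in the usual sense; such components can be excluded from the statement, or one can verify that sampling trivially preserves them (they stay single vertices). For any component with at least two vertices, the Chernoff bound on degree concentration and the matrix-Bernstein step in Theorem~\ref{thm:spectrumsample} are applied to random variables supported on edges of $\graph_i$ only, so the proof goes through unchanged on a per-component basis; this is the content of the decomposition observation above.
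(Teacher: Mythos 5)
Your proposal is correct and matches the paper's own argument: the paper proves this corollary exactly by applying Corollary~\ref{cor:samplespectral1} to each component and taking a union bound over the at most $N$ components. Your additional observations (independence of the sampling across components, $\deg(\graph_i)\ge\deg(\graph)$, and the trivial single-vertex case) are fine elaborations of the same route, not a different one.
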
  
\begin{proof}
Follows in a straightforward fashion
from 
Corollary~\ref{cor:samplespectral1}
by taking a union bound over all the at most $N$ connected components.
\end{proof}


\newpage
\addcontentsline{toc}{section}{Bibliography}

\bibliographystyle{alpha}
\bibliography{FLS22}

\newcommand{\etalchar}[1]{$^{#1}$}
\begin{thebibliography}{BLM{\etalchar{+}}23}

\bibitem[AS87]{DBLP:journals/tc/AwerbuchS87}
Baruch Awerbuch and Yossi Shiloach.
\newblock New connectivity and {MSF} algorithms for shuffle-exchange network
  and {PRAM}.
\newblock {\em {IEEE} Trans. Computers}, 36(10):1258--1263, 1987.

\bibitem[ASS{\etalchar{+}}18]{DBLP:conf/focs/Andoni}
Alexandr Andoni, Zhao Song, Clifford Stein, Zhengyu Wang, and Peilin Zhong.
\newblock Parallel graph connectivity in log diameter rounds.
\newblock In {\em 59th {IEEE} Annual Symposium on Foundations of Computer
  Science, {FOCS} 2018, Paris, France, October 7-9, 2018}, pages 674--685,
  2018.

\bibitem[ASW19]{DBLP:conf/podc/AssadiSW19}
Sepehr Assadi, Xiaorui Sun, and Omri Weinstein.
\newblock Massively parallel algorithms for finding well-connected components
  in sparse graphs.
\newblock In {\em Proceedings of the 2019 {ACM} Symposium on Principles of
  Distributed Computing, {PODC} 2019, Toronto, ON, Canada, July 29 - August 2,
  2019}, pages 461--470, 2019.

\bibitem[BDE{\etalchar{+}}19]{DBLP:conf/focs/BehnezhadDELM19}
Soheil Behnezhad, Laxman Dhulipala, Hossein Esfandiari, Jakub Lacki, and
  Vahab~S. Mirrokni.
\newblock Near-optimal massively parallel graph connectivity.
\newblock In {\em 60th {IEEE} Annual Symposium on Foundations of Computer
  Science, {FOCS} 2019, Baltimore, Maryland, USA, November 9-12, 2019}, pages
  1615--1636, 2019.

\bibitem[BH89]{DBLP:journals/jacm/BeameH89}
Paul Beame and Johan H{\aa}stad.
\newblock Optimal bounds for decision problems on the {CRCW} {PRAM}.
\newblock {\em J. {ACM}}, 36(3):643--670, 1989.

\bibitem[BKS17]{DBLP:journals/jacm/BeameKS17}
Paul Beame, Paraschos Koutris, and Dan Suciu.
\newblock Communication steps for parallel query processing.
\newblock {\em J. {ACM}}, 64(6):40:1--40:58, 2017.

\bibitem[BLM09]{boucheron2009concentration}
St{\'e}phane Boucheron, G{\'a}bor Lugosi, and Pascal Massart.
\newblock On concentration of self-bounding functions.
\newblock {\em Electronic Journal of Probability}, 14:1884--1899, 2009.

\bibitem[BLM{\etalchar{+}}23]{soda23}
Alkida Balliu, Rustam Latypov, Yannic Maus, Dennis Olivetti, and Jara Uitto.
\newblock Optimal deterministic massively parallel connectivity on forests.
\newblock In {\em ACM-SIAM Symposium on Discrete Algorithms (SODA23)}, 2023.

\bibitem[CC22]{coy-czumaj}
Sam Coy and Artur Czumaj.
\newblock Deterministic massively parallel connectivity.
\newblock In {\em Proceedings of the 54th Annual ACM SIGACT Symposium on Theory
  of Computing}, pages 162--175, 2022.

\bibitem[CDR86]{DBLP:journals/siamcomp/CookDR86}
Stephen~A. Cook, Cynthia Dwork, and R{\"{u}}diger Reischuk.
\newblock Upper and lower time bounds for parallel random access machines
  without simultaneous writes.
\newblock {\em {SIAM} J. Comput.}, 15(1):87--97, 1986.

\bibitem[CG97]{chung1997spectral}
Fan~RK Chung and Fan~Chung Graham.
\newblock {\em Spectral graph theory}.
\newblock Number~92. American Mathematical Soc., 1997.

\bibitem[CH07]{chunghorn}
Fan Chung and Paul Horn.
\newblock The spectral gap of a random subgraph of a graph.
\newblock {\em Internet Mathematics}, 4:225--244, 01 2007.

\bibitem[Chu96]{chung1996laplacians}
Fan~RK Chung.
\newblock Laplacians of graphs and cheeger’s inequalities.
\newblock {\em Combinatorics, Paul Erdos is Eighty}, 2(157-172):13--2, 1996.

\bibitem[CR11]{chungradcliffe}
Fan Chung and Mary Radcliffe.
\newblock On the spectra of general random graphs.
\newblock {\em Electr. J. Comb.}, 18, 10 2011.

\bibitem[DKR94]{DBLP:journals/jcss/DietzfelbingerKR94}
Martin Dietzfelbinger, Miroslaw Kutylowski, and R{\"{u}}diger Reischuk.
\newblock Exact lower time bounds for computing boolean functions on {CREW}
  prams.
\newblock {\em J. Comput. Syst. Sci.}, 48(2):231--254, 1994.

\bibitem[ER{\etalchar{+}}60]{erdHos1960evolution}
Paul Erd{\H{o}}s, Alfr{\'e}d R{\'e}nyi, et~al.
\newblock On the evolution of random graphs.
\newblock {\em Publ. Math. Inst. Hung. Acad. Sci}, 5(1):17--60, 1960.

\bibitem[Gaz91]{DBLP:journals/siamcomp/Gazit91}
Hillel Gazit.
\newblock An optimal randomized parallel algorithm for finding connected
  components in a graph.
\newblock {\em {SIAM} J. Comput.}, 20(6):1046--1067, 1991.

\bibitem[GMS03]{plrg}
Christos Gkantsidis, Milena Mihail, and Amin Saberi.
\newblock Conductance and congestion in power law graphs.
\newblock {\em SIGMETRICS Perform. Eval. Rev.}, 31(1):148–159, jun 2003.

\bibitem[GMV91]{DBLP:conf/focs/GilMV91}
Joseph Gil, Yossi Matias, and Uzi Vishkin.
\newblock Towards a theory of nearly constant time parallel algorithms.
\newblock In {\em 32nd Annual Symposium on Foundations of Computer Science, San
  Juan, Puerto Rico, 1-4 October 1991}, pages 698--710, 1991.

\bibitem[Goo91]{DBLP:conf/focs/Goodrich91}
Michael~T. Goodrich.
\newblock Using approximation algorithms to design parallel algorithms that may
  ignore processor allocation (preliminary version).
\newblock In {\em 32nd Annual Symposium on Foundations of Computer Science, San
  Juan, Puerto Rico, 1-4 October 1991}, pages 711--722, 1991.

\bibitem[GSZ11]{DBLP:conf/isaac/GoodrichSZ11}
Michael~T. Goodrich, Nodari Sitchinava, and Qin Zhang.
\newblock Sorting, searching, and simulation in the mapreduce framework.
\newblock In {\em Algorithms and Computation - 22nd International Symposium,
  {ISAAC} 2011, Yokohama, Japan, December 5-8, 2011. Proceedings}, pages
  374--383, 2011.

\bibitem[HR92]{hagerup1992waste}
Torben Hagerup and Rajeev Raman.
\newblock Waste makes haste: Tight bounds for loose parallel sorting.
\newblock In {\em 33rd Annual Symposium on Foundations of Computer Science,
  Pittsburgh, Pennsylvania, USA, 24-27 October 1992}, pages 628--637. {IEEE}
  Computer Society, 1992.

\bibitem[HZ96]{DBLP:journals/jcss/HalperinZ96}
Shay Halperin and Uri Zwick.
\newblock An optimal randomised logarithmic time connectivity algorithm for the
  {EREW} {PRAM}.
\newblock {\em J. Comput. Syst. Sci.}, 53(3):395--416, 1996.

\bibitem[HZ01]{halperin2001optimal}
Shay Halperin and Uri Zwick.
\newblock Optimal randomized erew pram algorithms for finding spanning forests.
\newblock {\em Journal of Algorithms}, 39(1):1--46, 2001.

\bibitem[KKT95]{karger1995randomized}
David~R Karger, Philip~N Klein, and Robert~E Tarjan.
\newblock A randomized linear-time algorithm to find minimum spanning trees.
\newblock {\em Journal of the ACM (JACM)}, 42(2):321--328, 1995.

\bibitem[KSV10]{DBLP:conf/soda/KarloffSV10}
Howard~J. Karloff, Siddharth Suri, and Sergei Vassilvitskii.
\newblock A model of computation for mapreduce.
\newblock In {\em Proceedings of the Twenty-First Annual {ACM-SIAM} Symposium
  on Discrete Algorithms, {SODA} 2010, Austin, Texas, USA, January 17-19,
  2010}, pages 938--948, 2010.

\bibitem[LT19]{liu_tarjan}
S.~Cliff Liu and Robert~E. Tarjan.
\newblock Simple concurrent labeling algorithms for connected components.
\newblock In {\em 2nd Symposium on Simplicity in Algorithms, SOSA@SODA 2019,
  January 8-9, 2019 - San Diego, CA, {USA}}, pages 3:1--3:20, 2019.

\bibitem[LT22]{liu2022simple}
Sixue~Cliff Liu and Robert~Endre Tarjan.
\newblock Simple concurrent connected components algorithms.
\newblock {\em ACM Transactions on Parallel Computing}, 9(2):1--26, 2022.

\bibitem[LTZ20]{liu2020connected}
S.~Cliff Liu, Robert~E Tarjan, and Peilin Zhong.
\newblock Connected components on a {PRAM} in log diameter time.
\newblock In {\em Proceedings of the 32nd ACM Symposium on Parallelism in
  Algorithms and Architectures}, pages 359--369, 2020.

\bibitem[MM11]{expansion-socialgraph}
Fragkiskos~D. Malliaros and Vasileios Megalooikonomou.
\newblock Expansion properties of large social graphs.
\newblock In {\em Proceedings of the 16th International Conference on Database
  Systems for Advanced Applications}, page 311–322, 2011.

\bibitem[Oli10]{oliveira}
Roberto~Imbuzeiro Oliveira.
\newblock Concentration of the adjacency matrix and of the laplacian in random
  graphs with independent edges, 2010.

\bibitem[Rei84]{reif1984optimal}
John~H Reif.
\newblock Optimal parallel algorithms for graph connectivity.
\newblock Technical report, HARVARD UNIV CAMBRIDGE MA AIKEN COMPUTATION LAB,
  1984.

\bibitem[RVW18]{DBLP:journals/jacm/RoughgardenVW18}
Tim Roughgarden, Sergei Vassilvitskii, and Joshua~R. Wang.
\newblock Shuffles and circuits (on lower bounds for modern parallel
  computation).
\newblock {\em J. {ACM}}, 65(6):41:1--41:24, 2018.

\bibitem[SV82]{DBLP:journals/jal/ShiloachV82}
Yossi Shiloach and Uzi Vishkin.
\newblock An {O}(log n) parallel connectivity algorithm.
\newblock {\em J. Algorithms}, 3(1):57--67, 1982.

\bibitem[Tar72]{DBLP:journals/siamcomp/Tarjan72}
Robert~E. Tarjan.
\newblock Depth-first search and linear graph algorithms.
\newblock {\em {SIAM} J. Comput.}, 1(2):146--160, 1972.

\bibitem[Vis83]{DBLP:journals/jal/Vishkin83}
Uzi Vishkin.
\newblock Implementation of simultaneous memory address access in models that
  forbid it.
\newblock {\em J. Algorithms}, 4(1):45--50, 1983.

\end{thebibliography}

\end{document}